\DeclareRobustCommand\widecheck[1]{{\mathpalette\@widecheck{#1}}}
\def\@widecheck#1#2{%
 \setbox\z@\hbox{\m@th$#1#2$}%
 \setbox\tw@\hbox{\m@th$#1%
 \widehat{%
 \vrule\@width\z@\@height\ht\z@
 \vrule\@height\z@\@width\wd\z@}$}%
 \dp\tw@-\ht\z@
 \@tempdima\ht\z@ \advance\@tempdima2\ht\tw@ \divide\@tempdima\thr@@
 \setbox\tw@\hbox{%
 \raise\@tempdima\hbox{\scalebox{1}[-1]{\lower\@tempdima\box
\tw@}}}%
 {\ooalign{\box\tw@ \cr \box\z@}}}
 \definecolor {processblue}{cmyk}{0.96,0.21,0.41,0.11}
 \definecolor{orange}{RGB}{230,170,120}
 \definecolor{green}{RGB}{100,200,100}
\newcommand{\cmark}{\ding{51}}%
\newcommand{\xmark}{\ding{55}}%
\theoremstyle{plain}
\newtheorem{theorem}{Theorem}
\newtheorem{lemma}{Lemma}
\newtheorem{proposition}{Proposition}
\theoremstyle{remark}
\newtheorem{assumption}{Assumption}
\newtheorem{remark}{Remark}
\newtheorem{example}{Example}
\def\bX{\mathbf{X}}
\def\bu{\mathbf{u}}
\def\bU{\mathbf{U}}
\def\bv{\mathbf{v}}
\def\bV{\mathbf{V}}
\def\bW{\mathbf{W}}
\def\balpha{\boldsymbol{\alpha}}
\def\bbeta{\boldsymbol{\beta}}
\def\bdelta{\boldsymbol{\delta}}
\def\etabold{\boldsymbol{\eta}}
\def\balphahat{\widehat{\balpha}}
\def\bbetahat{\widehat{\bbeta}}
\def\Yhat{\widehat{Y}}
\def\bdeltatil{\widetilde{\bdelta}}
\def\bdeltahat{\widehat{\bdelta}}
\def\Deltahat{\widehat{\Delta}}
\def\atil{\widetilde{a}}
\def\Atil{\widetilde{A}}
\def\bzero{\mathbf{0}}
\def\bS{\mathbf{S}}
\def\bs{\mathbf{s}}
\def\bSbar{\overline{\bS}}
\def\bsbar{\overline{\bs}}
\def\S{\mathbb{S}}
\def\bDelta{\boldsymbol{\Delta}}
\def\bgamma{\boldsymbol{\gamma}}
\def\R{\mathbb{R}}
\def\S{\mathbb{S}}
\def\Ytil{\widetilde{Y}}
\def\abar{\overline{a}}
\def\Abar{\overline{A}}
\def \hs2{\hspace{2mm}}
\numberwithin{table}{section}
\numberwithin{equation}{section}
\definecolor{jcolor}{RGB}{041,122,000}
\definecolor{darkred}{RGB}{100,000,000}
\definecolor{purple}{RGB}{200,000,200}
\def\boxit#1{\vbox{\hrule\hbox{\vrule\kern6pt  \vbox{\kern6pt#1\kern6pt}\kern6pt\vrule}\hrule}}
\def\muhat{\widehat{\mu}}
\def\muhat{\widehat{\mu}}
\def\muhat{\widehat{\mu}}
\def\be{\mathbf{e}}
\def\bS{\mathbf{S}}
\def\pihat{\widehat{\pi}}
\def\rhohat{\widehat{\rho}}
\def\nuhat{\widehat{\nu}}
\def\bgammahat{\widehat{\bgamma}}
\def\Ytil{\widetilde{Y}}
\def\sigmahat{\widehat{\sigma}}
\def\thetahat{\widehat{\theta}}
\DeclareMathOperator*{\argminn}{argmin}
\begin{document}

\begin{frontmatter}
\title{High-dimensional inference for dynamic treatment effects}
\runtitle{Dynamic Treatments: Double Robustness}

\begin{aug}
\author[A]{\fnms{Jelena} \snm{Bradic}\ead[label=e1]{jbradic@ucsd.edu}},
\author[B]{\fnms{Weijie} \snm{Ji}\ead[label=e2]{w6ji@ucsd.edu}}
\and
\author[C]{\fnms{Yuqian} \snm{Zhang}\ead[label=e3]{yuqianzhang@ruc.edu.cn}}
\address[A]{Department of Mathematics and Halicioglu Data Science Institute, University of California San Diego \printead{e1}}
\address[B]{Department of Mathematics, University of California San Diego \printead{e2}}
\address[C]{Institute of Statistics and Big Data, Renmin University of China 
\printead{e3}}
\end{aug}
\begin{abstract}

Estimating dynamic treatment effects is a crucial endeavor in causal inference, particularly when confronted with high-dimensional confounders. Doubly robust (DR) approaches have emerged as promising tools for estimating treatment effects due to their flexibility. However, we showcase that the traditional DR approaches that only focus on the DR representation of the expected outcomes may fall short of delivering optimal results. In this paper, we propose a novel DR representation for intermediate conditional outcome models that leads to superior robustness guarantees. The proposed method achieves consistency even with high-dimensional confounders, as long as at least one nuisance function is appropriately parametrized for each exposure time and treatment path. Our results represent a significant step forward as they provide new robustness guarantees. The key to achieving these results is our new DR representation, which offers superior inferential performance while requiring weaker assumptions. Lastly, we confirm our findings in practice through simulations and a real data application.

\end{abstract}

\begin{keyword}[class=MSC2020]
\kwd[Primary ]{62C99}
\kwd{62G35}
\kwd[; secondary ]{62J07}
\end{keyword}

\begin{keyword}
\kwd{Double-robustness}
\kwd{Imputed Lasso}
\kwd{Model misspecification}
\kwd{Oracle inequality}
\end{keyword}

\end{frontmatter}

\section{Introduction}

The complexity of a given disease or economic policy often manifests in the diversity and size of the personal characteristics pertaining to each individual or economy under consideration, causing a considerable degree of heterogeneity in observed outcomes. However, the utility of randomized control trials (RCTs), especially over time, is frequently curtailed by prohibitive costs or ethical concerns. In contrast, the accessibility of time-varying observational studies has burgeoned of late. The ubiquity of data-driven decision-making is evident in various aspects of daily life, such as the continuous monitoring of individuals' health using mobile devices and consequential medical interventions, tracking of online presence and real-time measurement of economic and social policies implemented to enhance public health. 
The present study contributes novel insights to the literature by proposing a novel framework to construct confidence intervals pertaining to dynamic treatment effects amid high-dimensional observations. In a Job Corps real-data analysis, our novel framework provides more accurate estimates of the long-term impact of additional schooling over time on wages, which has important practical implications for designing effective policies aimed at increasing educational attainment and improving economic outcomes.


In light of intricate notational complexities, we exemplify our ideas and findings for two-stage trials while affirming that the same theoretical framework and methodology developed are extensible to multiple-stage trials; see, e.g., Section \ref{sec:multi-stage}.
Consider a two-stage series of binary treatment assignments, denoted by $A_1$ and $A_2$, and an outcome of interest, $Y \in \mathbb{R}$. Alongside this, a set of possibly high-dimensional sequential pre-treatment covariates $\bS_1 \in \mathbb{R}^{d_1}$ and $\bS_2 \in \mathbb{R}^{d_2}$, possibly of different dimensions, are also observed. 
The potential or counterfactual outcomes, $Y(a)$, refer to the outcome that a participant would have experienced had they followed a particular treatment sequence, $a = (a_1,a_2) \in \{0,1\}^2$, which may differ from the treatment they were observed with. Our parameter of interest is the dynamic treatment effect (DTE) between two treatment paths, $a$ and $a' $, which is defined as follows:
\begin{align}
\theta := E[Y(a)] - E[Y(a')]= \theta_a - \theta_{a'},\;\;\text{with}\;\;\theta_{a}:= E[Y(a)].\label{def:DTE}
\end{align}

Estimating the DTE is a challenging task when there are multiple exposures involved. The influence of past treatments on future confounders and treatment choices complicates the identifiability of $\theta$ \citep{rosenbaum1983central}. 
Adjusting for confounders may not have a causal interpretation, even when all confounders are measured and the regression is correctly specified \citep{daniel2013methods}. 
In this context, alternative methods such as Sequential Multiple Randomized Control Trials (SMART) \citep{hernan2016specifying}, Structural Nested Mean (SNM) \citep{Robins1997causal}, and Marginal Structural Mean (MSM) models \citep{murphy2001marginal} have become the gold standard for addressing these challenges. This paper contributes to the field by establishing robust MSM model estimations with new effective rates.

\subsection{The doubly robust representations}\label{rep:DR}

Throughout this work, we assume that any treatment-specific variable can only be affected by past treatments or past covariates; and not the future. This is sometimes called temporal ordering. We also assume a ``no interference'' setting and Assumption \ref{assumption_1} below \citep{robins1987addendum, robins2000marginal,murphy2003optimal}. 

\begin{assumption}\label{assumption_1} 
	(a) (Sequential Ignorability)
	$
	Y(a_{1},a_{2})\perp \!\!\! \perp A_{1}\mid\bS_{1}$ and $ Y(a_{1},a_{2}) \perp \!\!\! \perp A_{2}\mid\bS,A_{1}=a_{1}$ where $\bS=(\bS_1^\top,\bS_2^\top)^\top\in \mathbb{R}^{d}$ with $d:=d_1+d_2$. 
	(b) (Consistency of potential outcomes) 
	$Y=Y(A_{1},A_{2}).$
	(c) (Overlap) Let $c_0 \in (0,1/2)$ be a positive constant, such that 
	$P(c_0 \leq \pi_{a} (\bS_{1})\leq1-c_0) =1, $ and 
	$P(c_0 \leq \rho_{a}(\bS)\leq1-c_0) =1$. Here, the propensity scores are defined as $\pi_{a} (\bs_{1}):= P[A_{1}=a_1|\bS_{1}=\bs_{1}]$ and $\rho_{a}(\bs ):= P[A_{2}=a_2|\bS =\bs,A_{1}=a_{1}]$.
\end{assumption}


The following lemma provides a doubly robust (DR) representation of $\theta_a$. This result is consistent with previous studies in the literature, including works by \cite{van2012targeted,orellana2010dynamic,murphy2001marginal,
bang2005doubly}. We consider the MSM models where we adjust for confounding variables that may affect both the treatment assignment and the outcome of interest. In an MSM, the treatment assignment and the outcome of interest are modeled separately using  propensity scores 
$\pi_{a} (\bs_{1})$ and $\rho_{a}(\bs )$ together with the first-time and second-time conditional means,   $\mu_{a}(\bs_{1}):= E[Y(a)|\bS_{1}=\bs_{1}]$ and $ \nu_{a}(\bs ):= E[Y(a)|\bS =\bs,A_1=a_1]$. 
   Throughout this work, we use $\pi_{a}^* (\cdot)$ and $\rho_{a}^*(\cdot)$ as well as $\mu_a^*(\cdot)$ and $\nu_a^*(\cdot)$ to refer to the working models, i.e., the population-level approximations of the propensity scores and conditional means, respectively. 


\begin{lemma}[A DR representation of $\theta_a$]\label{lem_theta}
Let Assumption \ref{assumption_1} hold. Suppose that at least one of $\mu_{a}^{*}(\cdot)$ and $\pi_{a}^{*}(\cdot)$ is correctly specified, and at least one of $\nu_{a}^{*}(\cdot)$ and $\rho_{a}^{*}(\cdot)$ is correctly specified, i.e, (a) either $\mu_{a}^{*}(\cdot)=\mu_{a}(\cdot)$ or $\pi_{a}^{*}(\cdot)=\pi_{a}(\cdot)$, but not necessarily both and (b) either $\nu_{a}^{*}(\cdot)=\nu_{a}(\cdot)$ or $\rho_{a}^{*}(\cdot)=\rho_{a}(\cdot)$, but not necessarily both. Then
\begin{align}\label{rep:DR-theta}
	\theta_{a} =E\left[\mu_{a}^*(\bS_{1})+\mathbbm1_{\{A_1=a_1\}}\frac{ \nu_{a}^*(\bS )-\mu_{a}^*(\bS_{1}) }{\pi_{a} ^* (\bS_{1})}+\mathbbm1_{\{A_1=a_1, A_2=a_2\}}\frac{ Y-\nu_{a}^*(\bS )}{\pi_{a} ^* (\bS_{1})\rho_{a}^*(\bS)}\right].
\end{align}
\end{lemma}


Based on Lemma \ref{lem_theta}, consistent estimates of $\theta_{a}$ are expected as long as at least one nuisance model is correctly parametrized at each exposure time. However, this goal has not been achieved yet; see \cite{babino2019multiple} for an overview. The main obstacle is the estimation of interlocking nuisance functions, especially the first-time conditional mean, as it cannot be identified directly through the observable variables as $\mu_{a}(\bs_{1})=E[Y(a)|\bS_{1}=\bs_{1}]\neq E[Y|\bS_{1}=\bs_{1},A_1=a_1]$.
Under Assumption \ref{assumption_1}, existing DTE literature typically considers the following nested representation of $\mu_{a}(\cdot)$,
\begin{align}
\mu_{a}(\bs_{1})=E[Y(a )|\bS_{1}=\bs_{1}]=E[\nu_{a}(\bS )|\bS_{1}=\bs_{1},A_1=a_1],\label{rep:nested}
\end{align}
and suggests a nested regression (NR)  of the conditional means 
-- as long as an estimate $\nuhat_a(\cdot)$ of $\nu_a(\cdot)$ is obtained, one can use $\nuhat_a(\bS_i)$ as the imputed outcomes and perform regression to construct $\muhat_{a,{\mbox{\tiny NR}}}(\cdot)$; see, e.g., \cite{murphy2001marginal}. We formalize these properties under high-dimensional linear working models, naming the resulting DTE estimator the ``dynamic treatment Lasso'' (DTL) estimator. We show that the nested-regression approach faces certain limitations and fails to attain the DR property equivalent to Lemma \ref{lem_theta}. 
Among the multiple factors contributing to this, the biggest one is arising from a peculiar model misspecification
that we identified arising from the nested representation in Equation \eqref{rep:nested}. In the event of a misspecified linear working model $\nu_a^*(\cdot)$, the corresponding $\mu_a^*(\cdot)$ will inevitably be misspecified as well, leading to $\mu_a^*(\cdot) \neq \mu_a(\cdot)$, even when $\mu_a(\cdot)$ is itself linear. Besides the linearity of $\mu_a(\cdot)$, additional conditions on $\nu_a(\cdot)$ are necessary for the correctness of the nested-regression-based linear working model, as discussed in Section \ref{sec:mu-correct}.

This issue necessitates the use of specialized methods for which we propose a new DR representation of the first-time conditional mean function $\mu_a(\cdot)$; see \eqref{def:DR-mu} below. It provides tools to quantify the DR property of the resulting DTE estimate and to develop correction techniques that can mitigate the DR gap by achieving the estimation under model conditions equivalent to Lemma \ref{lem_theta}. 

\begin{theorem}[A DR representation of $\mu_a(\cdot)$]\label{thm:DR-mu}
Suppose that either $\nu_a^*(\cdot)=\nu_a(\cdot)$ or $\rho_a^*(\cdot)=\rho_a(\cdot)$ holds. Let Assumption \ref{assumption_1} holds. Then, for any $\bs_1\in\R^{d_1}$,
\begin{align}\label{def:DR-mu}
\mu_a(\bs_1)=E\left[\nu_a^*(\bS)+\mathbbm1_{\{A_2=a_2\}} \frac{Y-\nu_a^*(\bS)}{\rho_a^*(\bS)}\mid\bS_1=\bs_1,A_1=a_1\right].
\end{align}
\end{theorem}

Utilizing the two DR representations \eqref{rep:DR-theta} and \eqref{def:DR-mu} simultaneously, we propose a \emph{sequential doubly robust Lasso} (S-DRL) estimator. The proposed estimator is consistent as long as either the conditional mean function is truly linear or the propensity score function is truly logistic (or both) for each exposure time. To the best of our knowledge, this is the first estimator that matches Lemma \ref{lem_theta} conditions empirically. The inverse probability weighting (IPW) methods \citep{robins1986new,robins2000marginal,hernan2001marginal,robins2004optimal} require all the propensity score models to be correctly parametrized. The covariate balancing methods \citep{kallus2018optimal,yiu2018covariate,viviano2021dynamic} require all the conditional mean models to be correctly parametrized. Perhaps unexpectedly, the standard low-dimensional DR methods \citep{robins2000robust,murphy2001marginal,bang2005doubly,yu2006double} and the targeted maximum likelihood estimation (TMLE) \citep{van2012targeted} require either all the propensity score functions or all the conditional mean (or density) functions to be correctly parametrized. 
The ``multiply robust'' (MR) estimator of \cite{babino2019multiple} reaches better robustness than all of the aforementioned methods. In general $T$-stage trials, they allow for the first $t$ conditional mean models and the last $T-t$ propensity score models to be correctly parametrized for any $t$. The DTL estimator allows the first $t$ propensity score models and the last $T-t$ conditional mean models to be correctly parametrized. Our S-DRL estimator is strictly more robust in terms of consistency; see Table \ref{table:consistency} and Remark \ref{remark:compare_lowdim} for further details.

\begin{table}
\begin{threeparttable} 
\caption{Consistency of the S-DRL, DTL, and MR estimators in two-stage trials. 
}\label{table:consistency}
\renewcommand{\arraystretch}{1.5} 
\centering
\begin{tabular}{ c | c | c | c | c | c | c }
\Xhline{2\arrayrulewidth}
\multicolumn{4}{c}{Nuisance models}&\multicolumn{3}{c}{Consistency}\\
\hline
logistic $\rho_a(\cdot)$&logistic $\pi_{a}(\cdot)$&linear $\mu_{a}(\cdot)$&linear $\nu_a(\cdot)$&S-DRL&DTL&MR$\;$ \\[-2pt]
\hline
\cmark&\cmark&\cmark&\cmark&\cmark&\cmark&\cmark\\[-2pt]
\hline
\xmark&\cmark&\cmark&\cmark&\cmark&\cmark&\cmark\\[-2pt]
\hline
\cmark&\xmark&\cmark&\cmark&\cmark&\cmark&\cmark\\[-2pt]
\hline
\cmark&\cmark&\xmark&\cmark&\cmark&\cmark&\cmark\\[-2pt]
\hline
\cmark&\cmark&\cmark&\xmark&\cmark&\cmark&\cmark\\[-2pt]
\hline
\xmark&\xmark&\cmark&\cmark&\cmark&\cmark&\cmark\\[-2pt]
\hline
\xmark&\cmark&\xmark&\cmark&\cmark&\cmark&\xmark\\[-2pt]
\hline
\cmark&\xmark&\cmark&\xmark&\cmark&\xmark$\;$
&\cmark\\[-2pt]
\hline
\cmark&\cmark&\xmark&\xmark&\cmark&\cmark&\cmark\\[-2pt]
\Xhline{2\arrayrulewidth}
\end{tabular}

\end{threeparttable}
\end{table}

The S-DRL estimator demonstrates superior estimation rates in high-dimensional contexts when compared to the DTL estimator; see Table \ref{table:consistency-rate}
as well as Remark \ref{remark:compare-muhat}. Root-sample-size inference based on the S-DRL estimator is guaranteed when two product-sparsity conditions are satisfied, whereas the DTL method requires three product-sparsity conditions, as demonstrated in Theorems \ref{thm:dr_theta} and \ref{thm_sparse}. 
 The errors in nuisance estimation at different stages have a parallel effect on the estimation; see the consistency rate in Theorem \ref{thm:misspecified}. 
 


The estimation of the in-between outcome models is intrinsically linked to regression with imputed outcomes.
We have developed a novel cone-set analysis of imputed Lasso estimates that is of independent interest to other imputed, high-dimensional regressions. 
 Existing Lasso proof techniques provide conservative bounds only; see Section \ref{sec:nuisance}.
Our results are adaptive to the imputation error and can be used to guide the selection of tuning parameters in high-dimensional regression models with imputed outcomes.

In the multi-stage exposure setting, we extend our method and develop DR representations to identify both the expected potential outcomes and conditional means, as shown in Section \ref{sec:multi-stage}. While the consistency rate and asymptotic normality require intricate proofs, we anticipate they hold analogously to those in the two-stage case. It is worth noting that Theorem \ref{DR_multi_2} provides new DR representations that are independent of any specific parametric models, allowing the sequential doubly robust (S-DR) method to be utilized with non-parametric nuisance estimates, which enhances its versatility.

\subsection{Organization of the paper}

In Section \ref{sec:DR-est}, we introduce the DR estimators of the DTE, including the proposed S-DRL estimator, the DTL estimator, and a general DR estimator. The theoretical properties of the considered DTE estimators are established in Section \ref{sec:asymp}. In Section \ref{sec:nuisance}, we formalize the supporting theoretical discoveries, including a general theory for imputed Lasso estimation and the consistency results of the nuisance estimates. We further extend our setting to the case of multi-stage treatments and provide general DR representations for the intermediate conditional means in Section \ref{sec:multi-stage}. Section \ref{sec:num} presents numerical results, including simulation studies and an application to the National Job Corps Study. Further discussion is provided in Section \ref{sec:dis}.

\subsection{Notation}
For any $\alpha>0$, let $\psi_{\alpha}(\cdot)$ denote the function given by $\psi_{\alpha}(x):=\exp(x^\alpha)-1$, $\forall x>0$. Then the $\psi_{\alpha}$-Orlicz norm $\|\cdot\|_{\psi_{\alpha}}$ of a random variable $X$ is defined as
$
	\|X\|_{\psi_{\alpha}}
	:=\inf\{c>0:E[\psi_{\alpha}(|X|/c)]\leq 1\}.
$
Two special cases of finite $\psi_{\alpha}-$Orlicz norm are given by 
$\psi_{2}(x)=\exp(x^2)-1$ and $\psi_{1}(x)=\exp(x)-1$, which correspond to sub-Gaussian and sub-exponential random variables, respectively.
The notation $a_N\ll b_N$ denotes $a_N=o(b_N)$, and $a_N\gg b_N$ denotes $b_N\ll a_N$ as $N \to \infty$. The notation $a_N\asymp b_N$ denotes $cb_N\leq a_N \leq Cb_N$ for all $N \geq 1$ and with constants $c,C>0$. 
Define $g(u)= \exp(u)/\{1+\exp(u)\}$ as the logistic function and $\phi(u)=\log(1+\exp(u))$ as the corresponding link function throughout. 

\section{The doubly robust estimators}\label{sec:DR-est}

We observe a collection of independent and identically distributed (i.i.d.) samples $\mathcal D:=\{W_{i}\}_{i=1}^{N}=(Y_{i},\bS_{1i},A_{1i},\bS_{2i},A_{2i})_{i=1}^{N}$, drawn from the same distribution as $(Y,\bS_1,A_1,\bS_2,A_2)$.
In the following subsections, we present three DTE estimators: the new sequential doubly robust Lasso (S-DRL) estimator, the dynamic treatment Lasso (DTL) estimator, and the general DR estimator.

\subsection{The sequential doubly robust Lasso (S-DRL) estimator}\label{sec:DDRL}

We focus on the high-dimensional scenario, and consider linear (working) models for the conditional means $\mu_a(\cdot)$ and $\nu_a(\cdot)$, along with logistic (working) models for the propensities $\pi_a(\cdot)$ and $\rho_a(\cdot)$. 
The population minimizer approximating $\pi_{a} (\bs_1)$ is defined as $\pi_{a}^{*}(\bs_{1}) = g(\bv^{\top}\bm{\gamma}_{a}^{*})$ with $\bv=(1, \bs_{1}^{\top})^{\top}$, whereas that of approximating $\rho_{a}(\bs )$ is $ \rho_{a}^{*}(\bs )=g(\bu^{\top}\bm{\delta}_{a}^{*})$ with $\bu=(1, \bs^{\top})^{\top}$. Here
\begin{align}\label{eq:gamma_star}
	\bm{\gamma}_{a}^{*}&=\argminn_{\bm{\gamma}\in \mathbb{R}^{d_1+1}}E \left[\phi(\bV^{\top}\bm{\gamma})-\mathbbm1_{\{A_{1}=a_1\}} \bV^{\top}\bm{\gamma}\right], \ \bV=(1, \bS_{1}^{\top})^{\top} \in \mathbb{R}^{d_1+1} \mbox{ and }\\
	\bm{\delta}_a^{*}&=\argminn_{\bm{\delta}\in \mathbb{R}^{d+1}}E\left[\mathbbm1_{\{A_{1}=a_1\}}\left[\phi(\bU^{\top}\bm{\delta})-\mathbbm1_{\{A_2=a_2\}} \bU^{\top}\bm{\delta}\right]\right], \ \bU=(1, \bS)^{\top} \in \mathbb{R}^{d+1}. \label{eq:delta_star}
\end{align}
One can also consider a feature map $\varphi(\bs_1)$ (e.g., a polynomial basis) and a working model $\pi_a^*(\bs_1)=g(\varphi(\bs_1)^\top\bgamma_a^*)$ with some $\bgamma_a^*$ defined correspondingly. We focus on $\varphi(\bs_1)=\bv$, although the results apply more broadly. 
The above working models can be estimated with many regularizations. Throughout this work, we focus on the $\ell_1$-regularization, albeit the theoretical developments apply more broadly. With a subset of training data $\mathcal D_\mathcal J=\{W_i\}_{i\in\mathcal J}\subset\mathcal D$, where $\mathcal J\subset\{1,\dots,N\}$, we define
\begin{equation}
\widehat{\bm{\gamma}}_{a}:=\widehat{\bm{\gamma}}_{a}(\mathcal D_\mathcal J)=\argminn_{\bm{\gamma}\in \mathbb{R}^{d_1+1}} \frac{1}{|\mathcal J|} \sum_{i\in\mathcal J}
\bigl[\phi(\bV_{i}^{\top}\bm{\gamma})-\mathbbm1_{\{A_{1i}=a_1\}} \bV_{i}^{\top}\bm{\gamma}\bigl] +\lambda_{\bm{\gamma}}\|\bm{\gamma}\|_1,\label{207}
\end{equation}
\vspace{-0.5em}
\begin{equation}
\widehat{\bm{\delta}}_{a}:=\widehat{\bm{\delta}}_{a}(\mathcal D_\mathcal J)=\argminn_{\bm{\delta}\in \mathbb{R}^{d+1}} \frac{1}{|\mathcal J|} \sum_{i\in\mathcal J} \mathbbm1_{\{A_{1i}=a_1\}}\bigl[\phi(\bU_{i}^{\top}\bm{\delta})-\mathbbm1_{\{A_{2i}=a_2\}}\bU_{i}^{\top}\bm{\delta}\bigl]+{\lambda}_{\bm{\delta}} \|\bm{\delta}\|_1,\label{226}
\end{equation}
with tuning parameters $ \lambda_{\bm{\gamma}},{\lambda}_{\bm{\delta}} >0$. Observe that for $ \widehat{\bm{\gamma}}_{a}$, we utilize all of the observations regardless of its treatment path, whereas for $ \widehat{\bm{\delta}}_{a}$, only those whose treatment path matches $a_1$ regardless of what $a_2$ is. 
The best linear working model for the second-time conditional mean $\nu_a(\cdot)=E[Y|\bS,A_1=a_1,A_2=a_2]$ is denoted as 
\begin{align} 
	\label{eq:alpha_star}
	\nu_{a}^{*}(\bs )=\bu^{\top} \bm{\alpha}_{a}^{*},\quad
	\balpha_a^*:=\argminn_{\balpha\in\R^{d+1}} E\left[\mathbbm1_{\{A_{1}=a_1,A_{2}=a_2\}}(Y-\bU^\top\balpha)^2\right].
\end{align} 
An estimator of \eqref{eq:alpha_star} can be obtained similarly with $ {\lambda}_{\bm{\alpha}}> 0$:
\begin{align}\label{51}
\widehat{\bm{\alpha}}_{a}:=\widehat{\bm{\alpha}}_{a}(\mathcal D_\mathcal J)=\argminn_{\bm{\alpha}\in\mathbb{R}^{d+1}} \frac{1}{|\mathcal J|} \sum_{i\in\mathcal J} \mathbbm1_{\{A_{1i}=a_1,A_{2i}=a_2\}}({Y}_i-{\bU}_i^{\top} \bm{\alpha})^2+ {\lambda}_{\balpha}\|\bm{\alpha}\|_1 .
\end{align}

\begin{algorithm}\caption{Sequential Double Robust Lasso (S-DRL)}\label{alg:S-DRL}
\begin{algorithmic}[1]
\Require Observations $\mathcal D:=\{W_{i}\}_{i=1}^{N}=(Y_{i},\bS_{1i},A_{1i},\bS_{2i},A_{2i})_{i=1}^{N}$, treatment path $a$, and control $a'$.
\State For any $K\geq2$, let $\mathcal{K}=\{1,2,\dots,K \}$. Randomly split $\mathcal I=\{1,\dots,N\}$ into $K$ equal-sized $|\mathcal I_{k}|=n$. Define $\mathcal I_{-k}:=\mathcal I\backslash\mathcal I_{k}$, and further split $\mathcal I_{-k}$ into two equal-sized sets $\mathcal I_{-k,1}$ and $\mathcal I_{-k,2}$. 
\State Let $\mathcal W_{-k}:=\{W_i\}_{i\in\mathcal I_{-k}}$ and $\mathcal W_{-k,j}:=\{W_i\}_{i\in\mathcal I_{-k,j}}$ for each $j\in\{1,2\}$.
\For{$k=1,2,...,K$} 
\For{$c \in \{a,a'\}$}
\State Using $\mathcal W_{-k}$, construct estimates $\widehat{\bgamma}_c$, $\widehat{\bdelta}_c$, and $\widehat{\balpha}_c$ through \eqref{207}, \eqref{226}, and \eqref{51}, respectively. 
\State Using $\mathcal W_{-k,1}$, construct estimates $\widetilde{\bdelta}_c$ and $\widetilde{\balpha}_c$ through \eqref{226} and \eqref{51}, respectively.
\State For each $i\in\mathcal I_{-k,2}$, set $\Yhat_i^{\mbox{\tiny DR}}$ as defined in \eqref{eq:DRY} with $\widetilde{\bdelta}_c$ and $\widetilde{\balpha}_c$ from Step 6.
\State Compute $\widehat{\bm{\beta}}_{c,1}$ through \eqref{eq:beta:DR} based on the training samples $\mathcal W_{-k,2}$.
\State Exchange $\mathcal W_{-k,1}$ and $\mathcal W_{-k,2}$, repeat Steps 6-8 and obtain $\widehat{\bm{\beta}}_{c,2}$ analogously. Compute
\begin{align}\label{def:betahat-avg}
\widehat{\bm{\beta}}_{c}=(\widehat{\bm{\beta}}_{c,1}+\widehat{\bm{\beta}}_{c,2})/2.
\end{align}
\EndFor
\State Let $\hat\eta_c=(\widehat{\balpha}_c,\widehat{\bbeta}_c,\widehat{\bgamma}_c,\widehat{\bdelta}_c)$. Using the DR score \eqref{def:DR-score}, compute $\check{\theta}^{(k)}$ as 
\begin{align*}
\check{\theta}^{(k)}=|\mathcal I_{k}|^{-1}\sum_{i\in\mathcal I_{k}}\left[\psi_a(W_i;\hat\eta_a)-\psi_{a'}(W_i;\hat\eta_{a'})\right].
\end{align*}
\EndFor \quad 
\Return The S-DRL estimator and the variance estimate 
\begin{align}
\widehat{\theta}:= K^{-1}\sum_{k\in \mathcal{K}}\check{\theta}^{(k)},\quad\widehat{\sigma}^2:= {N}^{-1}\sum_{k\in \mathcal{K}, i\in\mathcal I_k} \left[\psi_a(W_i;\hat{\eta}_a)-\psi_{a'}(W_i;\hat{\eta}_{a'})-\widehat{\theta}\right]^2.\label{def:sigma-hat}
\end{align} 
\end{algorithmic}
\end{algorithm}

\paragraph*{Estimation of the first-time conditional mean $\mu_a(\cdot)$}

Recall that $\bdelta_a^*$ and $\balpha_a^*$ are defined in Equations \eqref{eq:delta_star} and \eqref{eq:alpha_star}, respectively. We propose the following DR imputed outcome 
\begin{align*}
Y^{\mbox{\tiny DR}}:= {\bU}^{\top} {\bm{\alpha}}_{a}^*+\mathbbm1_{\{A_{2}=a_2\}} \frac{Y-{\bU}^{\top} {\bm{\alpha}}_{a}^* }{g(\bU^{\top}\bdelta_a^*)}.
\end{align*}
With this in mind, we consider a linear working model for the first-time conditional mean
\begin{align}
\mu_{a}^*(\bs_{1})=\bv^{\top}\bm{\beta}_{a}^{*},\quad
\bm{\beta}_{a}^{*}:=\argminn_{\bm{\beta}\in \mathbb{R}^{d_1+1}}E\left[\mathbbm1_{\{A_{1}=a_1\}}(Y^{\mbox{\tiny DR}}-{\bV}^{\top}\bm{\beta})^2\right].\label{def:beta-star}
\end{align}
To estimate the best linear slope $\bbeta_a^*$ based on a subset of training data $\mathcal D_\mathcal J\subset\mathcal D$, we consider an additional sample splitting with $\mathcal D_\mathcal J=\mathcal D_{\mathcal J_1}\cup\mathcal D_{\mathcal J_2}$, where $\mathcal J_1$ and $\mathcal J_2$ are disjoint subsets of $\mathcal J$. 
Using the first half of the subsamples $\mathcal D_{\mathcal J_1}$, we first obtain the second-time nuisance estimates $\bdeltatil_a:=\widehat{\bdelta}_{a}(\mathcal D_{\mathcal J_1})$ and $\widetilde{\balpha}_a:=\balphahat_a(\mathcal D_{\mathcal J_1})$ as \eqref{226} and \eqref{51}, respectively.
Then, for each $i\in\mathcal J_2$, we construct a DR imputed outcome
\begin{align}\label{eq:DRY}
	\widehat Y^{\mbox{\tiny DR}}_i&:={\bU}_i^{\top} \widetilde{\balpha}_a+\mathbbm1_{\{A_{2i}=a_2\}}\frac{Y_i-{\bU}_i^{\top}\widetilde{\balpha}_a}{g(\bU_i^\top\bdeltatil_a)}.
\end{align}
Based on the DR imputed outcomes $\widehat Y^{\mbox{\tiny DR}}_{\mathcal J_2}:=\{\widehat Y^{\mbox{\tiny DR}}_i\}_{i\in\mathcal J_2}$, we propose a DR estimate:
\begin{align}\label{eq:beta:DR}
\bbetahat_a:=\bbetahat_a(\mathcal D_{\mathcal J_2},\widehat Y^{\mbox{\tiny DR}}_{\mathcal J_2})=\argminn_{\bbeta\in\R^{d_1+1}} \frac{1}{|\mathcal J_2|} \sum_{i\in\mathcal J_2}\mathbbm1_{\{A_{1i}=a_1\}}\left(\widehat Y^{\mbox{\tiny DR}}_i-{\bV}_i^{\top}\bm{\beta} \right)^2+ {\lambda}_{\bm{\beta}}\|\bm{\beta}\|_1 , 
\end{align}
where $ {\lambda}_{\bm{\beta}}>0$. To regain full sample size efficiency, we can always swap the samples $\mathcal D_{\mathcal J_1}$ and $\mathcal D_{\mathcal J_2}$, repeat the procedure, and average the results. 

\paragraph*{The S-DRL estimator of the DTE}
For each $c\in\{a,a'\}$ and for any $\eta=(\balpha,\bbeta,\bgamma,\bdelta)$, define the DR score function based on the DR representation \eqref{rep:DR-theta}:
\begin{align}\label{def:DR-score}
\psi_c(W;\eta):=\bV^{\top}\bbeta+\mathbbm1_{\{A_{1}=c_1\}}\frac{\bU^{\top}\balpha-\bV^{\top}\bbeta}{g(\bV^{\top}\bgamma)}+\mathbbm1_{\{A_{1}=c_1,A_{2}=c_2\}}\frac{Y-\bU^{\top}\balpha}{g(\bV^{\top}\bgamma)g(\bU^{\top}\bdelta_c)}.
\end{align}
We propose the \emph{sequential doubly robust Lasso} (S-DRL) estimator of $\theta$: 
\begin{align*}
\thetahat:=N^{-1}\sum_{i=1}^N[\psi_a(W_i;\hat\eta_a)-\psi_{a'}(W_i;\hat\eta_{a'})],
\end{align*}
where $\hat\eta_c:=(\widehat{\balpha}_c,\widehat{\bbeta}_c,\widehat{\bgamma}_c, \widehat{\bdelta}_c)$ are the nuisance estimates of \eqref{51}, \eqref{eq:beta:DR}, \eqref{207}, and \eqref{226}, respectively. A cross-fitting technique is used. The details are provided in Algorithm \ref{alg:S-DRL}; see \cite{chernozhukov2018double,smucler2019unifying} where the cross-fitting leads to weaker sparsity restrictions than those without it, such as \cite{farrell2015robust,tan2020model}.

\subsection{The dynamic treatment Lasso (DTL) estimator}\label{sec:DTL}
\vspace{-1em}
\begin{algorithm}\caption{Dynamic Treatment Lasso (DTL)}\label{alg:DTL}
\begin{algorithmic}[1]
\Require Observations $\{W_i\}_{i=1}^N$, number of cross-fitting subsets $K\geq 2$, treatment path $a$, and control $a'$.
\State For any $K\geq2$, let $\mathcal{K}=\{1,2,\dots,K \}$.  Randomly split $\mathcal I=\{1,\dots,N\}$ into $K$ equal-sized $|\mathcal I_{k}|=n$ with $\mathcal I_{-k}=\mathcal I\backslash\mathcal I_{k}$ and $\mathcal W_{-k}=\{W_i\}_{i\in\mathcal I_{-k}}$.
\For{$k=1,2,...,K$} 
\For{$c \in \{a,a'\}$}
\State Using $\mathcal W_{-k}$, construct estimates $\widehat{\bgamma}_c$, $\widehat{\bdelta}_c$, and $\widehat{\balpha}_c$ through \eqref{207}, \eqref{226}, and \eqref{51}, respectively. 
\State Compute $\widehat{\bm{\beta}}_{c,{\mbox{\tiny NR}}}$ as \eqref{52} based on the training samples $\mathcal W_{-k}$ and the nuisance estimate $\widehat{\balpha}_c$.
\EndFor
\State Using the DR score \eqref{def:DR-score} and let $\hat\eta_{c,{\mbox{\tiny NR}}}=(\widehat{\balpha}_c,\widehat{\bbeta}_{c,{\mbox{\tiny NR}}},\widehat{\bgamma}_c,\widehat{\bdelta}_c)$, compute $\check{\theta}_{\tiny\mbox{DTL}}^{(k)}$ as 
\begin{align*}
\check{\theta}_{\tiny\mbox{DTL}}^{(k)}=|\mathcal I_{k}|^{-1}\sum_{i\in\mathcal I_{k}}\left[\psi_a(W_i;\hat\eta_{a,{\mbox{\tiny NR}}})-\psi_{a'}(W_i;\hat\eta_{a',{\mbox{\tiny NR}}})\right].
\end{align*} 
\EndFor \qquad 
\Return The DTL estimator and the  variance estimate
\begin{align}
\widehat{\theta}_{\tiny\mbox{DTL}}:=K^{-1}\sum_{k\in \mathcal{K}}\check{\theta}_{\tiny\mbox{DTL}}^{(k)},\;\;\;\widehat{\sigma}_{\tiny\mbox{DTL}}^2:= {N}^{-1}\sum_{k\in \mathcal{K}, i\in\mathcal I_k} \left[\psi_{a}(W_i;\hat{\eta}_{a,{\mbox{\tiny NR}}})-\psi_{a'}(W_i;\hat{\eta}_{a',{\mbox{\tiny NR}}})-\widehat{\theta}_{\tiny\mbox{DTL}}\right]^2.\label{def:sigma-hat-seq}
\end{align} 
\end{algorithmic}
\end{algorithm} 
In this section, we formally define a dynamic treatment Lasso (DTL) estimator based on the DR score \eqref{rep:DR-theta} of $\theta_a$ and the nested representation \eqref{rep:nested} of $\mu_a(\cdot)$. 
Here, $\ell_1$-regularized nuisance estimates $\widehat{\bgamma}_a$, $\widehat{\bdelta}_a$, and $\widehat{\balpha}_a$ will be the same as before; see \eqref{207}, \eqref{226}, and \eqref{51} above. Estimation of the first-time conditional mean model is different. 
 Based on the best linear approximation $\nu_{a}^{*}(\bs)=\bu^{\top} \bm{\alpha}_{a}^{*}$, \eqref{eq:alpha_star}, of $\nu_a(\cdot)$, we introduce the following nested ``best linear working model'':
\begin{align}
\mu_{a,{\mbox{\tiny NR}}}^*(\bs_{1})=\bv^{\top}\bm{\beta}_{a,{\mbox{\tiny NR}}}^{*},\quad
\bm{\beta}_{a,{\mbox{\tiny NR}}}^{*}:=\argminn_{\bm{\beta}\in \mathbb{R}^{d_1+1}}E\left[\mathbbm1_{\{A_{1}=a_1\}}(\bU^\top\balpha_a^*-{\bV}^{\top}\bm{\beta})^2\right].\label{def:beta-star-seq}
\end{align}
Note that the two linear working models $\mu_{a,{\mbox{\tiny NR}}}^*(\cdot)$ and $\mu_a^*(\cdot)$ are not necessarily the same; see Section \ref{sec:mu-correct} for detailed comparisons.
 We consider the following imputed Lasso estimate of $\bm{\beta}_{a,{\mbox{\tiny NR}}}^{*}$, defined as 
 $\widehat{\bm{\beta}}_{a,{\mbox{\tiny NR}}}:=\widehat{\bm{\beta}}_{a,{\mbox{\tiny NR}}}(\mathcal D_\mathcal J,\balphahat_a)$ with
\begin{align}\label{52}
\widehat{\bm{\beta}}_{a,{\mbox{\tiny NR}}}:= \argminn_{\bm{\beta}\in \mathbb{R}^{d_1+1}} \frac{1}{|\mathcal J|} \sum_{i\in\mathcal J} \mathbbm1_{\{A_{1i}=a_1\}}\bigl ( {\bU}_i^{\top} \widehat{\bm{\alpha}}_{a}-{\bV}_i^{\top}\bm{\beta} \bigl)^2+ {\lambda}_{\bbeta} \|\bm{\beta}\|_1.
\end{align}
Now we introduce the \emph{dynamic treatment Lasso} (DTL) estimator of $\theta$:
\begin{align*}
\thetahat{\tiny\mbox{DTL}}:=N^{-1}\sum_{i=1}^N[\psi_a(W_i;\hat\eta_{a,{\mbox{\tiny NR}}})-\psi_{a'}(W_i;\hat\eta_{a',{\mbox{\tiny NR}}})],
\end{align*}
where $\psi_c(\cdot;\cdot)$ is defined in \eqref{def:DR-score} and $\hat\eta_{c,{\mbox{\tiny NR}}}:=(\widehat{\balpha}_c,\widehat{\bbeta}_{c,{\mbox{\tiny NR}}},\widehat{\bgamma}_c, \widehat{\bdelta}_c)$ are the nuisance estimates as in \eqref{51}, \eqref{52}, \eqref{207}, and \eqref{226}, respectively; see Algorithm \ref{alg:DTL} for details.


\subsection{Comparisons between the first-time working models $\mu_a^*(\cdot)$ and $\mu_{a,{\mbox{\tiny NR}}}^*(\cdot)$}\label{sec:mu-correct}
In the dynamic treatment setting, the relationship between the linear conditional mean function, $\mu_a(\cdot)$, and its corresponding approximations, $\mu_a^*(\cdot)$ and $\mu_{a,{\mbox{\tiny NR}}}^*(\cdot)$, obtained via different identification strategies, is not straightforward. Specifically, a linear $\mu_a(\cdot)$ is only a necessary condition for correctly specified linear working models; it does not guarantee
 equality. Additional conditions are required to ensure that $\mu_a^*(\cdot)=\mu_a(\cdot)$ and $\mu_{a,{\mbox{\tiny NR}}}^*(\cdot)=\mu_a(\cdot)$. In the following, we will discuss these necessary conditions in detail.
\vspace{0.5em}

\paragraph*{The working model $\mu_a^*(\cdot)$}
The proposed S-DRL approach utilizes the following identification $\mu_a(\bs_1)=E[Y^{\mbox{\tiny DR}}\mid\bS_1=\bs_1,A_1=a_1]$. This representation remains valid for a linear $\mu_a(\cdot)$ as long as either $\rho_a(\cdot)$ is truly logistic or $\nu_a(\cdot)$ is truly linear and not necessarily both. In this case, 
we also have the following equivalent expressions of $\bbeta_a^*$:
\begin{align*}
\bm{\beta}_{a}^{*} &=\argminn_{\boldsymbol{\beta}\in \mathbb{R}^{d_1+1}}E\left[(Y(a)-{\bV}^{\top}\bm{\beta})^2\mid A_1=a_1\right]=\argminn_{\bm{\beta}\in \mathbb{R}^{d_1+1}}E\left[(\mu_a(\bS_1)-{\bV}^{\top}\bm{\beta})^2 \mid A_1=a_1\right] .
\end{align*}
That is, $\mu_a^*(\bs_1)=\bv^{\top}\bbeta_a^*$ satisfies $\mu_a^*(\cdot)=\mu_a(\cdot)$. Hence, $\mu_a^*(\cdot)=\mu_a(\cdot)$ whenever 
(a) $\mu_a(\cdot)$ is a linear function and
(b) either $\rho_a(\cdot)$ is a logistic function or $\nu_a(\cdot)$ is a linear function.
It is worth noting that Condition (b) is already a prerequisite for the identification of DTE, as stated in Lemma \ref{lem_theta}. Consequently, there is no need to introduce any other conditions beyond those outlined in Condition (a).

\paragraph*{The working model $\mu_{a,{\mbox{\tiny NR}}}^*(\cdot)$} The DTL estimator relies on the nested-regression identification for which Conditions (a) and (b) above are insufficient for $\mu_{a,{\mbox{\tiny NR}}}^*(\cdot)=\mu_a(\cdot)$; see Example \ref{ex:mis-mu-seq} below. Additional conditions are needed. For instance, $\mu_{a,{\mbox{\tiny NR}}}^*(\cdot)=\mu_a(\cdot)$ if we further assume the following:
\begin{itemize}
\item[(c)] Under the treatment groups $(A_1,A_2)=(a_1,a_2)$ and $A_1=a_1$, the best linear slopes are the same while regressing $Y(a)$ on $\bU$, i.e., $\balpha_a^*=\bar{\bm{\alpha}}_{a}^{*}$, with $\balpha_a^*$
defined in \eqref{eq:alpha_star} and
\begin{align*}
\bar\balpha_a^*:=\argminn_{\balpha\in\R^{d+1}}E\left[(Y(a)-\bU^\top\balpha)^2|A_{1}=a_1\right].
\end{align*}
\end{itemize}
One sufficient but not necessary condition for (c) is that the second-time conditional mean function $\nu_a(\cdot)$ is also linear; see further justifications in Section A of Supplementary Materials \citep{bradic2023supplement}.


\paragraph*{The misspecification errors of $\mu_a^*(\cdot)$ and $\mu_{a,{\mbox{\tiny NR}}}^*(\cdot)$}
Now we consider the case where $\mu_a(\cdot)$ is possibly non-linear and compare the misspecification (approximation) errors of the linear working models $\mu_a^*(\cdot)$ and $\mu_{a,{\mbox{\tiny NR}}}^*(\cdot)$. As long as Condition (b) above holds, we have
\begin{align*}
\mathrm{Err}_{\mbox{\tiny NR}}~:=~&E\left[(\mu_a(\bS_1)-\bV^{\top}\bbeta_{a,{\mbox{\tiny NR}}}^*)^2|A_{1}=a_1\right]\\
=~&E\left[(\mu_a(\bS_1)-\bV^{\top}\bbeta_a^*)^2|A_{1}=a_1\right]+E\left[(\bV^{\top}(\bbeta_{a,{\mbox{\tiny NR}}}^*-\bbeta_{a}^*))^2|A_1=a_1\right]\\
\geq~&\mathrm{Err}_{\mbox{\tiny DR}}:=E\left[(\mu_a(\bS_1)-\bV^{\top}\bbeta_a^*)^2|A_{1}=a_1\right].
\end{align*}
Hence, we have the following conclusions.
(1) When $\nu_a(\cdot)$ is linear, $\mu_{a,{\mbox{\tiny NR}}}^*(\cdot)=\mu_a^*(\cdot)$, and both of them are the best linear approximations of the true conditional mean $\mu_a(\cdot)$ among the group $A_1=a_1$, i.e., $\mathrm{Err}_{\mbox{\tiny NR}}=\mathrm{Err}_{\mbox{\tiny DR}}$.
(2) When $\nu_a(\cdot)$ is non-linear and $\rho_a(\cdot)$ is logistic, $\mu_{a,{\mbox{\tiny NR}}}^*(\cdot)\neq\mu_a^*(\cdot)$ and $\bbeta_{a,{\mbox{\tiny NR}}}^*\neq\bbeta_{a}^*$ in general. If Assumption \ref{assumption_U} holds, the strict inequality above holds in that $\mathrm{Err}_{\mbox{\tiny NR}}>\mathrm{Err}_{\mbox{\tiny DR}}$ as long as $\bbeta^*_{a,{\mbox{\tiny NR}}}\neq\bbeta_{a}^*$. That is, the nested ``best linear approximation'', $\mu_{a,{\mbox{\tiny NR}}}^*(\cdot)$, is in general sub-optimal. If we further consider the case where $\mu_a(\cdot)$ is linear, then we have $\mathrm{Err}_{\mbox{\tiny DR}}=0$ and possibly $\mathrm{Err}_{\mbox{\tiny NR}}>0$; see also an illustration in Example \ref{ex:mis-mu-seq} below.

\begin{example}[A misspecified linear model when the truth is indeed linear]\label{ex:mis-mu-seq}
In what follows, we turn our attention to the nested-regression approach and offer an illustrative example that demonstrates how $\mu_{a,{\mbox{\tiny NR}}}^*(\cdot)\neq\mu_a(\cdot)$ for a linear $\mu_a(\cdot)$, a logistic $\rho_a(\cdot)$, a non-linear $\nu_a(\cdot)$, and a non-logistic $\pi_a(\cdot)$. To facilitate our discussion, we consider the case where both $\bS_1$ and $\bS_2$ are one-dimensional covariates with supports in $\R$. We assume $\bS_1$ follows a uniform distribution on $[-1,1]$, and consider an independent $\delta$ that follows a Bernoulli distribution with a probability of success $0.5$.
We further assume 
\begin{align*}
&P(A_1=a_1\mid\bS_1)=\bS_1/2+0.5,\quad\bS_2=\bS_1(2\delta-1),\\
&P(A_2=a_2\mid\bS,A_1=a_1)=g(\bS_2)=\exp(\bS_2)/[1+\exp(\bS_2)],\quad Y(a)=2\delta-1.
\end{align*}
It is not difficult to verify that $\mu_a(\cdot)\equiv0$ and $\mu_{a,{\mbox{\tiny NR}}}^*(\bs_1)=\bv^{\top}\bbeta_{a,{\mbox{\tiny NR}}}^*$, where $\bbeta_{a,{\mbox{\tiny NR}}}^*\approx(-0.09,0.3)^{\top}$. Notably, the nested-regression approach based on a linear working model $\mu_{a,{\mbox{\tiny NR}}}^*(\cdot)$ is misspecified, even though the true conditional mean is linear. Furthermore, because both the first-time conditional mean and propensity score models are misspecified, the DR representation of $\theta_a$ in Equation \eqref{rep:DR-theta} is no longer valid when the nested-regression-based working model $\mu_{a,{\mbox{\tiny NR}}}^*(\cdot)$ is considered. However, since the second-time propensity score $\rho_a(\cdot)$ is truly logistic, the S-DRL approach leads to $\mu_a^*(\cdot)=\mu_a(\cdot)\equiv0$ and the incidental validity of the DR representation of $\theta_a$ through Equation \eqref{rep:DR-theta}.
\end{example}

\subsection{The general DR DTE estimator}\label{sec:DR-gen}


In this section, we present a general doubly robust (DR) estimator of the DTE. We assume that we have access to estimators $\nuhat_a(\cdot)$, $\muhat_a(\cdot)$, $\pihat_a(\cdot)$, and $\rhohat_a(\cdot)$ of $\nu_a(\cdot)$, $\mu_a(\cdot)$, $\pi_a(\cdot)$, and $\rho_a(\cdot)$, respectively. The functions $\nu_a(\bs) $, $\pi_a(\bs_1) $, and $\rho_a(\bs) $ can be directly estimated using observable variables, while the remaining nuisance function $\mu_a(\cdot)$ can be identified using either the proposed DR representation \eqref{def:DR-mu} or the usual nested representation \eqref{rep:nested}. We consider flexible estimation strategies for all nuisance functions, including both parametric and non-parametric methods.
Using the DR representation of $\theta_a$ given by \eqref{rep:DR-theta}, we propose a general DR estimator of the DTE through a cross-fitting procedure. For any $K\geq2$, randomly split $\mathcal I=\{1,\dots,N\}$ into $K$ equal-sized parts with $|\mathcal I_{k}|=n=N/K$. For the sake of simplicity, we consider $n$ as an integer.
 Based on the training samples $\mathcal W_{-k}$, construct $\nuhat_{c,-k}(\cdot)$, $\muhat_{c,-k}(\cdot)$, $\pihat_{c,-k}(\cdot)$, and $\rhohat_{c,-k}(\cdot)$ as estimates of the nuisance functions $\nu_c(\cdot)$, $\mu_c(\cdot)$, $\pi_c(\cdot)$, and $\rho_c(\cdot)$, respectively. For each $c\in\{a,a'\}$, let
\begin{align}
\widehat{\psi}_{c,-k}(W)\hskip-2pt&:=\muhat_{c,-k}(\bS_{1})\hskip-2pt+\hskip-2pt\mathbbm1_{\hskip-1pt\tiny{\{A_{1}=c_1\hskip-1pt\}}}\hskip-2pt\frac{\nuhat_{c,-k}(\bS)-\muhat_{c,-k}(\bS_{1}) }{\pihat_{c,-k}(\bS_{1})}\label{def:psi-hat}\hskip-2pt+\hskip-2pt\mathbbm1_{\hskip-1pt\tiny{\{A_{1}=c_1\hskip-1pt,A_{2}=c_2\hskip-1pt\}}}\hskip-2pt\frac{Y-\nuhat_{c,-k}(\bS)}{\pihat_{c,-k}(\bS_{1})\rhohat_{c,-k}(\bS)}. \end{align}
The general DR DTE estimator and the corresponding variance estimate are then defined with $\widehat \Delta_{-k}(\cdot) = \widehat{\psi}_{a,-k} (\cdot) -\widehat{\psi}_{a',-k} (\cdot) $ and $\mathcal{K}=\{1,\cdots, K\}$ as
\begin{align}
\thetahat_\mathrm{gen} :=\frac{1}{N}\sum_{k\in \mathcal{K}, i\in\mathcal I_k} \widehat \Delta_{-k} (W_i) ,\label{def:theta-gen} 
\qquad \sigmahat_\mathrm{gen}^2 :=\frac{1}{N}\sum_{k\in \mathcal{K}, i\in\mathcal I_k} [\widehat \Delta_{-k} (W_i) -\thetahat_\mathrm{gen} ]^2.
\end{align}


\section{Asymptotic properties}\label{sec:asymp}
Here we establish consistency and asymptotic normality of the S-DRL, DTL, and the general DR estimator.
 \subsection{Properties of the S-DRL estimator}\label{sec:theory-DDRL}
We use $s_{\bm{\alpha}_{a}}:=\|\balpha_a^*\|_0$, $s_{\bm{\beta}_{a}}:=\|\bbeta_a^*\|_0$, $s_{\bm{\gamma}_{a}}:=\|\bgamma_a^*\|_0$, and $s_{\bm{\delta}_{a}}:=\|\bdelta_a^*\|_0$ to denote sparsity levels of the nuisance parameters as defined in \eqref{eq:alpha_star} , \eqref{def:beta-star}, \eqref{eq:gamma_star} and \eqref{eq:delta_star}, respectively. The number of covariates, $d_1$ and $d$, are possibly much larger than $N$; for simplicity, we consider $d_1\asymp d_2\asymp d:=d_1+d_2$.
\begin{assumption}\label{assumption_residual}
Define
$ \zeta_a:=\mathbbm1_{\{A_1=a_1,A_2=a_2\}} (Y(a)-\nu_{a}^{*}(\bS ) ) $, 
$ \varepsilon_a:=\mathbbm1_{\{A_1=a_1\}}$$ (\nu_{a}^{*}(\bS )-\mu_{a}^{*}(\bS_{1}) )$ and let $\zeta:=\zeta_a+\zeta_{a'}$, $\varepsilon:=\varepsilon_a+\varepsilon_{a'}$.
Suppose that there exist positive constants $\sigma_{\zeta}< \infty$ and $\sigma_{\varepsilon}< \infty$, such that $\zeta$ and $\varepsilon$ are sub-Gaussian, with $\|\zeta \|_{\psi_2} \leq \sigma\sigma_{\zeta}$, $\|\varepsilon\|_{\psi_2} \leq \sigma\sigma_{\varepsilon}$, and
\begin{align}
\sigma^{2}:=E[\psi_a(W;\eta_a^*)-\psi_{a'}(W;\eta_{a'}^*)-\theta]^2,\label{def:sigma}
\end{align}
where $\psi_c(\cdot;\cdot)$ is defined in \eqref{def:DR-score} and $\eta_c^*:=(\balpha_c^*,\bbeta_c^*,\bgamma_c^*,\bdelta_c^*)$.
\end{assumption}
\begin{assumption}\label{assumption_U}
Let $\bU$ be a sub-Gaussian vector such that $\|\bm{x}^\top\bU \|_{\psi_2}\leq \sigma_{u}\|\bm{x}\|_2$ for $\bm{x}\in \mathbb{R}^{d+1}$ and $\sigma_u>0$. Let $\lambda_{\min}(E[\bU\bU^{\top}\mathbbm{1}_{\{A_1=a_1\}}])\geq \kappa_l$ for any $a_1\in\{0,1\}$, with $\kappa_l>0$. 
\end{assumption}

 Assumptions \ref{assumption_residual} and \ref{assumption_U} are fairly general even among the high-dimensional literature. As $N \to \infty$, we allow $\psi_2$-norm bounds of $\zeta$ and $\varepsilon$ to diverge or to shrink to zero. When all the nuisance models are correctly specified, under the overlap condition in Assumption \ref{assumption_1}, 
 $\sigma^2\asymp E[\zeta^2]+E[\varepsilon^2]+E[\xi^2]\geq\max\{E[\zeta^2],E[\varepsilon^2]\},$ 
where $\xi :=\mu_{a}(\bS_{1})-\mu_{a'}(\bS_{1})-\theta$ denotes the centered conditional effect at the first exposure.
A sufficient condition for Assumption \ref{assumption_residual} is 
$\|\zeta/\sqrt{E[\zeta^2]}\|_{\psi_2} \leq\sigma_\zeta \mbox{ and } \|\varepsilon/\sqrt{E[\varepsilon^2]}\|_{\psi_2}\leq\sigma_\varepsilon,$ i.e., the ``normalized'' residuals have constant ${\psi_2}$-norms.
Note that, we allow $\sigma=\sigma_N$ to be dependent on $N$ while assuming $\sigma_\zeta$ and $\sigma_\varepsilon$ to be constants independent of $N$; $\sigma\to0$ and $\sigma\to\infty$ are both allowed as $N\to\infty$. The following Assumption \ref{assumption_prospensity} is an overlap condition for the working propensity score models, which is additionally required only when model misspecification occurs. 
\begin{assumption}\label{assumption_prospensity}
 Let $\pi_{a}^{*}(\cdot)$ and $\rho_{a}^{*}(\cdot)$ be such that 
$
 P(c_0 \leq \pi_{a}^{*}(\bS_{1}) \leq1-c_0)=1,\ 
 P(c_0 \leq \rho_{a}^{*}(\bS) \leq1-c_0)=1,
$
for a fixed constant $c_0 >0$.
\end{assumption}
The following theorem characterizes the consistency rate of the S-DRL estimator of $\theta$.

\begin{theorem} [Consistency of the S-DRL]\label{thm:con-DDRL}
Suppose that at least one of $\mu_{a}^{*}(\cdot)$ and $\pi_{a}^{*}(\cdot)$ is correctly specified, and at least one of the models $\nu_{a}^{*}(\cdot)$ and $\rho_{a}^{*}(\cdot)$ is correctly specified. Let Assumptions \ref{assumption_1}-\ref{assumption_prospensity} hold. Assume that ${\max\{s_{\bm{\alpha}_{a}}, s_{\bm{\beta}_{a}}, s_{\bm{\gamma}_{a}},s_{\bm{\delta}_{a}}\}\log(d)} =o(N)$, and either (a) $\|\bS_1\|_\infty\leq C$ almost surely, with a constant $C>0$, or (b) $s_{\bdelta_a}\log^2(d)=O(N)$. Then the sequential DR Lasso (S-DRL) estimator, $\widehat{\theta}$, as defined in Algorithm \ref{alg:S-DRL}, satisfies 
\begin{align*}
	\widehat{\theta}-\theta=O_p\left(\sigma\frac{s_1\log(d)}{N}+\sigma\sqrt{\frac{s_2\log(d)}{N}}+\frac{1}{\sqrt{N}}\sigma\right),
\end{align*}
as $N,d\to\infty$, with $s_1:=\max\{\sqrt{s_{\bm{\alpha}_{a}}s_{\bm{\delta}_{a}}}, \sqrt{s_{\bm{\beta}_{a}}s_{\bm{\gamma}_{a}}}\}$ and 
$s_2':=\max\{s_{\bm{\alpha}_{a}}\mathbbm{1}_{\{\pi_{a}^{*}\neq\pi_{a}\;\text{or}\;\rho_{a}^{*}\neq\rho_{a}\}},$ $s_{\bm{\beta}_{a}}\mathbbm{1}_{\{\pi_{a}^{*}\neq\pi_{a}\}}, s_{\bm{\gamma}_{a}}\mathbbm{1}_{\{\mu_{a,{\mbox{\tiny NR}}}^*\neq\mu_{a}\}}, s_{\bm{\delta}_{a}}\mathbbm{1}_{\{\nu_{a}^{*}\neq\nu_{a}\}}\}.$
\end{theorem}
We categorize bounded and unbounded covariate support and add a $\log(d)$ restriction to $s_{\bdelta_a}$ for the latter. The DR imputed outcome \eqref{eq:DRY} has an unbounded $\psi_\alpha$-Orlicz norm for any $\alpha>0$. Yet, if $\bS_1$ has bounded support, no extra sparsity condition is required as the inverse probability weighting is stable and the DR imputation has a well-behaved tail distribution.

 \begin{remark}[Comparison with low-dimensional DR DTE estimators]\label{remark:compare_lowdim}
 \cite{lewis2021double} applied debiased machine learning and g-estimation techniques in the framework of SNM models. However, the ``blip functions'' $\gamma_1(\bs_1,a_1)$ and $\gamma_2(\bs,a_1,a_2)$ -- which are defined as $E[Y(a_1,a_2)-Y(0,a_2)|A_1=a_1,\bS_1=\bs_1]$ and $E[Y(a_1,a_2)-Y(a_1,0)|A_1=a_1, A_2=a_2, \bS=\bs]$, respectively --  are considered low-dimensional and correctly specified for consistent estimation.  
 MSM models with low-dimensional confounders have been studied extensively, with significant theoretical advancements made in the seminal work of \cite{tchetgen2012semiparametric}. Additionally, \cite{robins2000robust}, \cite{murphy2001marginal}, \cite{bang2005doubly}, and \cite{yu2006double} explored DR DTE estimation with low-dimensional nuisances, proposing consistent and asymptotically normal DTE estimators given that either (a) all conditional mean models are correctly parametrized or (b) all propensity score models are correctly parametrized. More recently, \cite{babino2019multiple} proposed a multiple robust (MR) estimator that allows for an additional model misspecification scenario (c), where only the first-time conditional mean model and second-time propensity score model are correctly parametrized. However, all of the aforementioned approaches require low-dimensional, parametric nuisance estimates that are root-$N$ consistent.
 
 
 Our proposed method accommodates high-dimensional and possibly non-parametric nuisance estimates, which may not necessarily be root-$N$ consistent. This approach allows for a consistent estimate of the DTE even in challenging scenarios where only the second-time conditional mean and first-time propensity score models are correctly specified (misspecification scenario (d)). 
This scenario is more common in practice due to the difficulty of identifying the first-time conditional mean model. When $\nu_a(\bs)=\nu_a^*(\bs)=\bu^\top\balpha_a^*$ is linear, as per \eqref{rep:nested}, a linear $\mu_a(\cdot)$ would require  $E[\bU^\top\balpha_a^*|\bS_1,A_1=a_1]$ to be linear in $\bS_1$ -- an unlikely scenario if any of the $\bS_2$ are binary or discrete. 
\end{remark}

When all the nuisance models are correctly specified, we further establish asymptotic normality results and the corresponding rate DR property of the S-DRL estimator.

\begin{theorem}[Asymptotic normality of the S-DRL]\label{thm:dr_theta}
Suppose that all the nuisance models $\mu_a^*(\cdot)$, $\nu_a^*(\cdot)$, $\pi_a^*(\cdot)$, and $\rho_a^*(\cdot)$ are correctly specified. Let Assumptions \ref{assumption_1}-\ref{assumption_U} hold. Assume that ${\max\{s_{\bm{\alpha}_{a}}, s_{\bm{\beta}_{a}}, s_{\bm{\gamma}_{a}},s_{\bm{\delta}_{a}}\}\log(d)} =o(N)$, and either (a) $\|\bS_1\|_\infty\leq C$ almost surely, with some constant $C>0$, or (b) $s_{\bdelta_a}\log^2(d)=O(N)$. Additionally, assume the following product-rate condition:
 \begin{align} \label{assumption_5'}
\max\{s_{\bm{\gamma}_{a}}s_{\bm{\beta}_{a}},s_{\bm{\delta}_{a}}s_{\bm{\alpha}_{a}}\}\log^2(d)=o(N).
\end{align}
Then, with $\sigma^2$ in \eqref{def:sigma} and $\widehat{\sigma}^2$ in \eqref{def:sigma-hat}, the S-DRL estimator satisfies $\sigma^{-1}\sqrt{N}(\widehat{\theta}-\theta)\leadsto N(0,1)$ and ${\hat\sigma}^{-1}\sqrt{N}(\widehat{\theta}-\theta)\leadsto N(0,1)$ as $N,d\to\infty$.
\end{theorem}

Theorem \ref{thm:dr_theta},
as per \cite{bang2005doubly}, indicates that the S-DRL estimator achieves the semiparametric efficiency bound when all nuisance models are correctly specified.

\begin{remark}[Comparison with static ATE estimators] \label{rem:2}
We compare sparsity conditions in Theorem \ref{thm:dr_theta} with the literature on estimating ATE through DR for a single exposure.
The ATE can be seen as a special case of the DTE where we assume that $\bS_{1}$ and $A_1$ are completely random. This allows root-$N$ estimation of $\mu_a(\cdot)$ and $\pi_{a}(\cdot)$. Consequently, Theorem \ref{thm:dr_theta} requires $s_{\balpha_a}+s_{\bm{\delta}_a}=o(N/\log(d))$ and $s_{\balpha_a}s_{\bm{\delta}_a}=o(N/\log^2(d))$, which are less restrictive than \cite{farrell2015robust,tan2020model,dukes2021inference,dukes2020doubly,avagyan2021high} and are aligned with \cite{chernozhukov2018double,smucler2019unifying}.
\end{remark}

\subsection{Properties of the DTL estimator}\label{sec:theory-DTL}

With slight abuse of notation, let $s_{\bbeta_a}=\|\bbeta_{a,{\mbox{\tiny NR}}}^*\|_0$. 
\vskip -10pt
\begin{theorem} [Consistency of the DTL]\label{thm:con-DTL}
Suppose that at least one of $\mu_{a,{\mbox{\tiny NR}}}^*(\cdot)$ and $\pi_{a}^{*}(\cdot)$ is correctly specified, and at least one of the models $\nu_{a}^{*}(\cdot)$ and $\rho_{a}^{*}(\cdot)$ is correctly specified.
Let Assumptions \ref{assumption_1}- \ref{assumption_prospensity} hold with $\mu_a^*(\cdot)$ and $\bbeta_a^*$ replaced by $\mu_{a,{\mbox{\tiny NR}}}^*(\cdot)$ and $\bbeta_{a,{\mbox{\tiny NR}}}^*$. Assume that ${\max\{s_{\bm{\alpha}_{a}}, s_{\bm{\beta}_{a}}, s_{\bm{\gamma}_{a}},s_{\bm{\delta}_{a}}\}\log(d)} =o(N)$.
Then the DTL estimator satisfies, as $N,d\to\infty$,
\begin{align} \label{rate:rateDR}
\widehat{\theta}_{\tiny\mbox{DTL}}-\theta=O_p\left(\sigma\frac{s_1'\log(d)}{N}+\sigma\sqrt{\frac{s_2'\log(d)}{N}}+\frac{1}{\sqrt{N}}\sigma\right),
\end{align}
with \ $s_1':=\max\{\sqrt{s_{\bm{\alpha}_{a}}s_{\bm{\gamma}_{a}}},\sqrt{s_{\bm{\alpha}_{a}}s_{\bm{\delta}_{a}}}, \sqrt{s_{\bm{\beta}_{a}}s_{\bm{\gamma}_{a}}}\}$
and $s_2':=\max\{s_{\bm{\alpha}_{a}}\mathbbm{1}_{\{\pi_{a}^{*}\neq\pi_{a}\;\text{or}\;\rho_{a}^{*}\neq\rho_{a}\}},$ $s_{\bm{\beta}_{a}}\mathbbm{1}_{\{\pi_{a}^{*}\neq\pi_{a}\}}, s_{\bm{\gamma}_{a}}\mathbbm{1}_{\{\mu_{a,{\mbox{\tiny NR}}}^*\neq\mu_{a}\}}, s_{\bm{\delta}_{a}}\mathbbm{1}_{\{\nu_{a}^{*}\neq\nu_{a}\}}\}.$
\end{theorem}

\begin{theorem}[Asymptotic normality of the DTL]\label{thm_sparse}
Suppose that all the nuisance models $\mu_{a,{\mbox{\tiny NR}}}^*(\cdot)$, $\nu_a^*(\cdot)$, $\pi_a^*(\cdot)$, and $\rho_a^*(\cdot)$ are correctly specified. Let Assumptions \ref{assumption_1}-\ref{assumption_U} hold with $\mu_a^*(\cdot)$ and $\bbeta_a^*$ replaced by $\mu_{a,{\mbox{\tiny NR}}}^*(\cdot)$ and $\bbeta_{a,{\mbox{\tiny NR}}}^*$. Assume that ${\max\{s_{\bm{\alpha}_{a}}, s_{\bm{\beta}_{a}}, s_{\bm{\gamma}_{a}},s_{\bm{\delta}_{a}}\}\log(d)} =o(N)$. Additionally, assume the following product-rate condition:
\begin{align} 
	 {\max\{
	s_{\bm{\gamma}_{a}}s_{\bm{\beta}_{a}}, 
	s_{\bm{\delta}_{a}}s_{\bm{\alpha}_{a}},
	s_{\bm{\gamma}_{a}}s_{\bm{\alpha}_{a}}
	\}\log^2(d)}&=o(N). \label{3.43}
\end{align}
 Then, with $\sigma^2$ in \eqref{def:sigma} and $\widehat{\sigma}_{\tiny\mbox{DTL}}^2$ in \eqref{def:sigma-hat-seq}, the DTL estimator satisfies ${\sigma}^{-1}\sqrt{N}(\widehat{\theta}_{\tiny\mbox{DTL}}-\theta)\leadsto N(0,1)$ and ${\hat\sigma_{\tiny\mbox{DTL}}}^{-1}\sqrt{N}(\widehat{\theta}_{\tiny\mbox{DTL}}-\theta)\leadsto N(0,1)$ as $N,d\to\infty$.
\end{theorem}

\begin{remark}[Comparisons between the S-DRL and DTL estimators]\label{sec:compare}
 \ 
 
\vspace{0.2em}
\noindent{\it \underline{Consistency.}} 
Theorems \ref{thm:con-DDRL} and \ref{thm:con-DTL} establish the consistency of two distinct estimators, S-DRL and DTL. While both estimators necessitate correct specification of at least one of $\nu_{a}^{}(\cdot)$ and $\rho_{a}^{}(\cdot)$, their requirements on $\mu_{a}^{}(\cdot)$ and $\pi_{a}^{}(\cdot)$ differ due to their distinct conditional mean models. Specifically, S-DRL mandates 
either $\mu_a^*(\cdot)=\mu_a(\cdot)$ or $\pi_a^*(\cdot)=\pi_a(\cdot)$, 
which can be guaranteed by the linearity of $\mu_a(\cdot)$ or the logistic form of $\pi_a(\cdot)$. In contrast, DTL imposes the stricter condition of 
either $\mu_{a,{\mbox{\tiny NR}}}^*(\cdot)=\mu_a(\cdot)$ or $\pi_a^*(\cdot)=\pi_a(\cdot)$, 
which may not be fulfilled even when $\mu_a(\cdot)$ is linear, as discussed in Section \ref{sec:mu-correct}.
 For a comprehensive summary, see Table \ref{table:consistency}.

\vspace{0.2em}
 \noindent{\it \underline{Rate of estimation.}}
Different rate of estimation of S-DRL and DTL are presented in Table \ref{table:consistency-rate}. 
Table \ref{table:consistency-rate} reveals a symmetrical pattern in the rates of S-DRL, whereas DTL exhibits an asymmetric behavior -- the sparsity levels $s_{\balpha_a}$ and $s_{\bgamma_a}$ appear to be more influential than $s_{\bbeta_a}$ and $s_{\bdelta_a}$. When either $\rho_a(\cdot)$ or $\mu_a(\cdot)$ are misspecified, there is no difference in the rates. 
 However, when they are both correctly specified, the rate of DTL contains additional terms that involve the sparsity level $s_{\balpha_a}$ (and $s_{\bgamma_a}$ under certain circumstances). 
 Notably, if $s_{\balpha_a}$ is relatively large, S-DRL exhibits a faster consistency rate than DTL.

\vspace{0.2em}
\noindent{\it \underline{Asymptotic normality.}}
The S-DRL and DTL estimators are both asymptotically normal, as proven in Theorems \ref{thm:dr_theta} and \ref{thm_sparse}. When $\mu_{a,{\mbox{\tiny NR}}}^*(\cdot)=\mu_a(\cdot)=\mu_a^*(\cdot)$, their asymptotic efficiency is the same. However, they require different sparsity conditions. The DTL estimator requires three product-sparsity conditions, specifically: (a) the first-time conditional mean $\mu_a(\cdot)$ and first-time propensity score $\pi_a(\cdot)$, (b) the second-time conditional mean $\nu_a(\cdot)$ and second-time propensity score $\rho_a(\cdot)$, and (c) the second-time conditional mean $\nu_a(\cdot)$ and first-time propensity score $\pi_a(\cdot)$, which are given in equation \eqref{3.43}. On the other hand, the S-DRL estimator only requires two product-sparsity conditions, as defined in \eqref{assumption_5'}. These correspond to (a) and (b) above, with (c) becoming irrelevant.
The S-DRL estimator is known as (sequential) rate DR because it is asymptotically normal when the product-sparsity of the nuisance parameters is $o(N/\log^2(d))$ for each exposure time; see more details in Remark \ref{rem:2}.
 
\begin{table}[]
\begin{threeparttable}[b]
\caption{Consistency rates of $\thetahat$ and $\thetahat_{\tiny\mbox{DTL}}$ under various misspecification settings when the conditions in Theorems \ref{thm:con-DDRL} and \ref{thm:con-DTL} are satisfied with $\sigma\asymp1$. Misspecified and correctly specified models are denoted by \xmark$\;$and \cmark, respectively. LHS below corresponds to the consistency rate provided to the left of that position. }\label{table:consistency-rate}
\renewcommand{\arraystretch}{1} 
\centering
\begin{tabular}{ c | c | c | c | c | c }
\Xhline{2\arrayrulewidth}
\multicolumn{4}{c|}{Correctly specified models}&\multirow{2}{*}{Consistency rate of $\thetahat$}&\multirow{2}{*}{Consistency rate of $\thetahat_{\tiny\mbox{DTL}}$}\\
\cline{1-4}
$\rho_a(\cdot)$&$\pi_{a}(\cdot)$&$\mu_{a}(\cdot)$&$\nu_a(\cdot)$&&\\
\hline
\cmark&\cmark&\cmark&\cmark&{\fontsize{5}{6}\selectfont $ \frac{1}{\sqrt N}+\frac{\sqrt{{s_{\balpha_a}s_{\bdelta_a}}}\log d}{N}+\frac{\sqrt{s_{\bbeta_a}s_{\bgamma_{a}}}\log d}{N}$} &LHS{\fontsize{5}{6}\selectfont $\;+\;\frac{\sqrt{s_{\balpha_a}s_{\bgamma_a}}\log d}{N}$}\\
\hline
\xmark&\cmark&\cmark&\cmark&{\fontsize{5}{6}\selectfont $\frac{\sqrt{s_{\bbeta_a}s_{\bgamma_{a}}}\log d}{N}+\sqrt\frac{s_{\balpha_a}\log d}{N}$}&LHS\\ 
\hline
\cmark&\xmark&\cmark&\cmark&{\fontsize{5}{6}\selectfont $\frac{\sqrt{s_{\balpha_a}s_{\bdelta_a}}\log d}{N}+\sqrt\frac{s_{\bbeta_{a}}\log d}{N}$}&{\fontsize{5}{6}\selectfont $\sqrt{\frac{{s_{\balpha_a}\log d}}{ N}}+\sqrt\frac{s_{\bbeta_a}\log d}{N}$}\\
\hline
\cmark&\cmark&\xmark&\cmark&{\fontsize{5}{6}\selectfont $\frac{\sqrt{s_{\balpha_a}s_{\bdelta_a}}\log d}{N}+\sqrt\frac{s_{\bgamma_{a}}\log d}{N}$}&LHS\\
\hline
\cmark&\cmark&\cmark&\xmark&{\fontsize{5}{6}\selectfont $\frac{\sqrt{s_{\bbeta_a}s_{\bgamma_{a}}}\log d}{N}+\sqrt\frac{s_{\bdelta_a}\log d}{N}$}&LHS{\fontsize{5}{6}\selectfont $\;+\;\frac{\sqrt{s_{\balpha_a}s_{\bgamma_{a}}}\log d}{N}\;$}\tnote{$^*$}\\
\hline
\xmark&\xmark&\cmark&\cmark&{\fontsize{5}{6}\selectfont $\sqrt\frac{s_{\balpha_a}\log d}{N}+\sqrt\frac{s_{\bbeta_a}\log d}{N}$}&LHS\\
\hline
\xmark&\cmark&\xmark&\cmark&{\fontsize{5}{6}\selectfont $\sqrt\frac{s_{\balpha_a}\log d}{N}+\sqrt\frac{s_{\bgamma_a}\log d}{N}$}&LHS\\
\hline
\cmark&\xmark&\cmark&\xmark&{\fontsize{5}{6}\selectfont $\sqrt\frac{s_{\bbeta_a}\log d}{N}+\sqrt\frac{s_{\bdelta_a}\log d}{N}$}&LHS{\fontsize{5}{6}\selectfont $\;+\;\sqrt\frac{s_{\balpha_a}\log d}{N}\;$}\tiny\tnote{$^{**}$}\\
\hline
\cmark&\cmark&\xmark&\xmark&{\fontsize{5}{6}\selectfont $\sqrt\frac{s_{\bgamma_{a}}\log d}{N}+\sqrt\frac{s_{\bdelta_{a}}\log d}{N}$}&LHS\\
\Xhline{2\arrayrulewidth}
\end{tabular}
\begin{tablenotes}
\item [\tiny *] \tiny This consistency rate requires $\mu_{a,{\mbox{\tiny NR}}}^*(\cdot)=\mu_a(\cdot)$. Without it, the DTL's rate is equal to the last row of the above table; see Section \ref{sec:mu-correct}.
\item [\tiny **] This consistency rate requires $\mu_{a,{\mbox{\tiny NR}}}^*(\cdot)=\mu_a(\cdot)$. Otherwise, the DTL is inconsistent; see Table \ref{table:consistency}.
\end{tablenotes}
\end{threeparttable}
\end{table}
 
\end{remark}

\subsection{Properties of the general DR estimator}\label{sec:theory-gen}

In this section, we provide a new consistency result of the general DR DTE estimator.
Here we consider arbitrary working models $\pi_a^*(\cdot)$, $\rho_a^*(\cdot)$, $\mu_a^*(\cdot)$, and $\nu_a^*(\cdot)$, which may not follow the logistic or linear forms as before. For each $c\in\{a,a'\}$, define the corresponding DR score function as
\begin{align}
{\psi}_c^*(W):=\mu_{c}^*(\bS_{1})+\mathbbm1_{\{A_{1}=c_1\}}\frac{ \nu_{c}^*(\bS)-\mu_{c}^*(\bS_{1}) }{\pi_{c}^*(\bS_{1})}+\mathbbm1_{\{A_{1}=c_1, A_{2}=c_2\}}\frac{ Y-\nu_{c}^*(\bS)}{\pi_{c}^* (\bS_{1})\rho_{c}^*(\bS)},\label{def:psi-gen-star}
\end{align}
with
\begin{align}\label{def:sigma-gen}
\sigma^2:=E\left[{\psi}_a^*(W)-{\psi}_{a'}^*(W) - \theta\right]^2.
\end{align}

\begin{assumption}\label{assumption_4}
For positive sequences $a_N=o(\sigma)$, $b_N=o(\sigma)$, $c_N=o(1)$, and $d_N=o(1)$, let 
$
	 E[\widehat{\nu}_{a}(\bS )-\nu_{a}^{*}(\bS )]^2 =O_p(a_N^2), \ 
	 E[\widehat{\mu}_{a}(\bS_{1})-\mu_{a}^{*}(\bS_{1})]^2 =O_p(b_N^2),
$ $
	 E[\widehat{\pi}_a(\bS_{1})-\pi_{a}^{*}(\bS_{1})]^2 =O_p(c_N^2), \ $ and $
	 E[\widehat{\rho}_{a}(\bS )-\rho_{a}^{*}(\bS )]^2 =O_p(d_N^2).
$
Moreover, for $c_0 \in (0,1/2)$, $P(c_0 \leq \widehat{\pi}_a(\bS_{1}) \leq1-c_0)=1$ and $P(c_0 \leq \widehat{\rho}_{a}(\bS ) \leq1-c_0)=1$ with probability approaching one. 
For $\zeta$ and $\varepsilon$ defined in Assumption \ref{assumption_residual},
$\max \{ {E|\zeta|^q}/{[E|\zeta|^2]^{{q}/{2}}}$, ${E|\varepsilon|^q}/{[E|\varepsilon|^2]^{{q}/{2}}}$, ${E|\xi|^q}/{[E|\xi|^2]^{{q}/{2}}} \} \leq C$, $P(E[\zeta^{2}|\bS ]\leq CE[\zeta^2]) =1$, and $P(E[\varepsilon^{2}|\bS_{1}]\leq CE[\varepsilon^2]) =1$, for constants $C>0$ and $q>2$.
\end{assumption}


The probability measures and corresponding expectations above are with respect to a fresh draw $\bS$ (or $\bS_1$). Note that Assumption \ref{assumption_4} allows for $\rho_a^*(\cdot)$ to differ from $\rho_a(\cdot)$ while requiring a overlap condition 
 consistent with the existing literature; see, e.g., \cite{chernozhukov2018double}.
The $\max$ condition, satisfied by sub-Gaussian random variables, controls
 the tails of $\zeta$, $\varepsilon$, and $\xi$. 
The last two conditions of Assumption \ref{assumption_4} aim to ensure the interpretability of the results by bounding the ``normalized'' conditional second moments. 

\begin{theorem} (Consistency of the general DR estimator) \label{thm:misspecified}
Suppose that at least one of $\mu_{a}^{*}(\cdot)$ and $\pi_{a}^{*}(\cdot)$ is correctly specified, and at least one of $\nu_{a}^{*}(\cdot)$ and $\rho_{a}^{*}(\cdot)$ is correctly specified. 
Let Assumptions \ref{assumption_1}, \ref{assumption_prospensity}, and \ref{assumption_4} hold. Additionally, let $E[\mathbbm1_{\{A_1=a_1\}}(\mu_a(\bS_1)-\mu_a^*(\bS_1))^2]\leq C_\mu\sigma^2$, for some $C_\mu>0$. Then the general DR estimator, $\widehat{\theta}_\mathrm{gen}$, satisfies
$\widehat{\theta}_\mathrm{gen}-\theta=O_p(q_N)$ as $N,d\to\infty$, where 
$q_N=b_Nc_N+a_Nd_N+b_N\mathbbm{1}_{\{\pi_{a}^{*}\neq\pi_{a}\}}+a_N\mathbbm{1}_{\{\rho_{a}^{*}\neq\rho_{a}\}}+c_N\sigma\mathbbm{1}_{\{\mu_{a}^{*}\neq\mu_{a}\}}$$+d_N\sigma\mathbbm{1}_{\{\nu_{a}^{*}\neq\nu_{a}\}}$$+{\sigma/\sqrt{N}}$. 
\end{theorem} 
 
 The aforementioned theorem yields two distinct conclusions that warrant discussion. The first pertains to the conditions that are necessary for achieving root-$N$ consistency, while the second relates to the issue of consistency under model misspecification. If all the models are correctly specified, $\thetahat_\mathrm{gen}-\theta=O_p(b_Nc_N+a_Nd_N+\sigma N^{-1/2})$ and root-$N$ consistency happens as long as $b_Nc_N+a_Nd_N=O(N^{-1/2})$ and $\sigma=O(1)$.

 If, on the other hand, at least one of the nuisance models is correctly specified at each exposure time, $\thetahat_\mathrm{gen}$ is a consistent estimator as long as $\sigma=O(1)$. Model misspecification can take an asymmetric form in terms of estimation rates. Specifically, while $q_N$ is symmetric in the rates themselves, the dependence of $b_N$ on $a_N$ and/or $d_N$ can introduce potential asymmetries. For example, when performing $\ell_1$-regularized nested regression, Theorem \ref{cor_mu1} indicates that $b_N$ depends additively on $a_N$ as $b_N=b_N^*+a_N$, where $\{b_N^*\}^2= {s_{\bbeta_a}\log(d)/N}$ is the estimation error of $\mu_a(\cdot)$ when $\nu_a(\cdot)$ is known. As a result, the consistency rate of $\thetahat_\mathrm{gen}$ includes an additional term $a_Nc_N+a_N\mathbbm1_{\{\pi_a^*\neq\pi_a\}}$, as illustrated by the DTL estimator in \eqref{rate:rateDR}.

On the other hand, if we consider a new DR approach based on the DR representation \eqref{def:DR-mu} to estimate $\mu_a(\cdot)$, the corresponding $b_N$ will depend on both $a_N$ and $d_N$. For instance, when all the nuisance models are correctly specified, Theorem \ref{thm:dr_mu} indicates that $\ell_1$-regularized DR estimation leads to a symmetric rate with $b_N=b_N^*+a_Nd_N$, resulting in $\widehat{\theta}_\mathrm{gen}-\theta=O_p(b_N^*c_N+a_Nd_N+1/\sqrt N)$ if $\sigma\asymp1$ and $a_N,b_N^*,c_N,d_N=o(1)$. The approach used to estimate the first-time conditional mean $\mu_a(\cdot)$ determines the persistence of the symmetry.




\begin{theorem} (Asymptotic normality of the general DR estimator) \label{thm:DML}
Suppose that all the nuisance models $\mu_a^*(\cdot)$, $\nu_a^*(\cdot)$, $\pi_a^*(\cdot)$, and $\rho_a^*(\cdot)$ are correctly specified. Whenever Assumptions \ref{assumption_1}, \ref{assumption_4} hold and the rates of estimation satisfy the following product conditions
\begin{equation}\label{eq:rdr}
b_Nc_N=o(\sigma {N}^{-1/2}), \qquad a_Nd_N=o(\sigma {N}^{-1/2}),
\end{equation}
then the estimator $\widehat{\theta}_\mathrm{gen}$ satisfies
$\sigma^{-1}\sqrt{N}(\widehat{\theta}_\mathrm{gen}-\theta)\leadsto N(0,1)$ and $\sigmahat_\mathrm{gen}^{-1}\sqrt{N}(\widehat{\theta}_\mathrm{gen}-\theta)\leadsto N(0,1)$
as $N\to\infty$ (and potentially $d\to\infty$), where $\sigma^2$ and $\widehat{\sigma}_\mathrm{gen}^2$ are defined in \eqref{def:sigma-gen} and \eqref{def:theta-gen}, respectively.
\end{theorem}

\begin{remark}[Rate double robustness]\label{remark:RDR}

The topic of rate double robustness in the presence of multiple exposures has been addressed in \cite{bodory2022evaluating}, but the authors require three product-rate conditions, including $a_Nc_N=o( {N}^{-1/2})$ as stated in their Assumption 3.1, in addition to the two product-rates \eqref{eq:rdr}. As a result, this does not allow for relatively large values of $a_N$ and $c_N$, which is permissible in our setting. For example, consider a special case where $A_2$ is completely random and $\mu_a(\cdot)$ is a constant function. In conjunction with the sequential ignorability condition of Assumption \ref{assumption_1}, we have $\mu_a(\bs_1)=E[Y|\bS_1=\bs_1,A_1=a_1,A_2=a_2]$, allowing $\mu_a(\cdot)$ to be identified directly through observable variables. In this scenario, both $\mu_a(\cdot)$ and $\rho_a(\cdot)$ can be estimated with a root-$N$ rate. Therefore, Theorem \ref{thm:DML} only requires $a_N+c_N=o(1)$, i.e., $\nu_a(\cdot)$ and $\pi_a(\cdot)$ are consistently estimated.
In comparison, \cite{bodory2022evaluating} additionally require $a_Nc_N=o(N^{-1/2})$, which may not be feasible when $\nu_a(\cdot)$ and $\pi_a(\cdot)$ are only known to be Lipschitz continuous, and the covariate dimensions satisfy $d\geq d_1>2$. 

Our proof relies on a nuanced decomposition of the second-order estimation bias resulting from the estimation errors of $\nuhat_{c,-k}(\cdot)$ and $\pihat_{c,-k}(\cdot)$. Leveraging the Neyman orthogonality of the DR score \eqref{def:psi-gen-star}, we reformulate the second-order bias as the product $E[\mathbbm1_{\{A_1=c_1\}}(1-\mathbbm1_{\{A_2=c_2\}}/\rho_c^*(\bS))(\nuhat_{c,-k}(\bS)/\pihat_{c,-k}(\bS_1)-\nu_c^*(\bS)/\pi_c^*(\bS_1))]$. In our analysis, we then examine this product collectively rather than as separate terms, resulting in a cohesive flow of the arguments. We show that the population effect of this term is exactly zero whenever the model $\rho_c^*(\cdot)$ is correctly specified -- a condition fullfilled when discussing asymptotic normality in high-dimensional regimes. We then showcase that the sample equivalent is negligible and does not contribute to the estimation error.

\end{remark}

\section{Supporting theoretical discoveries}\label{sec:nuisance}

This section presents supplementary findings that, while not the primary focus of the research, may nonetheless be informative or valuable.

\subsection{An adaptive theory for imputed Lasso with high-dimensional covariates}\label{sec:imp-gen}
Let $\S:=(Y_i^*,\bX_i)_{i=1}^M$ be i.i.d. observations and let $(Y^*,\bX)$ be an independent copy with $Y^*\in\R$ and $\bX\in\R^d$. 
Suppose that there exists, possibly random, $\Yhat_i\in\R$. Note that for some, and possibly all observations, outcomes $Y^*$ are imputed, i.e., estimated using $\hat Y_i$. The true population slope is defined as 
$\bbeta^*=\mathrm{argmin}_{\bbeta\in\R^d}E [Y^*-\bX^\top\bbeta ]^2.$ Then its estimator is 
\begin{align}
	\bbetahat&~:=~\mathrm{argmin}_{\bbeta\in\R^d}\left\{M^{-1}\sum_{i=1}^M[\Yhat_i-\bX_i^\top\bbeta]^2+\lambda_M\|\bbeta\|_1\right\}\label{def:betahat_imp},
\end{align}
for $\lambda_M>0$.
The following result delineates properties of such imputed-Lasso estimator, $\bbetahat$.

\begin{theorem}[General imputed Lasso estimators]\label{imputation}
	
	Let $s=\|\bbeta^*\|_0$ and $\varepsilon_i:=Y_i^*-\bX_i^\top\bbeta^*$.
	Suppose that $\|\boldsymbol{a}^\top\bX\|_{\psi_2}\leq\sigma_\bX\|\boldsymbol{a}\|_2$ for $\boldsymbol{a}\in\R^d$, $\lambda_{\min}(E[\bX\bX^\top])>\lambda_\bX$, and $\|\varepsilon\|_{\psi_2}\leq\sigma$ with $\sigma_\bX,\lambda_\bX>0$ and $\sigma=\sigma_M>0$ potentially dependent on $M$. For some $\delta_M>0$, define the event
$\mathcal E_1:=\{M^{-1}\sum_{i=1}^M[\Yhat_i-Y_i^*]^2<\delta_M^2\}.$
	For any $t>0$, let $\lambda_M:=16\sigma\sigma_\bX(\sqrt{\log(d)/M}+t)$. Then on the event $\mathcal E_1$, when 
	$M>\max\{\log(d),100\kappa_2^2s\log(d)\},$
	we have
	\begin{align*}
		\|\bbetahat-\bbeta^*\|_2&\leq\max\left(\frac{5\kappa_2\delta_M^2}{4\sigma\sigma_{\bX}}+4\kappa_1^{-1/2}\delta_M,8\kappa_1^{-1}\sqrt s\lambda_M\right),
	\end{align*}
	with probability at least $1-2\exp(-\frac{4Mt^2}{1+2t+\sqrt{2t}})-c_1\exp(-c_2M)$, where $\kappa_1,\kappa_2,c_1,c_2>0$ are constants independent of $M$ and $d$. 
	Moreover, if $\delta_M=o(\sigma)$, $P(\mathcal E_1)=1-o(1)$, and $M\gg s\log(d)$, then with $\lambda_M\asymp\sigma\sqrt{\log(d)/M}$, as $M,d\to\infty$,
	\begin{align}\label{rate:ourthm1}
		\|\bbetahat-\bbeta^*\|_2&=O_p\Bigl(\sigma\sqrt\frac{s\log(d)}{M}+\delta_M\Bigl).
	\end{align}
\end{theorem} 
The above result is of independent interests as it provides a general theory for any Lasso estimators based on imputed outcomes. It contributes to the literature in three specific aspects: (a) The ``imputation error'', $\Yhat_i-Y_i^*$, can be dependent on and even possibly correlated with covariates $\bX_i$; (b) We allow every $\Yhat_i$ to be fitted using the same set of observations $(X_i,Y_i)_{i=1}^M$, i.e., $\Yhat_i$s are also possibly dependent on each other; and (c) The tuning parameter $\lambda_M$ is of the same order as the one chosen for the fully observed data and is independent of the imputation error or any sparsity parameter.

Compared with the existing literature, Theorem \ref{imputation} requires weaker sparsity assumptions and provides better rates of estimation. Imputed Lasso of \cite{zhu2019proper} requires $s=o(M)$, $\log(d)=o(\sqrt M)$, and $\sqrt s\delta_M=o(1)$. That of \cite{lewis2021double} requires an ultra-sparse setup $s^2\log(d)=o(M)$ and $s\delta_M=o(1)$. In contrast, we only require $s\log(d)=o(M)$ and $\delta_M=o(1)$. 
Additionally, \cite{zhu2019proper} choose a tuning parameter $\lambda_M\gg\sqrt s\delta_M$ and  provide $\| \bbetahat -\bbeta^* \|_2 =O_p(\sqrt s\delta_M+\sqrt{s/M})$ upon requiring strict conditions for assuring model selection consistency; see Theorem 2 therein. 
\cite{lewis2021double} take $\lambda_M\asymp\sqrt{\log(d)/M}+\delta_M$ and establish 
$\| \bbetahat -\bbeta^* \|_2 =O_p(s\sqrt{\log (d)/M}+ s\delta_M)$; see Theorem 13 therein.
In contrast, we allow $\lambda_M\asymp\sqrt{\log(d)/M}$. The imputation error $\delta_M$ only appears in our final estimation rate \eqref{rate:ourthm1} additively, and its effect does not explode as the sparsity level grows. 

Theorem \ref{imputation} requires development of new proof techniques: the standard Lasso inequality followed by the cone-set reduction are not valid in this instance. In fact, the error, $\bbetahat-\bbeta^* $, no longer belongs to the accustomed cone set, $\mathcal C(S,k):=\{\bDelta\in\R^d:\|\bDelta_{S^c}\|_1\leq k\|\bDelta_S\|_1\}$. Instead, we identify a new set, $\widetilde{\mathcal C}(S):=\{\bDelta\in\R^d:\|\bDelta_{S^c}\|_1\leq 16 \lambda_M^{-1} \delta_M^2, \|\bDelta_{S}\|_1\leq 4\lambda_M^{-1} \delta_M^2\}$, and show that the error vector belongs to the union of the above two sets. This enables us to avoid choosing a tuning parameter dependent on the imputation error, as is done in the above literature. 
Moreover, our results are adaptive to the imputation error in that 
when there is no imputation, i.e., $\delta_M=0$, our result reaches the standard consistency rate in the high-dimensional statistics literature, e.g., \cite{bickel2009simultaneous,negahban2012unified,wainwright2019high}. 


\subsection{Theoretical characteristics of nuisance estimators with imputed outcomes}\label{sec:DTL'}
 
As a result of constraints on the length of the main file, we have included the theoretical properties of the nuisance estimates $\balphahat_a$, $\bgammahat_a$, and $\widehat{\bdelta}_a$ as defined by equations \eqref{51}, \eqref{207}, and \eqref{226} respectively, in the Supplementary Materials {\citep{bradic2023supplement} where we show $\|\widehat{\bm{\alpha}}_a-\bm{\alpha}_a^*\|_2=O_p(\sigma\sqrt{s_{\bm{\alpha}_{a}}\log(d)/N})$, $\|\widehat{\bm{\gamma}}_a-\bm{\gamma}_a^*\|_2=O_p(\sqrt{s_{\bm{\gamma}_{a}}\log(d_1)/N})$, and $\|\widehat{\bm{\delta}}_a-\bm{\delta}_a^*\|_2=O_p(\sqrt{s_{\bm{\delta}_{a}}\log(d)/N})$.
Now we establish the properties of the first-time conditional mean model estimates, where imputation is required. We first consider the DR-imputation-based estimator $\bbetahat_a$ defined as \eqref{def:betahat-avg} and the corresponding conditional mean estimate $\muhat_a(\bs_1)=\bv^{\top}\bbetahat_a$. 

\begin{theorem}\label{thm:dr_mu}
Let Assumptions \ref{assumption_1}-\ref{assumption_prospensity} hold. Assume that $\max\{s_{\balpha_a}\log(d)$, $s_{\bbeta_a}\log(d_1)$, $s_{\bdelta_a}\log(d)\}=o(N)$, and either (a) $\|\bS_1\|_\infty\leq C$ almost surely, with some constant $C>0$, or (b) $s_{\bdelta_a}\log(d_1)\log(d)=O(N)$. Choose some ${\lambda}_{\bm{\alpha}} \asymp \sigma\sqrt{\log(d)/N}$, $ {\lambda}_{\bm{\beta}} \asymp \sigma\sqrt{\log(d_1)/N}$, and ${\lambda}_{\bm{\delta}} \asymp \sqrt{\log(d)/N}$. Then for any constant $r\geq1$, as $N,d\to \infty$, we have
\[
\|\bbetahat_a -\bbeta_a^* \|_2 + \left\{E[\muhat_a (\bS_1)-\mu_{a}^{*} (\bS_1)]^r\right\}^{1/r}= O_p\left(r_n\right), 
\] 
with $r_n$ being determined as follows 
(a) whenever $\rho_a^*(\cdot)=\rho_a(\cdot)$ and $\nu_a^*(\cdot)=\nu_a(\cdot)$, $r_n=\sigma\sqrt\frac{s_{\bbeta_a}\log(d_1)}{N}+\frac{\sigma\sqrt{s_{\bdelta_a}s_{\balpha_a}}\log(d)}{N}$, 
(b) whenever $\rho_a^*(\cdot)=\rho_a(\cdot)$, $r_n= \sigma\sqrt\frac{s_{\bbeta_a}\log(d_1)}{N}+\sigma\sqrt\frac{s_{\bdelta_a}\log(d)}{N}$, 
(c) or whenever $\nu_a^*(\cdot)=\nu_a(\cdot)$, $r_n=\sigma\sqrt\frac{s_{\bbeta_a}\log(d_1)}{N}+\sigma\sqrt\frac{s_{\balpha_a}\log(d)}{N}$.
\end{theorem}

Theorem \ref{thm:dr_mu} elucidates that the consistency rate of $\bbetahat_a$ is subject to the fidelity of the second-time nuisance models $\rho_a^*(\cdot)$ and $\nu_a^*(\cdot).$ More specifically, when both models are accurately specified, the DR imputation error contributes multiplicatively to the consistency rate in Theorem \ref{thm:dr_mu}(a).
In contrast, when only one of $\rho_a^*(\cdot)$ and $\nu_a^*(\cdot)$ is correctly specified, the estimation error of the correctly specified model contributes additively to the rates presented in Theorem \ref{thm:dr_mu}(b) and Theorem \ref{thm:dr_mu}(c). It is noteworthy that these results do not rely on the correctness of the first-time conditional mean model per se.

In the following, we also provide the consistency results of the nested-regression-based estimator, $\bbetahat_{a,{\mbox{\tiny NR}}}$, defined in Equation \eqref{52}, and the corresponding conditional mean estimate $\muhat_{a,{\mbox{\tiny NR}}}(\bs_1)=\bv^{\top}\bbetahat_{a,{\mbox{\tiny NR}}}$.

\begin{theorem}\label{cor_mu1}
Let Assumptions \ref{assumption_1}-\ref{assumption_U} hold. 
Assume that $\max\{s_{\bm{\alpha}_{a}}\log(d)$, $s_{\bm{\beta}_{a}}\log(d_1)\}=o(N)$. Choose some $ {\lambda}_{\bm{\alpha}} \asymp \sigma\sqrt{\log(d)/N}$ and $ {\lambda}_{\bm{\beta}} \asymp \sigma\sqrt{\log(d_1)/N}$. Then for any constant $r\geq1$, as $N,d\to \infty$, we have with $r_n$ as in Theorem \ref{thm:dr_mu}(c),
\begin{align}
\|\widehat{\bm{\beta}}_{a,{\mbox{\tiny NR}}}-\bm{\beta}_{a,{\mbox{\tiny NR}}}^*\|_2+\{E[\widehat{\mu}_{a,{\mbox{\tiny NR}}}(\bS_{1})-\mu_{a,{\mbox{\tiny NR}}}^{*}(\bS_{1})]^r\}^{1/r}&=O_p\left(r_n\right).\label{rate:betahat}
\end{align}
\end{theorem}

\begin{remark}[Comparison between $\bbetahat_a$ and $\bbetahat_{a,{\mbox{\tiny NR}}}$]\label{remark:compare-muhat}
The present remark compares the consistency rates of two estimators, $\bbetahat_a$ and $\bbetahat_{a,{\mbox{\tiny NR}}}$, in different scenarios. 
(a) In the case where $\nu_a(\cdot)$ is nonlinear and $\rho_a(\cdot)$ is logistic,
 estimators converge to distinct targets, $\bbeta_a^*$ and $\bbeta_{a,{\mbox{\tiny NR}}}^*$, respectively. Here, $\bbeta_a^*$ represents the optimal linear slope approximating the true conditional mean function $\mu_a(\cdot)$, 
while $\bbeta_{a,{\mbox{\tiny NR}}}^*$ is the optimal linear slope approximating the misspecified model $\nu_a^*(\cdot)$; see discussions in Section \ref{sec:mu-correct} above. When the first-time conditional mean $\mu_a(\cdot)$ is linear, $\bbetahat_a$ converges to the true linear slope, and a consistent estimate of $\mu_a(\cdot)$ is obtained. However, $\bbetahat_{a,{\mbox{\tiny NR}}}$ typically converges to some $\bbeta_{a,{\mbox{\tiny NR}}}^*$ that differs from the true linear slope, resulting in an inconsistent estimate of $\widehat{\mu}_{a,{\mbox{\tiny NR}}}(\cdot)$. 
(b) When $\nu_a(\cdot)$ is linear and $\rho_a(\cdot)$ is logistic, $\bbeta_a^* =\bbeta_{a,{\mbox{\tiny NR}}}^*$. However, in this case, $\bbetahat_a$ exhibits a faster consistency rate than $\bbetahat_{a,{\mbox{\tiny NR}}}$. This can be attributed to the fact that the DR imputation error contributes to the consistency rate of $\bbetahat_a$ in a product form Theorem \ref{thm:dr_mu}(a), while the imputation error from nested regression contributes in an additive form Theorem \ref{thm:dr_mu}(c), which then dominates if $s_{\bbeta_a}=o(s_{\balpha_a})$. 
Consequently, the S-DRL estimator constructed based on $\bbetahat_a$ has a faster convergence rate than the DTL estimator, which is constructed based on $\bbetahat_{a,{\mbox{\tiny NR}}}$; see Table \ref{table:consistency-rate}. The enhanced convergence rate exhibited by the S-DRL estimator implies a reduction in the requisite level of sparsity conditions necessary for the inferential guarantees.
(c) In the scenario where $\nu_a(\cdot)$ is linear and $\rho_a(\cdot)$ is non-logistic, the targets $\bbeta_a^*$ and $\bbeta_{a,{\mbox{\tiny NR}}}^*$ are identical and $\bbetahat_a$ and $\bbetahat_{a,{\mbox{\tiny NR}}}$ have the same consistency rates, as seen in Theorem \ref{thm:dr_mu}(c) and \eqref{rate:betahat}.
\end{remark}


In general, $\mu_a(\cdot)-\mu_{a'}(\cdot)$ can be seen as a conditional average treatment effect (CATE) parameter through the well established nested representation \eqref{rep:nested}. Outside of dynamic settings, DR approaches for CATE estimation typically rely on DR influence function representation of the conditional means. When those conditional means independently are not smooth enough, \cite{kennedy2020towards} proposes to instead use DR imputations for the joint estimation of the difference of the conditional means. Here, the nested structure of $\mu_a(\cdot)$, where the true outcome is never observed, prevents direct influence function approaches. 
Instead, our approach leverages cases when $\mu_a(\cdot)$ has {\it sparser} structure than  $\nu_a(\cdot)$.

\section{Advancing multi-stage treatment estimation with DR methods}\label{sec:multi-stage}

The objective of this section is to expand upon the methodology of sequential doubly robust estimation by considering its application in multi-stage settings.
Consider $T\geq2$ exposure times and suppose that we observe i.i.d. samples $\{W_{T,i}\}_{i=1}^N= (\bS_{1i},A_{1i},\dots,\bS_{Ti},A_{Ti},Y)_{i=1}^N$. Let $W _T:= (\bS_1,A_{1},\dots,\bS_T,A_T,Y)$ be an independent copy of $W_{T,i}$. For each $t\leq T$, let $\bS_t\in\R^{d_t}$ and $A_t\in\{0,1\}$ denote the covariate vector and the treatment assignment at the $t$-th exposure time, respectively. Let $Y \in \R$ denote the observed outcome variable at the final stage. Denote $\bSbar_t:=(\bS_1,\dots,\bS_t)$ and $\Abar_t:=(A_1,\dots,A_t)$ for any $1\leq t\leq T$. 
Let $Y(\abar_T)$ be the counterfactual outcome corresponding to the treatment path $\abar_T=(a_1,\dots,a_T)\in\{0,1\}^T$. The DTE between any treatment paths $\abar_T,\abar_T'\in\{0,1\}^T$ is now defined as
\begin{align*}
\theta:=E[Y(\abar_T)]-E[Y(\abar_T')]=\theta_{\abar_T}-\theta_{\abar_T'}.
\end{align*}
We define the conditional mean and propensity score functions as
\begin{align}
 \mu_{t}(\bsbar_{t},\abar_{T})&:=E[Y(\abar_T)\mid \bSbar_{t}=\bsbar_{t}, \Abar_{t-1}=\abar_{t-1}],\quad\text{for }\;\;1\leq t\leq T+1,\label{def:mu-k}\\
 \pi_{t}(\bsbar_{t},\abar_{t})&:=P[A_t=a_t\mid \bSbar_{t}=\bsbar_{t}, \Abar_{t-1}=\abar_{t-1}],\quad\text{for }\;\;1\leq t\leq T,\label{def:pi-k}
\end{align}
where for the sake of simplicity, we denote with $\bar A_0=\abar_0=\emptyset$ and $\bSbar_{T+1}:=(\bS_1,\dots,\bS_T,Y)$. For each $1\leq t\leq T$, we denote $\mu_{t}^*(\bsbar_{t},\abar_{T})$ and $\pi_{t}^*(\bsbar_{t},\abar_{t})$ as the working models for the conditional mean and propensity score, respectively. Additionally, with $\bSbar_0=\bsbar_0=\emptyset$, we set $\mu_{0}(\bsbar_{0},\abar_{T}):=E[Y(\abar_T)|\bSbar_0=\bsbar_0]=\theta_{\abar_T}$ and $\mu_{T+1}^{*}(\bsbar_{T+1},\abar_{T}):=s_{T+1}$. Note that, under the Assumption \ref{assumption_8}(b) below, we have $\mu_{T+1}^*(\bSbar_{T+1},\abar_{T})=\mu_{T+1}(\bSbar_{T+1},\abar_{T})=Y$.
To identify $\theta_{\abar_T}=E[Y(\abar_T)]$ for any $\abar_T\in\{0,1\}^T$, we assume a multi-stage version of Assumption \ref{assumption_1} in the following; see also, e.g., \cite{murphy2003optimal,robins2000marginal,robins1987addendum}. 
\begin{assumption}\label{assumption_8} 
	(a) (Sequential Ignorability)
	$Y(\abar_T)\perp \!\!\! \perp A_t\mid \bSbar_t, \Abar_{t-1}=\abar_{t-1}
	$ for each $1\leq t\leq T$.
	(b) (Consistency of potential outcomes) 
	$Y=Y(\Abar_T).$
	(c) (Overlap) Let $c_0 \in (0,1/2)$ be a positive constant, such that 
	$P(c_0 \leq	\pi_{t}(\bsbar_{t},\abar_{t})\leq1-c_0) =1 $ for each $1\leq t\leq T$.
\end{assumption}

The following proposition presents a well-known DR representation of $E[Y(\abar_t)]$ under the multi-stage dynamic setting; see, e.g., \cite{bang2005doubly, murphy2001marginal}. 
\begin{proposition}\label{DR_multi_1}
Let Assumption \ref{assumption_8} hold. For $t \leq T$ suppose that at least one of $\mu_{t}^{*}(\cdot,\abar_{T})$ and $\pi_{t}^{*}(\cdot,\abar_{t})$ is correctly specified, i.e., either $\mu_{t}^{*}(\cdot,\abar_{T})=\mu_{t}(\cdot,\abar_{T})$ or $\pi_{t}^{*}(\cdot,\abar_{t})=\pi_{t}(\cdot,\abar_{t})$. Then
\begin{align}\label{rep:DR-multi-theta}
\theta_{\abar_T} =E\left[\sum_{t=1}^{T}\frac{\mathbbm{1}_{\{\Abar_{t} = \abar_t\}}}{\prod_{l=1}^{t}\pi_{l}^{*}(\bSbar_l,\abar_{l})}(\mu_{t+1}^{*}(\bSbar_{t+1},\abar_{T})-\mu_{t}^{*}(\bSbar_{t},\abar_{T}))+\mu_{1}^{*}(\bS_1,\abar_T)\right].
\end{align}
\end{proposition}
According to Proposition \ref{DR_multi_1}, a consistent estimate of $\theta_{\abar_T}$ should be achievable as long as we can consistently estimate at least one of the nuisance functions $\mu_{t}(\cdot,\abar_{T})$ or $\pi_{t}(\cdot,\abar_{t})$ for each exposure time $t$. 
In the present context, the propensity score functions of \eqref{def:pi-k} are identifiable via observable variables.
Additionally, by Assumption \ref{assumption_8}, $\mu_T(\bsbar_T,\abar_{T})= E[Y|\bSbar_T=\bsbar_T,\Abar_{T}=\abar_{T}]$, thereby facilitating its estimation using the corresponding samples. However, the remaining conditional mean functions for stages $ t\leq T-1$ cannot be identified directly. To address this challenge, we propose DR representations of these intermediate conditional means, as an alternative to the conventional nested representation of \eqref{rep:nested-multi-mu}.

\begin{theorem}\label{DR_multi_2}
Let Assumption \ref{assumption_8} hold. For $t \leq T-1$ and $t+1\leq r \leq T$, suppose that either $\pi_{r}^{*}(\cdot,\abar_{r})$ or $\mu_{r}^{*}(\cdot,\abar_{T})$ is correctly specified, i.e., either $\pi_{r}^{*}(\cdot,\abar_{r})=\pi_{r}(\cdot,\abar_{r})$ or $\mu_{r}^{*}(\cdot,\abar_{T})=\mu_{r}(\cdot,\abar_{T})$. Then $\mu_{t}(\bsbar_{t},\abar_{T})=E[\psi^*(W_T,\abar_T)\mid\bSbar_{t}=\bsbar_{t},\Abar_t=\abar_{t}]$, where 
\begin{align*}
\psi^*(W_T,\abar_T):=\sum_{r=t+1}^{T}\frac{\prod_{l=t+1}^{r}\mathbbm{1}_{\{A_l = a_l\}}}{\prod_{l=t+1}^{r}\pi_{l}^{*}(\bSbar_l,\abar_{l})}(\mu_{r+1}^{*}(\bSbar_{r+1},\abar_{T})-\mu_{r}^{*}(\bSbar_{r},\abar_{T}))+\mu_{t+1}^{*}(\bSbar_{t+1},\abar_{T}).
\end{align*}
\end{theorem}
 
Theorem \ref{DR_multi_2} can be regarded as an overarching, comprehensive, umbrella result that subsumes a range of components, particularly encompassing Proposition \ref{DR_multi_1} as a specific case when $t=0$. Indeed, $\theta_{\abar_T}=E[Y(\abar_T)|\bSbar_0=\bsbar_0]=\mu_0(\bsbar_0,\abar_T)$ is a conditional mean function at ``stage zero''. 
Theorem \ref{DR_multi_2} indicates that $\mu_{t}(\cdot,\abar_{T})$ can be identified through a DR representation using all the conditional means and propensity scores at later stages. 
Therefore, $\mu_{t}(\cdot,\abar_{T})$ can be estimated sequentially backward in time based on the DR imputations. 
 
 
 For example, if we use linear working models for the conditional means and logistic models for the propensity scores, and either the true conditional mean $\mu_{r}(\cdot,\abar_{T})$ is linear or the true propensity score $\pi_{r}(\cdot,\abar_{r})$ is logistic at each later stage $r\geq t+1$, we can get a consistent estimate of the $\mu_{t}(\cdot,\abar_{T})$ using DR imputed linear regression. 
 By repeating this process backwards, we conclude that if either the conditional means or propensity scores are correctly parametrized at every stage $t$, we can estimate all nuisance functions consistently, leading to a consistent estimate of $\theta_{\abar_T}$. 
An alternative approach to our proposed sequential doubly robust method is the nested estimator \citep{murphy2001marginal}. This approach represents all conditional means using the following equation: \begin{align}\label{rep:nested-multi-mu}
\mu_{t}(\bsbar_{t},\abar_{T})=E[\mu_{t+1}(\bSbar_{t+1},\abar_{T})\mid \bSbar_{t}=\bsbar_{t}, \Abar_{t}=\abar_{t}].
\end{align}
However, in order to ensure the consistency of the nested estimator for $\mu_{t}(\cdot,\abar_{T})$, it is essential that all subsequent conditional mean functions exhibit true linearity. Interestingly, even the multiply robust approach presented by \cite{babino2019multiple} falls short in achieving the same level of robustness as the S-DRL method. Further insights can be found in the comments following Theorem \ref{thm:DR-mu} and Table \ref{table:consistency}.
Our method demonstrates a growing advantage as the number of exposure times increases.
For instance, in the case of $T$ exposure times, consider all the cases including correctly or incorrectly parametrized $\pi_t^*(\cdot,\abar_t)$ and $\mu_t^*(\cdot,\abar_T)$ for each $t$, our method enables $3^T$ out of $4^T$ possible cases. In contrast, the nested-regression-based and multiply robust approaches only allow for $(T+2)2^{T-1}$ cases. This conclusion is independent of the particular parametrization used -- nonparametric, smooth models are permissible -- and extends to the difference of means $\theta=\theta_{\abar_T}-\theta_{\abar'_T}$ as well.

\section{Numerical Experiments} \label{sec:num}
\subsection{Simulation studies}\label{sec:sim}
We illustrate the finite sample properties of the introduced estimators in several simulated experiments; auxiliary settings are relegated to Section C of the Supplementary Material \citep{bradic2023supplement}. We consider $a=(1,1)$ and $a'=(0,0)$,  and use  $\mathbf{1}_{(q)}:=(1,\dots,1)^\top\in\R^{q}$ as well as $\mathbf0_{(q)}:=(0,\ldots,0)\in\R^q$. Below we use $\zeta_i\sim^\mathrm{iid}\mathrm{Uniform}(-1,1)$ and $\{\bdelta_{i}\}_j\sim^\mathrm{iid}\mathrm{Uniform}(-1,1)$. We decompose $\balpha_a$ into two components  as $\balpha_a =(\balpha_{a,1}^\top,\balpha_{a,2}^\top)^\top$ and consider $\balpha_{a'}=-\balpha_a$ and $\etabold_{a'}=-\etabold_a$ unless specified differently. For each $c\in\{a,a'\}$, $\rho_c(\bS_{i})=g(\bU_i^\top\etabold_c)$ and $A_{2i}|(\bS_{i},A_{1i}=c_1)\sim\mathrm{Bernoulli}(\rho_{c}(\bS_{i}))$.

\paragraph*{M1: Correctly parametrized models}
Consider $\bS_{1i}\sim^\mathrm{iid} N_{d_1}(\mathbf{0},\mathbf{I}_{d_1})$ and  $A_{1i}|\bS_{1i}\sim\mathrm{Bernoulli}$ \ $(\pi_{a} (\bS_{1i}))$, with 
$\pi_{a} (\bS_{1i})=g(\bV_i^\top\bgamma_{a})$. 
Let $\delta_{1i}\sim^\mathrm{iid} N(0,1)$, 
$\bdelta_{1i}\sim^\mathrm{iid} N_{d_1}(0,\mathbf{I}_{d_1})$, and 
$$\bS_{2i}=\bS_{1i}+A_{1i}(1+\delta_{1i})\mathbf{1}_{(d_1)}+\bdelta_{1i}.$$ 
The outcomes are $Y_i(c)=\bU_i^\top\balpha_c+ N(0,1)$ with parameters
$\balpha_{a} =(-1,-1,1,-1,\mathbf0_{(d_1-3)},$ $-1,-1,1,\mathbf0_{(d_2-3)})^\top,$ $\balpha_{a'} =(1,1,1,-1,\mathbf0_{(d_1-3)},1,1,1,\mathbf0_{(d_2-3)})^\top,$ $\bgamma_{a} =(0,1,1,1,\mathbf0_{(d_1-3)})^\top,$ $\etabold_a =(0,1,1,\mathbf0_{(d_1-2)},1,-1,\mathbf0_{(d_2-2)})^\top,$ and $\etabold_{a'} =(0,0.5,0,-0.5,\mathbf0_{(d_1-3)},0.5,0,0.5,\mathbf0_{(d_2-3)})^\top$.  

\paragraph*{M2: Weakly sparse $\nu_c(\cdot)$ and dense $\pi_c(\cdot)$}
Let  $D_i\sim^\mathrm{iid} \mathrm{Bernoulli}(0.5)$ and 
$$\{\bS_{1i}\}_{j}\sim D_i\cdot\mathrm{Uniform}(-1,-0.5)+(1-D_i)\cdot\mathrm{Uniform}(0.5,1).$$ Define 
$\{\overline{W}\}_j=0.5\cdot 0.9^j$ and let $\bS_{2i}=\bS_{1i}^\top\overline{W}\mathbf{1}_{(d_2)}+\bdelta_{i}$. Let $Y_i(c)=\bU_i^\top\balpha_c+\zeta_i$. The parameter  $\balpha_{a,1} =(-1,\mathbf0_{(d_1)})^\top$ and  $\{\balpha_{a,2}\}_j=-0.3\cdot 0.99^{j-1}$, 
$\etabold_a =(3,0.1,\mathbf0_{(d_1+d_2-1)})^\top$. 

\paragraph*{M3: Non-linear $\nu_c(\cdot)$ and non-logistic $\pi_{c}(\cdot)$}
Consider $\{\bS_{1i}\}_{j}\sim^\mathrm{iid}\mathrm{Uniform}(-1,1)$. 
Let $\pi_{a}(\bS_{1i})=\bar g(\bV_i^\top\bgamma_{a})$, where 
$$\bar g (u)=(|u|/(|u|+1))\mathbbm{1}_{\{u>0\}}+(1/(|u|+1))\mathbbm{1}_{\{u<0\}}.$$ Define 
$\{\widetilde{W}(a)\}_j=0.7 \cdot 0.8^j$ and $\{\widetilde{W}(a')\}_j=0.5\cdot 0.9^j$. Let $\bW_{2i}=\{\widetilde W(A_{1i})\}^\top\bS_{1i}\mathbf{1}_{(d_2)}+\bdelta_{i}$ and $\{\bS_{2i}\}_j=\sqrt{|\{\bW_{2i}\}_j|}$. The parameter  $\balpha_a$ has the same $\balpha_{a,1}$ as M2  and 
 $\{\balpha_{a,2}\}_j=-0.3\cdot 0.9^{j-1}$.
The parameter
 $\bgamma_{a} =5\cdot\mathbf1_{(10)}^\top,$  and $\etabold_a =(2,0.1,\mathbf0_{(d_1-1)},0.1,\mathbf0_{(d_2-1)})^\top$. Let 
$$Y_i(c)=\bV_i^\top\balpha_{c,1}+\sum_{j=1}^{d_2}\{\alpha_{c,2}\}_j\mathrm{sgn}(\{\bW_{2i}\}_j)\{\bS_{2i}\}_j^2+\zeta_i.$$
 

For M1, $d_1=d_2=100$ and $N\in\{1000,4000\}$; for M2, $d_2=500$, $(N,d_1)$ are chosen from $(2000,20)$, $(4000,20)$, and $(4000,50)$; for M3, $d_1=20$, $(N,d_2)$ are chosen from $(500,500)$, $(1000,500)$, $(2000,500)$, $(4000,500)$, $(1000,1000)$, and $(2000,1000)$. We replicate settings 500 times. We report S-DRL as well as  a version S-DRL', which has $\widehat{\bm{\beta}}_{c}$ constructed with $\widehat{\bdelta}_c$ and $\balphahat_c$ build on the whole sub-sample $\mathcal W_{-k}=\mathcal W_{-k,1}\cup\mathcal W_{-k,2}$.}
We also present (a) DTL, Algorithm \ref{alg:DTL}, (b) IPW with 
$\ell_1$-regularized logistic PS, 
(c) an empirical difference estimator (empdiff), $\thetahat_{\mbox{\tiny empdiff}}:=\sum_{i=1}^NA_{1i}A_{2i}Y_i/\sum_{i=1}^NA_{1i}A_{2i}-\sum_{i=1}^N(1-A_{1i})(1-A_{2i})Y_i/\sum_{i=1}^N(1-A_{1i})(1-A_{2i})$, and (d) an oracle DR estimator constructed with the true nuisances. All methods use $10$-fold cross validation for selection of tuning parameters. 

\begin{table}[]
\centering
\caption{Setting M1. Bias: empirical bias; RMSE: root mean square error; Length: average length of the $95\%$ confidence intervals; Coverage: average coverage of the $95\%$ confidence intervals; ESD: empirical standard deviation; ASD: average of estimated standard deviations. All the reported values (except Coverage) are based on robust (median-type) estimates. $\mathrm{Err}_a$: average estimation error of $\mu_a(\cdot)$; $\mathrm{Err}_{a'}$: average estimation error of $\mu_{a'}(\cdot)$. $N_1$ and $N_0$ are the expected number of observations in groups $(1,1)$ and $(0,0)$. } \label{table:M1}
\begin{tabular}{@{\extracolsep{5pt}}llccccccc}
	\toprule
	Method&Bias&RMSE&Length&Coverage&ESD&ASD&$\mathrm{Err}_a$&$\mathrm{Err}_{a'}$\\
\hline
\multicolumn{1}{c}{} & \multicolumn{8}{c}{ \cellcolor{gray!50} $N=1000,N_1=293,N_0=282,d_1=100,d_2=100$}\\
\cline{2-9}
empdiff&0.734&0.734&0.274&0.004&0.234&0.070&NA&NA\\
oracle&0.003&0.220&1.091&0.954&0.325&0.278&0.000&0.000\\
IPW&0.864&0.865&1.342&0.346&0.319&0.342&NA&NA\\
DTL&0.124&0.189&0.876&0.894&0.264&0.223&0.141&0.216\\
S-DRL&0.131&0.202&0.880&0.880&0.271&0.224&0.227&0.337\\
S-DRL'&0.126&0.188&0.876&0.894&0.259&0.223&0.135&0.193\\
\hline
\multicolumn{1}{c}{} & \multicolumn{8}{c}{ \cellcolor{gray!50} $N=4000,N_1=1178,N_0=1128,d_1=100,d_2=100$}\\
\cline{2-9}
empdiff&0.731&0.731&0.137&0.000&0.111&0.035&NA&NA\\
oracle&-0.006&0.121&0.602&0.956&0.178&0.153&0.000&0.000\\
IPW&0.534&0.538&0.959&0.454&0.287&0.245&NA&NA\\
DTL&0.033&0.097&0.488&0.930&0.136&0.125&0.032&0.052\\
S-DRL&0.031&0.098&0.489&0.930&0.142&0.125&0.050&0.070\\
S-DRL'&0.028&0.098&0.489&0.932&0.138&0.125&0.033&0.044\\
\bottomrule
\end{tabular}
\end{table}

Tables \ref{table:M1} and \ref{table:M7} show the estimation and inference results for the DTE estimators, while Table \ref{table:M6} focuses on estimation performances, as valid inference is unlikely with misspecified models. Our summarized findings, shown in Tables \ref{table:M1}-\ref{table:M6}, reveal that the naive empirical difference estimator has large biases due to confounding between outcome and treatment assignments. The IPW method also performs poorly, with large biases and RMSEs, and confidence interval coverages far from the desired $95\%$. The DTL, S-DRL, and S-DRL' estimators behave similarly in Table \ref{table:M1} (under M1), with correctly specified nuisance models and relatively low sparsity levels. The S-DRL method's additional sample splitting in Algorithm \ref{alg:S-DRL} (Steps 5-7) leads to larger estimation errors in the first-time conditional mean estimates than those in the DTL and S-DRL' methods. Consequently, when $N=1000$, the S-DRL estimator's RMSE is slightly larger than that of the DTL and S-DRL' estimators, but they have similar RMSEs when $N=4000$. In terms of inference behaviors, the corresponding confidence interval coverages are below the desired $95\%$ when $N=1000$. However, increasing the total sample size to $N=4000$ brings the coverages closer to $95\%$. Estimating $\nu_c(\cdot)$ under M2 is more challenging than estimating $\rho_c(\cdot)$. As a result, the DR estimates of $\mu_c(\cdot)$ in the S-DRL and S-DRL' methods have significantly smaller estimation errors compared to the nested regression used in the DTL method, as shown in Table \ref{table:M7}, leading to smaller RMSEs and coverages closer to $95\%$. Moving on to M3, both $\nu_c(\cdot)$ and $\pi_c(\cdot)$ are misspecified. Table \ref{table:M6} shows that the estimation errors of $\mu_c(\cdot)$ with the S-DRL and S-DRL' are substantially smaller than those of the DTL. Consequently, we see an improvement of the RMSEs in the S-DRL and S-DRL' estimators.

\begin{table}[]
\centering
\caption{Setting M2. The rest of the caption details remain the same as those in Table \ref{table:M1}.} \label{table:M7}
	\begin{tabular}{@{\extracolsep{5pt}}llccccccc}
		\toprule
		Method&Bias&RMSE&Length&Coverage&ESD&ASD&$\mathrm{Err}_a$&$\mathrm{Err}_{a'}$\\
		\hline
		\multicolumn{1}{c}{} & \multicolumn{8}{c}{ \cellcolor{gray!50} $N=2000,N_1=954,N_0=951,d_1=20,d_2=500$}\\
		\cline{2-9}
		oracle&-0.021&0.691&4.146&0.950&1.029&1.058&0.000&0.000\\
empdiff&-0.204&0.601&1.576&0.636&0.848&0.402&NA&NA\\
IPW&-0.153&2.542&12.747&0.970&3.718&3.252&NA&NA\\
DTL&-0.013&0.714&4.151&0.950&1.060&1.059&0.361&0.365\\
S-DRL&-0.023&0.686&4.144&0.952&1.008&1.057&0.139&0.136\\
S-DRL'&-0.027&0.689&4.145&0.952&1.019&1.057&0.126&0.122\\
		\hline
		\multicolumn{1}{c}{} & \multicolumn{8}{c}{ \cellcolor{gray!50} $N=4000,N_1=1909,N_0=1901,d_1=20,d_2=500$}\\
		\cline{2-9}
oracle&-0.029&0.572&2.942&0.936&0.835&0.751&0.000&0.000\\
empdiff&-0.099&0.408&1.116&0.630&0.598&0.285&NA&NA\\
IPW&-0.101&2.510&11.928&0.978&3.643&3.043&NA&NA\\
DTL&-0.053&0.569&2.950&0.934&0.843&0.753&0.163&0.161\\
S-DRL&-0.026&0.554&2.942&0.940&0.838&0.750&0.040&0.040\\
S-DRL'&-0.029&0.560&2.942&0.940&0.843&0.751&0.040&0.042\\
\hline
\multicolumn{1}{c}{} & \multicolumn{8}{c}{ \cellcolor{gray!50} $N=4000,N_1=1908,N_0=1901,d_1=50,d_2=500$}\\
\cline{2-9}
oracle&-0.030&0.608&2.990&0.948&0.893&0.763&0.000&0.000\\
empdiff&-0.156&0.485&1.260&0.608&0.683&0.322&NA&NA\\
IPW&-0.086&2.099&10.088&0.984&3.136&2.574&NA&NA\\
DTL&-0.019&0.607&2.983&0.940&0.903&0.761&0.273&0.275\\
S-DRL&-0.013&0.565&2.988&0.948&0.854&0.762&0.082&0.083\\
S-DRL'&-0.012&0.574&2.987&0.948&0.863&0.762&0.080&0.081\\
		\bottomrule
	\end{tabular}
\end{table}


\begin{table}[]
	\centering
	\caption{Setting M3. The rest of the caption details remain the same as those in Table \ref{table:M1}.} \label{table:M6}
	\begin{tabular}{lcccccccccc}
		\toprule
		Method&Bias&RMSE&$\mathrm{Err}_a$&$\mathrm{Err}_{a'}$&&Bias&RMSE&$\mathrm{Err}_a$&$\mathrm{Err}_{a'}$\\
		\hline
		\multicolumn{1}{c}{ } & \multicolumn{4}{c}{\cellcolor{gray!50} $N=500,d_1=20,d_2=500$}
		&&\multicolumn{4}{c}{ \cellcolor{gray!50} $N=1000,d_1=20,d_2=500$}
		\\
		\cline{2-5}\cline{7-10}
oracle&0.013&0.119&0.000&0.000&&0.007&0.098&0.000&0.000\\
empdiff&0.149&0.162&NA&NA&&0.157&0.159&NA&NA\\
IPW&-0.035&0.465&NA&NA&&-0.164&0.493&NA&NA\\
DTL&-0.003&0.273&0.151&0.196&&0.012&0.199&0.076&0.077\\
S-DRL&0.009&0.224&0.087&0.096&&0.024&0.166&0.036&0.040\\
S-DRL'&-0.007&0.207&0.067&0.074&&0.016&0.170&0.031&0.033\\
\hline
\multicolumn{1}{c}{ } & \multicolumn{4}{c}{\cellcolor{gray!50} $N=2000,d_1=20,d_2=500$}
&&\multicolumn{4}{c}{ \cellcolor{gray!50} $N=4000,d_1=20,d_2=500$}
\\
\cline{2-5}\cline{7-10}		oracle&0.005&0.062&0.000&0.000&&0.003&0.047&0.000&0.000\\		empdiff&0.142&0.142&NA&NA&&0.142&0.142&NA&NA\\
IPW&-0.276&0.525&NA&NA&&-0.390&0.487&NA&NA\\
DTL&0.010&0.126&0.038 &0.040 &&-0.001&0.101&0.021&0.021\\
S-DRL&0.013&0.108&0.016&0.016&&0.000&0.089&0.007&0.007\\
S-DRL'&0.004&0.106&0.015&0.015&&-0.007&0.090&0.007&0.007\\
\hline
\multicolumn{1}{c}{ } & \multicolumn{4}{c}{\cellcolor{gray!50} $N=1000,d_1=20,d_2=1000$}
&&\multicolumn{4}{c}{ \cellcolor{gray!50} $N=2000,d_1=20,d_2=1000$}
\\
\cline{2-5}\cline{7-10}	oracle&0.001&0.096&0.000&0.000&&0.011&0.059&0.000&0.000\\
empdiff&0.149&0.149&NA&NA&&0.143&0.143&NA&NA\\
IPW&-0.221&0.469&NA&NA&&-0.249&0.478&NA&NA\\
DTL&-0.028 &0.212 &0.090 &0.091&&0.010&0.135&0.048&0.049\\
S-DRL&-0.012&0.160&0.037&0.039&&0.011&0.113&0.016&0.016\\
S-DRL'&-0.029&0.165&0.031&0.032&&0.002&0.116&0.015&0.015\\
		\bottomrule
	\end{tabular}
\end{table}



\subsection{Application to National Job Corps Study (NJCS)}
Job Corps (JC) is the largest and most comprehensive federal job training program in the US for disadvantaged youth between 16 and 24 years old. Each year, about 50,000 participants receive vocational training and academic education at JC centers to improve their job prospects. On average, a JC student spends 8 months at a local center, completing around 1,100 hours of instruction, which is roughly equivalent to one year of high school. For a more detailed description, refer to \cite{schochet2008does} and \cite{schochet2001national}.

Numerous studies have investigated the effects of Job Corps on wages. \cite{lee2009training} highlighted sample selection issues in their analysis. \cite{zhang2008evaluating} separated the causal effects of JC enrollment on wages from those on employment. \cite{flores2012estimating} found that longer exposure to JC training is associated with higher future earnings. \cite{chen2015bounds} separated the effects of sample selection from noncompliance, while \cite{huber2020direct} distinguished between the causal direct and indirect effects in the presence of mediators.
In addition to studying the effects of Job Corps in single-time treatment settings, researchers have also explored the dynamic treatment setting offered by Job Corps. \cite{bodory2022evaluating} investigated the effects of JC's educational and training programs and found positive impacts on fourth-year employment compared to no program participation. Meanwhile, \cite{singh2021kernel} analyzed the total, direct, and indirect dynamic dose response of job training on employment. Their study concluded that a few class hours in the first and second years significantly increase employment in the fourth year. In this section, we will evaluate the effects of sequential job training programs on wages using the S-DRL and DTL methods, as defined in Algorithms \ref{alg:S-DRL} and \ref{alg:DTL}.

\newcommand{\grayline}{\arrayrulecolor{gray!70}\hline\arrayrulecolor{black}}
\begin{table}[h]
\centering
\caption{ Job Corps estimates. Here, $N_a$ and $N_{a'}$ denote the number of observations in the treatment groups $a$ and $a'$, respectively. SE: the standard error; CI: $95\%$ confidence interval; p-value: $H_0: \theta=0$ vs. $H_1: \theta\neq0$.} \label{table:DTE-JC}
	\begin{tabular}{@{\extracolsep{1pt}}ccccccccccc}
		\toprule
		$z$ & $z'$&$N_z$&$N_{z'}$& Method&$\thetahat_{z}$&$\thetahat_{z'}$&$\thetahat$&SE&CI&p-value\\
		\hline
		\multirow{2}{*}{$(3,3)$}&\multirow{2}{*}{$(1,1)$}&\multirow{2}{*}{568}&\multirow{2}{*}{315}&DTL&6.201&5.652&0.549&0.270&[0.020, 1.078]&0.042\\
		& & & &S-DRL&6.208&5.641&0.567&0.273&[0.032, 1.102]&0.037\\
\grayline
	 \multirow{2}{*}{$(3,3)$}&\multirow{2}{*}{$(2,2)$}&\multirow{2}{*}{568}&\multirow{2}{*}{336}&DTL&6.200&5.424&0.776&0.313&[0.163, 1.389]&0.013\\
		& & & &S-DRL&6.209&5.390&0.819&0.314&[0.204, 1.434]&0.009\\
		\grayline
		\multirow{2}{*}{$(2,2)$}&\multirow{2}{*}{$(1,1)$}&\multirow{2}{*}{336}&\multirow{2}{*}{315}&DTL&5.410&5.639&-0.229&0.335&[-0.886, 0.428]&0.493\\
 & & & &S-DRL&5.371&5.626&-0.255&0.337&[-0.916, 0.406]&0.450 \\
		\bottomrule
	\end{tabular}
\end{table}

We analyze a dataset of 11,313 individuals, with 6,828 assigned to the Job Corps and 4,485 not. They are interviewed 1, 2, and 4 years post-randomization. For each year $t\in{1,2}$, $Z_t\in\{0,1,2,3\}$ represents the treatment assignment in the $t$-th year. We assign $Z_t=0$ for non-enrollment, $Z_t=1$ for enrollment without program participation, $Z_t=2$ for high-school-level education, and $Z_t=3$ for vocational training. The baseline covariate vector, $\bS_1$, has 909 characteristics, while $\bS_2$ includes 1,427 characteristics. In total, there are 2,336 covariates. The outcome is the log-transformed wage $\Ytil=\log(\mathrm{wage}+1)\in\R$. We exclude 2,610 individuals with missing treatment stages that are missing completely at random \citep{Schochet2003national}
and an additional 133 with missing covariates or outcomes, resulting in a final sample of 8,570 individuals.

\begin{figure}[ht]
\centering
\captionsetup{justification=centering, font=small}
\begin{subfigure}[]{0.3\linewidth}
\centering
 \includegraphics[height=0.8\linewidth,width=0.9\linewidth]{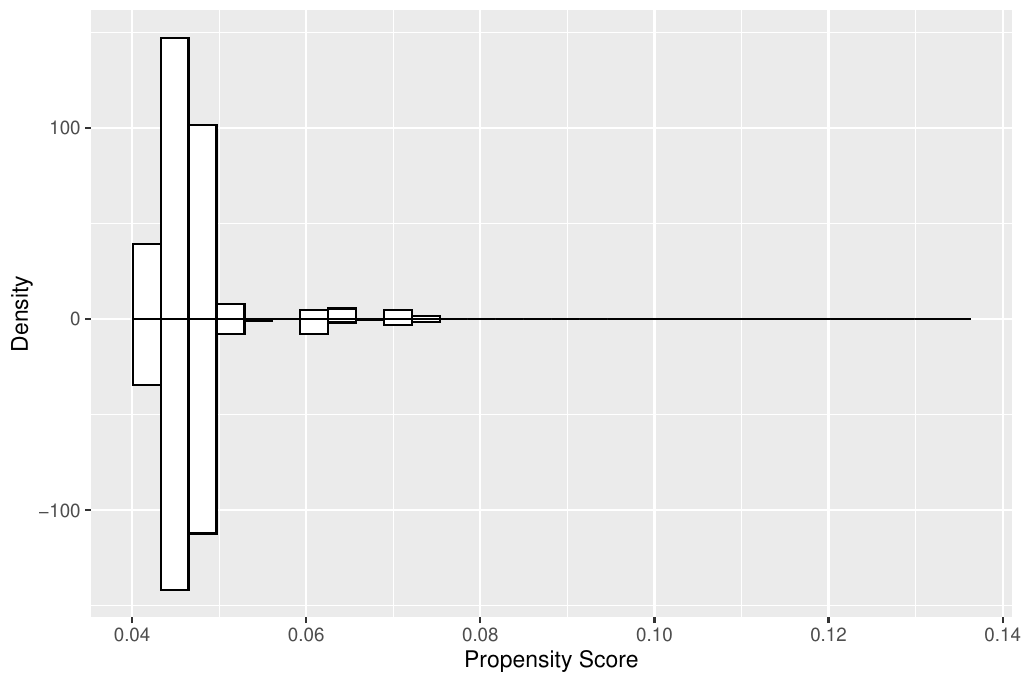}
\subcaption*{\tiny (a) Estimates of $P[Z_{1i}=1|\bS_i]$.\\
Above: within the subgroup $Z_{1i}=1$; \\
Below: within the subgroup $Z_{1i}\neq1$.}
\end{subfigure}%
\begin{subfigure}[]{0.3\linewidth}
\centering
\includegraphics[height=0.8\linewidth,width=0.9\linewidth]{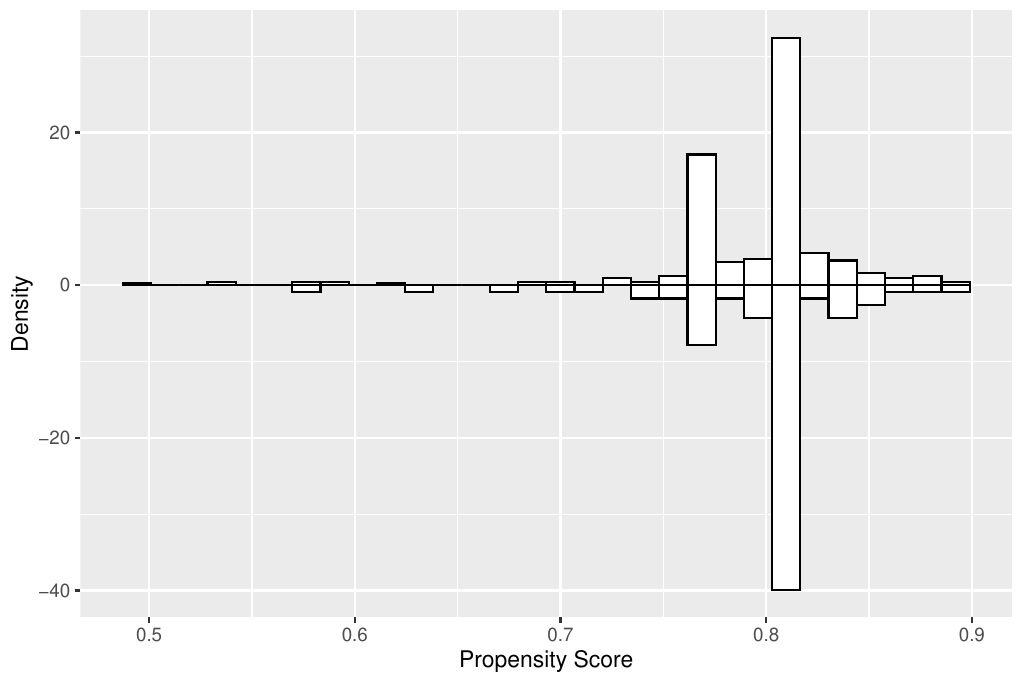}
\subcaption*{\tiny (b) Estimates of $P[Z_{2i}=1|\bS_i, Z_{1i}=1]$.\\
Above: within the subgroup $Z_{1i}=Z_{2i}=1$; \\
Below: within the subgroup $Z_{1i}=1,Z_{2i}\neq1$.}
\end{subfigure}
\begin{subfigure}[]{0.3\linewidth}
\centering
\includegraphics[height=0.8\linewidth,width=0.9\linewidth]{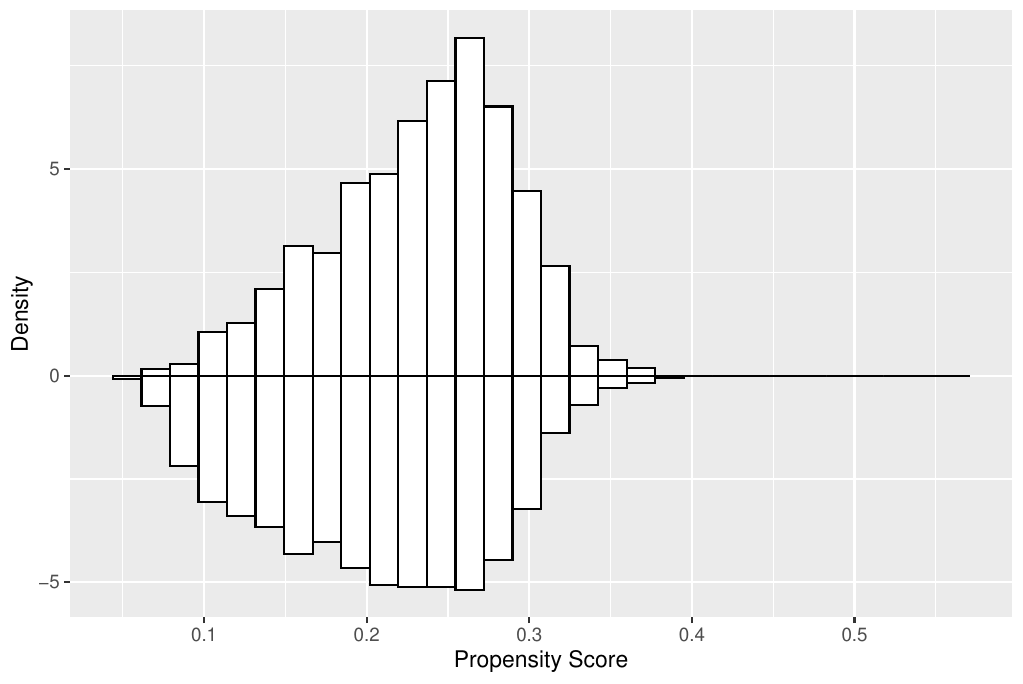}
\subcaption*{\tiny (c) Estimates of $P[Z_{1i}=2|\bS_i]$.\\
Above: within the subgroup $Z_{1i}=2$; \\
Below: within the subgroup $Z_{1i}\neq2$.}
\end{subfigure}
\\ 
\begin{subfigure}[]{0.3\linewidth}
\centering
\includegraphics[height=0.8\linewidth,width=0.9\linewidth]{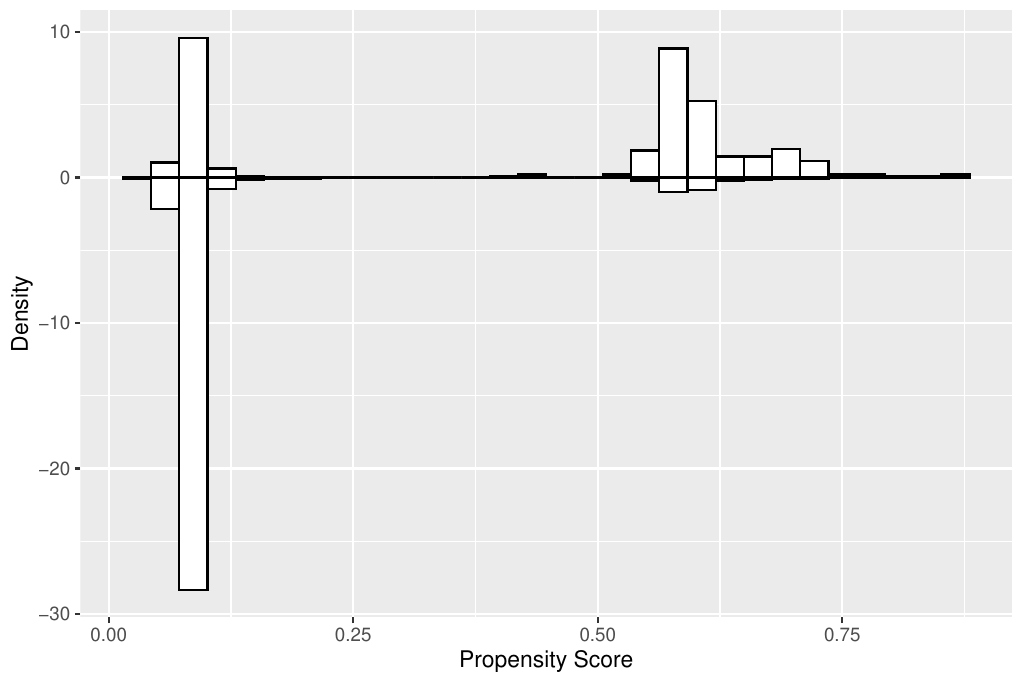}
\subcaption*{\tiny (d) Estimates of $P[Z_{2i}=2|\bS_i, Z_{1i}=2]$.\\
Above: within the subgroup $Z_{1i}=Z_{2i}=2$;\\
Below: within the subgroup $Z_{1i}=2,Z_{2i}\neq2$.}
\end{subfigure}
\begin{subfigure}[]{0.3\linewidth}
\centering		\includegraphics[height=0.8\linewidth,width=0.9\linewidth]{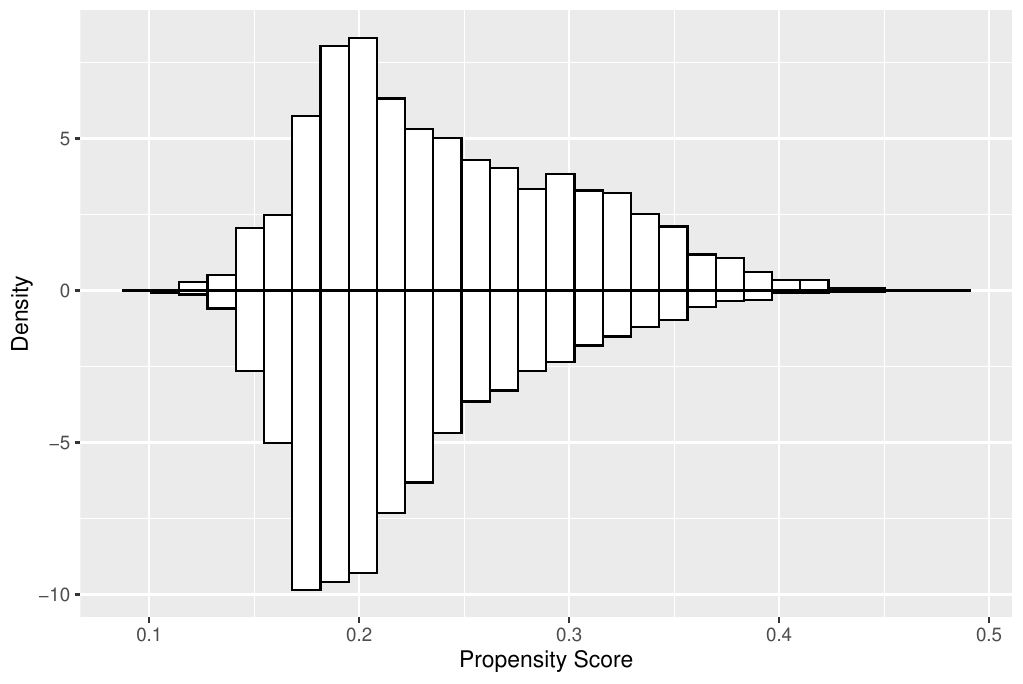}
\subcaption*{\tiny (e) Estimates of $P[Z_{1i}=3|\bS_i]$.\\
Above: within the subgroup $Z_{1i}=3$; \\
Below: within the subgroup $Z_{1i}\neq3$.}
\end{subfigure}%
\begin{subfigure}[]{0.3\linewidth}
\centering
\includegraphics[height=0.8\linewidth,width=0.9\linewidth]{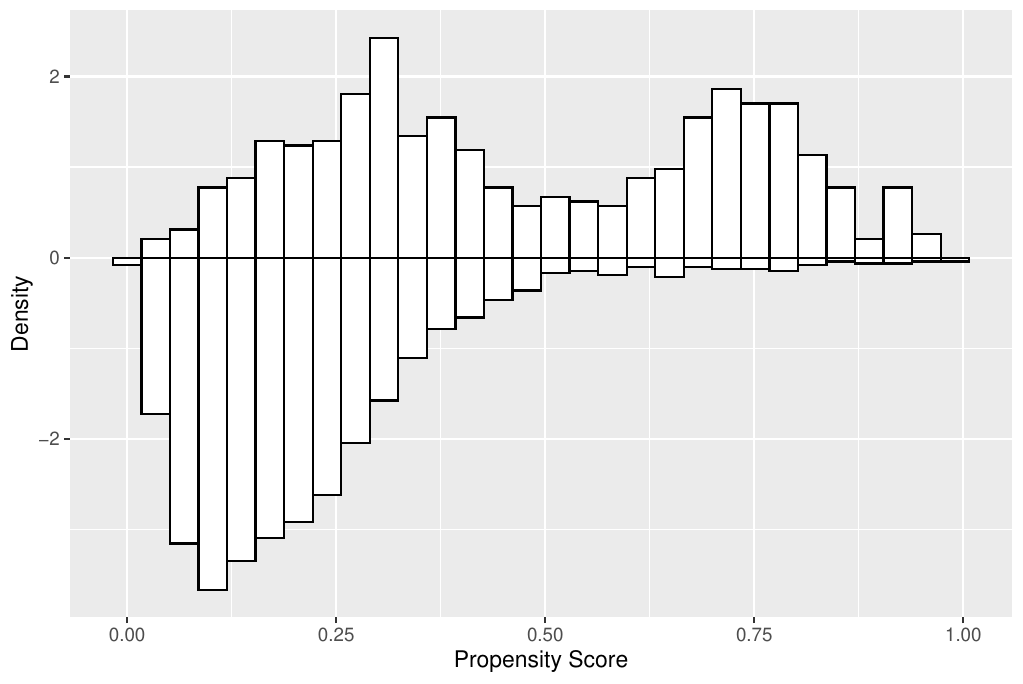}
\subcaption*{\tiny (f) Estimates of $P[Z_{2i}=3|\bS_i, Z_{1i}=3]$.\\
Above: within the subgroup $Z_{1i}=Z_{2i}=3$; \\
Below: within the subgroup $Z_{1i}=3,Z_{2i}\neq3$.}
\end{subfigure}
\caption{Mirror histograms of propensity score overlaps for $(z_1,z_2)\in\{(3,3),(2,2),(1,1)\}$.}\label{fig:overlap}
\end{figure}

 Table \ref{table:DTE-JC} shows estimated DTEs between treatment paths $(3,3)$ vs. $(1,1)$, $(3,3)$ vs. $(2,2)$, and $(2,2)$ vs. $(1,1)$.
 Both the S-DRL and DTL methods suggest that vocational training has a positive impact on achieving higher wages by showing non-zero effects between the first two paths.
On the other hand, the estimates between paths $(2,2)$ and $(1,1)$ are negative, and their corresponding confidence intervals contain zero, making it impossible to determine if academic education is beneficial or detrimental. However, our analysis does suggest that individuals seeking higher-paying jobs would benefit more from vocational training compared to academic education, which only provides high school-level education without any significant vocational training. The S-DRL estimates have a slightly greater distance from zero compared to DTL's, with similar standard errors, leading to slightly smaller p-values.
 Figure \ref{fig:overlap} 
 examines the overlap of estimated propensity scores, displaying mirror histograms of estimated propensity scores within the treatment and control groups. The substantial overlap seen in the mirror histograms indicates that the inverse propensity score weights are relatively stable. Figure \ref{fig:overlap}(d) displays bimodal patterns in the histograms due to a binary confounder variable that significantly influences the propensity score estimate of $P[Z_{2}=2\mid\bS,Z_{1}=2]$, with a close association between a participant's decision to enroll in second-year education and their attendance in the class during the final weeks of the first year. 

 \section{Discussion}\label{sec:dis}
 

This paper aims to enhance the understanding of estimating causal parameters in multi-stage settings. While prior DR literature has recognized the importance of the stage-zero DR representation for the expected potential outcome, it has overlooked the fact that all the intermediate conditional mean functions can also be identified in a DR manner. This approach leads to better theoretical guarantees and greater flexibility in modeling dynamic dependencies, which can be complex and involve multiple time exposures. Furthermore, our findings have significant practical implications beyond parametric models, especially in situations where doctors or policymakers cannot rely on randomized treatments or simplistic treatment rules. With the ability to model dynamic treatment effects using robust principles, new avenues of discovery are emerging, including optimal treatment rules and determining the best treatment times. Our approach also enables the exploration of further important issues, such as mitigating network spillover effects through robustness perspectives and enriching balancing methods with better robustness properties. The significance of our work lies in the fact that it allows researchers to estimate treatment effects in complex settings more accurately and provides a valuable tool for policymakers seeking to make informed decisions based on robust causal inference methods.

\section{Acknowledgement}
This work was supported in part by NSF awards CNS-1730158, ACI-1540112, ACI-1541349, OAC-1826967, the University of California Office of the President, and the University of California San Diego's California Institute for Telecommunications and Information Technology/Qualcomm Institute.
Jelena Bradic's work has been supported by the NSF grand DMS-1712481.
The majority of this work was done while Yuqian Zhang was with the Department of Mathematics, University of California San Diego.

\begin{supplement}
 \stitle{Supplementary Materials for the ``High-dimensional inference for dynamic treatment effects'', \cite{bradic2023supplement}}
 
\sdescription{This supplementary document contains additional justifications and the proofs of the theoretical results presented in the main document. All the results and notation are numbered and used as in the main text unless stated otherwise. Statements introduced in the Supplementary Materials only are numbered using an alphanumerical scheme. Supplementary Materials includes further discussions on the nuisance models, additional numerical results, Auxiliary Lemmas S.1-S.18 with their proofs used for establishing the main results, Theorems \ref{thm:DR-mu}-\ref{DR_multi_2}.} 
\end{supplement}

\appendix

\bibliographystyle{imsart-nameyear} 
\bibliography{ref}

\begin{thebibliography}{50}

\bibitem[\protect\citeauthoryear{Avagyan and
  Vansteelandt}{2021}]{avagyan2021high}
\begin{barticle}[author]
\bauthor{\bsnm{Avagyan},~\bfnm{Vahe}\binits{V.}} \AND
  \bauthor{\bsnm{Vansteelandt},~\bfnm{Stijn}\binits{S.}}
(\byear{2021}).
\btitle{High-dimensional inference for the average treatment effect under model
  misspecification using penalized bias-reduced double-robust estimation}.
\bjournal{Biostatistics $\&$ Epidemiology}
\bpages{1--18}.
\end{barticle}
\endbibitem

\bibitem[\protect\citeauthoryear{Babino, Rotnitzky and
  Robins}{2019}]{babino2019multiple}
\begin{barticle}[author]
\bauthor{\bsnm{Babino},~\bfnm{Lucia}\binits{L.}},
  \bauthor{\bsnm{Rotnitzky},~\bfnm{Andrea}\binits{A.}} \AND
  \bauthor{\bsnm{Robins},~\bfnm{James}\binits{J.}}
(\byear{2019}).
\btitle{Multiple robust estimation of marginal structural mean models for
  unconstrained outcomes}.
\bjournal{Biometrics}
\bvolume{75}
\bpages{90--99}.
\end{barticle}
\endbibitem

\bibitem[\protect\citeauthoryear{Bang and Robins}{2005}]{bang2005doubly}
\begin{barticle}[author]
\bauthor{\bsnm{Bang},~\bfnm{Heejung}\binits{H.}} \AND
  \bauthor{\bsnm{Robins},~\bfnm{James~M}\binits{J.~M.}}
(\byear{2005}).
\btitle{Doubly robust estimation in missing data and causal inference models}.
\bjournal{Biometrics}
\bvolume{61}
\bpages{962--973}.
\end{barticle}
\endbibitem

\bibitem[\protect\citeauthoryear{Bickel, Ritov and
  Tsybakov}{2009}]{bickel2009simultaneous}
\begin{barticle}[author]
\bauthor{\bsnm{Bickel},~\bfnm{Peter~J}\binits{P.~J.}},
  \bauthor{\bsnm{Ritov},~\bfnm{Ya’acov}\binits{Y.}} \AND
  \bauthor{\bsnm{Tsybakov},~\bfnm{Alexandre~B}\binits{A.~B.}}
(\byear{2009}).
\btitle{Simultaneous analysis of Lasso and Dantzig selector}.
\bjournal{The Annals of Statistics}
\bvolume{37}
\bpages{1705--1732}.
\end{barticle}
\endbibitem

\bibitem[\protect\citeauthoryear{Bodory, Huber and
  Laff{\'e}rs}{2022}]{bodory2022evaluating}
\begin{barticle}[author]
\bauthor{\bsnm{Bodory},~\bfnm{Hugo}\binits{H.}},
  \bauthor{\bsnm{Huber},~\bfnm{Martin}\binits{M.}} \AND
  \bauthor{\bsnm{Laff{\'e}rs},~\bfnm{Luk{\'a}{\v{s}}}\binits{L.}}
(\byear{2022}).
\btitle{Evaluating (weighted) dynamic treatment effects by double machine
  learning}.
\bjournal{The Econometrics Journal}
\bvolume{25}
\bpages{628--648}.
\end{barticle}
\endbibitem

\bibitem[\protect\citeauthoryear{Bradic, Ji and
  Zhang}{2023}]{bradic2023supplement}
\begin{barticle}[author]
\bauthor{\bsnm{Bradic},~\bfnm{Jelena}\binits{J.}},
  \bauthor{\bsnm{Ji},~\bfnm{Weijie}\binits{W.}} \AND
  \bauthor{\bsnm{Zhang},~\bfnm{Yuqian}\binits{Y.}}
(\byear{2023}).
\btitle{Supplement to ``High-dimensional inference for dynamic treatment
  effects''}.
\end{barticle}
\endbibitem

\bibitem[\protect\citeauthoryear{Chakrabortty
  et~al.}{2019}]{chakrabortty2019high}
\begin{barticle}[author]
\bauthor{\bsnm{Chakrabortty},~\bfnm{Abhishek}\binits{A.}},
  \bauthor{\bsnm{Lu},~\bfnm{Jiarui}\binits{J.}},
  \bauthor{\bsnm{Cai},~\bfnm{T~Tony}\binits{T.~T.}} \AND
  \bauthor{\bsnm{Li},~\bfnm{Hongzhe}\binits{H.}}
(\byear{2019}).
\btitle{High Dimensional M-Estimation with Missing Outcomes: A Semi-Parametric
  Framework}.
\bjournal{arXiv preprint arXiv:1911.11345}.
\end{barticle}
\endbibitem

\bibitem[\protect\citeauthoryear{Chen and Flores}{2015}]{chen2015bounds}
\begin{barticle}[author]
\bauthor{\bsnm{Chen},~\bfnm{Xuan}\binits{X.}} \AND
  \bauthor{\bsnm{Flores},~\bfnm{Carlos~A}\binits{C.~A.}}
(\byear{2015}).
\btitle{Bounds on treatment effects in the presence of sample selection and
  noncompliance: the wage effects of job corps}.
\bjournal{Journal of Business \& Economic Statistics}
\bvolume{33}
\bpages{523--540}.
\end{barticle}
\endbibitem

\bibitem[\protect\citeauthoryear{Chernozhukov
  et~al.}{2018}]{chernozhukov2018double}
\begin{barticle}[author]
\bauthor{\bsnm{Chernozhukov},~\bfnm{Victor}\binits{V.}},
  \bauthor{\bsnm{Chetverikov},~\bfnm{Denis}\binits{D.}},
  \bauthor{\bsnm{Demirer},~\bfnm{Mert}\binits{M.}},
  \bauthor{\bsnm{Duflo},~\bfnm{Esther}\binits{E.}},
  \bauthor{\bsnm{Hansen},~\bfnm{Christian}\binits{C.}},
  \bauthor{\bsnm{Newey},~\bfnm{Whitney}\binits{W.}} \AND
  \bauthor{\bsnm{Robins},~\bfnm{James}\binits{J.}}
(\byear{2018}).
\btitle{Double/debiased machine learning for treatment and structural
  parameters}.
\bjournal{The Econometrics Journal}
\bvolume{21}
\bpages{C1--C68}.
\end{barticle}
\endbibitem

\bibitem[\protect\citeauthoryear{Daniel et~al.}{2013}]{daniel2013methods}
\begin{barticle}[author]
\bauthor{\bsnm{Daniel},~\bfnm{Rhian~M}\binits{R.~M.}},
  \bauthor{\bsnm{Cousens},~\bfnm{SN}\binits{S.}},
  \bauthor{\bsnm{De~Stavola},~\bfnm{BL}\binits{B.}},
  \bauthor{\bsnm{Kenward},~\bfnm{Michael~G}\binits{M.~G.}} \AND
  \bauthor{\bsnm{Sterne},~\bfnm{JAC}\binits{J.}}
(\byear{2013}).
\btitle{Methods for dealing with time-dependent confounding}.
\bjournal{Statistics in Medicine}
\bvolume{32}
\bpages{1584--1618}.
\end{barticle}
\endbibitem

\bibitem[\protect\citeauthoryear{Dukes, Avagyan and
  Vansteelandt}{2020}]{dukes2020doubly}
\begin{barticle}[author]
\bauthor{\bsnm{Dukes},~\bfnm{Oliver}\binits{O.}},
  \bauthor{\bsnm{Avagyan},~\bfnm{Vahe}\binits{V.}} \AND
  \bauthor{\bsnm{Vansteelandt},~\bfnm{Stijn}\binits{S.}}
(\byear{2020}).
\btitle{Doubly robust tests of exposure effects under high-dimensional
  confounding}.
\bjournal{Biometrics}
\bvolume{76}
\bpages{1190--1200}.
\end{barticle}
\endbibitem

\bibitem[\protect\citeauthoryear{Dukes and
  Vansteelandt}{2021}]{dukes2021inference}
\begin{barticle}[author]
\bauthor{\bsnm{Dukes},~\bfnm{Oliver}\binits{O.}} \AND
  \bauthor{\bsnm{Vansteelandt},~\bfnm{Stijn}\binits{S.}}
(\byear{2021}).
\btitle{Inference for treatment effect parameters in potentially misspecified
  high-dimensional models}.
\bjournal{Biometrika}
\bvolume{108}
\bpages{321--334}.
\end{barticle}
\endbibitem

\bibitem[\protect\citeauthoryear{D{\"u}mbgen
  et~al.}{2010}]{dumbgen2010nemirovski}
\begin{barticle}[author]
\bauthor{\bsnm{D{\"u}mbgen},~\bfnm{Lutz}\binits{L.}}, \bauthor{\bsnm{Van
  De~Geer},~\bfnm{Sara~A}\binits{S.~A.}},
  \bauthor{\bsnm{Veraar},~\bfnm{Mark~C}\binits{M.~C.}} \AND
  \bauthor{\bsnm{Wellner},~\bfnm{Jon~A}\binits{J.~A.}}
(\byear{2010}).
\btitle{Nemirovski's inequalities revisited}.
\bjournal{The American Mathematical Monthly}
\bvolume{117}
\bpages{138--160}.
\end{barticle}
\endbibitem

\bibitem[\protect\citeauthoryear{Farrell}{2015}]{farrell2015robust}
\begin{barticle}[author]
\bauthor{\bsnm{Farrell},~\bfnm{Max~H}\binits{M.~H.}}
(\byear{2015}).
\btitle{Robust inference on average treatment effects with possibly more
  covariates than observations}.
\bjournal{Journal of Econometrics}
\bvolume{189}
\bpages{1--23}.
\end{barticle}
\endbibitem

\bibitem[\protect\citeauthoryear{Flores et~al.}{2012}]{flores2012estimating}
\begin{barticle}[author]
\bauthor{\bsnm{Flores},~\bfnm{Carlos~A}\binits{C.~A.}},
  \bauthor{\bsnm{Flores-Lagunes},~\bfnm{Alfonso}\binits{A.}},
  \bauthor{\bsnm{Gonzalez},~\bfnm{Arturo}\binits{A.}} \AND
  \bauthor{\bsnm{Neumann},~\bfnm{Todd~C}\binits{T.~C.}}
(\byear{2012}).
\btitle{Estimating the effects of length of exposure to instruction in a
  training program: The case of job corps}.
\bjournal{Review of Economics and Statistics}
\bvolume{94}
\bpages{153--171}.
\end{barticle}
\endbibitem

\bibitem[\protect\citeauthoryear{Hern{\'a}n, Brumback and
  Robins}{2001}]{hernan2001marginal}
\begin{barticle}[author]
\bauthor{\bsnm{Hern{\'a}n},~\bfnm{Miguel~A}\binits{M.~A.}},
  \bauthor{\bsnm{Brumback},~\bfnm{Babette}\binits{B.}} \AND
  \bauthor{\bsnm{Robins},~\bfnm{James~M}\binits{J.~M.}}
(\byear{2001}).
\btitle{Marginal structural models to estimate the joint causal effect of
  nonrandomized treatments}.
\bjournal{Journal of the American Statistical Association}
\bvolume{96}
\bpages{440--448}.
\end{barticle}
\endbibitem

\bibitem[\protect\citeauthoryear{Hern{\'a}n
  et~al.}{2016}]{hernan2016specifying}
\begin{barticle}[author]
\bauthor{\bsnm{Hern{\'a}n},~\bfnm{Miguel~A}\binits{M.~A.}},
  \bauthor{\bsnm{Sauer},~\bfnm{Brian~C}\binits{B.~C.}},
  \bauthor{\bsnm{Hern{\'a}ndez-D{\'\i}az},~\bfnm{Sonia}\binits{S.}},
  \bauthor{\bsnm{Platt},~\bfnm{Robert}\binits{R.}} \AND
  \bauthor{\bsnm{Shrier},~\bfnm{Ian}\binits{I.}}
(\byear{2016}).
\btitle{Specifying a target trial prevents immortal time bias and other
  self-inflicted injuries in observational analyses}.
\bjournal{Journal of Clinical Epidemiology}
\bvolume{79}
\bpages{70--75}.
\end{barticle}
\endbibitem

\bibitem[\protect\citeauthoryear{Huber et~al.}{2020}]{huber2020direct}
\begin{barticle}[author]
\bauthor{\bsnm{Huber},~\bfnm{Martin}\binits{M.}},
  \bauthor{\bsnm{Hsu},~\bfnm{Yu-Chin}\binits{Y.-C.}},
  \bauthor{\bsnm{Lee},~\bfnm{Ying-Ying}\binits{Y.-Y.}} \AND
  \bauthor{\bsnm{Lettry},~\bfnm{Layal}\binits{L.}}
(\byear{2020}).
\btitle{Direct and indirect effects of continuous treatments based on
  generalized propensity score weighting}.
\bjournal{Journal of Applied Econometrics}
\bvolume{35}
\bpages{814--840}.
\end{barticle}
\endbibitem

\bibitem[\protect\citeauthoryear{Kallus and
  Santacatterina}{2021}]{kallus2018optimal}
\begin{barticle}[author]
\bauthor{\bsnm{Kallus},~\bfnm{Nathan}\binits{N.}} \AND
  \bauthor{\bsnm{Santacatterina},~\bfnm{Michele}\binits{M.}}
(\byear{2021}).
\btitle{Optimal balancing of time-dependent confounders for marginal structural
  models}.
\bjournal{Journal of Causal Inference}
\bvolume{9}
\bpages{345--369}.
\end{barticle}
\endbibitem

\bibitem[\protect\citeauthoryear{Kennedy}{2020}]{kennedy2020towards}
\begin{barticle}[author]
\bauthor{\bsnm{Kennedy},~\bfnm{Edward~H}\binits{E.~H.}}
(\byear{2020}).
\btitle{Towards optimal doubly robust estimation of heterogeneous causal
  effects}.
\bjournal{arXiv preprint arXiv:2004.14497}.
\end{barticle}
\endbibitem

\bibitem[\protect\citeauthoryear{Lee}{2009}]{lee2009training}
\begin{barticle}[author]
\bauthor{\bsnm{Lee},~\bfnm{David~S}\binits{D.~S.}}
(\byear{2009}).
\btitle{Training, wages, and sample selection: Estimating sharp bounds on
  treatment effects}.
\bjournal{The Review of Economic Studies}
\bvolume{76}
\bpages{1071--1102}.
\end{barticle}
\endbibitem

\bibitem[\protect\citeauthoryear{Lewis and Syrgkanis}{2021}]{lewis2021double}
\begin{barticle}[author]
\bauthor{\bsnm{Lewis},~\bfnm{Greg}\binits{G.}} \AND
  \bauthor{\bsnm{Syrgkanis},~\bfnm{Vasilis}\binits{V.}}
(\byear{2021}).
\btitle{Double/debiased machine learning for dynamic treatment effects}.
\bjournal{Advances in Neural Information Processing Systems}
\bvolume{34}
\bpages{22695--22707}.
\end{barticle}
\endbibitem

\bibitem[\protect\citeauthoryear{Murphy}{2003}]{murphy2003optimal}
\begin{barticle}[author]
\bauthor{\bsnm{Murphy},~\bfnm{Susan~A}\binits{S.~A.}}
(\byear{2003}).
\btitle{Optimal dynamic treatment regimes}.
\bjournal{Journal of the Royal Statistical Society: Series B (Statistical
  Methodology)}
\bvolume{65}
\bpages{331--355}.
\end{barticle}
\endbibitem

\bibitem[\protect\citeauthoryear{Murphy et~al.}{2001}]{murphy2001marginal}
\begin{barticle}[author]
\bauthor{\bsnm{Murphy},~\bfnm{Susan~A}\binits{S.~A.}},
  \bauthor{\bparticle{van~der} \bsnm{Laan},~\bfnm{Mark~J}\binits{M.~J.}},
  \bauthor{\bsnm{Robins},~\bfnm{James~M}\binits{J.~M.}} \AND
  \bauthor{\bsnm{Group},~\bfnm{Conduct Problems
  Prevention~Research}\binits{C.~P. P.~R.}}
(\byear{2001}).
\btitle{Marginal mean models for dynamic regimes}.
\bjournal{Journal of the American Statistical Association}
\bvolume{96}
\bpages{1410--1423}.
\end{barticle}
\endbibitem

\bibitem[\protect\citeauthoryear{Negahban et~al.}{2012}]{negahban2012unified}
\begin{barticle}[author]
\bauthor{\bsnm{Negahban},~\bfnm{Sahand~N}\binits{S.~N.}},
  \bauthor{\bsnm{Ravikumar},~\bfnm{Pradeep}\binits{P.}},
  \bauthor{\bsnm{Wainwright},~\bfnm{Martin~J}\binits{M.~J.}} \AND
  \bauthor{\bsnm{Yu},~\bfnm{Bin}\binits{B.}}
(\byear{2012}).
\btitle{A unified framework for high-dimensional analysis of $ M $-estimators
  with decomposable regularizers}.
\bjournal{Statistical Science}
\bvolume{27}
\bpages{538--557}.
\end{barticle}
\endbibitem

\bibitem[\protect\citeauthoryear{Orellana, Rotnitzky and
  Robins}{2010}]{orellana2010dynamic}
\begin{barticle}[author]
\bauthor{\bsnm{Orellana},~\bfnm{Liliana}\binits{L.}},
  \bauthor{\bsnm{Rotnitzky},~\bfnm{Andrea}\binits{A.}} \AND
  \bauthor{\bsnm{Robins},~\bfnm{James~M}\binits{J.~M.}}
(\byear{2010}).
\btitle{Dynamic regime marginal structural mean models for estimation of
  optimal dynamic treatment regimes, part I: main content}.
\bjournal{The International Journal of Biostatistics}
\bvolume{6}.
\end{barticle}
\endbibitem

\bibitem[\protect\citeauthoryear{Robins}{1986}]{robins1986new}
\begin{barticle}[author]
\bauthor{\bsnm{Robins},~\bfnm{James}\binits{J.}}
(\byear{1986}).
\btitle{A new approach to causal inference in mortality studies with a
  sustained exposure period -- application to control of the healthy worker
  survivor effect}.
\bjournal{Mathematical Modelling}
\bvolume{7}
\bpages{1393--1512}.
\end{barticle}
\endbibitem

\bibitem[\protect\citeauthoryear{Robins}{1987}]{robins1987addendum}
\begin{barticle}[author]
\bauthor{\bsnm{Robins},~\bfnm{James~M}\binits{J.~M.}}
(\byear{1987}).
\btitle{Addendum to "a new approach to causal inference in mortality studies
  with a sustained exposure period -- application to control of the healthy
  worker survivor effect"}.
\bjournal{Computers \& Mathematics with Applications}
\bvolume{14}
\bpages{923--945}.
\end{barticle}
\endbibitem

\bibitem[\protect\citeauthoryear{Robins}{1997}]{Robins1997causal}
\begin{binproceedings}[author]
\bauthor{\bsnm{Robins},~\bfnm{James~M}\binits{J.~M.}}
(\byear{1997}).
\btitle{Causal inference from complex longitudinal data}.
In \bbooktitle{Latent Variable Modeling and Applications to Causality}
\bpages{69--117}.
\bpublisher{Springer}.
\end{binproceedings}
\endbibitem

\bibitem[\protect\citeauthoryear{Robins}{2000a}]{robins2000marginal}
\begin{bincollection}[author]
\bauthor{\bsnm{Robins},~\bfnm{James~M}\binits{J.~M.}}
(\byear{2000}a).
\btitle{Marginal structural models versus structural nested models as tools for
  causal inference}.
In \bbooktitle{Statistical Models in Epidemiology, the Environment, and
  Clinical Trials}
\bpages{95--133}.
\bpublisher{Springer}.
\end{bincollection}
\endbibitem

\bibitem[\protect\citeauthoryear{Robins}{2000b}]{robins2000robust}
\begin{binproceedings}[author]
\bauthor{\bsnm{Robins},~\bfnm{James~M}\binits{J.~M.}}
(\byear{2000}b).
\btitle{Robust estimation in sequentially ignorable missing data and causal
  inference models}.
In \bbooktitle{Proceedings of the American Statistical Association}
\bvolume{1999}
\bpages{6--10}.
\bpublisher{Indianapolis, IN}.
\end{binproceedings}
\endbibitem

\bibitem[\protect\citeauthoryear{Robins}{2004}]{robins2004optimal}
\begin{binproceedings}[author]
\bauthor{\bsnm{Robins},~\bfnm{James~M}\binits{J.~M.}}
(\byear{2004}).
\btitle{Optimal structural nested models for optimal sequential decisions}.
In \bbooktitle{Proceedings of the Second Seattle Symposium in Biostatistics}
\bpages{189--326}.
\bpublisher{Springer}.
\end{binproceedings}
\endbibitem

\bibitem[\protect\citeauthoryear{Rosenbaum and
  Rubin}{1983}]{rosenbaum1983central}
\begin{barticle}[author]
\bauthor{\bsnm{Rosenbaum},~\bfnm{Paul~R}\binits{P.~R.}} \AND
  \bauthor{\bsnm{Rubin},~\bfnm{Donald~B}\binits{D.~B.}}
(\byear{1983}).
\btitle{The central role of the propensity score in observational studies for
  causal effects}.
\bjournal{Biometrika}
\bvolume{70}
\bpages{41--55}.
\end{barticle}
\endbibitem

\bibitem[\protect\citeauthoryear{Rudelson and Zhou}{2012}]{pmlr-v23-rudelson12}
\begin{binproceedings}[author]
\bauthor{\bsnm{Rudelson},~\bfnm{Mark}\binits{M.}} \AND
  \bauthor{\bsnm{Zhou},~\bfnm{Shuheng}\binits{S.}}
(\byear{2012}).
\btitle{Reconstruction from Anisotropic Random Measurements}.
In \bbooktitle{Proceedings of the 25th Annual Conference on Learning Theory}
(\beditor{\bfnm{Shie}\binits{S.}~\bsnm{Mannor}},
  \beditor{\bfnm{Nathan}\binits{N.}~\bsnm{Srebro}} \AND
  \beditor{\bfnm{Robert~C.}\binits{R.~C.}~\bsnm{Williamson}}, eds.).
\bseries{Proceedings of Machine Learning Research}
\bvolume{23}
\bpages{10.1--10.24}.
\bpublisher{JMLR Workshop and Conference Proceedings}, \baddress{Edinburgh,
  Scotland}.
\end{binproceedings}
\endbibitem

\bibitem[\protect\citeauthoryear{Schochet}{2001}]{schochet2001national}
\begin{bbook}[author]
\bauthor{\bsnm{Schochet},~\bfnm{Peter~Z}\binits{P.~Z.}}
(\byear{2001}).
\btitle{National Job Corps Study: The impacts of Job Corps on participants'
  employment and related outcomes}.
\bpublisher{US Department of Labor, Employment and Training Administration,
  Office of Policy and Research}.
\end{bbook}
\endbibitem

\bibitem[\protect\citeauthoryear{Schochet, Burghardt and
  McConnell}{2008}]{schochet2008does}
\begin{barticle}[author]
\bauthor{\bsnm{Schochet},~\bfnm{Peter~Z}\binits{P.~Z.}},
  \bauthor{\bsnm{Burghardt},~\bfnm{John}\binits{J.}} \AND
  \bauthor{\bsnm{McConnell},~\bfnm{Sheena}\binits{S.}}
(\byear{2008}).
\btitle{Does job corps work? Impact findings from the national job corps
  study}.
\bjournal{American Economic Review}
\bvolume{98}
\bpages{1864--1886}.
\end{barticle}
\endbibitem

\bibitem[\protect\citeauthoryear{Schochet et~al.}{2003}]{Schochet2003national}
\begin{barticle}[author]
\bauthor{\bsnm{Schochet},~\bfnm{P}\binits{P.}},
  \bauthor{\bsnm{Bellotti},~\bfnm{J}\binits{J.}},
  \bauthor{\bsnm{Ruo-Jiao},~\bfnm{C}\binits{C.}},
  \bauthor{\bsnm{Glazerman},~\bfnm{S}\binits{S.}},
  \bauthor{\bsnm{Grady},~\bfnm{A}\binits{A.}},
  \bauthor{\bsnm{Gritz},~\bfnm{M}\binits{M.}},
  \bauthor{\bsnm{McConnell},~\bfnm{S}\binits{S.}},
  \bauthor{\bsnm{Johnson},~\bfnm{T}\binits{T.}} \AND
  \bauthor{\bsnm{Burghardt},~\bfnm{J}\binits{J.}}
(\byear{2003}).
\btitle{National Job Corps study: data documentation and public use files}.
\bjournal{vols. I-IV). Washington, DC: Mathematica Policy Research, Inc}.
\end{barticle}
\endbibitem

\bibitem[\protect\citeauthoryear{Singh, Xu and Gretton}{2021}]{singh2021kernel}
\begin{barticle}[author]
\bauthor{\bsnm{Singh},~\bfnm{Rahul}\binits{R.}},
  \bauthor{\bsnm{Xu},~\bfnm{Liyuan}\binits{L.}} \AND
  \bauthor{\bsnm{Gretton},~\bfnm{Arthur}\binits{A.}}
(\byear{2021}).
\btitle{Kernel methods for multistage causal inference: Mediation analysis and
  dynamic treatment effects}.
\bjournal{arXiv preprint arXiv:2111.03950}.
\end{barticle}
\endbibitem

\bibitem[\protect\citeauthoryear{Smucler, Rotnitzky and
  Robins}{2019}]{smucler2019unifying}
\begin{barticle}[author]
\bauthor{\bsnm{Smucler},~\bfnm{Ezequiel}\binits{E.}},
  \bauthor{\bsnm{Rotnitzky},~\bfnm{Andrea}\binits{A.}} \AND
  \bauthor{\bsnm{Robins},~\bfnm{James~M}\binits{J.~M.}}
(\byear{2019}).
\btitle{A unifying approach for doubly-robust $l_1$ regularized estimation of
  causal contrasts}.
\bjournal{arXiv preprint arXiv:1904.03737}.
\end{barticle}
\endbibitem

\bibitem[\protect\citeauthoryear{Tan}{2020}]{tan2020model}
\begin{barticle}[author]
\bauthor{\bsnm{Tan},~\bfnm{Zhiqiang}\binits{Z.}}
(\byear{2020}).
\btitle{Model-assisted inference for treatment effects using regularized
  calibrated estimation with high-dimensional data}.
\bjournal{The Annals of Statistics}
\bvolume{48}
\bpages{811--837}.
\end{barticle}
\endbibitem

\bibitem[\protect\citeauthoryear{Tchetgen and
  Shpitser}{2012}]{tchetgen2012semiparametric}
\begin{barticle}[author]
\bauthor{\bsnm{Tchetgen},~\bfnm{Eric J~Tchetgen}\binits{E.~J.~T.}} \AND
  \bauthor{\bsnm{Shpitser},~\bfnm{Ilya}\binits{I.}}
(\byear{2012}).
\btitle{Semiparametric theory for causal mediation analysis: efficiency bounds,
  multiple robustness, and sensitivity analysis}.
\bjournal{The Annals of Statistics}
\bvolume{40}
\bpages{1816}.
\end{barticle}
\endbibitem

\bibitem[\protect\citeauthoryear{van~der Laan and
  Gruber}{2012}]{van2012targeted}
\begin{barticle}[author]
\bauthor{\bparticle{van~der} \bsnm{Laan},~\bfnm{Mark~J}\binits{M.~J.}} \AND
  \bauthor{\bsnm{Gruber},~\bfnm{Susan}\binits{S.}}
(\byear{2012}).
\btitle{Targeted minimum loss based estimation of causal effects of multiple
  time point interventions}.
\bjournal{The international journal of biostatistics}
\bvolume{8}.
\end{barticle}
\endbibitem

\bibitem[\protect\citeauthoryear{Viviano and Bradic}{2021}]{viviano2021dynamic}
\begin{barticle}[author]
\bauthor{\bsnm{Viviano},~\bfnm{Davide}\binits{D.}} \AND
  \bauthor{\bsnm{Bradic},~\bfnm{Jelena}\binits{J.}}
(\byear{2021}).
\btitle{Dynamic covariate balancing: estimating treatment effects over time}.
\bjournal{arXiv preprint arXiv:2103.01280}.
\end{barticle}
\endbibitem

\bibitem[\protect\citeauthoryear{Wainwright}{2019}]{wainwright2019high}
\begin{bbook}[author]
\bauthor{\bsnm{Wainwright},~\bfnm{Martin~J}\binits{M.~J.}}
(\byear{2019}).
\btitle{High-dimensional statistics: A non-asymptotic viewpoint}
\bvolume{48}.
\bpublisher{Cambridge University Press}.
\end{bbook}
\endbibitem

\bibitem[\protect\citeauthoryear{Yiu and Su}{2018}]{yiu2018covariate}
\begin{barticle}[author]
\bauthor{\bsnm{Yiu},~\bfnm{Sean}\binits{S.}} \AND
  \bauthor{\bsnm{Su},~\bfnm{Li}\binits{L.}}
(\byear{2018}).
\btitle{Covariate association eliminating weights: a unified weighting
  framework for causal effect estimation}.
\bjournal{Biometrika}
\bvolume{105}
\bpages{709--722}.
\end{barticle}
\endbibitem

\bibitem[\protect\citeauthoryear{Yu and van~der Laan}{2006}]{yu2006double}
\begin{barticle}[author]
\bauthor{\bsnm{Yu},~\bfnm{Zhuo}\binits{Z.}} \AND \bauthor{\bparticle{van~der}
  \bsnm{Laan},~\bfnm{Mark}\binits{M.}}
(\byear{2006}).
\btitle{Double robust estimation in longitudinal marginal structural models}.
\bjournal{Journal of Statistical Planning and Inference}
\bvolume{136}
\bpages{1061--1089}.
\end{barticle}
\endbibitem

\bibitem[\protect\citeauthoryear{Zhang and Bradic}{2022}]{zhang2022high}
\begin{barticle}[author]
\bauthor{\bsnm{Zhang},~\bfnm{Yuqian}\binits{Y.}} \AND
  \bauthor{\bsnm{Bradic},~\bfnm{Jelena}\binits{J.}}
(\byear{2022}).
\btitle{High-dimensional semi-supervised learning: in search of optimal
  inference of the mean}.
\bjournal{Biometrika}
\bvolume{109}
\bpages{387--403}.
\end{barticle}
\endbibitem

\bibitem[\protect\citeauthoryear{Zhang, Chakrabortty and
  Bradic}{2021}]{zhang2021double}
\begin{barticle}[author]
\bauthor{\bsnm{Zhang},~\bfnm{Yuqian}\binits{Y.}},
  \bauthor{\bsnm{Chakrabortty},~\bfnm{Abhishek}\binits{A.}} \AND
  \bauthor{\bsnm{Bradic},~\bfnm{Jelena}\binits{J.}}
(\byear{2021}).
\btitle{Double Robust Semi-Supervised Inference for the Mean: Selection Bias
  under MAR Labeling with Decaying Overlap}.
\bjournal{arXiv preprint arXiv:2104.06667}.
\end{barticle}
\endbibitem

\bibitem[\protect\citeauthoryear{Zhang, Rubin and
  Mealli}{2008}]{zhang2008evaluating}
\begin{bincollection}[author]
\bauthor{\bsnm{Zhang},~\bfnm{Junni~L}\binits{J.~L.}},
  \bauthor{\bsnm{Rubin},~\bfnm{Donald~B}\binits{D.~B.}} \AND
  \bauthor{\bsnm{Mealli},~\bfnm{Fabrizia}\binits{F.}}
(\byear{2008}).
\btitle{Evaluating the effects of job training programs on wages through
  principal stratification}.
In \bbooktitle{Modelling and Evaluating Treatment Effects in Econometrics}
\bpublisher{Emerald Group Publishing Limited}.
\end{bincollection}
\endbibitem

\bibitem[\protect\citeauthoryear{Zhu, Zeng and Song}{2019}]{zhu2019proper}
\begin{barticle}[author]
\bauthor{\bsnm{Zhu},~\bfnm{Wensheng}\binits{W.}},
  \bauthor{\bsnm{Zeng},~\bfnm{Donglin}\binits{D.}} \AND
  \bauthor{\bsnm{Song},~\bfnm{Rui}\binits{R.}}
(\byear{2019}).
\btitle{Proper inference for value function in high-dimensional Q-learning for
  dynamic treatment regimes}.
\bjournal{Journal of the American Statistical Association}
\bvolume{114}
\bpages{1404--1417}.
\end{barticle}
\endbibitem

\end{thebibliography}

\renewcommand{\thetheorem}{S.\arabic{theorem}}
\renewcommand{\thelemma}{S.\arabic{lemma}}


\clearpage\newpage  
\par\bigskip         
\begin{center}
\textbf{\uppercase{Supplementary Materials for the ``High-dimensional inference for dynamic treatment effects''}}
\end{center}

\par\medskip
This supplementary document contains additional justifications and the proofs of the theoretical results presented in the main document. All the results and notation are numbered and used as in the main text unless stated otherwise. Statements introduced in the Supplementary Materials only are numbered using an alphanumerical scheme. Supplementary Materials includes further discussions on the nuisance models, comparison with the oracle inverse propensity score estimator, additional numerical results, Auxiliary Lemmas S.1-S.18 with their proofs used for establishing the main results, Theorems 1-11.

To simplify the exposition, we begin by listing some shorthand notations used throughout the supplementary document. We let $\bU=(1, \bS)^{\top} \in \mathbb{R}^{d+1}$ and $\bV=(1, \bS_{1}^{\top})^{\top} \in \mathbb{R}^{d_1+1}$. In the following it is important to follow the individuals with pre-specified treatment plan. For that purpose we introduce the following shorthand notation:
$\widetilde{Y}_{a}=Y\mathbbm{1}_{\{\bm A=a\}}$, $ \widetilde{\bU}_{a}=\bU\mathbbm{1}_{\{ \bm A=a\}}$ where $ \bm A=(A_{1},A_{2})=a$. Additionally, we use 
$\bar Y_{a}=Y\mathbbm{1}_{\{A_{1}=a_1\}}$,
$\bar{\bU}_{a}=\bU\mathbbm{1}_{\{A_{1}=a_1\}}$,
$ \bar{\bV}_{a}=\bV\mathbbm{1}_{\{A_{1}=a_1\}}$ to denote individuals who have taken the treatment $a_1$ regardless of which treatment they received at the second exposure. 
Where possible, we suppress the sub-index $a$.

\section{Further discussions on the nuisance models} \label{sec:a}

\subsection{Model correctness of DTL}\label{subsec:model_mis}
We illustrate when will the two working outcome models $\nu_a^*(\bs)=\bu^{\top}\bm{\alpha}^*_{a}$ and $\mu_{a,{\mbox{\tiny NR}}}^*(\bs_{1})=\bv^{\top}\bm{\beta}_{a,{\mbox{\tiny NR}}}^{*}$, the models used for DTL, be correctly specified. If the model $\nu_a^*(\cdot)$ is misspecified, then the model $\mu_{a,{\mbox{\tiny NR}}}^*(\cdot)$ is also very likely to be misspecified, but there are no guarantees either way. A few comments are in order as the relationship between the two nested models is often masked. The following four cases are of potential interest. Their justifications are provided in Section \ref{subsection-justification} below. 

\begin{itemize}
\item[(i)] If we assume that the true outcome model, $\nu_a(\cdot)$ is linear in that 
\begin{align}
	\nu_a(\bS)=E[Y(a)|\bS,A_1=a_1, A_2=a_2]=\bU^{\top}\bm{\alpha}_{a}\label{nu specified}
\end{align}
holds for some vector $\bm{\alpha}_{a}\in\R^{d+1}$,
then it follows that $\bm{\alpha}_{a}^{*}=\bm{\alpha}_{a}$ and hence
$\nu_{a}^*(\cdot)=\nu_{a}(\cdot)$, i.e., $\nu_{a}^*(\cdot)$ is correctly specified.
\item[(ii)] Otherwise, if we assume that (only) the true outcome model, $\mu_a(\cdot)$, is linear in that
\begin{align}
	\mu_a(\bS_1)=E[Y(a)|\bS_1,A_{1}=a_1]&=\bV^{\top}\bm{\beta}_{a}\label{mu specified}
\end{align}
holds for some vector $\bm{\beta}_{a}\in\R^{d_1+1}$,
then it is possible that the working model is still not linear, i.e., $\mu_{a,{\mbox{\tiny NR}}}^*(\cdot)\neq\mu_{a}(\cdot)$ making $\mu_{a,{\mbox{\tiny NR}}}^*(\cdot)$ potentially misspecified.
\item[(iii)] Now, if the true outcome model \eqref{mu specified} holds and in addition $\bm{\alpha}_{a}^{*}$, (3.2), is equal to $\bar{\bm{\alpha}}_{a}^{*}$, with $\bar{\bm{\alpha}}_{a}^{*}$ defined as 
\begin{align*}
	\bar\balpha_a^*:=
	&\arg\min_{\balpha\in\R^{d+1}}E\left[(Y(a)-\bU^\top\balpha)^2|A_{1}=a_1\right]
	=\left[E[\bar \bU\bar \bU^\top]\right]^{-1}E[\bar \bU Y(a)],
\end{align*}
then, we have 
$\bm{\beta}_{a,{\mbox{\tiny NR}}}^{*}=\bm{\beta}_{a}$ and $\mu_{a,{\mbox{\tiny NR}}}^*(\cdot)=\mu_{a}(\cdot)$, i.e., $\mu_{a,{\mbox{\tiny NR}}}^*(\cdot)$ is correctly specified.
\item[(iv)] Lastly, if both of the true outcome models are linear, i.e., \eqref{nu specified} and \eqref{mu specified} hold simultaneously, then, both $\nu_a^*(\cdot)$ and $\mu_{a,{\mbox{\tiny NR}}}^*(\cdot)$ are correctly specified. Case (iv) is equivalent to requiring $E(\bS_{2}^\top\balpha_{a,2}|\bm S_1)$ to be linear in $\bm S_1$; here, $\balpha_a=(\balpha_{a,1},\balpha_{a,2})^\top$ where $\balpha_{a,1}\in\R^{d_1+1}$ and $\balpha_{a,2}\in\R^{d_2}$. This, in turn, occurs for any closed class of spherical distributions, including normal and Student-$t$ distributions, or any linear time-series models of covariate dependence.
\end{itemize}
Some discussions are provided below. We can see that the correctness of the model $\mu_{a,{\mbox{\tiny NR}}}^*(\cdot)$ also depends on $\balpha^*_a$, the slope parameter of $\nu_a^*(\cdot)$.
A true linear outcome model $\mu_a(\cdot)$ does not guarantee a correctly specified $\mu_{a,{\mbox{\tiny NR}}}^*(\cdot)$; however, if the true outcome model $\nu_a(\cdot)$ is also linear, then $\mu_{a,{\mbox{\tiny NR}}}^*(\cdot)$ is correctly specified. Moreover, a linear $\nu_a(\cdot)$ and $\mu_a(\cdot)$ are sufficient for a correctly specified $\nu_a^*(\cdot)$, but they are not required. 
Case (iii) provides an illustration where a correctly specified $\mu_{a,{\mbox{\tiny NR}}}^*(\cdot)$ does not require a correctly specified $\nu_a^*(\cdot)$. This occurs, for example, whenever $\bm{\alpha}_{a}^{*}=\bar{\bm{\alpha}}_{a}^*$.

For an illustration, consider $a=(1,1)$ and $S_1,S_2,Z\sim^\mathrm{iid}\mathrm{Uniform}(-1,1)$ with a nonlinear outcome model $\nu_a(\cdot)$, $Y(a)=S_1+S_2^3+Z$. Let the treatment assignments satisfy
$$\pi_{a}(s_1)=|s_1|, \mbox{ and }\rho_a(s_1,s_2)=\exp(s_1+s_2)/\{1+\exp(s_1+s_2)\},$$
for all $s_1,s_2\in\R$. Then, $\bm{\alpha}_{a}^{*}=\bar{\bm{\alpha}}_{a}^{*}$ and therefore guaranteeing correctness of the linear working model $\mu_{a,{\mbox{\tiny NR}}}^*(\cdot)$. Here, $\pi_{a}^*(\cdot)$ and $\nu_a^*(\cdot)$ are misspecified, $\rho_a^*(\cdot)$ and $\mu_{a,{\mbox{\tiny NR}}}^*(\cdot)$ are correctly specified.

\subsection{Model correctness of S-DRL}\label{subsec:model_mis'}
Note that we consider the same working models $\nu_a^*(\cdot)$, $\pi_{a}^*(\cdot)$, and $\rho_a^*(\cdot)$ for the DTL and S-DRL estimators; only the outcome model at time two differs among these two estimators. In the following, we illustrate when will the doubly robust working model $\mu_a^*(\bs_1)=\bv^\top\bbeta_a^*$ be misspecified and when will $\bbeta_a^*=\bm{\beta}_{a,{\mbox{\tiny NR}}}^{*}$.

\begin{itemize}
\item[(i)] Let either $\nu_a^*(\cdot)=\nu_a(\cdot)$ or $\rho_a^*(\cdot)=\rho_a(\cdot)$. Then, as long as $\mu_a(\cdot)$ is linear in that \eqref{mu specified} holds for some vector $\bm{\beta}_{a}\in\R^{d_1+1}$, we have $\bbeta_a^*=\bbeta$ and hence $\mu_a^*(\bs_1)=\bv^\top\bbeta_a^*=\mu_a(\bs_1)$, i.e., the doubly robust working model $\mu_a^*(\cdot)$ is correctly specified. 
\item[(ii)] If $\nu_a^*(\cdot)=\nu_a(\cdot)$, then $\bbeta_a^*=\bm{\beta}_{a,{\mbox{\tiny NR}}}^{*}$.
\end{itemize}
The justifications of cases (i) and (ii) are also provided in Section \ref{subsection-justification} below.

\subsection{Justifications}\label{subsection-justification}

\subsubsection{Justifications of cases (i)-(iv) in Section \ref{subsec:model_mis}}
For (i), under Assumption 1 and by the tower rule, we have
\begin{align*}
	\bm{\alpha}_{a}^{*}&=\left[E\left[\widetilde{\bU}\widetilde{\bU}^{\top}\right]\right]^{-1}E\left[\widetilde{\bU}\widetilde{Y}\right]=\left[E\left[\widetilde{\bU}\widetilde{\bU}^{\top}\right]\right]^{-1}E\left[\mathbbm1_{\{A_1=a_1, A_2=a_2\}}\bU Y(a)\right]\\
	&=\left[E\left[\widetilde{\bU}\widetilde{\bU}^{\top}\right]\right]^{-1}E\left[\bU E\left[Y(a)|\bU,A_1=a_1, A_2=a_2\right]P\left[A_1=a_1, A_2=a_2|\bU\right]\right]\\
	&=\left[E\left[\widetilde{\bU}\widetilde{\bU}^{\top}\right]\right]^{-1}E\left[\bU\bU^\top\balpha_aE\left[\mathbbm1_{\{A_1=a_1, A_2=a_2\}}|\bU\right]\right]\\
	&=\left[E\left[\widetilde{\bU}\widetilde{\bU}^{\top}\right]\right]^{-1}E\left[\widetilde{\bU}\widetilde{\bU}^{\top}\right]\balpha_a=\bm{\alpha}_{a}.
\end{align*}
It follows that
\begin{align*}
	\nu_{a}(\bS)=\bU^{\top}\bm{\alpha}_{a}=\bU^{\top}\bm{\alpha}_{a}^*=\nu_{a}^*(\bS).
\end{align*}
Therefore, if the model \eqref{nu specified} holds, the working model $\nu_{a}^*(\bS)$ is correctly specified.

\vspace{0.5em}

For (ii), it suffices to prove a counterexample. We refer to Example 1 in Section 2.3. 

\vspace{0.5em}

For (iii), if we assume that $\bar{\bm{\alpha}}_{a}^{*}=\bm{\alpha}_{a}^{*}$, we have 
\begin{align*}
	\bm{\beta}_{a,{\mbox{\tiny NR}}}^{*}&=\left[E\left[\bar{\bV}\bar{\bV}^{\top}\right]\right]^{-1}E\left[\bar{\bV}\bar{\bU}^{\top}\right]\bm{\alpha}_{a}^{*}=\left[E\left[\bar{\bV}\bar{\bV}^{\top}\right]\right]^{-1}E\left[\bar{\bV}\bar{\bU}^{\top}\right]\bar{\bm{\alpha}}_{a}^*\\
	&=\left[E\left[\bar{\bV}\bar{\bV}^{\top}\right]\right]^{-1}E\left[\bar{\bV}\bar{\bU}^{\top}\right]\left[E\left[\bar{\bU}\bar{\bU}^{\top}\right]\right]^{-1}E\left[\bar{\bU}Y(a)\right].
\end{align*}
By the fact that $\bU=(\bV^\top, \bS_2^\top)^\top$, we can write
\begin{align}
	\bV=\mathbf{Q}\bU\;\;\text{where}\;\;\mathbf{Q}=\begin{pmatrix}\bm{I}_{d_1+1} & \bm{0}_{(d_1+1)\times d_2}\end{pmatrix},\label{72}
\end{align}
and hence $\bar\bV=\mathbf{Q}\bar\bU$, which implies that
\begin{align*}
	&E\left[\bar{\bV}\bar{\bU}^{\top}\right]\left[E\left[\bar{\bU}\bar{\bU}^{\top}\right]\right]^{-1}E\left[\bar{\bU}Y(a)\right]=\mathbf{Q}E\left[\bar\bU\bar\bU^{\top}\right]\left[E\left[\bar{\bU}\bar{\bU}^{\top}\right]\right]^{-1}E\left[\bar{\bU}Y(a)\right]\\
	&\qquad=\mathbf{Q}E\left[\bar{\bU}Y(a)\right]=E\left[\bar\bV Y(a)\right].
\end{align*}
Therefore, 
\begin{align*}
	\bm{\beta}_{a,{\mbox{\tiny NR}}}^{*}
	&=\left[E\left[\bar{\bV}\bar{\bV}^{\top}\right]\right]^{-1}E\left[\bar{\bV}Y(a)\right]=\left[E\left[\bar{\bV}\bar{\bV}^{\top}\right]\right]^{-1}E\left[\mathbbm1_{\{A_{1}=a_1\}}\bV Y(a)\right].
\end{align*}
By the tower rule,
\begin{align}
	\bm{\beta}_{a,{\mbox{\tiny NR}}}^{*}
	&=\left[E\left[\bar{\bV}\bar{\bV}^{\top}\right]\right]^{-1}E\left[\bV E\left[Y(a)|\bV,A_{1}=a_1\right]E\left[\mathbbm1_{\{A_{1}=a_1\}}|\bV\right]\right]\nonumber\\
	&=\left[E\left[\bar{\bV}\bar{\bV}^{\top}\right]\right]^{-1}E\left[\mathbbm1_{\{A_{1}=a_1\}}\bV\bV^\top\bbeta_a\right]=\bm{\beta}_{a}.\nonumber
\end{align}
It follows that
\begin{align*}
	\mu_a(\bS_1)=\bV^{\top}\bm{\beta}_{a}=\bV^{\top}\bm{\beta}_{a,{\mbox{\tiny NR}}}^{*}=\mu_{a,{\mbox{\tiny NR}}}^*(\bS_{1}).
\end{align*}
Therefore, if the model \eqref{mu specified} holds and $\bar{\bm{\alpha}}_{a}^{*}=\bm{\alpha}_{a}^{*}$, the working model $\mu_{a,{\mbox{\tiny NR}}}^*(\bS_{1})$ is correctly specified.

\vspace{0.5em}

Regarding (iv), based on the results in (i), we have $\balpha_a^*=\balpha_a$. Under Assumption 1 and \eqref{nu specified}, we have 
$$\nu_{a}(\bS)=E\left[Y(a)|\bS,A_{1}=a_1\right]
=\bU^{\top}\bm{\alpha}_{a}.$$
Hence, we also have
\begin{align*}
\bar\balpha_a^*&=\left[E\left[\bar{\bU}\bar{\bU}^{\top}\right]\right]^{-1}E\left[\bar{\bU}Y(a)\right]=\left[E\left[\bar{\bU}\bar{\bU}^{\top}\right]\right]^{-1}E\left[\mathbbm1_{\{A_{1}=a_1\}}\bU Y(a)\right]\\
&=\left[E\left[\bar{\bU}\bar{\bU}^{\top}\right]\right]^{-1}E\left[\bU E\left[Y(a)|\bU,A_{1}=a_1\right]P\left[A_{1}=a_1|\bU\right]\right]\\
&=\left[E\left[\bar{\bU}\bar{\bU}^{\top}\right]\right]^{-1}E\left[\bU\bU^\top\balpha_aE\left[\mathbbm1_{\{A_{1}=a_1\}}|\bU\right]\right]\\
&=\left[E\left[\bar{\bU}\bar{\bU}^{\top}\right]\right]^{-1}E\left[\bar{\bU}\bar{\bU}^{\top}\right]\balpha_a=\bm{\alpha}_{a}.
\end{align*}
Therefore, 
$$\bm{\alpha}_{a}^{*}=\bar{\bm{\alpha}}_{a}^{*}=\bm{\alpha}_{a}.$$
Together with the results in (iii), we conclude that $\mu_{a,{\mbox{\tiny NR}}}^*(\cdot)$ is correctly specified.

\subsubsection{Justifications of cases (i)-(ii) in Section \ref{subsec:model_mis'}}
For (i), by the definition of $\bbeta_a^*$ and the KKT condition, we have
\begin{align*}
	&\bbeta_a^*=\left[E\left[\bar{\bV}\bar{\bV}^{\top}\right]\right]^{-1}E\left[\bar{\bV}Y^{\mbox{\tiny DR}}\right]\\
	&\;\;=\left[E\left[\bar{\bV}\bar{\bV}^{\top}\right]\right]^{-1}E\left[\mathbbm1_{\{A_{1}=a_1\}}\bV\left[\nu_a^*(\bS)+\mathbbm1_{\{A_2=a_2\}} \frac{Y-\nu_a^*(\bS)}{\rho_a^*(\bS)}\right]\right]\\
	&\;\;\overset{(i)}{=}\left[E\left[\bar{\bV}\bar{\bV}^{\top}\right]\right]^{-1}E\left[E\left[\mathbbm1_{\{A_{1}=a_1\}}\mid\bS_1\right]\bV E\left[\nu_a^*(\bS)+\mathbbm1_{\{A_2=a_2\}} \frac{Y-\nu_a^*(\bS)}{\rho_a^*(\bS)}\mid\bS_1,A_{1}=a_1\right]\right]\\
	&\;\;\overset{(ii)}{=}\left[E\left[\bar{\bV}\bar{\bV}^{\top}\right]\right]^{-1}E\left[E[\mathbbm1_{\{A_{1}=a_1\}}\mid\bS_1]\bV\mu_a(\bS_1)\right]=\left[E\left[\bar{\bV}\bar{\bV}^{\top}\right]\right]^{-1}E\left[\bar\bV\mu_a(\bS_1)\right]\\
	&\;\;\overset{(iii)}{=}\left[E\left[\bar{\bV}\bar{\bV}^{\top}\right]\right]^{-1}E\left[\bar\bV\bar\bV^\top\bbeta_a\right]=\bbeta_a,
\end{align*}
where (i) holds by the tower rule; (ii) holds by Theorem 1 in Section 1.1; (iii) holds since $\bar\bV\mu_a(\bS_1)=\bar\bV\bV^\top\bbeta_a=\bar\bV\bar\bV^\top\bbeta_a$.

\vspace{0.5em}

For (ii), we observe that
\begin{align*}
	&\bbeta_a^*-\bm{\beta}_{a,{\mbox{\tiny NR}}}^{*}=\left[E\left[\bar{\bV}\bar{\bV}^{\top}\right]\right]^{-1}E\left[\bar{\bV}Y^{\mbox{\tiny DR}}-\bar\bV\bar\bU^\top\balpha_a^*\right]\\
	&\quad=\left[E\left[\bar{\bV}\bar{\bV}^{\top}\right]\right]^{-1}E\left[\mathbbm1_{\{A_1=a_1, A_2=a_2\}}\bV\frac{Y(a)-\nu_a^*(\bS)}{\rho_a^*(\bS)}\right]\\
	&\quad\overset{(i)}{=}\left[E\left[\bar{\bV}\bar{\bV}^{\top}\right]\right]^{-1}E\left[E[A_{1}=a_1\mid\bS]\bV E\left[\mathbbm1_{\{A_2=a_2\}}\frac{Y(a)-\nu_a^*(\bS)}{\rho_a^*(\bS)}\mid\bS,A_{1}=a_1\right]\right]\\
	&\quad\overset{(ii)}{=}\left[E\left[\bar{\bV}\bar{\bV}^{\top}\right]\right]^{-1}E\left[E[A_{1}=a_1\mid\bS]\bV E\left[\rho_a(\bS)\frac{\nu_a(\bS)-\nu_a^*(\bS)}{\rho_a^*(\bS)}\mid\bS,A_{1}=a_1\right]\right]=0,
\end{align*}
as long as $\nu_a^*(\cdot)=\nu_a(\cdot)$. Here, (i) holds by the tower rule; (ii) holds since $Y(a)\perp \!\!\! \perp A_2\mid\bS,A_{1}=a_1$ under Assumption 1, $\rho_a(\bS)=E[\mathbbm1_{\{A_2=a_2\}}\mid\bS,A_{1}=a_1]$, and $\nu_a(\bS)=E[Y(a)\mid\bS,A_{1}=a_1]$. Hence, $\bbeta_a^*=\bm{\beta}_{a,{\mbox{\tiny NR}}}^{*}$ as long as $\nu_a^*(\cdot)=\nu_a(\cdot)$.

\section{Additional numerical experiments}
In this section, we present additional simulation results under different data generating processes (DGPs) where all the nuisance functions are correctly parametrized. For each $i\leq N$, generate
$\bS_{1i}\sim^\mathrm{iid} N_{d_1}(\mathbf{0},\mathbf{I}_{d_1})$ and $A_{1i}|\bS_{1i}\sim\mathrm{Bernoulli}(\pi_{a} (\bS_{1i}))$, where $\pi_{a} (\bS_{1i})=g(\bV_i^\top\bgamma_{a})$. Let $\delta_{1i}\sim^\mathrm{iid} N(0,1)$, $\bdelta_{1i}\sim^\mathrm{iid} N_{d_1}(0,\mathbf{I}_{d_1})$ and $\bdelta_{2i}\sim^\mathrm{iid} N_{d_2}(0,\mathbf{I}_{d_2})$. The following models on $\bS_{2i}|(\bS_{1i},A_{1i})$ are considered:
\begin{itemize}
\item[M4.] (Sparse linear) $\bS_{2i}=W_s(A_{1i})\bS_{1i}+A_{1i}(1+\delta_{1i})\mathbf{1}_{d_2\times 1}+\bdelta_{2i}$. 
\item[M5.] (Dense linear) $\bS_{2i}=W_d(A_{1i})\bS_{1i}+A_{1i}(1+\delta_{1i})\mathbf{1}_{d_2\times 1}+\bdelta_{2i}$.
\item[M6.] (Dense quadratic) $\bS_{2i}=0.5\widetilde{W}_d(A_{1i})(\bS_{1i}^2-1)+W_d(A_{1i})\bS_{1i}+A_{1i}(1+\delta_{1i})\mathbf{1}_{d_2\times 1}+\bdelta_{2i}$, where $\bS_{1i}^2\in\R^{d_1}$ is the coordinate-wise square of $\bS_{1i}$.
\end{itemize}

For each $c=(c_1,c_2)\in\{a,a'\}$, the matrices $W_s(c),W_d(c),\widetilde{W}_d(c)\in\R^{d_2\times d_1}$ are defined as the following: for each $i\leq d_2$ and $j\leq d_1$,
\begin{align*}
\{W_s(a)\}_{i,j}&=0.8^{|i-j|}\mathbbm1\{|i-j|\leq1\},\quad\{W_d(a)\}_{i,j}=0.8^{|i-j|},\\
\{W_s(a')\}_{i,j}&=0.7^{|i-j|}\mathbbm1\{|i-j|\leq2\},\quad\{W_d(a')\}_{i,j}=0.7^{|i-j|},\\
\{\widetilde{W}_d(c)\}_{i,j}&=\{W_d(c)\}_{i,j}\mathbbm1\{j>3\}\;\;\text{for each}\;\;c\in\{a,a'\}.
\end{align*}
The treatment indicators at time $t=2$ are generated as
\begin{align*}
&A_{2i}|(\bS_{i},A_{1i}=c_1)\sim\mathrm{Bernoulli}(\rho_{c}(\bS_{i})),\;\;\text{with}\\
&\rho_c(\bS_{i})=g(c_1\bU_i^\top\etabold_a+(1-c_1)\bU_i^\top\etabold_{a'}),\;\;\text{for each}\;\;c=(c_1,c_2)\in\{a,a'\}.
\end{align*}
The outcome variables are generated as
\begin{align*}
Y_i=Y_i(A_{1i},A_{2i}),\;\;Y_i(c)&=\bU_i^\top\balpha_c+\zeta_i,\;\;\text{for each}\;\;c\in\{a,a'\},\;\;\text{where}\;\;\zeta_i\sim^\mathrm{iid}N(0,1). 
\end{align*}

The parameter values are chosen as
$
\balpha_c =(\balpha_{c,1}^\top,\balpha_{c,2}^\top)^\top$, for each $c\in\{a,a'\},$
$\balpha_{a,1} =(-1,-1,1,-1,\mathbf0_{(d_1-3)})^\top,$ $\balpha_{a,2}=(-1,-1,1,\mathbf0_{(d_2-3)})^\top,$
$\balpha_{a',1} =(1,1,1,-1,\mathbf0_{(d_1-3)})^\top,$ $\balpha_{a',2}=(1,1,1,\mathbf0_{(d_2-3)})^\top,$
$\bgamma_{a} =(0,1,1,1,\mathbf0_{(d_1-3)})^\top,$
$\etabold_a =(0,1,1,\mathbf0_{(d_1-2)},1,-1,\mathbf0_{(d_2-2)})^\top,$ and 
$\etabold_{a'} =(0,0.5,0,-0.5,$ $\mathbf0_{(d_1-3)},0.5,0,0.5,\mathbf0_{(d_2-3)})^\top,
$
where $\mathbf0_q:=(0,\ldots,0)\in\R^q$ for any $q\geq1$. Under the above DGPs, we have the following nuisance functions: for each $c\in\{a,a'\}$,
\begin{align}
\nu_c(\bS)&=E[Y(c)|\bS,A_1=c_1]=\bU^\top\balpha_c,\label{sim:nu}\\
\mu_c(\bS_{1})&=E[Y(c)|\bS_{1},A_1=c_1]=\bV^\top\balpha_{c,1}+E[\bS_2^\top\balpha_{c,2}|\bS_1,A_1=c_1]=\bV^\top\bbeta_c,\label{sim:mu}
\end{align}
where $\bbeta_c$ varies for different models on $\bS_{2i}|(\bS_{1i},A_{1i})$ as follows:
\begin{itemize}
\item[M4.] $\bbeta_c=\balpha_{c,1}+(\sum_{j=1}^{d_2}\balpha_{a',2}\mathbbm1\{c=a'\},(W_s(c)\balpha_{c,2})^\top)^\top$ with $\|\bbeta_a\|_0=4$ and $\|\bbeta_{a'}\|_0=5$.
\item[M5-6.] $\bbeta_c=\balpha_{c,1}+(\sum_{j=1}^{d_2}\balpha_{a',2}\mathbbm1\{c=a'\},(W_d(c)\balpha_{c,2})^\top)^\top$ is weakly sparse in that $\|\bbeta_a\|_0=\|\bbeta_{a'}\|_0=d_1+1$, $\|\bbeta_a\|_1<5.23$, and $\|\bbeta_{a'}\|_1<7.24$.
\end{itemize}

\begin{table}
\centering
\caption{Simulation under M4. Bias: empirical bias; RMSE: root mean square error; Length: average length of the $95\%$ confidence intervals; Coverage: average coverage of the $95\%$ confidence intervals; ESD: empirical standard deviation; ASD: average of estimated standard deviations. All the reported values (except Coverage) are based on robust (median-type) estimates. Denote $N_1$ and $N_0$ as the expected number of observations in the treatment groups $(1,1)$ and $(0,0)$, respectively.} \label{table:M4}
\begin{tabular}{@{\extracolsep{10pt}}ccccccc}
\toprule
Estimator&Bias&RMSE&Length&Coverage&ESD&ASD\\
\hline
\multicolumn{7}{c}{ \cellcolor{gray!50} $N=1000,N_1=279,N_0=312,d_1=100,d_2=50$}\\
\hline
empdiff&2.485&2.485&1.258&0.000&0.318&0.321\\
oracle&-0.035&0.243&1.305&0.972&0.350&0.333\\
DTL&0.063&0.218&1.121&0.934&0.326&0.286\\
S-DRL&0.133&0.202&0.881&0.874&0.262&0.225\\
S-DRL'&0.121&0.195&0.876&0.898&0.262&0.224\\
\hline
\multicolumn{7}{c}{ \cellcolor{gray!50} $N=4000,N_1=1115,N_0=1248,d_1=100,d_2=50$}\\
\hline
empdiff&2.484&2.484&0.627&0.000&0.162&0.160\\
oracle&0.003&0.125&0.706&0.946&0.185&0.180\\
DTL&0.029&0.119&0.600&0.928&0.171&0.153\\
S-DRL&0.031&0.122&0.601&0.926&0.169&0.153\\
S-DRL'&0.030&0.119&0.598&0.922&0.173&0.153\\
\bottomrule
\end{tabular}
\end{table}

 The following choices of parameters are implemented:
$N\in\{1000,4000\}$, $d_1=100$, and $d_2=d_1/2=50$. For each of the DGPs, we repeat the simulation for 500 times. For each replication, we consider the oracle estimator, the empirical difference estimator (empdiff), the DTL estimator, the S-DRL estimator, and the S-DRL' estimator as in Section 6.1. The results are reported in Tables \ref{table:M4}-\ref{table:M6}. 

The considered DGPs are only different on the procedure of generating $\bS_{2}$ based on $\bS_{1}$ and $A_1$. Under M4, we consider a sparse linear dependence that $\bS_{2}$ is linearly dependent on $\bS_{1}$ through a sparse and dense matrix operator, where the corresponding coefficient $\bbeta_c$ is a sparse vector. Under M5, we consider a dense linear dependence that the corresponding coefficient $\bbeta_c$ is only weakly sparse that it's $\|\cdot\|_1$ norm is bounded. Under M6, we consider a dense quadratic dependence between $\bS_{2}$ and $\bS_{1}$ but the nuisance function $\mu_c(\cdot)$ is still linear - we can see that the nuisance function can be linear even when $\bS_{2}$ is not linearly dependent on $\bS_{1}$. Note that, although $E(\bS_{2}|\bS_{1},A_1=c_1)$ is quadratic in $\bS_{1}$, $E(\bS_{2}^\top\alpha_{c,2}|\bS_{1},A_1=c_1)$ is still linear on $\bS_{1}$ and hence the linear working models $\mu_c^*(\cdot)=\mu_{c,{\mbox{\tiny NR}}}^*(\cdot)$ are both correctly specified as the second-time conditional mean $\nu_c(\cdot)$ is also linear. 

\begin{table}[h]
\centering
\caption{Simulation under M5. The rest of the caption details remain the same as those in Table \ref{table:M4}.} \label{table:M5}
\begin{tabular}{@{\extracolsep{10pt}}ccccccc}
\toprule
Estimator&Bias&RMSE&Length&Coverage&ESD&ASD\\
\hline
\multicolumn{7}{c}{ \cellcolor{gray!50} $N=1000,N_1=296,N_0=310,d_1=100,d_2=50$}\\
\hline
empdiff&2.921&2.921&1.239&0.000&0.317&0.316\\
oracle&0.002&0.245&1.346&0.962&0.364&0.343\\
DTL&0.084&0.219&1.139&0.920&0.322&0.291\\
S-DRL&0.097&0.225&1.140&0.922&0.323&0.291\\
S-DRL'&0.084&0.224&1.138&0.926&0.321&0.290\\
\hline
\multicolumn{7}{c}{ \cellcolor{gray!50} $N=4000,N_1=1184,N_0=1240,d_1=100,d_2=50$}\\
\hline
empdiff&2.922&2.922&0.619&0.000&0.159&0.158\\
oracle&-0.006&0.137&0.710&0.946&0.202&0.181\\
DTL&0.019&0.113&0.608&0.934&0.166&0.155\\
S-DRL&0.024&0.114&0.609&0.930&0.166&0.155\\
S-DRL'&0.019&0.114&0.609&0.932&0.166&0.155\\
\bottomrule
\end{tabular}
\end{table}

\begin{table}[h]
\centering
\caption{Simulation under M6. The rest of the caption details remain the same as those in Table \ref{table:M4}.} \label{table:M6}
\begin{tabular}{@{\extracolsep{10pt}}ccccccc}
\toprule
Estimator&Bias&RMSE&Length&Coverage&ESD&ASD\\
\hline
\multicolumn{7}{c}{ \cellcolor{gray!50} $N=1000,N_1=296,N_0=310,d_1=100,d_2=50$}\\
\hline
empdiff&2.921&2.921&1.239&0.000&0.317&0.316\\
oracle&0.002&0.245&1.346&0.962&0.364&0.343\\
DTL&0.083&0.225&1.141&0.924&0.318&0.291\\
S-DRL&0.077&0.228&1.139&0.914&0.320&0.290\\
S-DRL'&0.076&0.213&1.135&0.920&0.320&0.289\\
\hline
\multicolumn{7}{c}{ \cellcolor{gray!50} $N=4000,N_1=1184,N_0=1240,d_1=100,d_2=50$}\\
\hline
empdiff&2.922&2.922&0.619&0.000&0.159&0.158\\
oracle&-0.006&0.137&0.710&0.946&0.202&0.181\\
DTL&0.019&0.114&0.610&0.936&0.166&0.156\\
S-DRL&0.021&0.115&0.610&0.928&0.166&0.156\\
S-DRL'&0.020&0.112&0.608&0.932&0.166&0.155\\
\bottomrule
\end{tabular}
\end{table}

We first focus on the DTL, S-DRL, and S-DRL' estimators and compare their behaviors. We can see that when the model is relatively easy (under M4), and the total sample size is relatively small ($N=1000$), the DTL method provides better coverage but with a worse RMSE than the S-DRL and S-DRL' methods; see Table \ref{table:M4}. This is because, although the DTL estimator has a smaller bias, it also has a larger ESD compared with the S-DRL and S-DRL' estimators. If we further increase the sample size ($N=4000$), we can see that the coverages based on DTL, S-DRL, and S-DRL' estimators are close to each other and also overall acceptable. When the estimation of the first-time conditional mean is relatively hard as its linear parameter is only weakly sparse (under M4 and M5), we can see that the RMSEs and confidence intervals' coverages of the DTL, S-DRL, and S-DRL' methods are relatively close to each other for both $N=1000$ and $N=4000$; see Tables \ref{table:M5} and \ref{table:M6}. In addition, we can see that the considered estimators have very similar behaviors among DGPs M5 and M6. Note that the nuisance functions, including the propensity score and conditional mean functions, are the same under M5 and M6, although the conditional densities of $\bS_2$ given $(\bS_1,A_1)$ are different. This observation indicates that the considered estimators' behavior mainly relies on the conditional means of the potential outcomes and treatment variables instead of the conditional densities. Lastly, we can also see that the naive empirical difference estimator, $\thetahat_{\mbox{\tiny empdiff}}$, is not even consistent because of the appearance of confounders.

\section{Proof of the results for the doubly robust representation}
\begin{proof}[Proof of Lemma 1]
By the tower rule and $Y=Y(A_{1},A_2)$ under Assumption 1,
\begin{align*}
	&E\left[\mathbbm1_{\{A_1=a_1, A_2=a_2\}}\frac{Y-\nu_{a}^{*}(\bS)}{\pi_{a}^{*}(\bS_1)\rho_{a}^{*}(\bS)}\right]=E\left[\mathbbm1_{\{A_1=a_1, A_2=a_2\}}\frac{Y(a)-\nu_{a}^{*}(\bS)}{\pi_{a}^{*}(\bS_1)\rho_{a}^{*}(\bS)}\right]\\
	&\qquad=E\left[E\left[\mathbbm1_{\{A_1=a_1, A_2=a_2\}}\frac{Y(a)-\nu_{a}^{*}(\bS)}{\pi_{a}^{*}(\bS_1)\rho_{a}^{*}(\bS)}\mid\bS,A_{1}=a_1\right]P(A_{1}=a_1\mid\bS)\right].
\end{align*}
By $Y(a)\perp \!\!\! \perp A_2\mid\bS,A_{1}=a_1$ under Assumption 1,
\begin{align*} 		
	&E\left[E\left[\mathbbm1_{\{A_1=a_1, A_2=a_2\}}\frac{Y(a)-\nu_{a}^{*}(\bS)}{\pi_{a}^{*}(\bS_1)\rho_{a}^{*}(\bS)}\mid\bS,A_{1}=a_1\right]P(A_{1}=a_1\mid\bS)\right]\\
	&\qquad=E\left[\frac{E[\mathbbm1_{\{A_2=a_2\}}\mid\bS,A_{1}=a_1](E[Y(a)\mid\bS,A_{1}=a_1]-\nu_{a}^{*}(\bS))}{\pi_{a}^{*}(\bS_1)\rho_{a}^{*}(\bS)}E[\mathbbm1_{\{A_1=a_1\}}\mid\bS]\right]\\
	&\qquad\overset{(i)}{=}E\left[\frac{\rho_{a}(\bS)(\nu_{a}(\bS)-\nu_{a}^{*}(\bS))}{\pi_{a}^{*}(\bS_1)\rho_{a}^{*}(\bS)}E[\mathbbm1_{\{A_1=a_1\}}\mid\bS]\right]\\
	&\qquad\overset{(ii)}{=}E\left[\mathbbm1_{\{A_1=a_1\}}\frac{\rho_{a}(\bS)(\nu_{a}(\bS)-\nu_{a}^{*}(\bS))}{\pi_{a}^{*}(\bS_1)\rho_{a}^{*}(\bS)}\right],
\end{align*} 	
where (i) holds since $\rho_a(\bS)=P[A_2=a_2\mid\bS,A_{1}=a_1]$ and $\nu_a(\bS)=E\left[Y(a)\mid\bS,A_{1}=a_1\right]$; (ii) holds by the tower rule. Hence,
\begin{align*} 		
	E\left[\mathbbm1_{\{A_1=a_1, A_2=a_2\}}\frac{Y-\nu_{a}^{*}(\bS)}{\pi_{a}^{*}(\bS_1)\rho_{a}^{*}(\bS)}\right]=E\left[\mathbbm1_{\{A_1=a_1\}}\frac{\rho_{a}(\bS)(\nu_{a}(\bS)-\nu_{a}^{*}(\bS))}{\pi_{a}^{*}(\bS_1)\rho_{a}^{*}(\bS)}\right].
\end{align*} 	
Observe that
\begin{align}
&E\left[\mathbbm1_{\{A_1=a_1, A_2=a_2\}}\frac{ Y-\nu_{a}^*(\bS )}{\pi_{a} ^* (\bS_{1})\rho_{a}^*(\bS)}+\mathbbm1_{\{A_1=a_1\}}\frac{ \nu_{a}^*(\bS )-\mu_{a}^*(\bS_{1}) }{\pi_{a} ^* (\bS_{1})}+\mu_{a}^*(\bS_{1})\right]-\theta_{a}\nonumber \\
&\qquad=E\biggl[\mathbbm1_{\{A_1=a_1\}}\frac{\rho_{a}(\bS)(\nu_{a}(\bS)-\nu_{a}^{*}(\bS))}{\pi_{a}^{*}(\bS_1)\rho_{a}^{*}(\bS)}+\mathbbm1_{\{A_1=a_1\}}\frac{\nu_{a}^{*}(\bS)-\mu_{a}^{*}(\bS_1)}{\pi_{a}^{*}(\bS_1)}\nonumber\\
&\qquad\qquad+\mu_{a}^{*}(\bS_1)-\mu_{a}(\bS_1)\biggl]\overset{(i)}{=}G_1+G_2+G_3,\label{lem1_G1-G3}
\end{align}
where
\begin{align*}
	G_{1}:=&E\left[\mathbbm1_{\{A_1=a_1\}}\frac{\nu_{a}^*(\bS)-\nu_{a}(\bS)}{\pi_{a}^*(\bS_1)}\left(1-\frac{\rho_{a}(\bS)}{\rho_{a}^*(\bS)}\right)\right],\\
	G_{2}:=&E\left[(\mu_{a}^*(\bS_1)-\mu_{a}(\bS_1))\left(1-\frac{\mathbbm1_{\{A_1=a_1\}}}{\pi_{a}^*(\bS_1)}\right)\right],\\
	G_{3}:=&E\left[\mathbbm1_{\{A_1=a_1\}}\frac{\nu_{a}(\bS)-\mu_{a}(\bS_1)}{\pi_{a}^*(\bS_1)}\right].
\end{align*}
In the above, (i) holds by rearranging the terms after the following decomposition
$$ \nu_{a}^{*}(\bS)-\mu_{a}^{*}(\bS_1)= { (\nu_{a}^{*}(\bS)-\nu_{a}(\bS))} + { (\nu_{a}(\bS)-\mu_{a}(\bS_1))} + { (\mu_{a}(\bS)-\mu_{a}^{*}(\bS_1))} .$$
By assumption, either $\nu_{a}^*(\cdot)=\nu_{a}(\cdot)$ or $\rho_{a}^*(\cdot)=\rho_{a}(\cdot)$, we have
\begin{align} 
 G_1=0. \label{lem1_G1}
\end{align}
For $G_2$, by the tower rule, 
\begin{align}
 G_2&=E\left[E\left[(\mu_{a}(\bS_1)-\mu_{a}^*(\bS_1))\left(1-\frac{\mathbbm1_{\{A_1=a_1\}}}{\pi_{a}^*(\bS_1)}\right)\mid \bS_1\right]\right]\nonumber\\
 &\overset{(i)}{=} E\left[(\mu_{a}(\bS_1)-\mu_{a}^*(\bS_1))\left(1-\frac{\pi_{a}(\bS_1)}{\pi_{a}^*(\bS_1)}\right)\right]\overset{(ii)}{=}0, \label{lem1_G2}
\end{align}
where (i) holds since $\pi_{a} (\bS_{1})= P[A_{1}=a_1|\bS_{1}]$; (ii) holds since $\pi_{a} (\bS_{1})= P[A_{1}=a_1|\bS_{1}]$; (ii) holds since, by assumption, either $\mu_{a}^*(\cdot)=\mu_{a}(\cdot)$ or $\pi_{a}^*(\cdot)=\pi_{a}(\cdot)$.
For $G_3$, by the tower rule, 
\begin{align}
 G_3&=E\left[E\left[\mathbbm1_{\{A_1=a_1\}}\frac{\nu_{a}(\bS)-\mu_{a}(\bS_1)}{\pi_{a}^*(\bS_1)}\mid\bS_1,A_{1}=a_1\right]P(A_{1}=a_1\mid\bS_1)\right]\nonumber\\
	&\overset{(i)}{=}E\left[\frac{\pi_{a}(\bS_1)}{\pi_{a}^*(\bS_1)}\left[E\left[\nu_{a}(\bS)\mid\bS_1,A_{1}=a_1\right]-\mu_{a}(\bS_1)\right]\right]\overset{(ii)}{=}0,\label{lem1_G3}
\end{align}
where (i) holds since $\pi_{a} (\bS_{1})= P[A_{1}=a_1|\bS_{1}]$; (ii) holds since $\mu_{a}(\bS_1)=E[\nu_{a}(\bS)\mid\bS_1,A_{1}=a_1]$.
Combining \eqref{lem1_G1}-\eqref{lem1_G3} with \eqref{lem1_G1-G3}, we have
$$\theta_{a}=E\left[\mathbbm1_{\{A_1=a_1, A_2=a_2\}}\frac{ Y-\nu_{a}^*(\bS )}{\pi_{a} ^* (\bS_{1})\rho_{a}^*(\bS)}+\mathbbm1_{\{A_1=a_1\}}\frac{ \nu_{a}^*(\bS )-\mu_{a}^*(\bS_{1}) }{\pi_{a} ^* (\bS_{1})}+\mu_{a}^*(\bS_{1})\right].$$
\end{proof}

\begin{proof}[Proof of Theorem 1]
Observe that
\begin{align*}
	&E\left[\nu_a^*(\bS)+\mathbbm1_{\{A_2=a_2\}} \frac{Y-\nu_a^*(\bS)}{\rho_a^*(\bS)}\mid\bS_1,A_{1}=a_1\right]-\mu_a(\bS_1)\\
	&\qquad\overset{(i)}{=}E\left[\nu_a^*(\bS)-\nu_a(\bS)+\mathbbm1_{\{A_2=a_2\}} \frac{Y(a)-\nu_a^*(\bS)}{\rho_a^*(\bS)}\mid\bS_1,A_{1}=a_1\right]\\
	&\qquad=E\left[\left(\nu_a^*(\bS)-\nu_a(\bS)\right)\left(1-\frac{\mathbbm1_{\{A_2=a_2\}}}{\rho_a^*(\bS)}\right)+\mathbbm1_{\{A_2=a_2\}} \frac{Y(a)-\nu_a(\bS)}{\rho_a^*(\bS)}\mid\bS_1,A_{1}=a_1\right],
\end{align*}
where (i) holds since $\mu_a(\bS_1)=E[\nu_a(\bS )\mid\bS_1,A_{1}=a_1]$ and $Y=Y(A_{1},A_2)$ under Assumption 1. By the tower rule,
\begin{align*}
	&E\left[\left(\nu_a^*(\bS)-\nu_a(\bS)\right)\left(1-\frac{\mathbbm1_{\{A_2=a_2\}}}{\rho_a^*(\bS)}\right)\mid\bS_1,A_{1}=a_1\right]\\
	&\qquad=E\left[E\left[\left(\nu_a^*(\bS)-\nu_a(\bS)\right)\left(1-\frac{\mathbbm1_{\{A_2=a_2\}}}{\rho_a^*(\bS)}\right)\mid\bS,A_{1}=a_1\right]\mid\bS_1,A_{1}=a_1\right]\\
	&\qquad\overset{(i)}{=}E\left[\left(\nu_a^*(\bS)-\nu_a(\bS)\right)\left(1-\frac{\rho_a(\bS)}{\rho_a^*(\bS)}\right)\mid\bS_1,A_{1}=a_1\right]\overset{(ii)}{=}0,
\end{align*}
where (i) holds since $\rho_a(\bS)=P[A_2=a_2\mid\bS,A_{1}=a_1]$; (ii) holds since either $\nu_a^*(\cdot)=\nu_a(\cdot)$ or $\rho_a^*(\cdot)=\rho_a(\cdot)$ by Assumption. In addition, by the tower rule, we also have
\begin{align*}
	&E\left[\mathbbm1_{\{A_2=a_2\}} \frac{Y(a)-\nu_a(\bS)}{\rho_a^*(\bS)}\mid\bS_1,A_{1}=a_1\right]\\
	&\qquad=E\left[E\left[\mathbbm1_{\{A_2=a_2\}} \frac{Y(a)-\nu_a(\bS)}{\rho_a^*(\bS)}\mid\bS,A_{1}=a_1\right]\mid\bS_1,A_{1}=a_1\right]\\
	&\qquad\overset{(i)}{=}E\left[E\left[\mathbbm1_{\{A_2=a_2\}}\mid\bS,A_{1}=a_1\right]\frac{E\left[Y(a)\mid\bS,A_{1}=a_1\right]-\nu_a(\bS)}{\rho_a^*(\bS)}\mid\bS_1,A_{1}=a_1\right]\overset{(ii)}{=}0,
\end{align*}
where (i) holds since $Y(a)\perp \!\!\! \perp A_2\mid\bS,A_{1}=a_1$ under Assumption 1; (ii) holds since $\nu_a(\bS)=E\left[Y(a)\mid\bS,A_{1}=a_1\right]$. Therefore, for any $\bs_1\in\R^{d_1}$,
\begin{align*}
	\mu_a(\bs_1)=E\left[\nu_a^*(\bS)+\mathbbm1_{\{A_2=a_2\}} \frac{Y-\nu_a^*(\bS)}{\rho_a^*(\bS)}\mid\bS_1=\bs_1,A_1=a_1\right].
\end{align*}
\end{proof}

\section{Convergence rates for nuisance estimators}
\subsection{Auxiliary Lemmas}
The following lemmas will be helpful in our proofs.
\begin{lemma}[Selection of Lemma D.1 of \cite{chakrabortty2019high}] \label{lemma:psi2norm}
Let $X,Y\in \R$ be a random variable. Then $\|cX\|_{\psi_2}=|c|\|X\|_{\psi_2} \ \forall c\in\R $. If $|X|\leq|Y|$ a.s., then $\|X\|_{\psi_2}\leq\|Y\|_{\psi_2}$. 
Moreover, for $X$ and $Y$ sub-Gaussian, $\| XY\|_{\psi_1}\leq \| X\|_{\psi_2} \| Y\|_{\psi_2}$. 
If $X$ is bounded, i.e, $|X|\leq C$ a.s. for some constant $C$, then $\|X\|_{\psi_2}\leq(\log 2)^{-1/2}C$. If $\|X\|_{\psi_2}\leq\sigma$, then $E(|X|^m)\leq2\sigma^m \Gamma(m/2+1)\ \forall \ m\geq1 $, where $\Gamma(a):=\int_0^\infty x^{a-1}\exp(-x)dx$ $\forall a>0$ denotes the Gamma function. Hence, $E(|X|)\leq \sigma\sqrt\pi$ and $E(|X|^m)\leq2\sigma^m(m/2)^{m/2} \ \forall \ m\geq2 $. Let $\{X_i\}_{i=1}^n$ be random variables (possibly dependent) with $\max_{1\leq i \leq n}\|X_i\|_{\psi_2}\leq\sigma$, then $\|\max_{1\leq i \leq n}|X_i|\|_{\psi_2}\leq\sigma(\log n+2)^{1/2}$.
\end{lemma}
\begin{lemma}\label{lem_orlicz}
Let $X\in\R$ be a random variable. If $E(|X|^{2k})\leq 2\sigma^{2k}\Gamma(k+1)$ for any $k\in \mathbb{N}$, then $\|X\|_{\psi_2}\leq 2\sigma$. 
\end{lemma}
The following lemma provides the same type of results as used in the Assumption 3 but now for covariates at different exposure time and different treatment paths.
\begin{lemma}\label{lem_eigen} 
	Let the overlap conditions of Assumption 1 and Assumption 3 hold. Consider the constants $c_0,\kappa_l,\sigma_u$ defined as in Assumptions 1 and 3. Then, the following statements hold:
	
	(a) $0 < c_0\kappa_{l} \leq \lambda_{\min}(E[\widetilde{\bU}\widetilde{\bU}^{\top}]) \leq \lambda_{\max}(E[\widetilde{\bU}\widetilde{\bU}^{\top}]) \leq 2\sigma_{u}^2<\infty$
	and $\widetilde{\bU}$ is sub-Gaussian with 
	$\|\bm{x}^\top\widetilde{\bU} \|_{{\psi}_2} \leq 2\sigma_{u}\|\bm{x}\|_2$ for any $\bm{x}\in \mathbb{R}^{d+1}$; 
	
	(b) $0 < \kappa_{l} \leq \lambda_{\min}(E[\bar{\bU}\bar{\bU}^{\top}]) \leq\lambda_{\max}(E[\bar{\bU}\bar{\bU}^{\top}]) \leq 2\sigma_{u}^2<\infty$
	and $\bar{\bU}$ is sub-Gaussian with 
	$\|\bm{x}^\top\bar{\bU} \|_{{\psi}_2} \leq 2\sigma_{u}\|\bm{x}\|_2$ for any $\bm{x}\in \mathbb{R}^{d+1}$;
	
	c) $0 < \kappa_{l} \leq \lambda_{\min}(E[{\bU}{\bU}^{\top}]) \leq \lambda_{\max}(E[{\bU}{\bU}^{\top}]) \leq 2\sigma_{u}^2<\infty$ and ${\bU}$ is sub-Gaussian with $\|\bm{x}^\top{\bU} \|_{{\psi}_2} \leq \sigma_{u}\|\bm{x}\|_2$ for any $\bm{x}\in \mathbb{R}^{d_1+1}$;
	
	d) $0 < \kappa_{l} \leq \lambda_{\min}(E[\bar{\bV}\bar{\bV}^{\top}]) \leq \lambda_{\max}(E[\bar{\bV}\bar{\bV}^{\top}]) \leq 2\sigma_{u}^2<\infty$ and $\bar{\bV}$ is sub-Gaussian with $\|\bm{x}^\top\bar{\bV} \|_{{\psi}_2} \leq 2\sigma_{u}\|\bm{x}\|_2$ for any $\bm{x}\in \mathbb{R}^{d_1+1}$;
	
	e) $0 < \kappa_{l} \leq \lambda_{\min}(E[{\bV}{\bV}^{\top}]) \leq \lambda_{\max}(E[{\bV}{\bV}^{\top}]) \leq 2\sigma_{u}^2<\infty$ and ${\bV}$ is sub-Gaussian with $\|\bm{x}^\top{\bV} \|_{{\psi}_2} \leq 2\sigma_{u}\|\bm{x}\|_2$ for any $\bm{x}\in \mathbb{R}^{d_1+1}$.
\end{lemma}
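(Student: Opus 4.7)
My plan is to prove each of the four parts (a)–(d) by reducing them to Assumption~\ref{assumption_U} on $\bU$ and the overlap part of Assumption~\ref{assumption_1}, using two recurring ideas: (i) for eigenvalue lower bounds, pull the treatment-path indicators out via the tower property and bound the conditional probability of the path below by $c_0$ or $c_0^2$; and (ii) for sub-Gaussian norms, use the pointwise inequality $|\bx^T(\bU\mathbbm{1}_E)|\leq|\bx^T\bU|$ and the monotonicity of the $\psi_2$-Orlicz norm under a.s.\ domination.

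For part (a) with $\widetilde\bU=\bU\,\mathbbm 1_{\{(A_1,A_2)=a\}}$, fix $\bx\in\R^{d+1}$. Conditioning on $(\bS_1,\bS_2,A_1)$,
\[
\bx^T E[\widetilde\bU\widetilde\bU^T]\bx
= E\left[(\bx^T\bU)^2\mathbbm 1_{\{A_1=a_1\}}\,P(A_2=a_2\mid\bS_1,\bS_2,A_1=a_1)\right]\ge c_0\,\bx^T E[\bU\bU^T\mathbbm 1_{\{A_1=a_1\}}]\bx \ge c_0\kappa_l\|\bx\|_2^2,
\]
where the first inequality uses the overlap on $\rho_a$ and the second uses Assumption~\ref{assumption_U}; the lower eigenvalue bound follows. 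The upper bound is immediate from $E[(\bx^T\widetilde\bU)^2]\le E[(\bx^T\bU)^2]\le 2\sigma_u^2\|\bx\|_2^2$, where the last inequality follows from Assumption~\ref{assumption_U} (the sub-Gaussian norm controls the second moment, using, e.g., Lemma~\ref{lem_orlicz}). For the sub-Gaussian norm, since $|\bx^T\widetilde\bU|\le|\bx^T\bU|$ almost surely, the definition of $\psi_2$-norm gives $\|\bx^T\widetilde\bU\|_{\psi_2}\le\|\bx^T\bU\|_{\psi_2}\le\sigma_u\|\bx\|_2\le 2\sigma_u\|\bx\|_2$.

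For part (b) with $\bar\bU=\bU\,\mathbbm 1_{\{A_1=a_1\}}$, the lower eigenvalue bound is \emph{directly} Assumption~\ref{assumption_U}: $\lambda_{\min}(E[\bar\bU\bar\bU^T])\ge\kappa_l$. The upper bound and sub-Gaussian norm are derived exactly as in part (a), using $|\bx^T\bar\bU|\le|\bx^T\bU|$ a.s. For part (c) with $\bar\bV=\bV\,\mathbbm 1_{\{A_1=a_1\}}$, I exploit the identity $\bV=\mathbf Q\bU$ from \eqref{72}. For any $\bx\in\R^{d_1+1}$, set $\widetilde\bx=\mathbf Q^T\bx=(\bx^T,\mathbf 0^T)^T\in\R^{d+1}$, so $\|\widetilde\bx\|_2=\|\bx\|_2$. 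Then $\bx^T\bar\bV=\widetilde\bx^T\bar\bU$, and thus
\[
\bx^T E[\bar\bV\bar\bV^T]\bx=\widetilde\bx^T E[\bar\bU\bar\bU^T]\widetilde\bx\in[\kappa_l\|\bx\|_2^2,\,2\sigma_u^2\|\bx\|_2^2]
\]
by part (b), and $\|\bx^T\bar\bV\|_{\psi_2}=\|\widetilde\bx^T\bar\bU\|_{\psi_2}\le 2\sigma_u\|\widetilde\bx\|_2=2\sigma_u\|\bx\|_2$. Finally, for part (d), the same change of variables $\widetilde\bx=\mathbf Q^T\bx$ gives $\bx^T\bV=\widetilde\bx^T\bU$, so the upper eigenvalue bound and the sub-Gaussian norm follow directly from Assumption~\ref{assumption_U}; for the lower eigenvalue bound, note that $E[\bV\bV^T]\succeq E[\bar\bV\bar\bV^T]$ in the PSD order (since the omitted terms are PSD), and invoke part (c).

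The argument is essentially mechanical; the only thing to be careful about is that the overlap of $\rho_a$ at treatment time two is invoked conditionally on $A_1=a_1$ (so one factor of $c_0$ is lost in part (a) but not in (b)–(d)), and that the projection $\mathbf Q$ preserves norms on the relevant subspace so that parts (c) and (d) inherit the constants from (a)–(b) without change. No delicate concentration or high-dimensional tool is needed here, since the lemma concerns only population moments.
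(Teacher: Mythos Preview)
Your proof is correct and follows the same overall structure as the paper's: the tower property plus the overlap bound on $\rho_a$ for the lower eigenvalue in (a), the pointwise domination $E[(\bx^T\widetilde\bU)^2]\le E[(\bx^T\bU)^2]$ for the upper bounds, and the projection $\bV=\mathbf Q\bU$ with $\|\mathbf Q^T\bx\|_2=\|\bx\|_2$ to reduce (c) and (d) to (b) and Assumption~\ref{assumption_U}. The one place you differ is the sub-Gaussian bound: you use the direct monotonicity $|\bx^T\widetilde\bU|\le|\bx^T\bU|$ a.s.\ to conclude $\|\bx^T\widetilde\bU\|_{\psi_2}\le\|\bx^T\bU\|_{\psi_2}\le\sigma_u\|\bx\|_2$, whereas the paper bounds all even moments $E|\bx^T\widetilde\bU|^{2k}\le E|\bx^T\bU|^{2k}\le 2(\sigma_u\|\bx\|_2)^{2k}\Gamma(k+1)$ via Lemma~D.1(iv) of \cite{chakrabortty2019high} and then invokes Lemma~\ref{lem_orlicz}, which is exactly what produces the extra factor $2$ in the stated bound $2\sigma_u\|\bx\|_2$. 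Your route is more elementary and in fact delivers the sharper constant $\sigma_u$; the paper's moment detour is not needed for this lemma.
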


\subsubsection{The second-time conditional mean model }
The following lemma characterizes the estimation error of the second-time conditional mean model, $\nu_a^*(\cdot)$. The corresponding conditional mean estimator is defined as $\widehat{\nu}_{a}(\bS)=\bU^\top \balphahat_a $.
\begin{lemma}\label{cor_mu2}
Let Assumptions 1-3 hold. For any $t>0$, let $ {\lambda}_{\bm{\alpha}}:=32\sigma\sigma_{u}\sigma_{\zeta}(t+\sqrt{\log(d+1)/ |\mathcal J|})$. Let
$|\mathcal J|\geq\max\{\log(d+1),100 \kappa_2^2 s_{\bm{\alpha}_{a}}\log(d+1)\}.$
Then $\balphahat_a$, (2.6), satisfies
\begin{align}
	&\|\widehat{\bm{\alpha}}_a-\bm{\alpha}_a^*\|_2~\leq~8\kappa_1^{-1} {\lambda}_{\bm{\alpha}}\sqrt{s_{\bm{\alpha}_a}},\quad
	\frac{1}{|\mathcal J|} \sum_{i\in\mathcal J}[\widetilde{\bU}_i^{\top}(\widehat{\bm{\alpha}}_a-\bm{\alpha}_a^*)]^2~\leq~32\kappa_1^{-1} {\lambda}_{\bm{\alpha}}^2s_{\bm{\alpha}_a},\label{eq:insample'}
\end{align}
with probability at least $1-2\exp\left(-\frac{4|\mathcal J|t^2}{1+2t+\sqrt{2t}}\right)-c_1\exp(-c_2 |\mathcal J|)$ and some constants $c_1,c_2,\kappa_1,\kappa_2>0$. In addition, assume $|\mathcal J|\asymp N$ and $N \gg s_{\bm{\alpha}_{a}}\log(d)$.
Choose some $ {\lambda}_{\bm{\alpha}} \asymp \sigma\sqrt{\log(d)/N}$. Then for any constant $r\geq1$, as $N,d\to \infty$, we have
\begin{align}
\|\widehat{\bm{\alpha}}_a-\bm{\alpha}_a^*\|_2&=O_p\left(\sigma\sqrt{s_{\bm{\alpha}_{a}}\log(d)/N}\right),\label{Op:alphahat}\\
\{E[\widehat{\nu}_{a}(\bS)-\nu_{a}^{*}(\bS)]^r\}^{1/r}&=O_p\left(\sigma\sqrt{s_{\bm{\alpha}_{a}}\log(d)/N}\right).\label{Op:nuhat}
\end{align}
\end{lemma} 
In the above, the expectation of the left-hand side of \eqref{Op:nuhat} is taken respect to the distribution of a new observation's covariate vector $\bS$.

The nested-regression-based estimator $\bbetahat_{a,{\mbox{\tiny NR}}}$ proposed in Section 2.2 is constructed based on $\balphahat_a$ and hence we need to first control the estimation error of $\balphahat_a$. Note that, $\balphahat_a$ and $\bbetahat_{a,{\mbox{\tiny NR}}}$ are actually obtained based on overlapping but different sample groups. For $\balphahat_a$, we only utilize the samples satisfying $A_{1i}=a_1$ and $A_{2i}=a_2$; as for $\bbetahat_{a,{\mbox{\tiny NR}}}$ , we are using the samples such that $A_{1i}=a_1$ and there is no constraint on $A_{2i}$. As a result, the in-sample error \eqref{eq:insample'} is not enough for our analysis. Instead, we require an upper bound for a ``partially in-sample'' error. We show the prerequisite results in the following lemma.
\begin{lemma}\label{lemma:insample}
	Let Assumptions of Lemma \ref{cor_mu2} hold. In addition, let 
	$|\mathcal J|\geq\max\{\log(d+1),(c_3+100 \kappa_2^2) s_{\bm{\alpha}_{a}}\log(d+1)\},$
	with constant $c_3>0$. Then 
	$$\frac{1}{|\mathcal J|} \sum_{i\in\mathcal J}[\bar\bU_i^\top (\balphahat_a-\balpha_a^*)]^2\leq288\sigma_u\kappa_1^{-2}\lambda_{\balpha}^2s_{\balpha_a},$$
	with probability at least $1-2\exp\left(-\frac{4|\mathcal J|t^2}{1+2t+\sqrt{2t}}\right)-c_1\exp(-c_2|\mathcal J|)-2\exp(-c_4|\mathcal J|)$ and constants $c_1,c_2,c_4>0$.
\end{lemma}

\subsubsection{The propensity score models}
The following lemma provides asymptotic upper bounds on the estimation errors of the propensity score models, $\pi_{a}^*(\cdot)$ and $\rho_a^*(\cdot)$. The corresponding propensity score estimators are defined as $\pihat_{a}(\bS_1)=g(\bV^\top\bgammahat_a)$ and $\rhohat_{a}(\bS)=g(\bU^\top\bdeltahat_a)$, respectively.
\begin{lemma}\label{lem_pi1}
	Let the overlap conditions of Assumption 1 and Assumptions 3-4 hold. Let the sample size be such that $|\mathcal J|\asymp N$ and $N\gg \max \{ s_{\bm{\gamma}_{a}}\log(d_1), s_{\bm{\delta}_{a}}\log(d)\}$. Then, as $N,d\to \infty$, (a) the logistic Lasso $(2.3)$ with $\lambda_{\bm{\gamma}}\asymp \sqrt{\log(d_1)/N}$
	satisfies
	\begin{align}\label{3.36}
		\|\widehat{\bm{\gamma}}_{a}-\bm{\gamma}_{a}^{*}\|_2
		&= O_p\left(\sqrt{s_{\bm{\gamma}_{a}}\log(d_1)/N}\right),\\
		E[\widehat{\pi}_{a}(\bS_1)-\pi_{a}^{*}(\bS_1)]^2
		&=O_p\left(s_{\bm{\gamma}_{a}}\log(d_1)/N\right),\label{3.31}
	\end{align}
	whereas (b) the logistic Lasso $(2.4)$ with $\lambda_{\bm{\delta}}\asymp \sqrt{\log(d)/N}$
	satisfies 
	\begin{align}\label{3.38}
		\|\widehat{\bm{\delta}}_{a}-\bm{\delta}_{a}^{*}\|_2
		&= O_p\left(\sqrt{s_{\bm{\delta}_{a}}\log(d)/N}\right),\\
		E[\widehat{\rho}_{a}(\bS)-\rho_{a}^{*}(\bS)]^2
		&=O_p\left(s_{\bm{\delta}_{a}}\log(d)/N\right).\label{3.32}
	\end{align}
	In the left-hand side of \eqref{3.31} and \eqref{3.32}, the expectations are only taken w.r.t. the distribution of the new observations $\bS_1$ and $\bS$, respectively. Note that Assumption 4 holds under Assumption 1 when $\pi_{a}^{*}(\bS_1)$ and $\rho_{a}^{*}(\bS)$ are correctly specified.
\end{lemma}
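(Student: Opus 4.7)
The plan is to treat parts (a) and (b) in parallel, since the estimator $\widehat{\bgamma}$ from \eqref{207} and $\widehat{\bdelta}_a$ from \eqref{226} are both $\ell_1$-penalized logistic regressions; only the covariates ($\bV$ vs.\ $\bar{\bU}$) and sub-sample used differ. I will write up part (a) in detail and then observe that part (b) follows verbatim after replacing $\bV$ by $\bar{\bU}$, $d_1$ by $d$, and invoking the corresponding conclusions of Lemma \ref{lem_eigen}, together with the fact that restricting to the sub-sample $\{A_{1i}=a_1\}$ retains an i.i.d.\ structure with the same sub-Gaussian tails.

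For part (a), I would start from the standard Lasso basic inequality for the convex logistic loss $L_M(\bgamma):=M^{-1}\sum_{i\in I_{-k}}[-A_{1i}\bV_i^T\bgamma+\log(1+\exp(\bV_i^T\bgamma))]$ and show that, on the good event $\{\|\nabla L_M(\bgamma^*)\|_\infty \leq \lambda_{\bgamma}/2\}$, the error $\bDelta:=\widehat{\bgamma}-\bgamma^*$ lies in the cone $\|\bDelta_{S_{\bgamma}^c}\|_1 \leq 3\|\bDelta_{S_{\bgamma}}\|_1$. To establish this good event, observe that $\nabla L_M(\bgamma^*) = -M^{-1}\sum_i(A_{1i}-g(\bV_i^T\bgamma^*))\bV_i$, whose coordinates are i.i.d.\ averages of mean-zero random variables with product of a bounded factor $A_{1i}-g(\bV_i^T\bgamma^*)\in[-1,1]$ and a sub-Gaussian coordinate $\bV_i[j]$ of norm $\leq 2\sigma_u$ by Lemma \ref{lem_eigen}(d). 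A sub-exponential Bernstein bound combined with a union bound over $j\leq d_1+1$ gives $\|\nabla L_M(\bgamma^*)\|_\infty = O_p(\sqrt{\log(d_1)/N})$, so that the choice $\lambda_{\bgamma} \asymp \sqrt{\log(d_1)/N}$ is large enough with probability tending to one.

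The main obstacle is to obtain restricted strong convexity (RSC) for the logistic loss with sub-Gaussian (hence unbounded) $\bV$. The Hessian at $\bgamma$ equals $M^{-1}\sum_i g'(\bV_i^T\bgamma)\bV_i\bV_i^T$ with $g'(u)=e^u/(1+e^u)^2$, which is strictly positive but not bounded away from zero globally. I would establish RSC of the form
\[
L_M(\bgamma^*+\bDelta)-L_M(\bgamma^*)-\langle \nabla L_M(\bgamma^*),\bDelta\rangle \geq \kappa\|\bDelta\|_2^2 - \tau_M \|\bDelta\|_1^2
\]
on an $\ell_2$-ball, using a truncation argument: restrict to the event $\{|\bV^T\bgamma^*|\leq T\}\cap\{|\bV^T\bDelta|\leq T'\}$ for moderate $T,T'$ so that $g'(\bV^T\bgamma^*+t\bV^T\bDelta)$ stays lower-bounded, then control the complement via sub-Gaussian tails. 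An alternative (cleaner) route is to invoke a standard one-step localization inequality for generalized linear models from \cite{negahban2012unified} / Wainwright's textbook, which yields exactly such an RSC bound with $\kappa$ proportional to $\lambda_{\min}(E[\bV\bV^T])g'(\|\bgamma^*\|_2 M)$ for some modulus $M$, and $\tau_M = c\log(d_1)/M$. Combined with the cone condition on $\bDelta$, this gives $\|\bDelta\|_2^2 \leq C\lambda_{\bgamma}\|\bDelta\|_1 \leq C'\lambda_{\bgamma}\sqrt{s_{\bgamma}}\|\bDelta\|_2$, hence \eqref{3.36}.

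For the prediction-error bound \eqref{3.31}, since $g$ is $1/4$-Lipschitz, for any realization of $\widehat{\bgamma}$,
\[
E[(\widehat{\pi}(\bS_1)-\pi^*(\bS_1))^2\mid\widehat{\bgamma}] \leq \tfrac{1}{16}E[(\bV^T(\widehat{\bgamma}-\bgamma^*))^2\mid\widehat{\bgamma}] = \tfrac{1}{16}(\widehat{\bgamma}-\bgamma^*)^TE[\bV\bV^T](\widehat{\bgamma}-\bgamma^*),
\]
and the right-hand side is $O_p(\|\widehat{\bgamma}-\bgamma^*\|_2^2)$ by Lemma \ref{lem_eigen}(d), delivering \eqref{3.31}. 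Part (b) proceeds identically for $\widehat{\bdelta}_a$: the loss is $\bar L_M(\bdelta) = M^{-1}\sum_{i\in I_{-k}}\mathbbm1_{\{A_{1i}=a_1\}}[-A_{2i}\bU_i^T\bdelta+\log(1+\exp(\bU_i^T\bdelta))]$, the gradient concentration uses $\|\bm x^T\bar{\bU}\|_{\psi_2}\leq 2\sigma_u\|\bm x\|_2$ from Lemma \ref{lem_eigen}(b), the cone-condition and RSC arguments are identical, yielding \eqref{3.38}, and \eqref{3.32} follows again from the $1/4$-Lipschitzness of $g$.
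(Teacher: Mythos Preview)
Your proposal is correct and follows essentially the same route as the paper: gradient concentration via sub-Gaussian/sub-exponential tails and a union bound, an RSC inequality for the logistic loss (the paper cites Lemma 4.5 of \cite{zhang2021double} in place of your truncation/Wainwright reference), and the cone-plus-RSC reduction to get the $\ell_2$-rate; part (b) is indeed verbatim with $\bar{\bU}$.

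Two small remarks. First, when you assert the score coordinates are mean-zero, say why: it is the first-order optimality condition for $\bgamma^*$ (which holds even when the logistic model is misspecified, since $\bgamma^*$ is defined as the population minimizer), and the paper makes this point explicitly because the lemma is used under misspecification. Second, your Lipschitz bound for \eqref{3.31}, $|g(u)-g(v)|\leq\tfrac14|u-v|$, is cleaner than the paper's argument, which uses a second-order Taylor expansion of $t\mapsto(f(u^*+t\Delta)-f(u^*))^2$ together with $|f'|,|f''|\leq1$ to arrive at the same conclusion (with constant $2$ rather than $1/16$). Both routes land on $E[(\widehat\pi-\pi^*)^2]\lesssim\|\widehat{\bgamma}-\bgamma^*\|_2^2\cdot\lambda_{\max}(E[\bV\bV^T])$.
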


\begin{lemma}\label{lem_conclusion}
	Let Assumptions of Lemma \ref{lem_pi1} hold.
	Define the event
	$\mathcal{A}:=\{\|\widehat{\bm{\gamma}}_{a}-\bm{\gamma}_{a}^*\|_2\leq1,\|\widehat{\bdelta}_a-\bdelta_a^*\|_2\leq1\}.$
	Then, as $N,d \to \infty$, 
	$
	P(\mathcal{A})=1-o(1).
	$
	Moreover, on the event $\mathcal{A}$, as $N,d \to \infty$, $\{E|\widehat{\pi}_{a}(\bS_{1})|^{-r}\}^{1/r}$ and $\{E|\widehat{\rho}_{a}(\bS)|^{-r}\}^{1/r}$ are both bounded uniformly by some constants independent of $N$ and for $r >2$,
	\begin{align} 
		\left\{E\left| {\widehat{\pi}_{a}^{-1}(\bS_{1})}- {{\pi_{a}^{*}}^{-1}(\bS_{1})}\right|^{r}\right\}^{{1}/{r}}&=O_p\left(\sqrt{s_{\bm{\gamma}_{a}}\log(d_1)/N}\right), \nonumber
		\\
		\left\{E\left| {\widehat{\rho}^{-1}_{a}(\bS)}- {{\rho_{a}^{*}}^{-1}(\bS)}\right|^{r}\right\}^{{1}/{r}}&=O_p\left(\sqrt{s_{\bm{\delta}_{a}}\log(d)/N}\right), \nonumber
	\end{align}
	$$
	\left\{E\left| {\widehat{\pi}_{a}^{-1}(\bS_{1})\widehat{\rho}_{a}^{-1}(\bS)}- {{\pi_{a}^{*}}^{-1}(\bS_{1}){\rho_{a}^{*}}^{-1}(\bS)}\right|^{r}\right\}^{{1}/{r}}=O_p\left(\sqrt{(s_{\bm{\gamma}_{a}}\log(d_1)+s_{\bm{\delta}_{a}}\log(d))/N}\right).\label{rate:pihatrhohat}
	$$
\end{lemma} 
In the left-hand side of the equations above, the expectations are only taken w.r.t. the distribution of the new observations $\bS_1$ or $\bS$.

\subsection{Convergence rate for the general imputed Lasso estimator}
\begin{proof}[Proof of Theorem 8]
By the definition of $\widehat{\bm{\beta}}$, we have
$$\frac{1}{M} \sum_{i=1}^{M}[\Yhat_i-\bX_i^\top\bbetahat]^2+\lambda_M\|\bbetahat\|_1\leq\frac{1}{M} \sum_{i=1}^{M}[\Yhat_i-\bX_i^\top\bbeta^*]^2+\lambda_M\|\bbeta^*\|_1,$$
or, expanding and rearranging, 
\begin{align}
	&\frac{1}{M} \sum_{i=1}^{M}[\bX_i^{\top}(\bbetahat-\bbeta^*)]^2+\lambda_M\|\bbetahat\|_1\nonumber\\
	&\qquad\leq \frac{2}{M} \sum_{i=1}^{M}[\Yhat_i-\bX_i^\top\bbeta^*]\bX_i^{\top}(\bbetahat-\bbeta^*)+\lambda_M\|\bbeta^*\|_1\nonumber\\
	&\qquad=\frac{2}{M} \sum_{i=1}^{M}\varepsilon_{i}\bX_i^{\top}(\bbetahat-\bbeta^*)+\frac{2}{M} \sum_{i=1}^{M}[\Yhat_i-Y_i^*]\bX_i^{\top}(\bbetahat-\bbeta^*)+\lambda_M\|\bbeta^*\|_1.\label{144'}
\end{align}
For any $t>0$, let $\lambda_M:=16\sigma\sigma_\bX(\sqrt\frac{\log(d)}{M}+t)$. Define the event 
\begin{align}
	\mathcal E_2:=\left\{\max_{1\leq j \leq d} \left|\frac{1}{M} \sum_{i=1}^{M}\bX_{i,j}\varepsilon_{i}\right|\leq\frac{\lambda_M}{4}\right\},\nonumber
\end{align}
where $\bX_{i,j}$ represents the $j$-th component of $\bX_{i}$. Note that
\begin{align}
	&P\left(\max_{1\leq j \leq d}\left|\frac{1}{M} \sum_{i=1}^{M}\bX_{i,j}\varepsilon_{i}\right|\geq\frac{\lambda_M}{4}\right)=P\left(\bigcup_{j=1}^{d} \left\{\left|\frac{1}{M} \sum_{i=1}^{M}\bX_{i,j}\varepsilon_{i}\right|\geq\frac{\lambda_M}{4}\right\}\right)\nonumber\\
	&\qquad\leq \sum_{j=1}^{d}P\left(\left|\frac{1}{M} \sum_{i=1}^{M}\bX_{i,j}\varepsilon_{i}\right|\geq\frac{\lambda_M}{4}\right).\label{151'}
\end{align}
Let $\boldsymbol e_j\in \mathbb{R}^d$ be the vector whose $j$-th element is $1$ and other elements are $0$s, for each $1\leq j\leq d$. Since $\|\boldsymbol e_j^\top\bX\|_{\psi_2}\leq\sigma_\bX$ and $\|\varepsilon\|_{\psi_2}\leq\sigma$, by Lemma \ref{lemma:psi2norm},
$$\|\boldsymbol e_j^\top\bX\varepsilon\|_{\psi_1}\leq\|\boldsymbol e_j^\top\bX\|_{\psi_2}\cdot\|\varepsilon\|_{\psi_2}\leq \sigma\sigma_\bX.$$
Note that, here we do not make any assumption on the sample gram matrix $\hat{\bm{\Sigma}}:=M^{-1}\sum_{i=1}^M\bX_i\bX_i^\top$, e.g., $\sup_{1\leq j\leq d}\hat{\bm{\Sigma}}_{j,j}\leq1$ as required in \cite{wainwright2019high,negahban2012unified}. Instead, we consider $\boldsymbol e_j^\top\bX\varepsilon$ as a sub-exponential random variable, and the Bernstein's inequality is applied in the following to control \eqref{151'}. 
Recall the definition of $\bbeta^*$, we have $E[\bX\varepsilon]=0$. By Bernstein's inequality, for each $1\leq j\leq d$,
\begin{align}
	P\left(\left|\frac{1}{M} \sum_{i=1}^{M}\bX_{i,j}\varepsilon_{i}\right|\geq2\sigma\sigma_\bX\epsilon+\sigma\sigma_\bX\epsilon^2\right) \leq 2 \exp \left(-M \epsilon^2\right),\quad\text{for any}\;\;\epsilon>0.\label{bound:Xepsj}
\end{align}
Set $\epsilon=\sqrt\frac{\log(d)}{M}+\frac{\sqrt{1+8t}-1}{2}$ for any $t>0$. When $M>\log(d)$, we have 
\begin{align*}
	2\epsilon+\epsilon^2&\leq2\sqrt\frac{\log(d)}{M}+\sqrt{1+8t}-1+\left(\sqrt\frac{\log(d)}{M}+\frac{\sqrt{1+8t}-1}{2}\right)^2\\
	&\leq2\sqrt\frac{\log(d)}{M}+\sqrt{1+8t}-1+\frac{2\log(d)}{M}+2\left(\frac{\sqrt{1+8t}-1}{2}\right)^2\\
	&=2\sqrt\frac{\log(d)}{M}+\sqrt{1+8t}-1+2\sqrt\frac{\log(d)}{M}\cdot\sqrt\frac{\log(d)}{M}+1+4t-\sqrt{1+8t}\\
	&\leq4\sqrt\frac{\log(d)}{M}+4t,
\end{align*}
and hence 
\begin{align}
	2\sigma\sigma_\bX\epsilon+\sigma\sigma_\bX\epsilon^2\leq4\sigma\sigma_\bX\left(\sqrt\frac{\log(d)}{M}+t\right)=\frac{\lambda_M}{4}.\label{bound:eps_1+2}
\end{align}
Additionally, we also have 
\begin{align*}
	\epsilon^2&=\left(\sqrt\frac{\log(d)}{M}+\frac{\sqrt{1+8t}-1}{2}\right)^2\geq\frac{\log(d)}{M}+\frac{1+4t-\sqrt{1+8t}}{2}\\
	&=\frac{\log(d)}{M}+\frac{8t^2}{1+4t+\sqrt{1+8t}}\geq\frac{\log(d)}{M}+\frac{4t^2}{1+2t+\sqrt{2t}}.
\end{align*}
Together with \eqref{bound:Xepsj} and \eqref{bound:eps_1+2}, we have, for each $1\leq j\leq d$,
\begin{align*}
	&P\left(\left|\frac{1}{M} \sum_{i=1}^{M}\bX_{i,j}\varepsilon_{i}\right|\geq\frac{\lambda_M}{4}\right)\leq P\left(\left|\frac{1}{M} \sum_{i=1}^{M}\bX_{i,j}\varepsilon_{i}\right|\geq2\sigma\sigma_\bX\epsilon+\sigma\sigma_\bX\epsilon^2\right)\\
	&\qquad\leq 2 \exp \left(-M \epsilon^2\right)\leq\frac{2}{d}\exp\left(-\frac{4Mt^2}{1+2t+\sqrt{2t}}\right).
\end{align*}
Together with \eqref{151'},
\begin{align}\label{bound:E2_1}
	P(\mathcal E_2)=P\left(\max_{1\leq j \leq d}\left|\frac{1}{M} \sum_{i=1}^{M}\bX_{i,j}\varepsilon_{i}\right|\leq \frac{\lambda_M}{4}\right)\geq1-2\exp\left(-\frac{4Mt^2}{1+2t+\sqrt{2t}}\right).
\end{align}
On the event $\mathcal E_2$, we have 
\begin{align}
	\left|\frac{2}{M}\sum_{i=1}^M\varepsilon_i\bX_i^\top(\bbetahat-\bbeta^*)\right|\leq2\|\bbetahat-\bbeta^*\|_1\max_{1\leq j \leq d}\left|\frac{1}{M} \sum_{i=1}^{M}\bX_{i,j}\varepsilon_{i}\right|\leq\lambda_M\|\bbetahat-\bbeta^*\|_1/2.\label{bound:epsXDbeta}
\end{align}
As for the second term of \eqref{144'}, by the fact that $2ab\leq a^2+b^2$ for any $a,b\in\R$, and we set $a=\sqrt2[\Yhat_i-Y_i^*]$, $b=\bX_i^{\top}(\bbetahat-\bbeta^*)/\sqrt2$, we have 
\begin{align}
	&\left|\frac{2}{M} \sum_{i=1}^{M}[\Yhat_i-Y_i^*]\bX_i^{\top}(\bbetahat-\bbeta^*)\right|\leq\frac{2}{M} \sum_{i=1}^{M}[\Yhat_i-Y_i^*]^2+\frac{1}{2M} \sum_{i=1}^{M}\left[\bX_i^{\top}(\bbetahat-\bbeta^*)\right]^2\nonumber\\
	&\qquad\leq2\delta_M^2+\frac{1}{2M} \sum_{i=1}^{M}\left[\bX_i^{\top}(\bbetahat-\bbeta^*)\right]^2,\label{bound:DYXDbeta}
\end{align}
on the event $\mathcal E_1=\{M^{-1}\sum_{i=1}^{M}[\Yhat_i-Y_i^*]^2<\delta_M^2\}$. Multiplying the left-hand side and right-hand side of \eqref{144'} by 2, we have
\begin{align*}
	&\frac{2}{M} \sum_{i=1}^{M}[\bX_i^{\top}(\bbetahat-\bbeta^*)]^2+2\lambda_M\|\bbetahat\|_1\\
	&\qquad\leq\frac{4}{M} \sum_{i=1}^{M}\varepsilon_{i}\bX_i^{\top}(\bbetahat-\bbeta^*)+\frac{4}{M} \sum_{i=1}^{M}[\Yhat_i-Y_i^*]\bX_i^{\top}(\bbetahat-\bbeta^*)+2\lambda_M\|\bbeta^*\|_1.
\end{align*}
Together with \eqref{bound:epsXDbeta} and \eqref{bound:DYXDbeta}, on the event $\mathcal E_1\cap\mathcal E_2$, we have
\begin{align}
	&\frac{2}{M} \sum_{i=1}^{M}[\bX_i^{\top}(\bbetahat-\bbeta^*)]^2+2\lambda_M\|\bbetahat\|_1\nonumber\\
	&\qquad\leq\lambda_M\|\bbetahat-\bbeta^*\|_1+\frac{1}{M} \sum_{i=1}^{M}[\bX_i^{\top}(\bbetahat-\bbeta^*)]^2+4\delta_M^2+2\lambda_M\|\bbeta^*\|_1.\label{bound:to-be-used}
\end{align}
Hence, 
\begin{align}
	&\frac{1}{M} \sum_{i=1}^{M}[\bX_i^{\top}(\bbetahat-\bbeta^*)]^2+2\lambda_M\|\bbetahat\|_1\leq\lambda_M\|\bbetahat-\bbeta^*\|_1+2\lambda_M\|\bbeta^*\|_1+4\delta_M^2\nonumber\\
	&\qquad=\lambda_M\|\bbetahat_S-\bbeta_S^*\|_1+\lambda_M\|\bbetahat_{S^c}\|_1+2\lambda_M\|\bbeta_S^*\|_1+4\delta_M^2,\label{163'}
\end{align}
where $S:=\{j\leq d:\bbeta_j^*\neq0\}$ and note that $s=|S|$, $\|\bbetahat-\bbeta^*\|_1=\|\bbetahat_S-\bbeta_S^*\|_1+\|\bbetahat_{S^c}-\bbeta_{S^c}^*\|_1=\|\bbetahat_S-\bbeta_S^*\|_1+\|\bbetahat_{S^c}\|_1$, and $\|\bbeta^*\|_1=\|\bbeta_S^*\|_1$. By the triangle inequality,
\begin{align}\label{164'}
	\|\bbetahat\|_1=\|\bbetahat_S\|_1+\|\bbetahat_{S^c}\|_1\geq\|\bbeta_S^*\|_1- \|\bbetahat_S-\bbeta_S^*\|_1+\|\bbetahat_{S^c}\|_1.
\end{align}
By \eqref{163'} and \eqref{164'}, on the event $\mathcal E_1\cap\mathcal E_2$, we get that 
\begin{align}\label{167'}
	\frac{1}{M} \sum_{i=1}^{M}[\bX_i^{\top}(\bbetahat-\bbeta^*)]^2+\lambda_M\|\bbetahat_{S^c}\|_1
	&\leq 3\lambda_M\|\bbetahat_S-\bbeta_S^*\|_1+4\delta_M^2.
\end{align}
By Lemma 4.5 of \cite{zhang2021double}, 
there exist constants $\kappa_1,\kappa_2>0$, such that
\begin{align}\label{73'}
	\frac{1}{M} \sum_{i=1}^{M}(\bX_i^{\top}\bDelta)^2\geq\kappa_1\|\bDelta\|_2\left\{\|\bDelta\|_2-\kappa_2\sqrt{\frac{\log(d)}{M}}\|\bDelta\|_1 \right\} \quad \text{for all}\;\;\|\bDelta\|_2\leq 1,
\end{align}
with probability at least $1-c_1\exp(-c_2M)$ and some constants $c_1,c_2>0$. Note that Lemma 4.5 of \cite{zhang2021double} discusses logistic loss but applies more broadly and does include the least squares loss as well. 

Let $\bdelta=\bbetahat-\bbeta^*$ and define
\begin{equation}\label{ineq:delta}
	\mathcal E_3:=\left\{\frac{1}{M} \sum_{i=1}^{M}(\bX_i^{\top}\bdelta)^2\geq\kappa_1\|\bdelta\|_2^2-\kappa_1\kappa_2\sqrt{\frac{\log(d)}{M}}\|\bdelta\|_1\|\bdelta\|_2\right\}.
\end{equation}
Let $\bDelta=\bdelta/\|\bdelta\|_2$. Then, $\|\bDelta\|_2=1$ and hence by \eqref{73'}, 
$$P(\mathcal E_3)\geq1-c_1\exp(-c_2M).$$
We now condition on the event $\mathcal E_1\cap\mathcal E_2\cap\mathcal E_3$ and introduce two cases need to be separately analyzed.

\textbf{Case 1.} Case of $\|\bdelta_S\|_1<4\lambda_M^{-1}\delta_M^2$. Then, by \eqref{167'},
$$\|\bdelta_{S^c}\|_1\leq3\|\bdelta_S\|_1+4\lambda_M^{-1}\delta_M^2\leq16\lambda_M^{-1}\delta_M^2.$$
Hence,
$$\|\bdelta\|_1=\|\bdelta_S\|_1+\|\bdelta_{S^c}\|_1\leq20\lambda_M^{-1}\delta_M^2,$$
and
$$\frac{1}{M} \sum_{i=1}^{M}(\bX_i^{\top}\bdelta)^2\leq3\lambda_M\|\bdelta_S\|_1+4\delta_M^2\leq16\delta_M^2.$$
In addition, on the event $\mathcal E_3$,
$$\kappa_1\|\bdelta\|_2^2-\kappa_1\kappa_2\sqrt{\frac{\log(d)}{M}}\|\bdelta\|_1\|\bdelta\|_2\leq\frac{1}{M} \sum_{i=1}^{M}(\bX_i^{\top}\bdelta)^2\leq16\delta_M^2.$$
It follows that,
\begin{align*}
	\|\bdelta\|_2&\leq\frac{\kappa_1\kappa_2\sqrt{\frac{\log(d)}{M}}\|\bdelta\|_1+\sqrt{\kappa_1^2\kappa_2^2\frac{\log(d)}{M}\|\bdelta\|_1^2+64\kappa_1\delta_M^2}}{2\kappa_1}\\
	&\leq\kappa_2\sqrt{\frac{\log(d)}{M}}\|\bdelta\|_1+4\kappa_1^{-1/2}\delta_M\leq20\kappa_2\sqrt{\frac{\log(d)}{M}}\lambda_M^{-1}\delta_M^2+4\kappa_1^{-1/2}\delta_M\\
	&\leq\frac{5\kappa_2\delta_M^2}{4\sigma\sigma_{\bX}}+4\kappa_1^{-1/2}\delta_M,
\end{align*}
since $\lambda_M=16\sigma\sigma_\bX(\sqrt\frac{\log(d)}{M}+t)\geq16\sigma\sigma_\bX\sqrt\frac{\log(d)}{M}$.

\textbf{Case 2.} Case of $\|\bdelta_S\|_1\geq4\lambda_M^{-1}\delta_M^2$. Then, by \eqref{167'},
\begin{align}\label{eq:basic'}
	\frac{1}{M} \sum_{i=1}^{M}(\bX_i^{\top}\bdelta)^2+\lambda_M\|\bdelta_{S^c}\|_1\leq\lambda_M(3\|\bdelta_S\|_1+4\lambda_M^{-1}\delta_M^2)\leq4\lambda_M\|\bdelta_S\|_1,
\end{align}
and hence
\begin{align}\label{eq:cone}
	\|\bdelta_{S^c}\|_1\leq 4\|\bdelta_S\|_1.
\end{align}
Notice that, $\|\bdelta_S\|_1\leq\sqrt s\|\bdelta_S\|_2$. It follows that
$$\|\bdelta\|_1=\|\bdelta_S\|_1+\|\bdelta_{S^c}\|_1\leq5\|\bdelta_S\|_1\leq5\sqrt s\|\bdelta_S\|_2\leq5\sqrt s\|\bdelta\|_2.$$
Hence, under the event $\mathcal E_3$, when $M>100{\kappa_2}^2s\log(d)$,
\begin{align}
	\frac{1}{M} \sum_{i=1}^{M}(\bX_i^{\top}\bdelta)^2&\geq\kappa_1\|\bdelta\|_2^2-5\kappa_1\kappa_2\sqrt{\frac{s\log(d)}{M}}\|\bdelta\|_2^2\nonumber\\
	&\geq\frac{\kappa_1}{2}\|\bdelta\|_2^2\geq\frac{\kappa_1}{2}\|\bdelta_S\|_2^2\geq\frac{\kappa_1}{2s}\|\bdelta_S\|_1^2.\label{eq:deltaS2norm}
\end{align}
Together with \eqref{eq:basic'}, we have
$$\frac{\kappa_1}{2s}\|\bdelta_S\|_1^2\leq\frac{1}{M} \sum_{i=1}^{M}(\bX_i^{\top}\bdelta)^2\leq4\lambda_M\|\bdelta_S\|_1.$$ 
Hence, on the event $\mathcal E_1\cap\mathcal E_2\cap\mathcal E_3$,
\begin{align}\label{bound:deltaS}
	\|\bdelta_S\|_1\leq8\kappa_1^{-1}s\lambda_M.
\end{align}
By \eqref{eq:cone},
$$\|\bdelta\|_1\leq\|\bdelta_S\|_1+\|\bdelta_{S^c}\|_1\leq5\|\bdelta_S\|_1\leq40\kappa_1^{-1}s\lambda_M.$$
Besides, by \eqref{eq:basic'} and \eqref{bound:deltaS},
$$\frac{1}{M} \sum_{i=1}^{M}(\bX_i^{\top}\bdelta)^2\leq4\lambda_M\|\bdelta_S\|_1\leq32\kappa_1^{-1}s\lambda_M^2.$$
Additionally, by \eqref{eq:deltaS2norm}, when $M>100{\kappa_2}^2s\log(d)$,
$$\|\bdelta\|_2\leq\sqrt{\frac{2}{\kappa_1M} \sum_{i=1}^{M}(\bX_i^{\top}\bdelta)^2}\leq8\kappa_1^{-1}\sqrt s\lambda_M.$$
To sum up, on the event $\mathcal E_1\cap\mathcal E_2\cap\mathcal E_3$ and when $M>\max\{\log(d),100{\kappa_2}^2s\log(d)\}$, 
\begin{align}
	\|\bbetahat-\bbeta^*\|_2&\leq\max\left(\frac{5\kappa_2\delta_M^2}{4\sigma\sigma_{\bX}}+4\kappa_1^{-1/2}\delta_M,8\kappa_1^{-1}\sqrt s\lambda_M\right),\label{bound:finite1}\\
	\|\bbetahat-\bbeta^*\|_1&\leq\max\left(20\lambda_M^{-1}\delta_M^2,40\kappa_1^{-1}s\lambda_M\right),\label{bound:finite2}\\
	\frac{1}{M} \sum_{i=1}^{M}(\bX_i^{\top}\bdelta)^2&\leq\max\left(16\delta_M^2,32\kappa_1^{-1}s\lambda_M^2\right).\label{bound:finite3}
\end{align}
Here,
$$P(\mathcal E_2\cap\mathcal E_3)\geq 1-P(\mathcal E_2^c)-P(\mathcal E_3^c)=1-2\exp\left(-\frac{4Mt^2}{1+2t+\sqrt{2t}}\right)-c_1\exp(-c_2M).$$ 
The remaining claims follow by noticing that for some $\lambda_M\asymp\sigma\sqrt\frac{\log(d)}{M}$ and
$\delta_M=o(\sigma)$, $P(\mathcal E_1)=1-o(1)$, and with $M\gg s\log(d)$ as $M,d\to\infty$,
$$P(\mathcal E_1\cap\mathcal E_2\cap\mathcal E_3)\geq1-2\exp\left(-\frac{4Mt^2}{1+2t+\sqrt{2t}}\right)-c_1\exp(-c_2M)-o(1).$$
\end{proof}

\subsection{Convergence rates for nuisance estimators with imputed outcomes}
\subsubsection{DR-imputation-based estimator}
\begin{proof}[Proof of Theorem 9]
We first consider the DR-imputation-based estimator $\bbetahat_{a,2}=\bbetahat_a(\mathcal D_{\mathcal J_2},\widehat Y^{\mbox{\tiny DR}}_{\mathcal J_2})$ defined as (2.11).
In this case, the expectations are taken w.r.t. the samples in $\mathcal D_{\mathcal J_2}$; with a slight abuse of notation, $\bdeltatil_a:=\widehat{\bdelta}_{a}(\mathcal D_{\mathcal J_1})$ and $\widetilde{\balpha}_a:=\balphahat_a(\mathcal D_{\mathcal J_1})$ are fitted using samples in $\mathcal D_{\mathcal J_1}$ and are treated as fixed or condition on. Repeat the same procedure as in \eqref{144'}, we have
\begin{align*}
	&\frac{1}{|\mathcal J_2|}\sum_{i\in \mathcal J_2}(\bar\bV_i^\top\bDelta_{\bbeta})^2+\lambda_{\bbeta}\|\bbetahat_{a,2}\|_1\leq\frac{2}{|\mathcal J_2|}\sum_{i\in \mathcal J_2}\mathbbm1_{\{A_{1i}=a_1\}}(\Yhat_i^{\mbox{\tiny DR}}-\bV_i^\top\bbeta_a^*)\bar\bV_i^\top\bDelta_{\bbeta}+\lambda_{\bbeta}\|\bbeta_a^*\|_1\\
	&\;\;=\frac{2}{|\mathcal J_2|}\sum_{i\in \mathcal J_2}(\Delta_{1i}+\Delta_{2i}+\Delta_{3i}+\Delta_{4i}+\Delta_{5i}+\Delta_{6i})\bar\bV_i^\top\bDelta_{\bbeta}+\lambda_{\bbeta}\|\bbeta_a^*\|_1\\
	&\;\;\leq\sum_{l=1}^3\left\|\frac{2}{|\mathcal J_2|}\sum_{i\in \mathcal J_2}\Delta_{li}\bar\bV_i\right\|_\infty\|\bDelta_{\bbeta}\|_1+\frac{2}{|\mathcal J_2|}\sum_{i\in \mathcal J_2}\left(\sum_{l=4}^6\Delta_{li}\right)^2\\
	&\qquad+\frac{1}{2|\mathcal J_2|}\sum_{i\in \mathcal J_2}(\bar\bV_i^\top\bDelta_{\bbeta})^2+\lambda_{\bbeta}\|\bbeta_a^*\|_1,
\end{align*}
where $\bDelta_{\bbeta}:=\bbetahat_{a,2}-\bbeta_a^*$,
\begin{align*}
	\Delta_{1i}:=&\bar\bU_i^\top\balpha_a^*+\frac{\Ytil_i-\widetilde{\bU}_i^\top\balpha_a^*}{g(\bU_i^\top\bdelta_a^*)}-\bar\bV_i^\top\bbeta_a^*,\\
	\Delta_{2i}:=&\left\{1-\frac{\mathbbm1_{\{A_{2i}=a_2\}}}{g(\bU_i^\top\bdelta_a^*)}\right\}\bar\bU_i^\top(\widetilde{\balpha}_a-\balpha_a^*)\mathbbm{1}_{\{\rho_{a}^{*}(\cdot)=\rho_{a}(\cdot)\}},\\
	\Delta_{3i}:=&\left\{\frac{1}{g(\bU_i^\top\bdeltatil_a)}-\frac{1}{g(\bU_i^\top\bdelta_a^*)}\right\} (\Ytil_i-\widetilde{\bU}_i^\top\balpha_a^*)\mathbbm{1}_{\{\nu_{a}^{*}(\cdot)=\nu_{a}(\cdot)\}},\\
	\Delta_{4i}:=&-\left\{\frac{1}{g(\bU_i^\top\bdeltatil_a)}-\frac{1}{g(\bU_i^\top\bdelta_a^*)}\right\}\widetilde{\bU}_i^\top(\widetilde{\balpha}_a-\balpha_a^*),\\
	\Delta_{5i}:=&\left\{1-\frac{\mathbbm1_{\{A_{2i}=a_2\}}}{g(\bU_i^\top\bdelta_a^*)}\right\}\bar\bU_i^\top(\widetilde{\balpha}_a-\balpha_a^*)\mathbbm{1}_{\{\rho_{a}^{*}(\cdot)\neq\rho_{a}(\cdot)\}},\\
	\Delta_{6i}:=&\left\{\frac{1}{g(\bU_i^\top\bdeltatil_a)}-\frac{1}{g(\bU_i^\top\bdelta_a^*)}\right\} (\Ytil_i-\widetilde{\bU}_i^\top\balpha_a^*)\mathbbm{1}_{\{\nu_{a}^{*}(\cdot)\neq\nu_{a}(\cdot)\}},
\end{align*}
and $g(u)=\exp(u)/\{1+\exp(u)\}$ is the logistic function. Let $\Delta_l$ be an independent copy of $\Delta_{li}$ for $1\leq l\leq 6$. We first show that $\Delta_{l}\bar\bV$ are zero mean vectors for each $l\in\{1,2,3\}$. By the definition of $\bbeta_a^*$, we have $E[\Delta_{1}\bar\bV]=\bzero$. By the tower rule, we have
\begin{align*}
&E[\Delta_{2}\bar\bV]\\
&\;\;=E\left[P(A_{1}=a_1\mid\bU)E\left[1-\frac{\mathbbm1_{\{A_{2}=a_2\}}}{\rho_a^*(\bS)}\mid\bU,A_{1}=a_1\right]\bU^\top(\widetilde{\balpha}_a-\balpha_a^*)\bV\mathbbm{1}_{\{\rho_{a}^{*}(\cdot)=\rho_{a}(\cdot)\}}\right]\\
&\;\;=E\left[P(A_{1}=a_1\mid\bU)\left\{1-\frac{\rho_a(\bS)}{\rho_a^*(\bS)}\right\}\bU^\top(\widetilde{\balpha}_a-\balpha_a^*)\bV\mathbbm{1}_{\{\rho_{a}^{*}(\cdot)=\rho_{a}(\cdot)\}}\right]=\bzero.
\end{align*}
Similarly, we also have
\begin{align*}
&E[\Delta_{3}\bar\bV]\\
&\;\;=E\biggl[ P(A_{1}=a_1\mid\bU)\left\{\frac{1}{g(\bU^\top\bdeltatil_a)}-\frac{1}{g(\bU^\top\bdelta_a^*)}\right\}\\
&\qquad\qquad\cdot E[(Y(a)-\bU^\top\balpha_a^*)\mathbbm1_{\{A_{2}=a_2\}}\mid\bU,A_{1}=a_1]\bV\mathbbm{1}_{\{\nu_{a}^{*}(\cdot)=\nu_{a}(\cdot)\}}\biggl]\\
&=E\biggl[P(A_{1}=a_1\mid\bU)\left\{\frac{1}{g(\bU^\top\bdeltatil_a)}-\frac{1}{g(\bU^\top\bdelta_a^*)}\right\}\\
&\qquad\qquad\cdot P(A_{2}=a_2\mid\bU,A_{1}=a_1)[\nu_a(\bS)-\nu_a^*(\bS)]\bV\mathbbm{1}_{\{\nu_{a}^{*}(\cdot)=\nu_{a}(\cdot)\}}\biggl]=\bzero.
\end{align*}
Let $\boldsymbol e_j\in \mathbb{R}^d$ be the vector whose $j$-th element is $1$ and other elements are $0$s, for each $1\leq j\leq d_1+1$. Under Assumption 4, we have $|\Delta_{1}\bar\bV^\top\boldsymbol e_j|=|(\varepsilon_a+g^{-1}(\bU^\top\bdelta_a^*)\zeta_a)\bar\bV^\top \boldsymbol e_j|\leq(|\varepsilon_a|+c_0^{-1}|\zeta_a|)|\bar\bV^\top \boldsymbol e_j|$ for each $1\leq j\leq d_1+1$. By Lemma \ref{lemma:psi2norm},
\begin{align*}
	\|\Delta_{1}\bar\bV^\top\boldsymbol e_j\|_{\psi_1}\leq (\|\varepsilon_a\|_{\psi_2}+c_0^{-1}\|\zeta_a\|_{\psi_2})\|\bar\bV^\top \boldsymbol e_j\|_{\psi_2}\overset{(i)}{\leq} \sigma(\sigma_{\zeta}+c_0^{-1}\sigma_{\varepsilon})\sigma_u,
\end{align*}
where (i) holds by Assumptions 2 and 3. By Lemma D.4 of \cite{chakrabortty2019high}, for each $1\leq j\leq d_1+1$ and any $t>0$,
\begin{align*}
&P\left(\left|\frac{1}{|\mathcal J_2|}\sum_{i\in \mathcal J_2}\Delta_{1i}\bar\bV_i^\top\boldsymbol e_j\right|>h(t)\right)\leq2\exp(-t-\log (d_1+1)).
\end{align*}
where $h(t)=\sigma(\sigma_{\zeta}+c_0^{-1}\sigma_{\varepsilon})\sigma_u \left(2\sqrt\frac{t+\log (d_1+1)}{|\mathcal J_2|}+\frac{t+\log (d_1+1)}{|\mathcal J_2|}\right)$. It follows that,
\begin{align*}
&P\left(\left\|\frac{1}{|\mathcal J_2|}\sum_{i\in \mathcal J_2}\Delta_{1i}\bar\bV_i\right\|_\infty>h(t)\right)\leq\sum_{j=1}^{d_1+1}P\left(\left|\frac{1}{|\mathcal J_2|}\sum_{i\in \mathcal J_2}\Delta_{1i}\bar\bV_i^\top\boldsymbol e_j\right|>h(t)\right)\\
&\qquad\leq 2(d_1+1)\exp(-t-\log (d_1+1))=2\exp(-t).
\end{align*}
Therefore, by $|\mathcal J|\asymp N$, we have
\begin{align}\label{bound:Delta1}
	\left\|\frac{1}{|\mathcal J_2|}\sum_{i\in \mathcal J_2}\Delta_{1i}\bar\bV_i\right\|_\infty=O_p\left( \sigma \sqrt\frac{\log (d_1)}{N}\right).
\end{align}
In addition, note that $|\Delta_2\bar\bV^T\be_j|\leq(1+c_0^{-1})|\bU_i^T(\widetilde{\balpha}_a-\balpha^*)\bV^T\be_j|$ under Assumption 4. By Lemma \ref{lemma:psi2norm}, conditional on $\mathcal D_{\mathcal J_1}$ , we have
\begin{align*}
	\|\Delta_2\bar\bV^T\be_j\|_{\psi_1}\leq(1+c_0^{-1})\|\bU_i^T(\widetilde{\balpha}_a-\balpha^*)\|_{\psi_2}\|\bV^T\be_j\|_{\psi_2}\overset{(i)}{\leq}(1+c_0^{-1})\|\widetilde{\balpha}_a-\balpha^*\|_2\sigma_u^2.
\end{align*}
where (i) holds by Assumption 3. By Lemma D.4 of \cite{chakrabortty2019high} and the union bound, for any $t>0$, we have
\begin{align*}
\left\|\frac{1}{|\mathcal J_2|}\sum_{i\in \mathcal J_2}\Delta_{2i}\bar\bV_i\right\|_\infty>(1+c_0^{-1})\|\widetilde{\balpha}_a-\balpha^*\|_2\sigma_u^2\left(2\sqrt\frac{t+\log (d_1+1)}{|\mathcal J_2|}+\frac{t+\log (d_1+1)}{|\mathcal J_2|}\right)
\end{align*}
with probability at most $2\exp(-t)$. Therefore, by $|\mathcal J|\asymp N$, we have
\begin{align}\label{bound:Delta2i}
	\left\|\frac{1}{|\mathcal J_2|}\sum_{i\in \mathcal J_2}\Delta_{2i}\bar\bV_i\right\|_\infty=O_p\left( \|\widetilde{\balpha}_a-\balpha^*\|_2 \sqrt\frac{\log (d_1)}{N}\right)\overset{(i)}{=}o_p\left( \sigma \sqrt\frac{\log (d_1)}{N}\right),
\end{align}
where (i) holds by Lemma \ref{cor_mu2} with $s_{\balpha_a}\log(d)=o(N)$.
Besides, by Corollary 2.3 of \cite{dumbgen2010nemirovski}, we have
\begin{align*}
	E\left[\left\|\frac{1}{|\mathcal J_2|}\sum_{i\in \mathcal J_2}\Delta_{3i}\bar\bV_i\right\|_\infty^2\right]\leq\frac{(2e\log(d_1)-e)E[\|\Delta_{3i}\bar\bV_i\|_\infty^2]}{|\mathcal J_2|}.
\end{align*}

(a) If $\nu_{a}^{*}(\cdot)=\nu_{a}(\cdot)$ and $\|\bS_1\|_\infty\leq C$ almost surely, we have $\|\bar\bV_i\|_\infty\leq\|\bV_i\|_\infty\leq \max\{1,C\}$, which implies
\begin{align*}
	E\left[\left\|\frac{1}{|\mathcal J_2|}\sum_{i\in \mathcal J_2}\Delta_{3i}\bar\bV_i\right\|_\infty^2\right]=O\left(\frac{E[\Delta_{3}^2]\log(d_1)}{N}\right),
\end{align*}
since $|\mathcal J|\asymp N$.
Under Assumption 2, by Lemma \ref{lemma:psi2norm}, we have $E[\zeta^8]\leq2^9\sigma^8\sigma_\zeta^8$. Together with Lemma \ref{lem_conclusion}, we have
\begin{align}\label{bound:Delta3}
	E[\Delta_3^2]\leq\sqrt{E[\Delta_{3}^4]}\leq\left\{E[\zeta^8]E[g^{-1}(\bU_i^\top\bdeltatil_a)-g^{-1}(\bU_i^\top\bdelta_a^*)]^8\right\}^{1/4}=O_p\left(\frac{\sigma^2s_{\bdelta_a}\log(d)}{N}\right).
\end{align}
Hence,
\begin{align}\label{bound:Delta3i'}
	E\left[\left\|\frac{1}{|\mathcal J_2|}\sum_{i\in \mathcal J_2}\Delta_{3i}\bar\bV_i\right\|_\infty^2\right]=O_p\left(\frac{\sigma^2s_{\bdelta_a}\log(d)}{N}\cdot\frac{\log(d_1)}{N}\right)=o_p\left(\frac{\sigma^2\log(d)}{N}\right),
\end{align}
since $s_{\bdelta_a}\log(d)=o(N)$.

(b) If $\nu_{a}^{*}(\cdot)=\nu_{a}(\cdot)$ and $s_{\bm{\delta}_{a}}\log(d_1)\log(d)=O(N)$, by Lemma \ref{lemma:psi2norm}, we have $\|\|\bar\bV_i\|_\infty\|_{\psi_2}\leq\sigma_u\sqrt{\log(d_1+1)+2}$, which implies that $E\|\bar\bV_i\|_\infty^4\leq8\sigma_u^4\{\log(d_1+1)+2\}^2$ through the moment bound of Lemma \ref{lemma:psi2norm}. By H\"older's inequality with $|\mathcal J|\asymp N$, 
\begin{align*}
	E\left[\left\|\frac{1}{|\mathcal J_2|}\sum_{i\in \mathcal J_2}\Delta_{3i}\bar\bV_i\right\|_\infty^2\right]=O\left(\frac{\sqrt{E[\Delta_{3}^4]E\|\bar\bV_i\|_\infty^4}\log(d_1)}{N}\right)=O\left(\frac{\sqrt{E[\Delta_{3}^4]}\log^2(d_1)}{N}\right).
\end{align*}
Hence, we also have
\begin{align*}
	E\left[\left\|\frac{1}{|\mathcal J_2|}\sum_{i\in \mathcal J_2}\Delta_{3i}\bar\bV_i\right\|_\infty^2\right]=O_p\left(\frac{\sigma^2s_{\bdelta_a}\log(d)\log^2(d_1)}{N^2}\right)=O_p\left(\frac{\sigma^2\log (d_1)}{N}\right),
\end{align*}
since $s_{\bm{\delta}_{a}}\log(d_1)\log(d)=O(N)$. 

Together with \eqref{bound:Delta3i'}, we conclude that $E[\||\mathcal J_2|^{-1}\sum_{i\in \mathcal J_2}\Delta_{3i}\bar\bV_i\|_\infty^2]=O_p(\sigma\sqrt{\log (d_1)/N})$ when $\nu_{a}^{*}(\cdot)=\nu_{a}(\cdot)$ and either (a) $\|\bS_1\|_\infty\leq C$ almost surely or (b) $s_{\bm{\delta}_{a}}\log(d_1)\log(d)=O(N)$. By Markov's inequality, we have
\begin{align}\label{bound:Delta3i}
	\left\|\frac{1}{|\mathcal J_2|}\sum_{i\in \mathcal J_2}\Delta_{3i}\bar\bV_i\right\|_\infty=O_p\left( \sigma \sqrt\frac{\log (d_1)}{N}\right).
\end{align}
Together with \eqref{bound:Delta1} and \eqref{bound:Delta2i},
\begin{align*}
	\sum_{l=1}^3\left\|\frac{1}{|\mathcal J_2|}\sum_{i\in \mathcal J_2}\Delta_{li}\bar\bV_i\right\|_\infty=O_p\left(\sigma \sqrt\frac{\log (d_1)}{N}\right).
\end{align*}
That is, for any $t>0$, there exists some $\lambda_{\bbeta}\asymp\sigma\sqrt{\log(d_1)/N}$ such that $\mathcal E_5$ occurs with probability at least $1-t$, where
$$\mathcal E_5:=\left\{\sum_{l=1}^3\left\|\frac{1}{|\mathcal J_2|}\sum_{i\in \mathcal J_2}\Delta_{li}\bar\bV_i\right\|_\infty\leq\frac{\lambda_{\bbeta}}{4}\right\}.$$
Condition on the event $\mathcal E_5$. Then, now we have
\begin{align*}
	&\frac{1}{|\mathcal J_2|}\sum_{i\in \mathcal J_2}(\bar\bV_i^\top\bDelta_{\bbeta})^2+\lambda_{\bbeta}\|\bbetahat_{a,2}\|_1\\
	&\qquad\leq\frac{\lambda_{\bbeta}}{2}\|\bDelta_{\bbeta}\|_1+\frac{2}{|\mathcal J_2|}\sum_{i\in \mathcal J_2}\left(\sum_{l=4}^6\Delta_{li}\right)^2+\frac{1}{2|\mathcal J_2|}\sum_{i\in \mathcal J_2}(\bar\bV_i^\top\bDelta_{\bbeta})^2+\lambda_{\bbeta}\|\bbeta_a^*\|_1.
\end{align*}
Since $\left(\sum_{l=4}^6\Delta_{li}\right)^2\leq3\sum_{l=4}^6\Delta_{li}^2$, we have
\begin{align*}
	\frac{1}{|\mathcal J_2|}\sum_{i\in \mathcal J_2}(\bar\bV_i^\top\bDelta_{\bbeta})^2+2\lambda_{\bbeta}\|\bbetahat_{a,2}\|_1\leq\lambda_{\bbeta}\|\bDelta_{\bbeta}\|_1+\frac{12}{|\mathcal J_2|}\sum_{i\in \mathcal J_2}\sum_{l=4}^6\Delta_{li}^2+2\lambda_{\bbeta}\|\bbeta_a^*\|_1,
\end{align*}
which reaches \eqref{bound:to-be-used} in the proof of Theorem 1. Repeat the remaining steps therein, when $N\gg s_{\bm{\beta}_{a}}\log(d_1)$, with $ {\lambda}_{\bm{\beta}} \asymp \sigma\sqrt{\log(d_1)/N}$, we have
\begin{align*}
	\|\bbetahat_{a,2}-\bbeta_a^*\|_2=\|\bDelta_{\bbeta}\|_2=O_p\left(\sigma\sqrt\frac{s_{\bbeta_a}\log(d_1)}{N}+\left(\frac{1}{|\mathcal J_2|}\sum_{i\in \mathcal J_2}\sum_{l=4}^6\Delta_{li}^2\right)^{1/2}\right).
\end{align*}
In the following, we further control the term $|\mathcal J_2|^{-1}\sum_{i\in \mathcal J_2}\Delta_{li}^2$ for each $l\in\{4,5,6\}$. 
By Lemmas \ref{cor_mu2} and \ref{lem_conclusion} with the H{\"o}lder's inequality, we have
\begin{align*}
&E\left[\frac{1}{|\mathcal J_2|}\sum_{i\in \mathcal J_2}\Delta_{4i}^2\right]=E\left[\Delta_{4}^2\right]\\
&\qquad\leq\left\{E\left[\left\{\frac{1}{g(\bU^\top\bdeltatil_a)}-\frac{1}{g(\bU^\top\bdelta_a^*)}\right\}\right]^4E\left[\bU^\top(\widetilde{\balpha}_a-\balpha_a^*)\right]^4\right\}^{1/2}\\
&\qquad=O_p\left(\frac{\sigma^2s_{\bdelta_a}s_{\balpha_a}\log^2(d)}{N^2}\right).
\end{align*}
Under Assumption 4, by Lemma \ref{cor_mu2} with the H{\"o}lder's inequality, we have
\begin{align*}
	E\left[\frac{1}{|\mathcal J_2|}\sum_{i\in \mathcal J_2}\Delta_{5i}^2\right]&=E\left[\left[\left\{1-\frac{\mathbbm1_{\{A_{2i}=a_2\}}}{g(\bU_i^\top\bdelta_a^*)}\right\}\bar\bU_i^\top(\widetilde{\balpha}_a-\balpha_a^*)\right]^2\mathbbm{1}_{\{\rho_{a}^{*}(\cdot)\neq\rho_{a}(\cdot)\}}\right]\\
	&=O_p\left(\frac{\sigma^2s_{\balpha_a}\log(d)}{N}\mathbbm{1}_{\{\rho_{a}^{*}(\cdot)\neq\rho_{a}(\cdot)\}}\right).
\end{align*}
Under Assumption 2, by Lemma \ref{lem_conclusion} with the H{\"o}lder's inequality, we have
\begin{align*}
	E\left[\frac{1}{|\mathcal J_2|}\sum_{i\in \mathcal J_2}\Delta_{6i}^2\right]&=E\left[\left[\left\{\frac{1}{g(\bU_i^\top\bdeltatil_a)}-\frac{1}{g(\bU_i^\top\bdelta_a^*)}\right\} (\Ytil_i-\widetilde{\bU}_i^\top\balpha_a^*)\right]^2\mathbbm{1}_{\{\nu_{a}^{*}(\cdot)\neq\nu_{a}(\cdot)\}}\right]\\
	&=O_p\left(\frac{\sigma^2s_{\bdelta_a}\log(d)}{N}\mathbbm{1}_{\{\nu_{a}^{*}(\cdot)\neq\nu_{a}(\cdot)\}}\right).
\end{align*}
By Markov's inequality,
\begin{align*}
	&\frac{1}{|\mathcal J_2|}\sum_{i\in \mathcal J_2}\sum_{l=4}^6\Delta_{li}^2\\
	&\quad=O_p\left(\frac{\sigma^2s_{\bdelta_a}s_{\balpha_a}\log^2d}{N^2}+\frac{\sigma^2s_{\balpha_a}\log(d)}{N}\mathbbm{1}_{\{\rho_{a}^{*}(\cdot)\neq\rho_{a}(\cdot)\}}+\frac{\sigma^2s_{\bdelta_a}\log(d)}{N}\mathbbm{1}_{\{\nu_{a}^{*}(\cdot)\neq\nu_{a}(\cdot)\}}\right).
\end{align*}
Therefore, we have
\begin{align*}
\|\bbetahat_{a,2}-\bbeta_a^*\|_2=O_p(r_n),
\end{align*}
with $r_n=\sigma\sqrt\frac{s_{\bbeta_a}\log(d_1)}{N}+\frac{\sigma\sqrt{s_{\bdelta_a}s_{\balpha_a}}\log(d)}{N}+\sigma\sqrt\frac{s_{\balpha_a}\log(d)}{N}\mathbbm{1}_{\{\rho_{a}^{*}\neq\rho_{a}\}}+\sigma\sqrt\frac{s_{\bdelta_a}\log(d)}{N}\mathbbm{1}_{\{\nu_{a}^{*}\neq\nu_{a}\}}$. Similarly, we consider the DR-imputation-based estimator $\bbetahat_{a,1}=\bbetahat_a(\mathcal D_{\mathcal J_1},\widehat Y^{\mbox{\tiny DR}}_{\mathcal J_1})$ defined as (2.11).
In this case, the expectations are taken w.r.t. the samples in $\mathcal D_{\mathcal J_1}$; $\widehat{\bdelta}_{a}(\mathcal D_{\mathcal J_2})$ and $\balphahat_a(\mathcal D_{\mathcal J_2})$ are fitted using samples in $\mathcal D_{\mathcal J_2}$ and are treated as fixed or condition on. We also have the same consistency rate for $\bbetahat_{a,1}$. Therefore, for $\bbetahat_a=(\bbetahat_{a,1}+\bbetahat_{a,2})/2$, we have
\begin{align*}
\|\bbetahat_a-\bbeta_a^*\|_2=O_p(r_n).
\end{align*}
By Lemma \ref{lem_eigen}, $\|\bV^\top(\bbetahat_a-\bbeta_a^*)\|_{\psi_2}\leq2\sigma_u\|\bbetahat_a-\bbeta_a^*\|_2$. By Lemma \ref{lemma:psi2norm}, for any $r\geq1$,
\begin{align*}
	&\left\{E[\muhat_a(\bS_1)-\mu_{a}^{*}(\bS_1)]^r\right\}^{1/r}=\left\{E[\bV^\top(\bbetahat_a-\bbeta_a^*)]^r\right\}^{1/r}=O(\|\bbetahat_a-\bbeta_a^*\|_2)=O_p(r_n).
\end{align*}
\end{proof}

\subsubsection{Nested-regression-based estimator}
\begin{proof}[Proof of Theorem 10]
Let $\Yhat=\bar{\bU}^\top\balphahat_a$, $Y^*=\bar{\bU}^\top\balpha_a^*$, $\bX=\bar{\bV}$, $\S=(\bar{\bV}_i)_{i\in J}$, $M=|\mathcal J|$, and $\delta_M^2=288\sigma_u\kappa_1^{-2}\lambda_{\balpha}^2s_{\balpha_a}$. For any $t>0$, let $ {\lambda}_{\bm{\alpha}}:=32\sigma\sigma_{u}\sigma_{\zeta}(\sqrt{\log(d+1)/|\mathcal J|}+t)$ and $ {\lambda}_{\bm{\beta}}:=32\sigma\sigma_{u}\sigma_{\varepsilon}(\sqrt{\log(d_1+1)/ |\mathcal J|}+t)$. Suppose that
$|\mathcal J|\geq\max\{\log(d+1),(c_3+100 \kappa_2^2) s_{\bm{\alpha}_{a}}\log(d+1),100 \kappa_2^2 s_{\bm{\beta}_{a}}\log(d_1+1)\}.$ Now, for the event $\mathcal E_1:=\{|\mathcal J|^{-1}\sum_{i\in\mathcal J}[\Yhat_i-Y_i^*]^2<\delta_M^2\}$, by Lemma \ref{lemma:insample}, we have
$$P(\mathcal E_1)\geq1-2\exp\left(-\frac{4|\mathcal J|t^2}{1+2t+\sqrt{2t}}\right)-c_1\exp(-c_2|\mathcal J|)-2\exp(-c_4|\mathcal J|).$$
By Lemma \ref{lem_eigen}, $\lambda_{\min}(E[\bar{\bV}\bar{\bV}^{\top}])\geq \kappa_{l}$
and $\bar{\bV}$ is sub-Gaussian with 
$\|\bm{x}^\top\bar{\bV} \|_{{\psi}_2} \leq 2\sigma_{u}\|\bm{x}\|_2$, for any $\bm{x}\in \mathbb{R}^{d_1+1}$. Additionally, under Assumption 2, $\|\varepsilon\|_{\psi_2} \leq \sigma\sigma_{\varepsilon}$.
Here, $\kappa_l$, $\sigma_u$, $\sigma_\varepsilon$, and $\sigma$, defined in Assumptions 2 and 3, are positive constants independent of $N$ and $d$.
Hence, the estimation rates of $\widehat{\bm{\beta}}_{a,{\mbox{\tiny NR}}}$ in Theorem 10 follow from Theorem 8. To show the esitmation rate of $\widehat{\mu}_{a,{\mbox{\tiny NR}}}(\cdot)$, by Lemma \ref{lemma:psi2norm}, for any $r\geq1$,
\begin{align*}
&\{E[\widehat{\mu}_{a,{\mbox{\tiny NR}}}(\bS_1)-\mu_{a,{\mbox{\tiny NR}}}^{*}(\bS_1)]^r\}^{1/r}=\{E[\bV^\top(\widehat{\bm{\beta}}_{a,{\mbox{\tiny NR}}}-\bm{\beta}_{a,{\mbox{\tiny NR}}}^*)]^r\}^{1/r}\\
&\qquad=O(\|\widehat{\bm{\beta}}_{a,{\mbox{\tiny NR}}}-\bm{\beta}_{a,{\mbox{\tiny NR}}}^*\|_2)=O_p\left(\sigma\sqrt\frac{s_{\bbeta_a}\log(d_1)}{N}+\sigma\sqrt\frac{s_{\balpha_a}\log(d)}{N}\right),
\end{align*}
since $\|\bV^\top(\widehat{\bm{\beta}}_{a,{\mbox{\tiny NR}}}-\bm{\beta}_{a,{\mbox{\tiny NR}}}^*)\|_{\psi_2}\leq2\sigma_u\|\widehat{\bm{\beta}}_{a,{\mbox{\tiny NR}}}-\bm{\beta}_{a,{\mbox{\tiny NR}}}^*\|_2$ by Lemma \ref{lem_eigen}.
Here, the expectation is only taken w.r.t. the distribution of the new observation $\bS_1$.	
\end{proof}

\subsection{Proofs of Auxiliary Lemmas}
\begin{proof}[Proof of Lemma \ref{lem_orlicz}]
By the definition of $\|X\|_{\psi_2}=\inf\{c>0:E[\exp(X^2/c^2)]\leq 2\}$ and
\begin{align*}
 E\left[\exp\left(\frac{X^2}{4\sigma^2}\right)\right]
 = E\left[\sum_{k=0}^{\infty}\frac{X^{2k}}{k!(4\sigma^2)^{k}}\right]
 \leq \sum_{k=0}^{\infty}\frac{2^k\sigma^{2k}\Gamma(k+1)}{k!(4\sigma^2)^{k}}
 =\sum_{k=0}^{\infty}\frac{1}{2^{k}}=2,
\end{align*}
therefore, leading to
$
 \|X\|_{\psi_2}\leq 2\sigma.
$
\end{proof}

\begin{proof}[Proof of Lemma \ref{lem_eigen}]
(a) Observe that
\begin{align}
&\lambda_{\min}(E[\widetilde{\bU}\widetilde{\bU}^{\top}])=\min_{\bm{x}\in\R^{d+1}:\|\bm{x}\|_2=1}\bm{x}^{\top}E[\bU\bU^{\top}\mathbbm{1}_{\{A_1=a_1, A_2=a_2\}}]\bm{x} \nonumber\\
&\qquad\overset{(i)}{=}\min_{\bm{x}\in\R^{d+1}:\|\bm{x}\|_2=1}E[E[(\bU^{\top}\bm{x})^{2}\mathbbm{1}_{\{A_1=a_1, A_2=a_2\}}|\bU,A_{1}=a_1]P[A_{1}=a_1|\bU]]\nonumber\\
&\qquad=\min_{\bm{x}\in\R^{d+1}:\|\bm{x}\|_2=1}E[(\bU^{\top}\bm{x})^{2}\cdot P[A_2=a_2|\bU,A_{1}=a_1]E[\mathbbm{1}_{\{A_{1}=a_1\}}|\bU]]\nonumber\\
&\qquad\overset{(ii)}{=}\min_{\bm{x}\in\R^{d+1}:\|\bm{x}\|_2=1}E[(\bU^{\top}\bm{x})^{2}\mathbbm{1}_{\{A_{1}=a_1\}}\cdot P[ A_2=a_2|\bU,A_{1}=a_1]],\label{81} 
\end{align}
where (i) and (ii) hold by the tower rule.
Under the overlap conditions of Assumption 1,
\begin{align*}
 P(c_0 \leq P[A_2=a_2|\bU, A_{1}=a_1] \leq 1-c_0)=1.
\end{align*}
Together with \eqref{81}, under Assumption 3, we have
\begin{align*}
 \lambda_{\min}(E[\widetilde{\bU}\widetilde{\bU}^{\top}])
 \geq c_0 \min_{\bm{x}\in\R^{d+1}:\|\bm{x}\|_2=1}E[(\bU^{\top}\bm{x})^{2}\mathbbm{1}_{\{A_{1}=a_1\}}]
 \geq c_0 \kappa_{l}
 >0.
\end{align*}
Additionally, we also have
\begin{align*}
&\lambda_{\max}(E[\widetilde{\bU}\widetilde{\bU}^{\top}])=\max_{\bm{x}\in\R^{d+1}:\|\bm{x}\|_2=1}\bm{x}^{\top}E[\bU\bU^{\top}\mathbbm{1}_{\{A_1=a_1, A_2=a_2\}}]\bm{x}\\
&\qquad\leq\max_{\bm{x}\in\R^{d+1}:\|\bm{x}\|_2=1}\bm{x}^{\top}E[\bU\bU^{\top}]\bm{x}=\lambda_{\max}(E[\bU\bU^{\top}])\overset{(i)}{\leq}2\sigma_u^2,
\end{align*}
where (i) holds since, by Assumption 3 and Lemma \ref{lemma:psi2norm},
\begin{align}
\lambda_{\max}(E[\bU\bU^{\top}])=\max_{\|\bm{x}\|_2=1}E[(\bm{x}^{\top}\bU)^2]\leq\max_{\|\bm{x}\|_2=1} 2 \sigma_{u}^2\|\bm{x}\|_2^2=2\sigma_{u}^2<\infty.\label{bound:EUUT}
\end{align}
Besides, for any $\bm{x}\in\R^{d+1}$ and $k\in\mathbb{N}$,
\begin{align*}
E[|\bm{x}^\top\widetilde{\bU}|^{2k}]=E[|\bm{x}^\top\bU|^{2k}\mathbbm{1}_{\{A_1=a_1, A_2=a_2\}}]\leq E[|\bm{x}^\top\bU|^{2k}]\overset{(i)}{\leq}2(\sigma_{u}\|\bm{x}\|_2)^{2k}\Gamma(k+1),
\end{align*}
where (i) holds by Assumption 3 and Lemma \ref{lemma:psi2norm}. By Lemma \ref{lem_orlicz}, we have
\begin{align*}
 \|\bm{x}^\top\widetilde{\bU} \|_{{\psi}_2}&\leq 2\sigma_{u}\|\bm{x}\|_2, \quad \text{for any} \ \bm{x}\in \mathbb{R}^{d+1}.
\end{align*}

(b) Under Assumption 3, we also have
\begin{align}
\lambda_{\min}(E[\bar{\bU}\bar{\bU}^{\top}])=\min_{\bm{x}\in\R^{d+1}:\|\bm{x}\|_2=1}E[(\bU^{\top}\bm{x})^{2}\mathbbm{1}_{\{A_{1}=a_1\}}]\geq \kappa_{l}>0,\label{bound:lamminUbar}
\end{align}
and by \eqref{bound:EUUT},
\begin{align}
&\lambda_{\max}(E[\bar{\bU}\bar{\bU}^{\top}])=\max_{\bm{x}\in\R^{d+1}:\|\bm{x}\|_2=1}\bm{x}^{\top}E[\bU\bU^{\top}\mathbbm{1}_{\{A_{1}=a_1\}}]\bm{x}\nonumber\\
&\qquad\leq \max_{\bm{x}\in\R^{d+1}:\|\bm{x}\|_2=1}\bm{x}^{\top}E[\bU\bU^{\top}]\bm{x}\leq 2\sigma_{u}^2<\infty.\label{bound:lammaxUbar}
\end{align}
In addition, for any $\bm{x}\in\R^{d+1}$ and $k\in\mathbb{N}$,
\begin{align}
E[|\bm{x}^\top\bar{\bU}|^{2k}]=E[|\bm{x}^\top\bU|^{2k}\mathbbm{1}_{\{A_{1}=a_1\}}]\leq E[|\bm{x}^\top\bU|^{2k}]\overset{(i)}{\leq}2(\sigma_{u}\|\bm{x}\|_2)^{2k}\Gamma(k+1),\label{bound:EU2k}
\end{align}
where (i) holds by Assumption 3 and Lemma \ref{lemma:psi2norm}. By Lemma \ref{lem_orlicz}, we have
\begin{align*}
\|\bm{x}^\top\bar{\bU} \|_{{\psi}_2}&\leq 2\sigma_{u}\|\bm{x}\|_2, \quad \text{for any} \ \bm{x}\in \mathbb{R}^{d+1}.
\end{align*} 

(c) Note that
\begin{align*}
	&\lambda_{\min}(E[\bU\bU^{\top}])=\min_{\bm{x}\in\R^{d_1+1}:\|\bm{x}\|_2=1}\bm{x}^{\top}E[\bU\bU^{\top}]\bm{x}\\
	&\qquad\geq\min_{\bm{x}\in\R^{d_1+1}:\|\bm{x}\|_2=1}\bm{x}^{\top}E[\bU\bU^{\top}\mathbbm{1}_{\{A_{1}=a_1\}}]\bm{x}=\lambda_{\min}(E[\bar{\bU}\bar{\bU}^{\top}])\overset{(i)}{\geq}\kappa_l>0,
\end{align*}
where (i) holds by \eqref{bound:lamminUbar}. By \eqref{bound:EUUT}, we know
$\lambda_{\max}(E[\bU\bU^{\top}])\leq2\sigma_{u}^2<\infty$. By Assumption 3, we have
\begin{align*}
	\|\bm{x}^\top\bU \|_{{\psi}_2}&\leq \sigma_{u}\|\bm{x}\|_2, \quad \text{for any} \ \bm{x}\in \mathbb{R}^{d+1}.
\end{align*} 

(d) Recall the representation \eqref{72}, we also have
\begin{align}
&\lambda_{\min}(E[\bar{\bV}\bar{\bV}^{\top}])=\min_{\bm{x}\in\R^{d_1+1}:\|\bm{x}\|_2=1}\bm{x}^{\top}E[\bV\bV^{\top}\mathbbm{1}_{\{A_{1}=a_1\}}]\bm{x}\nonumber\\
&\qquad=\min_{\bm{x}\in\R^{d_1+1}:\|\bm{x}\|_2=1}\bm{x}^{\top}E[\mathbf{Q}\bU\bU^\top\mathbf{Q}^\top\mathbbm{1}_{\{A_{1}=a_1\}}]\bm{x}\nonumber\\
&\qquad\overset{(i)}{\geq}\min_{\bm{x}\in\R^{d+1}:\|\bm{x}\|_2=1}\bm{x}^{\top}E[\bU\bU^{\top}\mathbbm{1}_{\{A_{1}=a_1\}}]\bm{x}=\lambda_{\min}(E[\bar{\bU}\bar{\bU}^{\top}] )\overset{(ii)}{\geq}\kappa_l>0,\label{bound:lamminVbar}
\end{align}
where (i) holds since, for every $\|\bm{x}\|_2=1$ and $\bm{x}\in\R^{d_1+1}$, $\mathbf{Q}^\top\bm{x}=(\bm{x}^\top,0,\dots,0)^\top\in\R^{d+1}$ and hence $\|\mathbf{Q}^\top\bm{x}\|_2=\|\bm{x}\|_2=1$; (ii) follows from \eqref{bound:lamminUbar}. Similarly,
\begin{align*}
&\lambda_{\max}(E[\bar{\bV}\bar{\bV}^{\top}])=\max_{\bm{x}\in\R^{d_1+1}:\|\bm{x}\|_2=1}\bm{x}^{\top}E[\bV\bV^{\top}\mathbbm{1}_{\{A_{1}=a_1\}}]\bm{x}\\
&\qquad=\max_{\bm{x}\in\R^{d_1+1}:\|\bm{x}\|_2=1}\bm{x}^{\top}E[\mathbf{Q}\bU\bU^\top\mathbf{Q}^\top\mathbbm{1}_{\{A_{1}=a_1\}}]\bm{x}\\
&\qquad\leq\max_{\bm{x}\in\R^{d+1}:\|\bm{x}\|_2=1}\bm{x}^{\top}E[\bU\bU^{\top}\mathbbm{1}_{\{A_{1}=a_1\}}]\bm{x}=\lambda_{\max}(E[\bar{\bU}\bar{\bU}^{\top}] )\overset{(i)}{\leq}2\sigma_u^2<\infty,
\end{align*}
where (i) follows from \eqref{bound:lammaxUbar}. In addition, for any $k\in\mathbb{N}$,
\begin{align*}
&\sup_{\bm{x}\in\R^{d_1+1}:\|\bm{x}\|_2=1}E[|\bm{x}^\top\bar{\bV}|^{2k}]=\sup_{\bm{x}\in\R^{d_1+1}:\|\bm{x}\|_2=1}E[|\bm{x}^\top\mathbf{Q}\bar{\bU}|^{2k}]\\
&\qquad\overset{(i)}{\leq}\sup_{\bm{x}\in\R^{d+1}:\|\bm{x}\|_2=1}E[|\bm{x}^\top\bar{\bU}|^{2k}]\overset{(ii)}{\leq}2(\sigma_{u}\|\bm{x}\|_2)^{2k}\Gamma(k+1),
\end{align*}
where (i) holds since, for every $\|\bm{x}\|_2=1$ and $\bm{x}\in\R^{d_1+1}$, $\|\mathbf{Q}^\top\bm{x}\|_2=\|\bm{x}\|_2=1$ ; (ii) follows from \eqref{bound:EU2k}. Hence, for any $\bm{x}\in\R^{d+1}$ and $k\in\mathbb{N}$,
$$E[|\bm{x}^\top\bar{\bV}|^{2k}]\leq2(\sigma_{u}\|\bm{x}\|_2)^{2k}\Gamma(k+1).$$
By Lemma \ref{lem_orlicz}, we have $\bar\bV$ is sub-Gaussian with
\begin{align*}
\|\bm{x}^\top\bar{\bV} \|_{{\psi}_2}&\leq 2\sigma_{u}\|\bm{x}\|_2, \quad \text{for any} \ \bm{x}\in \mathbb{R}^{d_1+1}.
\end{align*} 

(e) Lastly, note that
\begin{align*}
&\lambda_{\min}(E[\bV\bV^{\top}])=\min_{\bm{x}\in\R^{d_1+1}:\|\bm{x}\|_2=1}\bm{x}^{\top}E[\bV\bV^{\top}]\bm{x}\\
&\qquad\geq\min_{\bm{x}\in\R^{d_1+1}:\|\bm{x}\|_2=1}\bm{x}^{\top}E[\bV\bV^{\top}\mathbbm{1}_{\{A_{1}=a_1\}}]\bm{x}=\lambda_{\min}(E[\bar{\bV}\bar{\bV}^{\top}])\overset{(i)}{\geq}\kappa_l>0,
\end{align*}
where (i) holds by \eqref{bound:lamminVbar}. Besides, 
\begin{align*}
&\lambda_{\max}(E[{\bV}{\bV}^{\top}])=\max_{\bm{x}\in\R^{d_1+1}:\|\bm{x}\|_2=1}\bm{x}^{\top}E[\bV\bV^{\top}]\bm{x}=\max_{\bm{x}\in\R^{d_1+1}:\|\bm{x}\|_2=1}\bm{x}^{\top}E[\mathbf{Q}\bU\bU^\top\mathbf{Q}^\top]\bm{x}\\
&\qquad\leq\max_{\bm{x}\in\R^{d+1}:\|\bm{x}\|_2=1}\bm{x}^{\top}E[\bU\bU^{\top}]\bm{x}=\lambda_{\max}(E[\bU\bU^{\top}] )\overset{(i)}{\leq}2\sigma_u^2<\infty,
\end{align*}
where (i) follows from \eqref{bound:lammaxUbar}. In addition, for any $k\in\mathbb{N}$,
\begin{align*}
&\sup_{\bm{x}\in\R^{d_1+1}:\|\bm{x}\|_2=1}E[|\bm{x}^\top\bV|^{2k}]=\sup_{\bm{x}\in\R^{d_1+1}:\|\bm{x}\|_2=1}E[|\bm{x}^\top\mathbf{Q}\bU|^{2k}]\\
&\qquad\overset{(i)}{\leq}\sup_{\bm{x}\in\R^{d+1}:\|\bm{x}\|_2=1}E[|\bm{x}^\top\bU|^{2k}]\overset{(ii)}{\leq}2(\sigma_{u}\|\bm{x}\|_2)^{2k}\Gamma(k+1),
\end{align*}
where (i) holds since, for every $\|\bm{x}\|_2=1$ and $\bm{x}\in\R^{d_1+1}$, $\|\mathbf{Q}^\top\bm{x}\|_2=\|\bm{x}\|_2=1$ ; (ii) follows from \eqref{bound:EU2k}. Hence, for any $\bm{x}\in\R^{d+1}$ and $k\in\mathbb{N}$,
$$E[|\bm{x}^\top\bV|^{2k}]\leq2(\sigma_{u}\|\bm{x}\|_2)^{2k}\Gamma(k+1).$$
By Lemma \ref{lem_orlicz}, we have $\bV$ is also sub-Gaussian with
\begin{align*}
\|\bm{x}^\top\bV\|_{{\psi}_2}&\leq 2\sigma_{u}\|\bm{x}\|_2, \quad \text{for any} \ \bm{x}\in \mathbb{R}^{d_1+1}.
\end{align*} 
\end{proof}

\begin{proof}[Proof of Lemma \ref{cor_mu2}] 
Now, we consider the Lasso estimator $\balphahat_a$ defined as $(2.6)$, which is constructed using the outcome $\widetilde Y$, covariates $\widetilde{\bU}$ and training samples $\mathcal D_\mathcal J$. Note that $\balphahat_a$ is a special case of $\bbetahat$, $(4.1)$. 
Let $\Yhat=Y^*=\widetilde{Y}$, $\bX=\widetilde{\bU}$, $\S=(\bX_i)_{i\in J}$, $M=|\mathcal J|$, and $\delta_M=0$. By Lemma \ref{lem_eigen}, $\lambda_{\min}(E[\widetilde{\bU}\widetilde{\bU}^{\top}])\geq c_0\kappa_{l}$
and $\widetilde{\bU}$ is sub-Gaussian with 
$\|\bm{x}^\top\widetilde{\bU} \|_{{\psi}_2} \leq 2\sigma_{u}\|\bm{x}\|_2$, for any $\bm{x}\in \mathbb{R}^{d+1}$. Additionally, under Assumption 2, $\|\zeta \|_{\psi_2} \leq \sigma\sigma_{\zeta}$.
Here, $c_0$, $\kappa_l$, $\sigma_u$, $\sigma_\zeta$, and $\sigma$, defined in Assumptions 1, 2, and 3, are positive constants independent of $N$ and $d$.
Hence, the estimation rates of $\balphahat_a$ in Lemma \ref{cor_mu2} follows from Theorem 8. To show the estimation rate of $\widehat\nu_a(\cdot)$, by Lemma \ref{lemma:psi2norm}, for any $r\geq1$,
\begin{align*}
&\{E[\widehat{\nu}_{a}(\bS)-\nu_{a}^{*}(\bS)]^r\}^{1/r}=\{E[\bU^\top(\balphahat_a-\balpha_a^*)]^r\}^{1/r}\\
&\qquad=O(\|\balphahat_a-\balpha_a^* \|_2)=O_p\left(\sigma\sqrt{\frac{s_{\bm{\alpha}_{a}}\log(d)}{N}}\right),
\end{align*}
since $\|\bU^\top(\balphahat_a-\balpha_a^*)\|_{\psi_2}\leq\sigma_u\|\balphahat_a-\balpha_a^*\|_2$ under Assumption 3.
Here, the expectation is only taken w.r.t. the joint distribution of the new observations $\bS$.
\end{proof}

\begin{proof}[Proof of Lemma \ref{lemma:insample}]
Let $\Yhat=Y^*=\widetilde{Y}$, $\bX=\widetilde{\bU}$, $\S=(\bX_i)_{i\in J}$, $M=|\mathcal J|$, and $\delta_M=0$. Following the proof of Theorem 8, since $\delta_M=0$, we have $\|\bdelta_S\|_1\geq4\lambda^{-1}\delta_M^2$. That is, we are under Case 2. Hence, $\bdelta$ is in the cone set as in \eqref{eq:cone}. By Lemma \ref{lem_eigen}, $\|\boldsymbol{a}^\top\bar\bU\|_{\psi_2}\leq2\sigma_u\|\boldsymbol{a}\|_2$ for any $\boldsymbol{a}\in\R^{d+1}$ and $\lambda_{\min}(E[\bar\bU\bar\bU^\top])\geq \kappa_l$. Here, $\sigma_u$ and $\kappa_l$, defined in Assumption 3, are positive constants independent of $N$ and $d$. By Theorem 15 of \cite{pmlr-v23-rudelson12}, with some constants $c_3,c_4>0$, when $|\mathcal J|\geq c_3s_{\balpha_a}\log(d+1)$,
$$\frac{1}{|\mathcal J|}\sum_{i\in\mathcal J}\left\{\bar\bU_i^\top(\balphahat_a-\balpha_a^*)\right\}^2\leq1.5^2\lambda_{\max}(E[\bar\bU\bar\bU^\top])\|\balphahat_a-\balpha_a^*\|_2^2\leq4.5\sigma_u\|\balphahat_a-\balpha_a^*\|_2^2,$$
with probability at least $1-2\exp(-c_4|\mathcal J|)$. In addition, by Lemma \ref{cor_mu2}, we have
$$\|\balphahat_a-\balpha_a^*\|_2\leq8\kappa_1^{-1}\lambda_{\balpha}\sqrt{s_{\balpha_a}},$$
with probability at least $1-2\exp(-\frac{4|\mathcal J|t^2}{1+2t+\sqrt{2t}})-c_1\exp(-c_2|\mathcal J|)$. Therefore, with probability at least $1-2\exp(-\frac{4|\mathcal J|t^2}{1+2t+\sqrt{2t}})-c_1\exp(-c_2|\mathcal J|)-2\exp(-c_4|\mathcal J|)$,
$$\frac{1}{|\mathcal J|}\sum_{i\in\mathcal J}[\bar\bU_i^\top (\balphahat_a-\balpha_a^*)]^2\leq288\sigma_u\kappa_1^{-2}\lambda_{\balpha}^2s_{\balpha_a}.$$
\end{proof}


\begin{proof}[Proof of Lemma \ref{lem_pi1}]
In this Lemma, we provide estimation rates for $\bgammahat_{a}$, $\widehat\pi_{a}(\cdot)$, $\widehat\bdelta_a$, and $\widehat\rho_a(\cdot)$. We allow model misspecifications that $\pi_{a}^*(\cdot)\neq\pi_{a}(\cdot)$ and $\rho_a^*(\cdot)\neq\rho_a(\cdot)$. Note that, classical results for generalized linear models only consider correrctly specified cases; see, e.g., Corollary 9.26 of \cite{wainwright2019high} and Section 4.4 of \cite{negahban2012unified}.

(a) We first show \eqref{3.36} and \eqref{3.31}. In part (a), the expectations are only taken w.r.t. the distribution of the new observation $\bS_1$.

Consider the link function $\phi(u)=\log(1+\exp(u))$, we have
$$\phi''(\bV^{\top}\bm{\gamma}_{a}^*)
=\frac{\exp( \bV^{\top}\bm{\gamma}_{a}^*)}{(1+\exp(\bV^{\top}\bm{\gamma}_{a}^*))^2}
=\pi_{a}^*(\bS_1)(1-\pi_{a}^*(\bS_1)).$$
Under Assumption 4, we have $P(c_0^2\leq \phi''(\bV^{\top}\bm{\gamma}_{a}^*)\leq (1-c_0)^2)=1$.
By Lemma \ref{lem_eigen},
\begin{align}
\lambda_{\min}(E[\bV\bV^{\top}])\geq\kappa_l > 0,\quad\lambda_{\max}(E[\bV\bV^{\top}])\leq 2\sigma_{u}^2<\infty,\label{A.30}
\end{align}
and $\bV$ is sub-Gaussian with 
$\|\bm{x}^\top\bV \|_{\psi_2}\leq 2\sigma_{u}\|\bm{x}\|_2$ for any $\bm{x}\in \mathbb{R}^{d_1+1}$. 

Next, we control the gradient at the potentially misspecified location: recall that the underlying model may be misspecified and $\pi_{a}^*(\cdot)$ not necessarily equal to $\pi_{a}(\cdot)$; The true $\bgamma_{a}$ may not exists such that $\pihat_{a}(\cdot)$ has a logistic form. Below we ensure and discuss the Restricted Strong Convexity (RSC) as well as the properties of the gradient. 

We first consider the RSC property. Note that, the RSC property \eqref{RSC:deltalM} below only depends on the distribution of $\bS_1$ and does not depend on the distribution of $A_{1}|\bS_1$. This is because $\delta\ell_\mathcal J(\bDelta,\bgamma_{a}^*)$ defined in \eqref{def:deltaell} can be written as
$$\delta\ell_\mathcal J(\bDelta,\bgamma_{a}^*)=\frac{1}{|\mathcal J|}\sum_{i\in \mathcal J}\left[\phi(\bV_i^\top(\bgamma_{a}^*+\bDelta))-\phi(\bV_i^\top\bgamma_{a}^*)-\bDelta^\top\bV_i\phi'(\bV_i^\top\bgamma_{a}^*)\right],$$
which is function of $\bS_{1i}$s, and $A_{1i}$s are not involved above. As a result, the model misspecification for $\pi_{a}(\bS_1)=E(A_{1}|\bS_1)$ does not affect the RSC property. In below, we consider the RSC property studied by \cite{zhang2021double}.
For any $\bgamma_{a},\bDelta\in\R^{d_1+1}$, define
\begin{align}
	\ell_\mathcal J(\bgamma_{a})&:=\frac{1}{|\mathcal J|}\sum_{i\in \mathcal J}\left[\phi(\bV_{i}^{\top}\bm{\gamma_{a}})-\mathbbm1_{\{A_{1i}=a_1\}} \bV_{i}^{\top}\bgamma_{a}\right],\nonumber\\
	\delta\ell_\mathcal J(\bDelta,\bgamma_{a}^*)&:=\ell_\mathcal J(\bgamma_{a}^*+\bDelta)-\ell_\mathcal J(\bgamma_{a}^*)-\bDelta^\top\nabla\ell_\mathcal J(\bgamma_{a}^*).\label{def:deltaell}
\end{align}
By Lemma 4.5 of \cite{zhang2021double}, we have the following RSC property holds:
\begin{align}
\delta\ell_\mathcal J(\bDelta,\bgamma_{a}^*)&\geq\kappa_1\|\bDelta\|_2\left\{\|\bDelta\|_2-\kappa_2\sqrt\frac{\log(d_1+1)}{|\mathcal J|}\|\bDelta\|_1\right\}\nonumber\\
&\geq\frac{\kappa_1}{2}\|\bDelta\|_2^2-\frac{\kappa_1\kappa_2^2\log(d_1+1)}{2|\mathcal J|}\|\bDelta\|_1^2\quad\text{for all}\;\;\|\bDelta\|_2\leq1,\label{RSC:deltalM}
\end{align}
with probability at least $1-c_1\exp(-c_2|\mathcal J|)$, where $c_1,c_2,\kappa_1,\kappa_2>0$ are some constants.

Additionally, the gradient $\|\nabla\ell_\mathcal J(\bgamma_{a}^*)\|_\infty$ is controlled in the following. We allow a possibly misspecified model that $\pi_{a}^*(\cdot)\neq\pi_{a}(\cdot)$. Note that, even under model misspecification, we still have \eqref{eq:1stcond} below. Hence, $\|\nabla\ell_\mathcal J(\bgamma_{a}^*)\|_\infty$ is the maximum of zero-mean random variables.
By the union bound, we have
\begin{align}
P\left(\|\nabla\ell_\mathcal J(\bgamma_{a}^*)\|_\infty\geq\frac{\lambda_{\bgamma}}{2}\right)&=P\left(\max_{1\leq j\leq d_1+1}\left|\frac{1}{|\mathcal J|}\sum_{i\in \mathcal J}(g(\bV_i^\top\bgamma_{a}^*)-\mathbbm1_{\{A_{1i}=a_1\}})\bV_{i,j}\right|\geq\frac{\lambda_{\bgamma}}{2}\right)\nonumber\\
&\leq\sum_{j=1}^{d_1+1}P\left(\left|\frac{1}{|\mathcal J|}\sum_{i\in \mathcal J}(g(\bV_i^\top\bgamma_{a}^*)-\mathbbm1_{\{A_{1i}=a_1\}})\bV_{i,j}\right|\geq\frac{\lambda_{\bgamma}}{2}\right),\label{bound:lMgamma}
\end{align}
where $g(u)= \exp(u)/\{1+\exp(u)\}$ is the logistic function. By definition, $\bgamma_{a}^*=\\ \arg\min_{\bgamma_{a}\in\R^{d_1+1}}E[\ell(\bgamma_{a})]$, where for any $\bgamma_{a}\in\R^{d_1+1}$,
$$\ell(\bgamma_{a}):=E\left[\phi(\bV^{\top}\bm{\gamma}_{a})-\mathbbm1_{\{A_{1}=a_1\}} \bV^{\top}\bm{\gamma}_{a}\right].$$
By the first-order optimality condition, we know that
\begin{align}
\nabla E[\ell(\gamma_a^*)]=E\left[(g(\bV^\top\bgamma_{a}^*)-\mathbbm1_{\{A_{1}=a_1\}})\bV\right]=\bm{0}\in\R^{d_1+1}.\label{eq:1stcond}
\end{align} 
Additionally, since $|g(\bV^\top\bgamma_{a}^*)-\mathbbm1_{\{A_{1}=a_1\}}|\leq1$, by Lemma \ref{lemma:psi2norm} and under Assumption 3, for any $i\in \mathcal J$ and $j\leq d_1+1$,
$$\|(g(\bV_i^\top\bgamma_{a}^*)-\mathbbm1_{\{A_{1i}=a_1\}})\bV_{i,j}\|_{\psi_2}\leq\|\bV_{i,j}\|_{\psi_2}\leq\sigma_u.$$
That is, $(g(\bV_i^\top\bgamma_{a}^*)-\mathbbm1_{\{A_{1i}=a_1\}})\bV_{i,j}$ is a zero-mean sub-Gaussian random variable. Hence, by Hoeffding's inequality, for each $j\leq d_1+1$,
\begin{align*}
&P\left(\left|\frac{1}{|\mathcal J|}\sum_{i\in \mathcal J}(g(\bV_i^\top\bgamma_{a}^*)-\mathbbm1_{\{A_{1i}=a_1\}})\bV_{i,j}\right|\geq\frac{\lambda_{\bgamma}}{2}\right)\leq2\exp\left(\frac{-|\mathcal J|\lambda_{\bgamma}^2}{32\sigma_u^2}\right)\\
&\qquad\leq2\exp\left(-\log(d_1+1)-|\mathcal J|t^2\right)=\frac{2\exp(-|\mathcal J|t^2)}{d_1+1},
\end{align*}
where for any $t>0$, we set $\lambda_{\bgamma}:=4\sqrt2\sigma_u(\sqrt\frac{\log(d_1+1)}{|\mathcal J|}+t)$. Together with \eqref{bound:lMgamma}, it follows that
$$P\left(\|\ell_\mathcal J(\bgamma_{a}^*)\|_\infty\leq\frac{\lambda_{\bgamma}}{2}\right)\leq1-2\exp(-|\mathcal J|t^2).$$
Together with \eqref{RSC:deltalM}, when $|\mathcal J|\geq64\kappa_2^2s_{\bgamma_{a}}\log(d_1+1)$ and $9s_{\bgamma_{a}}\lambda_{\bgamma}^2\leq\kappa_1^2$, by Corollary 9.20 of \cite{wainwright2019high}, we conclude that
\begin{align*}
\|\bgammahat_{a}-\bgamma_{a}^*\|_2\leq\frac{3\sqrt{s_{\bgamma_{a}}}\lambda_{\bgamma}}{\kappa_1},\quad\|\bgammahat_{a}-\bgamma_{a}^*\|_1\leq\frac{6s_{\bgamma_{a}}\lambda_{\bgamma}}{\kappa_1},
\end{align*}
with probability at least $1-2\exp(-|\mathcal J|t^2)-c_1\exp(-c_2|\mathcal J|)$. Hence, when $|\mathcal J|\asymp N$ and $N\gg s_{\bgamma_{a}}\log(d_1)$, with some $\lambda_\mathcal J\asymp\sqrt\frac{\log(d_1)}{|\mathcal J|}$,
\begin{align}\label{213}
	\|\widehat{\bm{\gamma}}_{a}-\bm{\gamma}_{a}^{*}\|_2^2= O_p\left(\frac{s_{\bm{\gamma}_{a}}\log(d_1)}{N}\right).
\end{align}
Now, we show the estimation rate for $\pihat_{a}(\cdot)$. In the following, we will use Taylor's Theorem to control the estimation error of $\pihat_{a}(\cdot)$ by the estimation error of $\bgammahat_{a}$ as in \eqref{bound:Dpi_VDgamma}. Then, we apply the estimation rate \eqref{213} proved above to obtain the rate for $\pihat_{a}(\cdot)$.
 
Recall that $g(u):= \exp(u)/\{1+\exp(u)\}=\phi'(u)$ for any $u\in \mathbb{R}$. Note that, for any $u^*,\Delta\in\R$, 
\begin{align*}
\frac{d(g(u^*+t\Delta)-g(u^*))^2}{dt}&=2(g(u^*+t\Delta)-g(u^*))g'(u^*+t\Delta)\Delta,\\
\frac{d^2(g(u^*+t\Delta)-g(u^*))^2}{dt^2}&=2(g'(u^*+t\Delta))^2\Delta^2+2(g(u^*+t\Delta)-g(u^*))g''(u^*+t\Delta)\Delta^2,
\end{align*}
where, for any $u\in\R$, since $g(u)\in(0,1)$, we have
\begin{align}
g'(u)=g(u)(1-g(u))\in(0,1),\;\;g''(u)=g(u)(1-g(u))(1-2g(u))\in(-1,1).\label{bound:f'_f''}
\end{align}
Set $u^*=\bV^\top\bgamma_{a}^*$ and $\Delta=\bV^\top(\bgammahat_{a}-\bgamma_{a}^*)$. By Taylor's Theorem, with some $\widetilde t\in(0,1)$, 
\begin{align*}
&E[g(\bV^{\top} \widehat{\bm{\gamma}}_{a})-g(\bV^{\top}\bm{\gamma}_{a}^{*})]^2=E[g(u^*+1\cdot\Delta)-g(u^*)]^2\\
&\qquad=E[g(u^*+0\cdot\Delta)-g(u^*)]^2+\frac{dE(g(u^*+t\Delta)-g(u^*))^2}{dt}\biggr|_{t=0}\cdot1\\
&\qquad\qquad+\frac{d^2E(g(u^*+t\Delta)-g(u^*))^2}{2dt^2}\biggr|_{t=\widetilde t}\cdot1^2\\
&\qquad=0+E\left[2(g(u^*+0\cdot\Delta)-g(u^*))g'(u^*+0\cdot\Delta)\Delta\right]\\
&\qquad\qquad+E\left[(g'(u^*+\widetilde t\Delta))^2\Delta^2+(g(u^*+\widetilde t\Delta)-g(u^*))g''(u^*+\widetilde t\Delta)\Delta^2\right]\\
&\qquad=E\left[(g'(u^*+\widetilde t\Delta))^2\Delta^2+(g(u^*+\widetilde t\Delta)-g(u^*))g''(u^*+\widetilde t\Delta)\Delta^2\right]\\
&\qquad\overset{(i)}{\leq}2E[\Delta^2]=2E[ \bV^{\top}(\widehat{\bm{\gamma}}_{a}-\bm{\gamma}_{a}^{*})]^2,
\end{align*}
where (i) holds since, by \eqref{bound:f'_f''}, $(g'(u^*+\widetilde t\Delta))^2\leq1$ and $(g(u^*+\widetilde t\Delta)-g(u^*))g''(u^*+\widetilde t\Delta)\leq1$. Hence, 
\begin{align}
E[\widehat{\pi}_{a}(\bS_1)-\pi_{a}^{*}(\bS_1)]^2=E[g(\bV^{\top} \widehat{\bm{\gamma}}_{a})-g(\bV^{\top}\bm{\gamma}_{a}^{*})]^2\leq2E[ \bV^{\top}(\widehat{\bm{\gamma}}_{a}-\bm{\gamma}_{a}^{*})]^2.\label{bound:Dpi_VDgamma}
\end{align}
Then, from \eqref{A.30} and $\eqref{213}$, we have
\begin{align}
	E[\widehat{\pi}_{a}(\bS_1)-\pi_{a}^{*}(\bS_1)]^2
	\leq 2 \|E[\bV\bV^{\top}]\|_2 \|\widehat{\bm{\gamma}}_{a}-\bm{\gamma}_{a}^{*}\|_2^2
	=O_p\left(\frac{s_{\bm{\gamma}_{a}}\log(d_1)}{N}\right).
\end{align}

(b) Now, we show \eqref{3.38} and \eqref{3.32}. In part (b), the expectations are only taken w.r.t. the distribution of the new observations $\bS$.

By Lemma \ref{lem_eigen}, we know that the minimum and maximum eigenvalues of covariance matrix $E[\bU\bU^{\top}]$ satisfy
\begin{align*}
	\lambda_{\min}(E[\bU\bU^{\top}])\geq \kappa_{l}>0,\;\;	\lambda_{\max}(E[\bU\bU^{\top}])\leq 2\sigma_{u}^2<\infty, 
\end{align*} 
and $\bU$ is sub-Gaussian with $\|\bm{x}^\top\bU \|_{{\psi}_2}\leq \sigma_{u}\|\bm{x}\|_2$ for any $\bm{x}\in \mathbb{R}^{d+1}$.
Additionally, we also have $P(c_0^2\leq \phi''(\bU^{\top}\bm{\delta}_{a})\leq (1-c_0)^2)=1$ under Assumption 4. Repeating the same procedure as in part (a), we also have
\begin{align*}
	\|\widehat{\bm{\delta}}_{a}-\bm{\delta}_{a}^{*}\|_2^2= O_p\left(\frac{s_{\bm{\delta}_{a}}\log(d)}{N}\right),
\end{align*}
and
\begin{align*}
	E[\widehat{\rho}_{a}(\bS)-\rho_{a}^{*}(\bS)]^2
	&=E[g(\bU^{\top} \widehat{\bm{\delta}}_{a})-g(\bU^{\top}\bm{\delta}_{a}^{*})]^2 
	\leq2E[ \bU^{\top}(\widehat{\bm{\delta}}_{a}-\bm{\delta}_{a}^{*})]^2\\
	&\leq 2 \|E[\bU\bU^{\top}]\|_2\|\widehat{\bm{\delta}}_{a}-\bm{\delta}_{a}^{*}\|_2^2
	=O_p\left(\frac{s_{\bm{\delta}_{a}}\log(d)}{N}\right).
\end{align*}
\end{proof}

\begin{proof}[Proof of Lemma \ref{lem_conclusion}]
In this proof, the expectations are only taken w.r.t. the distribution of the new observations $\bS$ (or only $\bS_1$ if $\bS_2$ is not involved).
By Lemma \ref{lem_pi1}, we have $P(\mathcal{A})=1-o(1)$.
By Minkowski's inequality, we have
\begin{align*}
	\{E|\widehat{\pi}_{a}(\bS_1)|^{-r}\}^{\frac{1}{r}}
	=\{E|1+\exp(-\bV^{\top} \widehat{\bm{\gamma}}_{a})|^{r}\}^{\frac{1}{r}}
	\leq1+\{E|\exp(-\bV^{\top} \widehat{\bm{\gamma}}_{a})|^{r}\}^{\frac{1}{r}}.
\end{align*}
Under Assumption 4, we know that
\begin{align}\label{B.3}
	P\left(\frac{c_0}{1-c_0} \leq \exp(-\bV^{\top}\bm{\gamma}_{a}^*) \leq \frac{1-c_0}{c_0} \right)=1.
\end{align}
which implies that
\begin{align*}
	&\{E|\exp(-\bV^{\top} \widehat{\bm{\gamma}}_{a})|^{r}\}^{\frac{1}{r}}=\{E|\exp(-\bV^{\top}\bm{\gamma}_{a}^*)\exp(-\bV^{\top} (\widehat{\bm{\gamma}}_{a}-\bm{\gamma}_{a}^*))|^{r}\}^{\frac{1}{r}}\\
	&\qquad\leq\frac{1-c_0}{c_0}\{E|\exp(-\bV^{\top} (\widehat{\bm{\gamma}}_{a}-\bm{\gamma}_{a}^*))|^r\}^{\frac{1}{r}}.
\end{align*}
Hence, to prove $\{E|\widehat{\pi}_{a}(\bS_1)|^{-r}\}^{\frac{1}{r}}$ is bounded uniformly, i.e., bounded by a constant independent of $N$, it suffices to show $\{E|\exp(-r\bV^{\top} (\widehat{\bm{\gamma}}_{a}-\bm{\gamma}_{a}^*))|\}^{\frac{1}{r}}$ is bounded uniformly. 

Let $\mu=E[|\bV^{\top} (\widehat{\bm{\gamma}}_{a}-\bm{\gamma}_{a}^*)|]$.
By Lemma \ref{lem_eigen}, we have
\begin{align}
	\|\bV^\top(\widehat{\bm{\gamma}}_{a}-\bm{\gamma}_{a})\|_{\psi_2}\leq 2\sigma_{u}\|\widehat{\bm{\gamma}}_{a}-\bm{\gamma}_{a}\|_2.\label{bound:Vdgamma}
\end{align}	
Now, condition on the event $\mathcal{A}$, we have
\begin{align}\label{B.4}
	\mu\leq \sqrt{\pi_{a}}\sigma_{u},\quad\|\mu\|_{\psi_2}\leq (\log2)^{-1/2}\sqrt{\pi_{a}}\sigma_u,
\end{align}	
which follows from Lemma \ref{lemma:psi2norm}. Note that, in the above, the $\psi_2$-norm is defined through the probability measure of a new observation $\bS_1$.	
By basic properties of Orlicz norm $\|X+Y\|_{\psi_2}\leq \|X\|_{\psi_2}+\|Y\|_{\psi_2}$, we have
\begin{align*}
	\||\bV^{\top} (\widehat{\bm{\gamma}}_{a}-\bm{\gamma}_{a}^*)|-\mu\|_{\psi_2}
	\leq \|\bV^{\top}(\widehat{\bm{\gamma}}_{a}-\bm{\gamma}_{a}^*)\|_{\psi_2}+\|\mu\|_{\psi_2}
	\leq [1+(\log2)^{-1/2}\sqrt{\pi_{a}}]\sigma_u,
\end{align*}
and with it that the moment generating function can be bounded with
\begin{align*}
	E[\exp\{r(|\bV^{\top} (\widehat{\bm{\gamma}}_{a}-\bm{\gamma}_{a}^*)|-\mu)\}]\leq \exp\{2r^2[1+(\log2)^{-1/2}\sqrt{\pi_{a}}]^2\sigma_u^2\}.
\end{align*}	
Using \eqref{B.4}, we get that
\begin{align}\label{bound:rVDgamma}
	&\{E|\exp(-r\bV^{\top} (\widehat{\bm{\gamma}}_{a}-\bm{\gamma}_{a}^*))|\}^{\frac{1}{r}}\leq \{E|\exp(r|\bV^{\top} (\widehat{\bm{\gamma}}_{a}-\bm{\gamma}_{a}^*)|)|\}^{\frac{1}{r}}\\ 
	&\qquad \leq\exp\{\sqrt{\pi_{a}}\sigma_{u}+2r[1+(\log2)^{-1/2}\sqrt{\pi_{a}}]^2\sigma_u^2\},\nonumber
\end{align}		
which is bounded and hence $\{E|\widehat{\pi}_{a}(\bS_1)|^{-r}\}^{\frac{1}{r}}$ is bounded uniformly. Repeating the same procedure above, we can obtain that $\{E|\widehat{\pi}_{a}(\bS_1)|^{-2r}\}^{\frac{1}{2r}}$ is also bounded uniformly, which will be used later on in the proof.
By \eqref{B.3}, we have
\begin{align}
	&\left\{E\left|\frac{1}{\widehat{\pi}_{a}(\bS_1)}-\frac{1}{\pi_{a}^{*}(\bS_1)}\right|^{r}\right\}^{\frac{1}{r}}=\{E|\exp(-\bV^{\top}\bm{\gamma}_{a}^*)[\exp(-\bV^{\top} (\widehat{\bm{\gamma}}_{a}-\bm{\gamma}_{a}^*))-1]|^{r}\}^{\frac{1}{r}}\nonumber\\
	&\qquad\leq\frac{1-c_0}{c_0}\{E|\exp(-\bV^{\top} (\widehat{\bm{\gamma}}_{a}-\bm{\gamma}_{a}^*))-1|^{r}\}^{\frac{1}{r}}.\label{bound:Dpiinv}
\end{align}
For any $u\in\R$, by Taylor's theorem, $\exp(u)=1+\exp(tu)u$ with some $t\in(0,1)$. Hence, with some $t\in(0,1)$
\begin{align}
	&|\exp(-\bV^{\top} (\widehat{\bm{\gamma}}_{a}-\bm{\gamma}_{a}^*))-1|=\exp(-t\bV^{\top} (\widehat{\bm{\gamma}}_{a}-\bm{\gamma}_{a}^*))|\bV^{\top} (\widehat{\bm{\gamma}}_{a}-\bm{\gamma}_{a}^*)|\nonumber\\
	&\qquad\overset{(i)}{\leq}[1+\exp(-\bV^{\top} (\widehat{\bm{\gamma}}_{a}-\bm{\gamma}_{a}^*))]|\bV^{\top} (\widehat{\bm{\gamma}}_{a}-\bm{\gamma}_{a}^*)|,\label{bound:expVDgamma}
\end{align}
where (i) holds since for any $t\in(0,1)$ and $u\in\R$, $\exp(tu)\leq\exp(u)$ when $u>0$ and $\exp(tu)\leq\exp(0)=1$ when $u\leq0$, and it follows that $\exp(tu)\leq1+\exp(u)$.

Combining \eqref{bound:Dpiinv} and \eqref{bound:expVDgamma}, we have 
\begin{align*}
	&\left\{E\left|\frac{1}{\widehat{\pi}_{a}(\bS_1)}-\frac{1}{\pi_{a}^{*}(\bS_1)}\right|^{r}\right\}^{\frac{1}{r}}\leq\frac{1-c_0}{c_0}\{E|\exp(-\bV^{\top} (\widehat{\bm{\gamma}}_{a}-\bm{\gamma}_{a}^*))-1|^{r}\}^{\frac{1}{r}}\\
	&\qquad\leq\frac{1-c_0}{c_0}\left\{E\left|[1+\exp(-\bV^{\top} (\widehat{\bm{\gamma}}_{a}-\bm{\gamma}_{a}^*))]\bV^{\top} (\widehat{\bm{\gamma}}_{a}-\bm{\gamma}_{a}^*)\right|^r\right\}^{\frac{1}{r}}\\
	&\qquad\overset{(i)}{\leq}\frac{1-c_0}{c_0}\left\{E\left|\bV^{\top} (\widehat{\bm{\gamma}}_{a}-\bm{\gamma}_{a}^*)\right|^r\right\}^{\frac{1}{r}}\\
	&\qquad\qquad+\frac{1-c_0}{c_0}\left\{E\left|\exp(-\bV^{\top} (\widehat{\bm{\gamma}}_{a}-\bm{\gamma}_{a}^*))\bV^{\top}(\widehat{\bm{\gamma}}_{a}-\bm{\gamma}_{a}^*)\right|^r\right\}^{\frac{1}{r}}\\
	&\qquad\overset{(ii)}{\leq}\frac{1-c_0}{c_0}\left\{E\left|\bV^{\top} (\widehat{\bm{\gamma}}_{a}-\bm{\gamma}_{a}^*)\right|^r\right\}^{\frac{1}{r}}\\
	&\qquad\qquad+\frac{1-c_0}{c_0}\left\{E\left|\exp(-\bV^{\top} (\widehat{\bm{\gamma}}_{a}-\bm{\gamma}_{a}^*))\right|^{2r}\right\}^{\frac{1}{2r}}\left\{E\left|\bV^{\top}(\widehat{\bm{\gamma}}_{a}-\bm{\gamma}_{a}^*)\right|^{2r}\right\}^{\frac{1}{2r}},
\end{align*}
where (i) holds by the Minkowski inequality; (ii) holds by H{\"o}lder's inequality.

Recall the equation \eqref{bound:rVDgamma}, we know that $\{E|\exp(-\bV^{\top} (\widehat{\bm{\gamma}}_{a}-\bm{\gamma}_{a}^*))|^{2r}\}^{\frac{1}{2r}}$ is bounded uniformly.
In addition, recall the equation \eqref{bound:Vdgamma}, by Lemma \ref{lemma:psi2norm}, we have
\begin{align*}
	\{E|\bV^{\top} (\widehat{\bm{\gamma}}_{a}-\bm{\gamma}_{a}^*)|^{r}\}^{\frac{1}{r}}&=O(\|\widehat{\bm{\gamma}}_{a}-\bm{\gamma}_{a}^*\|_2) \overset{(i)}{=}O_p\left(\sqrt{\frac{s_{\bm{\gamma}_{a}}\log(d_1)}{N}}\right),
\end{align*}
where (i) holds by Lemma \ref{lem_pi1}.
Therefore, we obtain that
\begin{align}
	\left\{E\left|\frac{1}{\widehat{\pi}_{a}(\bS_1)}-\frac{1}{\pi_{a}^{*}(\bS_1)}\right|^{r}\right\}^{\frac{1}{r}}=O_p\left(\sqrt{\frac{s_{\bm{\gamma}_{a}}\log(d_1)}{N}}\right).\label{bound:Dpir}
\end{align}
Repeating the same procedure, we obtain that $\{E|\widehat{\rho}_{a}(\bS)|^{-r}\}^{\frac{1}{r}}$ is bounded uniformly and
\begin{align}
	\left\{E\left|\frac{1}{\widehat{\rho}_{a}(\bS)}-\frac{1}{\rho_{a}^{*}(\bS)}\right|^{r}\right\}^{\frac{1}{r}}=O_p\left(\sqrt{\frac{s_{\bm{\delta}_{a}}\log(d)}{N}}\right).\label{bound:Drhor}
\end{align}
Therefore, 
\begin{align*}
	&\left\{E\left|\frac{1}{\widehat{\pi}_{a}(\bS_1)\widehat{\rho}_{a}(\bS)}-\frac{1}{\pi_{a}^{*}(\bS_1)\rho_{a}^{*}(\bS)}\right|^{r}\right\}^{\frac{1}{r}}\\
	&\quad\overset{(i)}{\leq}\left\{E\left|\frac{1}{\widehat{\pi}_{a}(\bS_1)}\left(\frac{1}{\widehat{\rho}_{a}(\bS)}-\frac{1}{\rho_{a}^{*}(\bS)}\right)\right|^{r}\right\}^{\frac{1}{r}}+\left\{E\left|\frac{1}{\rho_{a}^{*}(\bS)}\left(\frac{1}{\widehat{\pi}_{a}(\bS_1)}-\frac{1}{\pi_{a}^{*}(\bS_1)}\right)\right|^{r}\right\}^{\frac{1}{r}}\\
	&\quad\overset{(ii)}{\leq}\{E|\widehat{\pi}_{a}(\bS_1)|^{-2r}\}^{\frac{1}{2r}}\left\{E\left|\frac{1}{\widehat{\rho}_{a}(\bS)}-\frac{1}{\rho_{a}^{*}(\bS)}\right|^{2r}\right\}^{\frac{1}{2r}}+\frac{1}{c_0}\left\{E\left|\frac{1}{\widehat{\pi}_{a}(\bS_1)}-\frac{1}{\pi_{a}^{*}(\bS_1)}\right|^{r}\right\}^{\frac{1}{r}}\\
	&\quad\overset{(iii)}{=}O_p\left(\sqrt{\frac{s_{\bm{\gamma}_{a}}\log(d_1)+s_{\bm{\delta}_{a}}\log(d)}{N}}\right).
\end{align*}
where (i) holds by the Minkowski inequality; (ii) holds by H{\"o}lder's inequality; (iii) holds by \eqref{bound:Dpir}, \eqref{bound:Drhor}, and the fact that $\{E|\widehat{\pi}_{a}(\bS_1)|^{-2r}\}^{\frac{1}{2r}}$ is bounded uniformly.
\end{proof}

\section{Asymptotic theory for general Dynamic Treatment Effect (DTE)}\label{sec:proof_gen}
In this section, we consider general nuisance estimators and general working models. Below we introduce some shorthand notations that increase the readability of the proofs. With a slight abuse of notation, 
$\nuhat_{c}(\cdot)=\nuhat_{c,-k}(\cdot)$, $\muhat_{c}(\cdot)=\muhat_{c,-k}(\cdot)$, $\pihat_{c}(\cdot)=\pihat_{c,-k}(\cdot)$, and $\rhohat_{c}(\cdot)=\rhohat_{c,-k}(\cdot)$ are estimates of the nuisance functions $\nu_c(\cdot)$, $\mu_c(\cdot)$, $\pi_c(\cdot)$, and $\rho_c(\cdot)$ using the training samples $\mathcal W_{-k}$. We also define $\widehat \Delta(\cdot)=\widehat \Delta_{-k}(\cdot) $ and $\widehat{\psi}_{c} (\cdot)=\widehat{\psi}_{c,-k} (\cdot)$ for each $c\in\{a,a'\}$ and $k=1,...,K$. We suppress the dependence on $k$ when possible. Note that we have $\widehat \Delta(W)= \widehat{\psi}_{a} (W) -\widehat{\psi}_{a'} (W) $, where for each $c\in\{a,a'\}$,
\begin{align}
	\widehat{\psi}_{c}(W)&:=\muhat_{c}(\bS_{1})+\mathbbm1_{\{A_{1}=c_1\}}\frac{\nuhat_{c}(\bS)-\muhat_{c}(\bS_{1}) }{\pihat_{c}(\bS_{1})}+\mathbbm1_{\{A_{1}=c_1, A_{2}=c_2\}}\frac{Y-\nuhat_{c}(\bS)}{\pihat_{c}(\bS_{1})\rhohat_{c}(\bS)}.\label{def:DR-scorehat}
\end{align}
Define $\Delta^*(W) = \psi_{a}^* (W)-\psi_{a'}^* (W)$
, where for each $c\in\{a,a'\}$, 
\begin{align}
{\psi}_c^*(W)&:=\mu_{c}^*(\bS_{1})+\mathbbm1_{\{A_{1}=c_1\}}\frac{ \nu_{c}^*(\bS)-\mu_{c}^*(\bS_{1}) }{\pi_{c}^*(\bS_{1})}+\mathbbm1_{\{A_{1}=c_1, A_{2}=c_2\}}\frac{ Y-\nu_{c}^*(\bS)}{\pi_{c}^* (\bS_{1})\rho_{c}^*(\bS)}.\label{def:psi-gen-star}
\end{align}
Define $\check{\theta}_\mathrm{gen}^{(k)}=n^{-1}\sum_{ i\in\mathcal I_k} \widehat \Delta(W_i)$, where $n=N/K=|\mathcal I_k|$. Then $\thetahat_\mathrm{gen}=K^{-1}\sum_{k=1}^{K}\check{\theta}_\mathrm{gen}^{(k)}$.
For each $k=1,...,K$, we divide $\check{\theta}_\mathrm{gen}^{(k)}-\theta$ into four terms $T_{1}, T_{2}, T_{3}, T_{4}$,
\begin{align}\label{representation}
	\check{\theta}_\mathrm{gen}^{(k)}-\theta=n^{-1}\sum_{ i\in\mathcal I_k} \widehat \Delta_{-k} (W_i)-\theta := T_{1}+T_{2}+T_{3}+T_{4},
\end{align}
where
\begin{align}
	T_{1}&:=E[\Delta^*(W)]-\theta,\label{def:T_1}\\
	T_{2}&:=T_{2}^{(k)}:=E[\widehat \Delta(W)-\Delta^*(W)],\label{def:T_2}\\
	T_{3}&:=T_{3}^{(k)}:=\frac{1}{n}\sum_{i\in \mathcal I_k}\Delta^*(W_i)-E[\Delta^*(W)],\label{def:T_3}\\
	T_{4}&:=T_{4}^{(k)}:=\frac{1}{n}\sum_{i\in \mathcal I_k}[\widehat \Delta(W_i)-\Delta^*(W_i)]-E[\widehat \Delta(W)-\Delta^*(W)].\label{def:T_4}
\end{align}

\subsection{Auxiliary Lemmas}
\begin{lemma}\label{lem_T1}
Suppose that at least one of $\mu_{a}^{*}(\cdot)$ and $\pi_{a}^{*}(\cdot)$ is correctly specified, and at least one of the models $\nu_{a}^{*}(\cdot)$ and $\rho_{a}^{*}(\cdot)$ is correctly specified. Let Assumption 1 hold. 
Then,
\begin{align}
	T_{1}=0,\label{45}
\end{align}
where $T_{1}$ is defined as \eqref{def:T_1}. 
\end{lemma}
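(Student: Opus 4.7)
\textbf{Proof proposal for Lemma \ref{lem_T1}.} Since $\psi(W;\eta^*) = \psi_a(W;\eta_a^*) - \psi_{a'}(W;\eta_{a'}^*)$ and $\theta = E[Y(a)] - E[Y(a')]$, the claim reduces by linearity to showing, for each fixed treatment path $a$, the single-path identity $E[\psi_a(W;\eta_a^*)] = E[Y(a)]$ under the stated partial correctness assumption. The plan is to derive one clean double-robust error decomposition and then read off all four subcases from it.

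First I would expand $E[\psi_a(W;\eta_a^*)]$ term by term using iterated expectations. For the IPW-type term at time one, conditioning on $\bS_1$ and using the factorization of the joint law of $(\bS_1,A_1,\bS_2,A_2)$ gives
\[
E\bigl[\tau_a^*(\bS_1)(\nu_a^*(\bS_1,\bS_2)-\mu_a^*(\bS_1))\bigr]
= E\!\left[\tfrac{\pi(\bS_1)}{\pi^*(\bS_1)}\, E\bigl[\nu_a^*(\bS_1,\bS_2)-\mu_a^*(\bS_1)\bigm|\bS_1,A_1=a_1\bigr]\right].
\]
For the augmenting term, consistency $Y=Y(A_1,A_2)$ plus sequential ignorability $Y(a)\perp\!\!\!\perp A_2\mid\bS_1,\bS_2,A_1=a_1$ yield $E[\mathbbm1_{\{A_2=a_2\}}Y\mid\bS_1,\bS_2,A_1=a_1]=\rho_a(\bS_1,\bS_2)\nu_a(\bS_1,\bS_2)$, so that after conditioning on $\bS_1$ one obtains
\[
E\bigl[\omega_a^*(\bS_1,\bS_2)(Y-\nu_a^*(\bS_1,\bS_2))\bigr]
= E\!\left[\tfrac{\pi(\bS_1)}{\pi^*(\bS_1)}\, E\!\left[\tfrac{\rho_a(\bS_1,\bS_2)}{\rho_a^*(\bS_1,\bS_2)}(\nu_a(\bS_1,\bS_2)-\nu_a^*(\bS_1,\bS_2))\Bigm|\bS_1,A_1=a_1\right]\right].
\]
Using the identity $E[Y(a)]=E[\mu_a(\bS_1)]$ (another application of iterated expectations and sequential ignorability) and the definition $\mu_a(\bS_1)=E[\nu_a(\bS_1,\bS_2)\mid\bS_1,A_1=a_1]$, I would group the $\mu^*$ terms by adding and subtracting $\mu_a(\bS_1)$, and group the $\nu^*$ terms by rewriting $\nu_a^*-\mu_a^* = (\nu_a^*-\nu_a)+(\nu_a-\mu_a)+(\mu_a-\mu_a^*)$ and using that the inner conditional mean of $\nu_a-\mu_a$ given $(\bS_1,A_1=a_1)$ is zero.

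After these regroupings the bias collapses to the target double-robust identity
\[
E[\psi_a(W;\eta_a^*)]-E[Y(a)]
= E\!\left[(\mu_a^*-\mu_a)(\bS_1)\Bigl(1-\tfrac{\pi(\bS_1)}{\pi^*(\bS_1)}\Bigr)\right]
+ E\!\left[\tfrac{\pi(\bS_1)}{\pi^*(\bS_1)}\,E\!\left[(\nu_a^*-\nu_a)(\bS_1,\bS_2)\Bigl(1-\tfrac{\rho_a(\bS_1,\bS_2)}{\rho_a^*(\bS_1,\bS_2)}\Bigr)\Bigm|\bS_1,A_1=a_1\right]\right].
\]
The first summand vanishes whenever $\mu_a^*=\mu_a$ or $\pi^*=\pi$, and the second summand vanishes whenever $\nu_a^*=\nu_a$ or $\rho_a^*=\rho_a$; hence under the hypothesis of the lemma both summands are zero, giving $E[\psi_a(W;\eta_a^*)]=E[Y(a)]$. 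Applying this to $a$ and $a'$ and subtracting yields $T_1=0$.

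The main obstacle is purely bookkeeping: correctly handling the fact that $\bS_2$ depends on $A_1$, which forces all conditional expectations involving $\bS_2$ to be taken given $(\bS_1,A_1=a_1)$ rather than given $\bS_1$ alone, and to invoke the time-two part of sequential ignorability (not the time-one part) when simplifying the $\omega_a^*Y$ term. Once the factorization of the likelihood is written down once and reused, the algebra is routine.
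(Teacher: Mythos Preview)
Your proposal is correct and follows essentially the same route as the paper: both reduce the bias $E[\psi_a(W;\eta_a^*)]-E[Y(a)]$ to the sum of the two product terms $(\mu_a^*-\mu_a)(1-\pi/\pi^*)$ and $(\nu_a^*-\nu_a)(1-\rho_a/\rho_a^*)$, with a leftover $\nu_a-\mu_a$ piece killed by the tower property $E[\nu_a\mid\bS_1,A_1=a_1]=\mu_a$. The only cosmetic difference is that the paper leaves the indicator $A_1/\pi^*$ in place (their $T_{1,1},T_{1,2},T_{1,3}$), whereas you integrate out $A_1$ one step earlier to get the ratio $\pi/\pi^*$; the arguments are otherwise identical.
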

\begin{lemma}\label{lem_T2_general}
(a) Suppose that at least one of $\mu_{a}^{*}(\cdot)$ and $\pi_{a}^{*}(\cdot)$ is correctly specified, and at least one of the models $\nu_{a}^{*}(\cdot)$ and $\rho_{a}^{*}(\cdot)$ is correctly specified. Let Assumptions 1, 4 and 5 hold. Then
\begin{align}
T_{2}&=O_p\biggr(b_Nc_N+a_Nd_N+b_N\mathbbm{1}_{\{\pi_{a}^{*}\neq\pi_{a}\}}+a_N\mathbbm{1}_{\{\rho_{a}^{*}\neq\rho_{a}\}}\label{22}\\
&\qquad\qquad+c_N\sqrt{E[\zeta^{2}+\varepsilon^{2}]}\mathbbm{1}_{\{\mu_{a}^{*}\neq\mu_{a}\}}+d_N\sqrt{E[\zeta^{2}]}\mathbbm{1}_{\{\nu_{a}^{*}\neq\nu_{a}\}}\biggr),\nonumber
\end{align}
where $T_{2}$ is defined as \eqref{def:T_2}. 	

(b) Suppose that all the nuisance models $\mu_a^*(\cdot)$, $\nu_a^*(\cdot)$, $\pi_a^*(\cdot)$, and $\rho_a^*(\cdot)$ are correctly specified. Let Assumptions 1 and 5 hold. Then
\begin{align}
T_{2}=O_p\left(b_Nc_N+a_Nd_N\right).\label{S.2}
\end{align}
\end{lemma}

\begin{lemma}\label{lem_T3}
(a) Suppose that at least one of $\mu_{a}^{*}(\cdot)$ and $\pi_{a}^{*}(\cdot)$ is correctly specified, and at least one of the models $\nu_{a}^{*}(\cdot)$ and $\rho_{a}^{*}(\cdot)$ is correctly specified. Let Assumptions 1, 4 hold. Then
\begin{align}
T_{3}=O_p\left(\frac{1}{\sqrt{N}}\left[\sqrt{E[\zeta^{2}]}+\sqrt{E[\varepsilon^{2}]}+\sqrt{E[\xi^2]}\right]\right),\label{39}
\end{align}	
where $\xi:=\mu_{a}(\bS_1)-\mu_{a'}(\bS_1)-\theta$ and $T_{3}$ is defined as \eqref{def:T_3}.

(b) Suppose that all the nuisance models $\mu_a^*(\cdot)$, $\nu_a^*(\cdot)$, $\pi_a^*(\cdot)$, and $\rho_a^*(\cdot)$ are correctly specified. Let Assumption 1 hold. Then we also have \eqref{39}.
\end{lemma}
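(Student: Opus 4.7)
The plan is standard for an i.i.d.\ partial sum: apply Chebyshev's inequality, then bound $\mathrm{Var}(\psi(W;\eta^*))$ by a multiple of $E[\zeta^2] + E[\varepsilon^2] + E[\xi^2]$ through the doubly robust decomposition of the score.

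First, since Lemma~\ref{lem_T1} gives $E[\psi(W;\eta^*)] = \theta$ under the assumptions of either part (a) or (b), $T_3$ is the average of $n = N/K \asymp N$ i.i.d.\ centered copies of $\psi(W;\eta^*) - \theta$. Chebyshev's inequality applied to $E[T_3^2] = \mathrm{Var}(\psi(W;\eta^*))/n$ immediately yields $T_3 = O_p(\sqrt{\mathrm{Var}(\psi(W;\eta^*))/N})$, so it suffices to show $\mathrm{Var}(\psi(W;\eta^*)) \leq C(E[\zeta^2] + E[\varepsilon^2] + E[\xi^2])$ for a constant $C$ depending only on $c_0$.

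Second, I would use the identities $\tau_c^*(\nu_c^* - \mu_c^*) = \varepsilon_c/\pi_c^*$ and, after invoking consistency $Y = Y(A_1,A_2)$, $\omega_c^*(Y - \nu_c^*) = \zeta_c/(\pi_c^* \rho_c^*)$ (where $\pi_c^*$ denotes $\pi^*$ or $1-\pi^*$ depending on $c_1$, and analogously for $\rho_c^*$) to decompose $\psi(W;\eta^*) - \theta = A + B + C$ with
\[
A = \mu_a^*(\bS_1) - \mu_{a'}^*(\bS_1) - \theta, \quad B = \frac{\varepsilon_a}{\pi^*} - \frac{\varepsilon_{a'}}{1-\pi^*}, \quad C = \frac{\zeta_a}{\pi^*\rho_a^*} - \frac{\zeta_{a'}}{(1-\pi^*)(1-\rho_{a'}^*)}.
\]
The overlap Assumption~\ref{assumption_prospensity} bounds the weights by $c_0^{-1}$ and $c_0^{-2}$; moreover the indicators defining $\varepsilon_a$ vs.\ $\varepsilon_{a'}$ (and $\zeta_a$ vs.\ $\zeta_{a'}$) have disjoint supports, so $|\varepsilon| = |\varepsilon_a| + |\varepsilon_{a'}|$ and analogously for $\zeta$. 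Hence $E[B^2] \leq c_0^{-2} E[\varepsilon^2]$ and $E[C^2] \leq c_0^{-4} E[\zeta^2]$.

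Third, I handle $E[A^2]$, which is the crux. For part (b), all four nuisance models are correct, so $\mu_c^* \equiv \mu_c$ and $A \equiv \xi$, giving $E[A^2] = E[\xi^2]$; combining with the bounds on $E[B^2]$ and $E[C^2]$ through the elementary inequality $\mathrm{Var}(\psi) \leq E[(A+B+C)^2] \leq 3(E[A^2]+E[B^2]+E[C^2])$ closes that case. For part (a), at least one of $\mu_c^*, \pi^*$ is correctly specified (and similarly at time two), but some $\mu_c^*$ may be misspecified; the strategy is to expand $A = \xi + (\mu_a^* - \mu_a) - (\mu_{a'}^* - \mu_{a'})$ and then absorb the misspecification residuals into $B$ and $C$ via the DR cancellations that proved Lemma~\ref{lem_T1}: for instance, whenever $\pi^*$ is correct, $E[\tau_a^*(\mu_a^* - \mu_a) \mid \bS_1] = \mu_a^* - \mu_a$ and these extra terms telescope against parts of $B$, contributing at most $O(E[\varepsilon^2])$ to the variance. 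The resulting bound $E[A^2] \leq C(E[\xi^2] + E[\varepsilon^2] + E[\zeta^2])$ combined with Steps~1--2 yields the announced rate.

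The main obstacle is Step~3 in the partially misspecified regime of part (a): the clean identity $A = \xi$ breaks, and one must track how the extra $\mu_c^* - \mu_c$ terms interact with $B$ and $C$ through the DR cross-cancellations so that no genuinely new error mode appears in the variance bound. Steps~1, 2, and the final assembly are otherwise routine.
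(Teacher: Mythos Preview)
Your Steps~1--2 and the treatment of part~(b) match the paper's proof: Chebyshev reduces the problem to bounding $E[(\psi(W;\eta^*)-\theta)^2]$, the score splits into the five pieces $T_{3,1},\dots,T_{3,5}$ (your $A,B,C$ regrouped), and the overlap condition controls the $\zeta$- and $\varepsilon$-pieces just as you describe.

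The gap is in Step~3 for part~(a). You propose to handle the extra term $\mu_c^*-\mu_c$ inside $A$ by ``telescoping against parts of $B$ via the DR cancellations that proved Lemma~\ref{lem_T1}.'' But those cancellations are \emph{first-moment} identities (e.g., $E[(1-\tau_a^*)(\mu_a^*-\mu_a)]=0$ when $\pi^*=\pi$); they do not remove $(\mu_a^*-\mu_a)^2$ from the \emph{variance}. Concretely, after your proposed regrouping one still faces $E[(1-\tau_a^*)^2(\mu_a^*-\mu_a)^2]\asymp E[(\mu_a^*-\mu_a)^2]$, which is exactly the quantity you need to bound and which no amount of DR telescoping controls.

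The paper's mechanism is different and does not split into cases according to which model is correct. It uses the tower property $\mu_a(\bS_1)=E[\nu_a(\bS)\mid\bS_1,A_1=a_1]$ to obtain a Pythagorean inequality: because $\nu_a-\mu_a$ is orthogonal (conditionally on $A_1=a_1$) to any function of $\bS_1$,
\[
E[\mathbbm1_{\{A_1=a_1\}}(\mu_a^*-\mu_a)^2]\ \le\ E[\mathbbm1_{\{A_1=a_1\}}(\nu_a-\mu_a^*)^2]\ \le\ 2E[\varepsilon_a^2]+2E[\mathbbm1_{\{A_1=a_1\}}(\nu_a^*-\nu_a)^2],
\]
and a second orthogonality at the $\nu$-level together with overlap gives $E[\mathbbm1_{\{A_1=a_1\}}(\nu_a^*-\nu_a)^2]\le c_0^{-1}E[\zeta^2]$. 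Hence $E[(\mu_a^*-\mu_a)^2]\le\frac{2}{c_0^2}E[\zeta^2]+\frac{2}{c_0}E[\varepsilon^2]$ \emph{unconditionally}, and plugging this into your $A=\xi+(\mu_a^*-\mu_a)-(\mu_{a'}^*-\mu_{a'})$ finishes part~(a). You correctly flagged this step as the obstacle; the fix is the nested-expectation orthogonality, not DR cancellation.
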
 

\begin{lemma}\label{lem_T4_general}
(a) Suppose that at least one of $\mu_{a}^{*}(\cdot)$ and $\pi_{a}^{*}(\cdot)$ is correctly specified, and at least one of the models $\nu_{a}^{*}(\cdot)$ and $\rho_{a}^{*}(\cdot)$ is correctly specified. Let Assumptions 1, 4 and 5 hold. Then
\begin{align}
T_{4}=O_p\left(\frac{1}{\sqrt{N}}\left[a_N+b_N+\sqrt{E[\zeta^{2}]}+\sqrt{E[\varepsilon^{2}]}\right]\right),\label{40}
\end{align}
where $T_{4}$ is defined as \eqref{def:T_4}. 

(b) Suppose that all the nuisance models $\mu_a^*(\cdot)$, $\nu_a^*(\cdot)$, $\pi_a^*(\cdot)$, and $\rho_a^*(\cdot)$ are correctly specified. Let Assumptions 1 and 5 hold. Then
\begin{align}
T_{4}=O_p\left(\frac{1}{\sqrt{N}}(a_N+b_N+c_N(\sqrt{E[\zeta^{2}]}+\sqrt{E[\varepsilon^{2}]})+d_N\sqrt{E[\zeta^{2}]})\right).\label{B.31}	
\end{align}
\end{lemma}

\begin{lemma}\label{lemma:sigma_mis} 
Suppose that at least one of $\mu_{a}^{*}(\cdot)$ and $\pi_{a}^{*}(\cdot)$ is correctly specified, and at least one of the models $\nu_{a}^{*}(\cdot)$ and $\rho_{a}^{*}(\cdot)$ is correctly specified. Let Assumption 1 hold. 

(a) Assume that $E[\mathbbm1_{\{A_{1}=a_1\}}(\mu_a(\bS_1)-\mu_a^*(\bS_1))^2]\leq C_\mu\sigma^2$, with some constant $C_\mu>0$. Then
\begin{align*}
	E[\zeta^2]+E[\varepsilon^2]+E[\xi^2]\leq\left(\frac{4}{c_0^2}+6C_\mu\right)\sigma^2,
\end{align*}
where $\sigma^2:=E(\Delta^*(W)-\theta)^2$.

(b) Let Assumption 2 hold. Then
$$E[\zeta^2]+E[\varepsilon^2]+E[\xi^2]\leq\left(\frac{1}{c_0^2}+2\sigma_\varepsilon^2\right)\sigma^2.$$
\end{lemma}

\begin{lemma}\label{lem_Lyap}
Suppose that all the nuisance models $\mu_a^*(\cdot)$, $\nu_a^*(\cdot)$, $\pi_a^*(\cdot)$, and $\rho_a^*(\cdot)$ are correctly specified. Let Assumption 1 hold. Then we have for some constants $t>0$ and $C_t>0$ possibly dependent with t, such that
\begin{align}
&\sigma^2:=E(\Delta^*(W)-\theta)^2=E(\Delta(W)-\theta)^2\geq E[\zeta^2]+E[\varepsilon^2]+E[\xi^2],\label{53}\\
&E|\Delta^*(W)-\theta|^{2+t}\leq \frac{2C_t}{c_0^{4+2t}}E\biggl[|\zeta|^{2+t}+|\varepsilon|^{2+t}+|\xi|^{2+t}\biggl].\label{B.36}
\end{align}
\end{lemma}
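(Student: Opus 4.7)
The plan is to exploit the fact that when all four nuisance models are correctly specified, the efficient influence function $\psi(W;\eta)-\theta$ splits into an orthogonal sum of three conditional-mean-zero pieces, one per ``level'' of the natural filtration $\sigma(\bS_1)\subset\sigma(\bS_1,A_1,\bS_2)\subset\sigma(\bS_1,A_1,\bS_2,A_2,Y)$. Concretely, using consistency and the identities $\mu_a(\bs_1)=E[\nu_a(\bS_1,\bS_2)\smallgiven\bS_1=\bs_1,A_1=a_1]$ and $\nu_a(\bs_1,\bs_2)=E[Y\smallgiven\bS_1=\bs_1,\bS_2=\bs_2,A_1=a_1,A_2=a_2]$, I would rewrite
\[
\psi(W;\eta)-\theta \;=\; M_1+M_2+M_3,
\]
where $M_1:=\xi=\mu_1(\bS_1)-\mu_0(\bS_1)-\theta$, $M_2:=\varepsilon_a/\pi(\bS_1)-\varepsilon_{a'}/(1-\pi(\bS_1))$, and $M_3:=\zeta_a/[\pi(\bS_1)\rho_a(\bS_1,\bS_2)]-\zeta_{a'}/[(1-\pi(\bS_1))(1-\rho_{a'}(\bS_1,\bS_2))]$. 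Here I rely on $\varepsilon_a\varepsilon_{a'}=0$ and $\zeta_a\zeta_{a'}=0$ almost surely, since the indicators $\mathbbm1_{\{A_1=1\}}$ and $\mathbbm1_{\{A_1=0\}}$ have disjoint support, so that $\varepsilon^2=\varepsilon_a^2+\varepsilon_{a'}^2$ and $\zeta^2=\zeta_a^2+\zeta_{a'}^2$.

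For \eqref{53}, I would first establish the orthogonality $E[M_iM_j]=0$ for $i\neq j$ by iterated conditioning: under sequential ignorability and correct specification, $E[M_3\smallgiven\bS_1,\bS_2,A_1,A_2]=0$ (the inverse weights are measurable and the conditional mean of each factor $Y-\nu_a(\bS_1,\bS_2)$ vanishes), and $E[M_2\smallgiven\bS_1,A_1]=0$ (again by the definitions of $\mu_a$ and $\nu_a$). Since $M_1$ is $\sigma(\bS_1)$-measurable and $M_2$ is $\sigma(\bS_1,A_1,\bS_2)$-measurable, tower property yields $E[M_1M_2]=E[M_1M_3]=E[M_2M_3]=0$. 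Therefore $\sigma^2=E[M_1^2]+E[M_2^2]+E[M_3^2]$. The overlap assumption gives $0<\pi(\bS_1),\rho_a(\bS_1,\bS_2)\le 1$, so $1/\pi^2\ge1$ and $1/(\pi\rho)^2\ge1$; combined with the disjoint-support identities above, this gives $E[M_2^2]\ge E[\varepsilon_a^2+\varepsilon_{a'}^2]=E[\varepsilon^2]$ and $E[M_3^2]\ge E[\zeta^2]$, completing \eqref{53}.

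For \eqref{B.36}, the overlap lower bound $\pi,\rho\in[c_0,1-c_0]$ gives the pointwise bounds $|M_2|\le|\varepsilon|/c_0$ (using disjoint supports so only one of the two fractions is nonzero at a time) and $|M_3|\le|\zeta|/c_0^2$. The $c_r$-inequality $|x+y+z|^{2+t}\le 3^{1+t}(|x|^{2+t}+|y|^{2+t}+|z|^{2+t})$ then yields
\[
E|\psi(W;\eta)-\theta|^{2+t}\le 3^{1+t}\!\left(E|\xi|^{2+t}+c_0^{-(2+t)}E|\varepsilon|^{2+t}+c_0^{-(4+2t)}E|\zeta|^{2+t}\right),
\]
and factoring out $c_0^{-(4+2t)}$ (since $c_0<1$) gives the claimed form with $C_t$ depending only on $t$.

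The main obstacle is verifying the mean-zero conditional properties of $M_2$ and $M_3$ in a way that does not accidentally require independence of $\zeta_a,\varepsilon_a$ from the covariates—only conditional mean-zero is needed, and this must be extracted carefully from the nested definitions in \eqref{5}–\eqref{6} together with consistency and ignorability; the disjoint-support trick for the two treatment paths is the other subtlety and makes the cross terms $\varepsilon_a\varepsilon_{a'},\zeta_a\zeta_{a'}$ vanish exactly, which is what allows the bounds to collapse to $E[\varepsilon^2]$ and $E[\zeta^2]$ rather than only to a sum with mixed cross terms.
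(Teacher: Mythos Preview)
Your proposal is correct and follows essentially the same approach as the paper: an orthogonal (martingale-increment) decomposition of $\psi(W;\eta)-\theta$ via the tower property, then lower-bounding the pieces using $\pi,\rho\le 1$ for \eqref{53} and upper-bounding using $\pi,\rho\ge c_0$ together with a $c_r$-inequality for \eqref{B.36}. The only cosmetic difference is that the paper keeps the $a=(1,1)$ and $a'=(0,0)$ contributions separate (five orthogonal pieces $O_1,O_3,O_4,O_6,O_8$, leading to $C_t=5^{1+t}$), whereas you pair them into three increments $M_1,M_2,M_3$ using the disjoint-support trick $\varepsilon_a\varepsilon_{a'}=\zeta_a\zeta_{a'}=0$, which tightens the constant to $2C_t=3^{1+t}$; the argument is otherwise identical.
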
 

\begin{lemma}\label{lem_sigmahat}
Suppose that all the nuisance models $\mu_a^*(\cdot)$, $\nu_a^*(\cdot)$, $\pi_a^*(\cdot)$, and $\rho_a^*(\cdot)$ are correctly specified. Let Assumption 1 hold.
Define $\sigmahat_\mathrm{gen}^2=N^{-1}\sum_{k=1}^{K}\sum_{i\in \mathcal I_k}(\widehat \Delta(W_i)-\thetahat_\mathrm{gen})^2$ and $\sigma^2:=E(\Delta^*(W)-\theta)^2$. If 
$$ \thetahat_\mathrm{gen}-\theta=O_p(\sigma/\sqrt{N}), \qquad\left[\frac{1}{n}\sum_{i\in \mathcal I_k}|\widehat \Delta(W_i)-\Delta^*(W_i)|^2\right]^{\frac{1}{2}}=o_p(\sigma),$$
for each $k\leq K$, and $[ {E|(\Delta^*(W)-\theta)|^{2+t}}]^{\frac{2}{2+t}}<C \sigma^2$ for some constant $C$, we have
\begin{align}
	\sigmahat_\mathrm{gen}^2-\sigma^2=o_p(\sigma^2).\label{consistency_varest}
\end{align}
\end{lemma}

\subsection{Proof of Theorem 6}
Recall the representation \eqref{representation}. By Lemmas \ref{lem_T1}, \ref{lem_T2_general}, \ref{lem_T3}, and \ref{lem_T4_general}, we have
\begin{align*}
	T_{1}&=0,\\
	T_{2}^{(k)}&=O_p\biggr(b_Nc_N+a_Nd_N+b_N\mathbbm{1}_{\{\pi_{a}^{*}\neq\pi_{a}\}}+a_N\mathbbm{1}_{\{\rho_{a}^{*}\neq\rho_{a}\}}\\
	&\qquad\qquad+c_N\sqrt{E[\zeta^{2}+\varepsilon^{2}]}\mathbbm{1}_{\{\mu_{a}^{*}\neq\mu_{a}\}}+d_N\sqrt{E[\zeta^{2}]}\mathbbm{1}_{\{\nu_{a}^{*}\neq\nu_{a}\}}\biggr),\nonumber\\
	T_{3}^{(k)}&=O_p\left(\frac{1}{\sqrt{N}}\left[\sqrt{E[\zeta^{2}]}+\sqrt{E[\varepsilon^{2}]}+\sqrt{E[\xi^2]}\right]\right),\\
	T_{4}^{(k)}&=O_p\left(\frac{1}{\sqrt{N}}\left[a_N+b_N+\sqrt{E[\zeta^{2}]}+\sqrt{E[\varepsilon^{2}]}\right]\right).
\end{align*}
Together with Lemma \ref{lemma:sigma_mis} and further assume that $E(\mu_a^*(\bS_1)-\mu_a(\bS_1))^2\leq C_\mu\sigma^2$ with some constant $C_\mu>0$, we obtain 
\begin{align*} 
	\thetahat_\mathrm{gen}-\theta&=K^{-1}\sum_{k=1}^K(T_{1}+T_{2}^{(k)}+T_{3}^{(k)}+T_{4}^{(k)})\\
	&=O_p\biggr(b_Nc_N+a_Nd_N+b_N\mathbbm{1}_{\{\pi_{a}^{*}\neq\pi_{a}\}}+a_N\mathbbm{1}_{\{\rho_{a}^{*}\neq\rho_{a}\}}\\
	&\qquad\qquad+c_N\sigma\mathbbm{1}_{\{\mu_{a}^{*}\neq\mu_{a}\}}+d_N\sigma\mathbbm{1}_{\{\nu_{a}^{*}\neq\nu_{a}\}}+\frac{1}{\sqrt{N}}\sigma\biggr).
\end{align*}

\subsection{Proof of Theorem 7}
In this theorem, we consider correctly specified nuisance models.
\subsubsection{Consistency} 
Recall the representation \eqref{representation}, by Lemmas \ref{lem_T1}, \ref{lem_T2_general}, \ref{lem_T3}, and \ref{lem_T4_general}, we have
\begin{align}
	T_{1}&=0,\label{rate:T_1}\\
	T_{2}^{(k)}&=O_p\left(b_Nc_N+a_Nd_N\right),\label{rate:T_2}\\
	T_{3}^{(k)}&=O_p\left(\frac{1}{\sqrt{N}}\left[\sqrt{E[\zeta^{2}]}+\sqrt{E[\varepsilon^{2}]}+\sqrt{E[\xi^2]}\right]\right),\nonumber\\
	T_{4}^{(k)}&=O_p\left(\frac{1}{\sqrt{N}}(a_N+b_N+c_N(\sqrt{E[\zeta^{2}]}+\sqrt{E[\varepsilon^{2}]})+d_N\sqrt{E[\zeta^{2}]})\right).\label{rate:T_4}
\end{align}
By assumption, $b_Nc_N+a_Nd_N=o(\sigma N^{-1/2})$. Together with Lemma \ref{lem_Lyap}, we obtain that 
\begin{align}
	\thetahat_\mathrm{gen}-\theta&=K^{-1}\sum_{k=1}^K(T_{1}+T_{2}^{(k)}+T_{3}^{(k)}+T_{4}^{(k)})\nonumber\\
	&=O_p\left(\frac{1}{\sqrt{N}}\left[\sqrt{E[\zeta^{2}]}+\sqrt{E[\varepsilon^{2}]}+\sqrt{E[\xi^2]}\right]+b_Nc_N+a_Nd_N\right)\nonumber\\
	&\qquad+O_p\left(\frac{1}{\sqrt{N}}(a_N+b_N+c_N(\sqrt{E[\zeta^{2}]}+\sqrt{E[\varepsilon^{2}]})+d_N\sqrt{E[\zeta^{2}]})\right)\nonumber\\
	&=O_p\left(\frac{1}{\sqrt{N}}\sigma\right).\label{54}
\end{align}

\subsubsection{Asymptotic Normality}
Now, we demonstrate that $\sqrt{N}\sigma^{-1}(\thetahat_\mathrm{gen}-\theta)\leadsto N(0,1)$. By \eqref{rate:T_1}, \eqref{rate:T_2}, and \eqref{rate:T_4}, under Assumption 5 and $b_Nc_N+a_Nd_N=o(\sigma N^{-1/2})$, we have
$$\sqrt{n}\sigma^{-1}(T_{1}+T_{2}^{(k)}+T_{4}^{(k)})=o_p(1)$$
for each $k\leq K$. Hence, we only need to show 
$$\sqrt{N}\sigma^{-1}K^{-1}\sum_{k=1}^KT_{3}^{(k)}=\sqrt N \sigma^{-1}\left(N^{-1}\sum_{i=1}^N\Delta^*(W_i)-\theta\right)\leadsto N(0,1),$$
where $T_{3}^{(k)}$ is defined as \eqref{def:T_3}. By Lyapunov's central limit theorem, it suffices to show the following Lyapunov's condition holds: with some $t>0$,
\begin{align}\label{D.40}
	\lim_{n \to \infty}\frac{E|\Delta^*(W)-\theta|^{2+t}}{n^{\frac{t}{2}}\sigma^{2+t}}=0.
\end{align}

\paragraph*{Step 1}
To check Lyapunov's condition, it suffices to show that for some constant $C'>0$,
\begin{align}\label{D.41}
	\frac{E|\Delta^*(W)-\theta|^{2+t}}{\sigma^{2+t}}<C'.
\end{align}
By Lemma \ref{lem_Lyap}, we have, for some constants $t>0$ and $C_t>0$,
\begin{align}
	&\frac{E|\Delta^*(W)-\theta|^{2+t}}{\sigma^{2+t}}\leq \frac{2C_t}{c_0^{4+2t}}\frac{E[|\zeta|^{2+t}+|\varepsilon|^{2+t}+|\xi|^{2+t}]}{(E[\zeta^2]+E[\varepsilon^2]+E[\xi^2])^{1+\frac{t}{2}}}\nonumber\\
	&\qquad\leq \frac{2C_t}{c_0^{4+2t}}\left(\frac{E[|\zeta|^{2+t}]}{(E[\zeta^2])^{1+\frac{t}{2}}}+\frac{E[|\varepsilon|^{2+t}]}{(E[\varepsilon^2])^{1+\frac{t}{2}}}+\frac{E[|\xi|^{2+t}]}{(E[\xi^2])^{1+\frac{t}{2}}}\right)\leq\frac{2CC_t}{c_0^{4+2t}},\label{B.39}
\end{align}
where the last inequality follows from Assumption 4.
Taking $C'=2CC_t/c_0^{4+2t}$, we get \eqref{D.40} and hence the Lyapunov's condition holds.

\paragraph*{Step 2}
By \eqref{54}, we have $ \thetahat_\mathrm{gen}-\theta=O_p(\sigma/\sqrt{N})$. Here, we show that, for each $k\leq K$,
\begin{align}\label{D.43}
	\left[\frac{1}{n}\sum_{i\in \mathcal I_k}|\widehat \Delta(W_i)-\Delta^*(W_i)|^2\right]^{\frac{1}{2}}=o_p(\sigma).
\end{align}
Note that
\begin{align}
	&E\biggl[\frac{1}{n}\sum_{i\in \mathcal I_k}|\widehat \Delta(W_i)-\Delta^*(W_i)|^2\biggl]^{\frac{1}{2}}\overset{(i)}{\leq}\biggl\{E\biggl[\frac{1}{n}\sum_{i\in \mathcal I_k}|\widehat \Delta(W_i)-\Delta^*(W_i)|^2\biggl]\biggl\}^{\frac{1}{2}}\label{E_notnew2}\\
	&\qquad\overset{(ii)}{=}[E|\widehat \Delta(W)-\Delta^*(W)|^2]^{\frac{1}{2}}\label{E_new1}\\
	&\qquad\overset{(iii)}{=}O_p\left(a_N+b_N+c_N(\sqrt{E[\zeta^{2}]}+\sqrt{E[\varepsilon^{2}]})+d_N\sqrt{E[\zeta^{2}]}\right),\nonumber
\end{align}
where in \eqref{E_notnew2}, the expectations are taken w.r.t. the joint distribution of $(W_i)_{i\in \mathcal I_k}$; in \eqref{E_new1}, the expectation is taken w.r.t. the joint distribution of a new $W$. In the above, (i) holds by Jensen's inequality; (ii) holds since the estimator of nuisance functions are independent of $\{W_i\}_{i\in \mathcal I_k}$ based on cross-fitting, $\{W_i\}_{i\in \mathcal I_k}$ are i.i.d. distributed and $W$ is an independent copy of them; (iii) holds by Lemma \ref{lem_T4_general}. 
By Markov's inequality, we have
\begin{align*}
	&\left[\frac{1}{n}\sum_{i\in \mathcal I_k}|\widehat \Delta(W_i)-\Delta^*(W_i)|^2\right]^{\frac{1}{2}}\\
	&\qquad=O_p\left(a_N+b_N+c_N(\sqrt{E[\zeta^{2}]}+\sqrt{E[\varepsilon^{2}]})+d_N\sqrt{E[\zeta^{2}]}\right)=o_p(\sigma).
\end{align*}
Together with \eqref{54}, \eqref{D.40}, \eqref{D.43}, and Lemma \ref{lem_sigmahat}, we conclude that 
$$\sigmahat_\mathrm{gen}^2-\sigma^2=o_p(\sigma^2).$$

\subsection{Proofs of Auxiliary Lemmas}

\begin{proof}[Proof of Lemma \ref{lem_T1}]
Recall the definition \eqref{def:T_1}. Since $\theta = \theta_a - \theta_{a'}$ and $ \Delta^*(W) = \psi_{a}^* (W)-\psi_{a'}^* (W)$, we have 
\begin{align*}
	T_{1}=(E[ \psi_{a}^*(W)]-\theta_a)-(E[\psi_{a'}^* (W)]-\theta_{a'}).
\end{align*}
By Lemma 1, we have $\theta_c=E[ \psi_{c}^*(W)]$ for each $c\in\{a,a'\}$. Therefore, $T_{1}=0$.
\end{proof}

\begin{proof}[Proof of Lemma \ref{lem_T2_general}]
In this proof, the expectations are taken w.r.t. the distribution of new observations $\bS$ (or only $\bS_1$ if $\bS_2$ is not involved). We only focus on the treatment paths $a=(1,1)$ and $a'=(0,0)$. We begin by decomposing $T_{2}$, \eqref{def:T_2}, as a sum of six terms
\begin{align}
	\widehat \Delta(W)-\Delta^*(W)=\sum_{i=1}^{6}Q_i,\label{eq:psi_Q}
\end{align}
where
\begin{align}
	Q_{1}&:=\frac{A_{1}A_2}{\widehat{\pi}_{a}(\bS_1)\widehat{\rho}_{a}(\bS)}(Y-\widehat{\nu}_{a}(\bS))-\frac{A_{1}A_2}{\pi_{a}^{*}(\bS_1)\rho_{a}^{*}(\bS)}(Y-\nu_{a}^{*}(\bS)),\label{Q_1}\\ 
	Q_2&:=\frac{A_{1}}{\widehat{\pi}_{a}(\bS_1)}(\widehat{\nu}_{a}(\bS)-\widehat{\mu}_{a}(\bS_1))-\frac{A_{1}}{\pi_{a}^{*}(\bS_1)}(\nu_{a}^{*}(\bS)-\mu_{a}^{*}(\bS_1)),\label{Q_2}\\
	Q_{3}&:=\widehat{\mu}_{a}(\bS_1)-\mu_{a}^{*}(\bS_1),\label{Q_3}\\
	Q_{4}&:=-\frac{(1-A_{1})(1-A_2)}{\widehat{\pi}_{a'}(\bS_1)\widehat{\rho}_{a'}(\bS)}(Y-\widehat{\nu}_{a'}(\bS))\nonumber\\
	&\quad+\frac{(1-A_{1})(1-A_2)}{\pi_{a'}^{*}(\bS_1)\rho_{a'}^{*}(\bS)}(Y-\nu_{a'}^{*}(\bS)),\label{Q_4}\\
	Q_{5}&:=-\frac{1-A_{1}}{\widehat{\pi}_{a'}(\bS_1)}(\widehat{\nu}_{a'}(\bS)-\widehat{\mu}_{a'}(\bS_1))+\frac{1-A_{1}}{\pi_{a'}^{*}(\bS_1)}(\nu_{a'}^{*}(\bS)-\mu_{a'}^{*}(\bS_1)),\label{Q_5}\\
	Q_{6}&:=-\widehat{\mu}_{a'}(\bS_1)+\mu_{a'}^{*}(\bS_1).\label{Q_6}
\end{align}
Hence, we have the following representation for $T_{2}$:
\begin{align}
	T_{2}=E[\widehat \Delta(W)-\Delta^*(W)]=\sum_{i=1}^6E[Q_i],\label{eq:T3_Q}
\end{align}
where the expecatations are only taken w.r.t. the distribution of the new obseravtion $W$.

(a) We first obtain an upper bound for $E[Q_1+Q_2+Q_3]$. By the tower rule,
\begin{align*}
	E[Q_1]&=E\left[\frac{A_{1}\rho_{a}(\bS)}{\widehat{\pi}_{a}(\bS_1)\widehat{\rho}_{a}(\bS)}(\nu_{a}(\bS)-\widehat{\nu}_{a}(\bS))-\frac{A_{1}\rho_{a}(\bS)}{\pi_{a}^{*}(\bS_1)\rho_{a}^{*}(\bS)}(\nu_{a}(\bS)-\nu_{a}^{*}(\bS))\right].
\end{align*}
Through rearranging, we have the following representation:
\begin{align}
	E[Q_1+Q_2+Q_3]=\sum_{i=1}^8R_i,\label{eq:Q_R}
\end{align}
where
\begin{align}
	R_1:=&E\left[\frac{A_{1}\rho_{a}^{*}(\bS)(\widehat{\nu}_{a}(\bS)-\nu_{a}^{*}(\bS))}{\pihat_{a}(\bS_1)}\left(\frac{1}{\rho_{a}^{*}(\bS)}-\frac{1}{\widehat\rho_{a}(\bS)}\right)\right],\label{def:R_1}\\
	R_2:=&E\left[\pi_{a}^{*}(\bS_1)(\widehat{\mu}_{a}(\bS_1)-\mu_{a}^{*}(\bS_1))\left(\frac{1}{\pi_{a}^*(\bS_1)}-\frac{1}{\widehat{\pi}_{a}(\bS_1)}\right)\right],\label{def:R_2}\\
	R_3:=&E\left[\frac{A_{1}(\rho_{a}^{*}(\bS)-\rho_{a}(\bS))(\widehat{\nu}_{a}(\bS)-\nu_{a}^{*}(\bS))}{\widehat{\pi}_{a}(\bS_1)\widehat{\rho}_{a}(\bS)}\right],\label{def:R_3}\\
	R_4:=&E\left[\frac{(\pi_{a}^*(\bS_1)-A_{1})(\widehat{\mu}_{a}(\bS_1)-\mu_{a}^{*}(\bS_1))}{\widehat{\pi}_{a}(\bS_1)}\right]\nonumber\\
	\overset{(i)}{=}&E\left[\frac{(\pi_{a}^*(\bS_1)-\pi_{a}(\bS_1))(\widehat{\mu}_{a}(\bS_1)-\mu_{a}^{*}(\bS_1))}{\widehat{\pi}_{a}(\bS_1)}\right],\label{def:R_4}\\
	R_5:=&E\left[\frac{A_{1}\rho_{a}^{*}(\bS)(\nu_{a}^{*}(\bS)-\nu_{a}(\bS))}{\widehat{\pi}_{a}(\bS_1)}\left(\frac{1}{\rho_{a}^{*}(\bS)}-\frac{1}{\widehat\rho_{a}(\bS)}\right)\right],\label{def:R_5}\\
	R_6:=&E\left[A_{1}(\mu_{a}^{*}(\bS_1)-\mu_{a}(\bS_1))\left(\frac{1}{\pi_{a}^*(\bS_1)}-\frac{1}{\widehat{\pi}_{a}(\bS_1)}\right)\right],\label{def:R_6}\\
	R_7:=&E\left[\left(\frac{A_{1}}{\widehat\pi_{a}(\bS_1)\widehat\rho_{a}(\bS)}-\frac{A_{1}}{\pi_{a}^*(\bS_1)\rho_{a}^*(\bS)}\right)(\rho_{a}^*(\bS)-\rho_{a}(\bS))(\nu_{a}^*(\bS)-\nu_{a}(\bS))\right]\nonumber\\
	\overset{(ii)}{=}&0,\label{def:R_7}\\
	R_8:=&E\left[\frac{A_{1}(\widehat{\pi}_{a}(\bS_1)-\pi_{a}^*(\bS_1))(\mu_{a}(\bS_1)-\nu_{a}(\bS))}{\widehat{\pi}_{a}(\bS_1)\pi_{a}^*(\bS_1)}\right]\overset{(iii)}{=}0.\label{def:R_8}
\end{align}
Here, (i) holds by the tower rule; (ii) holds since either $\rho_{a}^*(\cdot)=\rho_{a}(\cdot)$ or $\mu_{a}^*(\cdot)=\mu_{a}(\cdot)$ by assumption; (iii) holds by the tower rule and the fact that $\mu_{a}(\bS_{1})=E[\nu_{a}(\bS )|\bS_{1},A_1=a_1]$.
We condition on the following event
\begin{align}\label{def:E4}
	\mathcal E_4:=\left\{P(c_0\leq \widehat{\pi}_{a}(\bS_1)\leq1-c_0)=1,\;\;P(c_0\leq \widehat{\rho}_{a}(\bS)\leq1-c_0)=1\right\}.
\end{align}
Under Assumption 5, the event $\mathcal E_4$ occurs with probability approaching one. 
In the following, we use Cauchy-Schwarz inequality to obtain an upper bound $R_i$ $(i\in\{1,\dots,6\})$.
For $R_1+R_2$, on the event $\mathcal E_4$, we have 
\begin{align}
	R_1+R_2&\leq\frac{1}{c_0^2}[E(\widehat{\rho}_{a}(\bS)-\rho_{a}^{*}(\bS))^{2}]^{\frac{1}{2}}[E(\widehat{\nu}_{a}(\bS)-\nu_{a}^{*}(\bS))^{2}]^{\frac{1}{2}}\nonumber\\
	&\qquad+\frac{1}{c_0}[E(\widehat{\pi}_{a}(\bS_1)-\pi_{a}^{*}(\bS_1))^{2}]^{\frac{1}{2}}[E(\widehat{\mu}_{a}(\bS_1)-\mu_{a}^{*}(\bS_1))^{2}]^{\frac{1}{2}}\nonumber\\
	&=O_p\left(b_Nc_N+a_Nd_N\right),\label{29}
\end{align}
under Assumption 5. For $R_3+R_4$, on the event $\mathcal E_4$, we have
\begin{align*}
	R_3+R_4&\leq\frac{1}{c_0^2}[E(\rho_{a}^{*}(\bS)-\rho_{a}(\bS))^{2}]^{\frac{1}{2}}[E(\widehat{\nu}_{a}(\bS)-\nu_{a}^{*}(\bS))^{2}]^{\frac{1}{2}}\\
	&\qquad+\frac{1}{c_0}[E(\pi_{a}^{*}(\bS_1)-\pi_{a}(\bS_1))^{2}]^{\frac{1}{2}}[E(\widehat{\mu}_{a}(\bS_1)-\mu_{a}^{*}(\bS_1))^{2}]^{\frac{1}{2}}\\
	&\leq\frac{\mathbbm{1}_{\{\rho_{a}^{*}(\cdot)\neq\rho_{a}(\cdot)\}}}{c_0^2}[E(\widehat{\nu}_{a}(\bS)-\nu_{a}^{*}(\bS))^{2}]^{\frac{1}{2}}+\frac{\mathbbm{1}_{\{\pi_{a}^{*}(\cdot)\neq\pi_{a}(\cdot)\}}}{c_0}[E(\widehat{\mu}_{a}(\bS_1)-\mu_{a}^{*}(\bS_1))^{2}]^{\frac{1}{2}},
\end{align*}
since 
\begin{align*}
	&E(\rho_{a}^{*}(\bS)-\rho_{a}(\bS))^2=\mathbbm{1}_{\{\rho_{a}^{*}(\cdot)\neq\rho_{a}(\cdot)\}}E(\rho_{a}^{*}(\bS)-\rho_{a}(\bS))^2\overset{(i)}{\leq}\mathbbm{1}_{\{\rho_{a}^{*}(\cdot)\neq\rho_{a}(\cdot)\}},\\
	&E(\pi_{a}^{*}(\bS_1)-\pi_{a}(\bS_1))^2=\mathbbm{1}_{\{\pi_{a}^{*}(\cdot)\neq\pi_{a}(\cdot)\}}E(\pi_{a}^{*}(\bS_1)-\pi_{a}(\bS_1))^2\overset{(ii)}{\leq}\mathbbm{1}_{\{\pi_{a}^{*}(\cdot)\neq\pi_{a}(\cdot)\}},
\end{align*}
where (i) and (ii) hold because $\rho_{a}(\bS)=E(A_2|\bS,A_{1}=a_1)\in(0,1)$, $\pi_{a}(\bS_1)=E(A_{1}|\bS_1)\in(0,1)$, and, under Assumption 4, $\rho_{a}^*(\bS),\pi_{a}^*(\bS_1)\in(0,1)$ with probability one.
Hence, under Assumption 5, we have
\begin{align}
	R_3+R_4
	=O_p\left(b_N\mathbbm{1}_{\{\pi_{a}^{*}(\cdot)\neq\pi_{a}(\cdot)\}}+a_N\mathbbm{1}_{\{\rho_{a}^{*}(\cdot)\neq\rho_{a}(\cdot)\}}\right).\label{30}
\end{align}
As for $R_5+R_6$, similarly, we have
\begin{align}
	R_5+R_6&\leq \frac{1}{c_0^2}[E(\widehat{\rho}_{a}(\bS)-\rho_{a}^{*}(\bS))^{2}]^{\frac{1}{2}}\left[E[A_{1}(\nu_{a}^{*}(\bS)-\nu_{a}(\bS))^{2}]\right]^{\frac{1}{2}}\nonumber\\
	&\qquad+\frac{1}{c_0^2}[E(\widehat{\pi}_{a}(\bS_1)-\pi_{a}^{*}(\bS_1))^{2}]^{\frac{1}{2}}\left[E[A_{1}(\mu_{a}^{*}(\bS_1)-\mu_{a}(\bS_1))^{2}]\right]^{\frac{1}{2}}\label{30'}
\end{align}
Here, we need upper bound for $[E[A_{1}(\nu_{a}^{*}(\bS)-\nu_{a}(\bS))^{2}]]^{\frac{1}{2}}$ and $[E[A_{1}(\mu_{a}^{*}(\bS_1)-\mu_{a}(\bS_1))^{2}]]^{\frac{1}{2}}$. By definition,
\begin{align*}
	\zeta=\zeta_{a}+\zeta_{a'},\quad\varepsilon=\varepsilon_{a}+\varepsilon_{a'},\quad Y=Y(a)A_{1}A_2+Y(0,0)(1-A_{1})(1-A_2),
\end{align*} 
where
\begin{align*}
	\zeta_{a}=A_{1}A_2\left(Y(a)-\nu_{a}^{*}(\bS)\right),\quad\varepsilon_{a}=A_{1}\left(\nu_{a}^{*}(\bS)-\mu_{a}^{*}(\bS_1)\right).
\end{align*}
Hence, we have
\begin{align}
	E[\zeta^{2}]\geq E[A_{1}A_2\zeta^{2}]=E[\zeta_{a}^{2}]=E[A_{1}A_2(Y-\nu_{a}^{*}(\bS))^{2}]\label{23}
\end{align}
Note that
\begin{align*}
	&E[A_{1}A_2(Y-\nu_{a}(\bS))(\nu_{a}(\bS)-\nu_{a}^{*}(\bS))]\\
	&\quad\overset{(i)}{=}E[E[A_{1}A_2(Y(a)-\nu_{a}(\bS))(\nu_{a}(\bS)-\nu_{a}^{*}(\bS))|\bS, A_{1}=a_1]P(A_{1}=a_1|\bS)]\\
	&\quad\overset{(ii)}{=}E[E[A_2|\bS, A_{1}=a_1](E[Y(a)|\bS, A_{1}=a_1]-\nu_{a}(\bS))(\nu_{a}(\bS)-\nu_{a}^{*}(\bS))P(A_{1}=a_1|\bS)]\\
	&\quad\overset{(iii)}{=}0,
\end{align*}
where (i) holds by the tower rule and the fact that $A_{1}A_2Y=A_{1}A_2Y(a)$; (ii) holds under Assumption 1; (iii) holds since $\nu_{a}(\bS)=E[Y(a)|\bS, A_{1}=a_1]$. Therefore,
\begin{align}
	&E[A_{1}A_2(Y-\nu_{a}^{*}(\bS))^{2}]=E[A_{1}A_2[(Y-\nu_{a}(\bS))^{2}+(\nu_{a}(\bS)-\nu_{a}^{*}(\bS))^{2}]]\label{eq:Y-nustar}\\
	&\qquad\geq E[A_{1}A_2(\nu_{a}^{*}(\bS)-\nu_{a}(\bS))^{2}]\overset{(i)}{=}E[A_{1}\rho_{a}(\bS)(\nu_{a}^{*}(\bS)-\nu_{a}(\bS))^{2}]\nonumber\\
	&\qquad\overset{(ii)}{\geq} c_0E[A_{1}(\nu_{a}^{*}(\bS)-\nu_{a}(\bS))^{2}],\nonumber
\end{align}
where (i) holds by the tower rule; (ii) holds under Assumption 1. Together with \eqref{23}, we have
\begin{align}\label{25}
	E[A_{1}(\nu_{a}^{*}(\bS)-\nu_{a}(\bS))^{2}]\leq\frac{1}{c_0}E[\zeta^2].
\end{align}
Besides, note that
\begin{align*}
	&E[A_{1}(\nu_{a}(\bS)-\mu_{a}(\bS_1))(\mu_{a}(\bS_1)-\mu_{a}^*(\bS_1))]\\
	&\qquad=E[(\mu_{a}(\bS_1)-\mu_{a}^*(\bS_1))E[(\nu_{a}(\bS)-\mu_{a}(\bS_1))|\bS_1,A_{1}=a_1]P(A_{1}=a_1|\bS)]=0,
\end{align*}
since $E[\nu_{a}(\bS)|\bS_1,A_{1}=a_1]=\mu_{a}(\bS_1)$. Therefore, we have
\begin{align}
	E[A_{1}(\nu_{a}(\bS)-\mu_{a}^*(\bS_1))^2]&=E[A_{1}(\nu_{a}(\bS)-\mu_{a}(\bS_1))^2]+E[A_{1}(\mu_{a}(\bS_1)-\mu_{a}^*(\bS_1))^2]\nonumber\\
	&\geq E[A_{1}(\mu_{a}(\bS_1)-\mu_{a}^*(\bS_1))^2].\label{26}
\end{align}
Additionally, observe that
\begin{align*}
	E[A_{1}(\nu_{a}(\bS)-\mu_{a}^*(\bS_1))^2]&\leq2E[A_{1}(\nu_{a}^*(\bS)-\nu_{a}(\bS))^2]+2E[\varepsilon_{a}^2]\\
	&\overset{(i)}{\leq}\frac{2}{c_0}E[\zeta^2]+2E[A_{1}\varepsilon^2]\leq\frac{2}{c_0}E[\zeta^2]+2E[\varepsilon^2],
\end{align*}
where (i) holds by \eqref{25} and the fact that $\varepsilon_{a}^2=A_{1}\varepsilon^2$. Together with \eqref{26}, we obtain
\begin{align}
	E[A_{1}(\mu_{a}^*(\bS_1)-\mu_{a}(\bS_1))^2]\leq\frac{2}{c_0}E[\zeta^2]+2E[\varepsilon^2].\label{28}
\end{align}
Therefore, under Assumption 5,
\begin{align}
	R_5+R_6&\leq \frac{1}{c_0^2}[E(\widehat{\rho}_{a}(\bS)-\rho_{a}^{*}(\bS))^{2}]^{\frac{1}{2}}[E[A_{1}(\nu_{a}^{*}(\bS)-\nu_{a}(\bS))^{2}]]^{\frac{1}{2}}\nonumber\\
	&\qquad+\frac{1}{c_0^2}[E(\widehat{\pi}_{a}(\bS_1)-\pi_{a}^{*}(\bS_1))^{2}]^{\frac{1}{2}}[E[A_{1}(\mu_{a}^{*}(\bS_1)-\mu_{a}(\bS_1))^{2}]]^{\frac{1}{2}}\nonumber\\
	&= O_p\left(c_N\sqrt{E[\zeta^{2}+\varepsilon^{2}]}\mathbbm{1}_{\{\mu_{a}^{*}(\cdot)\neq\mu_{a}(\cdot)\}}+d_N\sqrt{E[\zeta^{2}]}\mathbbm{1}_{\{\nu_{a}^{*}(\cdot)\neq\nu_{a}(\cdot)\}}\right).\label{31}
\end{align}
Pluging \eqref{def:R_7}, \eqref{def:R_8}, \eqref{29},\eqref{30}, and \eqref{31} into \eqref{eq:Q_R}, we obtain
\begin{align*}
	E[Q_1+Q_2+Q_{3}]=&O_p\biggr(b_Nc_N+a_Nd_N+b_N\mathbbm{1}_{\{\pi_{a}^{*}(\cdot)\neq\pi_{a}(\cdot)\}}+a_N\mathbbm{1}_{\{\rho_{a}^{*}(\cdot)\neq\rho_{a}(\cdot)\}}\\
	&\qquad+c_N\sqrt{E[\zeta^{2}+\varepsilon^{2}]}\mathbbm{1}_{\{\mu_{a}^{*}(\cdot)\neq\mu_{a}(\cdot)\}}+d_N\sqrt{E[\zeta^{2}]}\mathbbm{1}_{\{\nu_{a}^{*}(\cdot)\neq\nu_{a}(\cdot)\}}\biggr).
\end{align*}
By repeating all the previous steps, we can obtain the same result for $E[Q_{4}+Q_{5}+Q_{6}]$. Therefore, \eqref{22} follows.

(b) When all the nuisance models are correct, Assumption 4 holds under Assumption 1. Hence, by part (a), we also have \eqref{22}. Since all the nuisance models are correct, we further conclude that \eqref{S.2} holds.
\end{proof}

\begin{proof}[Proof of Lemma \ref{lem_T3}]
(a) Recall the definition \eqref{def:T_3}. By Chebyshev's inequality, we have for any $t>0$,
\begin{align*}
	P(|T_{3}|>t)\leq\frac{1}{t^2} \mathrm{Var}\left(\frac{1}{n}\sum_{i\in \mathcal I_k}\Delta^*(W_i)\right) \leq\frac{1}{nt^2}E[\Delta^*(W)]^2.
\end{align*}
To prove \eqref{39}, we only need to show $[E(\Delta^*(W))^2]^{\frac{1}{2}}=O( \sqrt{E[\zeta^{2}]}+\sqrt{E[\varepsilon^{2}]}+\sqrt{E[\xi^2]})$.
By Minkowski inequality, we have
\begin{align}\label{38}
	&[E(\Delta^*(W))^2]^{\frac{1}{2}}\leq\sum_{i=1}^5T_{3,i},
\end{align}
where
\begin{align*}
	T_{3,1}:=&\left[E\left(\frac{A_{1}A_2}{\pi_{a}^{*}(\bS_1)\rho_{a}^{*}(\bS)}(Y-\nu_{a}^{*}(\bS))\right)^2\right]^{\frac{1}{2}},\\
	T_{3,2}:=&\left[E\left(\frac{A_{1}}{\pi_{a}^{*}(\bS_1)}(\nu_{a}^{*}(\bS)-\mu_{a}^{*}(\bS_1))\right)^2\right]^{\frac{1}{2}},\\
	T_{3,3}:=&\left[E\left(\frac{(1-A_{1})(1-A_2)}{\pi_{a'}^{*}(\bS_1)\rho_{a'}^{*}(\bS)}(Y-\nu_{a'}^{*}(\bS))\right)^2\right]^{\frac{1}{2}},\\
	T_{3,4}:=&\left[E\left(\frac{1-A_{1}}{\pi_{a'}^{*}(\bS_1)}(\nu_{a'}^{*}(\bS)-\mu_{a'}^{*}(\bS_1))\right)^2\right]^{\frac{1}{2}},\\
	T_{3,5}:=&\left[E\left(\mu_{a}^{*}(\bS_1)-\mu_{a'}^{*}(\bS_1)-\theta\right)^2\right]^{\frac{1}{2}}.
\end{align*}
We bound each of the above terms in turn. Under Assumption 4 and recall the equation \eqref{23}, we have
\begin{align}
	T_{3,1}&\leq\frac{1}{c_0^2}[E(A_{1}A_2(Y-\nu_{a}^{*}(\bS))^2)]^{\frac{1}{2}}\leq\frac{1}{c_0^2}\sqrt{E[\zeta^2]}.\label{33}
\end{align}
Similarly, since $E[\varepsilon^{2}]\geq E[A_{1}\varepsilon^{2}]=E[\varepsilon_{a}^{2}]=E[A_{1}(\nu_{a}^{*}(\bS)-\mu_{a}^{*}(\bS_1))^{2}]$, we have
\begin{align}\label{34}
	T_{3,2}&\leq\frac{1}{c_0}[E(A_{1}(\nu_{a}^{*}(\bS)-\mu_{a}^{*}(\bS_1))^2)]^{\frac{1}{2}}\leq\frac{1}{c_0}\sqrt{E[\varepsilon^2]}.
\end{align}
Repeating the same process for $T_{3,3}$ and $T_{3,4}$, we also have
\begin{align}
	T_{3,3}\leq\frac{1}{c_0^2}\sqrt{E[\zeta^2]},\quad T_{3,4}\leq\frac{1}{c_0}\sqrt{E[\varepsilon^2]}.\label{36}
\end{align}
Additionally,
\begin{align*}
	&\frac{2}{c_0}E[\zeta^2]+2E[\varepsilon^2]\overset{(i)}{\geq}E[A_{1}(\mu_{a}^*(\bS_1)-\mu_{a}(\bS_1))^2]\overset{(ii)}{=}E[\pi_{a}(\bS_1)(\mu_{a}^*(\bS_1)-\mu_{a}(\bS_1))^2]\\
	&\qquad\overset{(iii)}{\geq}c_0E[(\mu_{a}^*(\bS_1)-\mu_{a}(\bS_1))^2],
\end{align*}
where (i) holds by \eqref{28}; (ii) holds by the tower rule; (iii) holds under Assumption 1. Similarly, we also have
$$\frac{2}{c_0}E[\zeta^2]+2E[\varepsilon^2]\geq c_0E[(\mu_{a'}^*(\bS_1)-\mu_{a'}(\bS_1))^2].$$
By Minkowski inequality,
\begin{align}
	T_{3,5}&\leq [E(\mu_{a}^{*}(\bS_1)-\mu_{a}(\bS_1))^2]^{\frac{1}{2}}+[E(\mu_{a'}^{*}(\bS_1)-\mu_{a'}(\bS_1))^2]^{\frac{1}{2}}+[E[\xi^2]]^{\frac{1}{2}}\nonumber\\
	&\leq 2\sqrt{\frac{2}{c_0^2}E[\zeta^2]+\frac{2}{c_0}E[\varepsilon^2]}+\sqrt{E[\xi^2]}\leq\frac{2\sqrt2}{c_0}\sqrt{E[\zeta^{2}]}+\frac{2\sqrt2}{\sqrt{c_0}}\sqrt{E[\varepsilon^{2}]}+\sqrt{E[\xi^2]}.\label{37}
\end{align}
Plugging \eqref{33}-\eqref{37} into \eqref{38}, we have
\begin{align*}
	[E(\Delta^*(W))^2]^{\frac{1}{2}}=O\left( \sqrt{E[\zeta^{2}]}+\sqrt{E[\varepsilon^{2}]}+\sqrt{E[\xi^2]}\right).
\end{align*}

(b) When all the models are correctly specified, Assumption 1 implies Assumption 4. Hence, by part (a), we also have \eqref{39}.
\end{proof}

\begin{proof}[Proof of Lemma \ref{lem_T4_general}]
In this proof, the expectations are taken w.r.t. the distribution of new observations $\bS$ (or only $\bS_1$ if $\bS_2$ is not involved).
Additionally, we condition on the event $\mathcal E_4$, defined as \eqref{def:E4}.
 Under Assumption 5, such an event occurs with probability approaching one. 

(a) We first show \eqref{40}. Recall the representation \eqref{eq:psi_Q}, by Minkowski inequality, we have 
\begin{align*}
	[E(\widehat \Delta(W)-\Delta^*(W))^2]^{\frac{1}{2}}\leq \sum_{i=1}^{6}[E(Q_{i}^2)]^{\frac{1}{2}},
\end{align*}	
where $Q_i$ $(i\in\{1,\dots,6\})$ are defined as\eqref{Q_1}-\eqref{Q_6}. Then, by Chebyshev's inequality, it suffices to show 
$$\sum_{i=1}^{6}[E(Q_{i}^2)]^{\frac{1}{2}}=O_p\left(a_N+b_N+\sqrt{E[\zeta^{2}]}+\sqrt{E[\varepsilon^{2}]}\right).$$ 
Additionally, under Assumption 4, we also have
$$P(c_0\leq \pi_{a}^*(\bS_1)\leq1-c_0)=1,\;\;P(c_0\leq \rho_{a}^*(\bS)\leq1-c_0)=1.$$
For the first term $[E(Q_1^2)]^{\frac{1}{2}}$, under Assumption 4 and on the event $\mathcal E_4$,
\begin{align}
&\ \ \ \ [E(Q_1^2)]^{\frac{1}{2}}\nonumber\\
&\leq \frac{1}{c_0^4}\{E[A_{1}A_2\pi_{a}^{*}(\bS_1)\rho_{a}^{*}(\bS)(Y-\widehat{\nu}_{a}(\bS)) -A_{1}A_2\widehat{\pi}_{a}(\bS_1)\widehat{\rho}_{a}(\bS)(Y-\nu_{a}^{*}(\bS))]^2\}^{\frac{1}{2}}\nonumber\\
&\overset{(i)}{\leq}\frac{1}{c_0^4}\{E[\pi_{a}^{*}(\bS_1)\rho_{a}^{*}(\bS)(\nu_{a}^{*}(\bS)+\zeta-\widehat{\nu}_{a}(\bS))-\widehat{\pi}_{a}(\bS_1)\widehat{\rho}_{a}(\bS)\zeta]^2\}^{\frac{1}{2}}\nonumber\\
&\overset{(ii)}{\leq}\frac{1}{c_0^4}\{E[\widehat{\nu}_{a}(\bS)-\nu_{a}^{*}(\bS)]^2\}^{\frac{1}{2}}+\frac{1}{c_0^4}\{E[(\widehat{\pi}_{a}(\bS_1)\widehat{\rho}_{a}(\bS)-\pi_{a}^{*}(\bS_1)\rho_{a}^{*}(\bS))\zeta]^2\}^{\frac{1}{2}},\label{41}
\end{align}
where (i) holds by the fact that $|A_{1}|\leq 1$, $|A_2|\leq 1$ and 
$A_{1}A_2Y=A_{1}A_2\nu_{a}^{*}(\bS)+A_{1}A_2\zeta;$ (ii) holds from Minkowski inequality and the fact that $P(\pi_{a}^{*}(\bS_1)\rho_{a}^{*}(\bS)\leq 1)=1$. Since $P(0\leq\pi_{a}^{*}(\bS_1)\rho_{a}^{*}(\bS)\leq 1)=1$ and $P(0\leq\widehat{\pi}_{a}(\bS_1)\widehat{\rho}_{a}(\bS)\leq 1)=1$ under $\mathcal E_4$, we have
\begin{align}
[E(Q_1^2)]^{\frac{1}{2}}
&\leq \frac{1}{c_0^4}[E(\widehat{\nu}_{a}(\bS)-\nu_{a}^{*}(\bS))^2]^{\frac{1}{2}}+\frac{1}{c_0^4}[E(\zeta^2)]^{\frac{1}{2}}=O_p\left(b_N+\sqrt{E[\zeta^2]}\right).\label{B.27}
\end{align}
Similarly, for the second term $[E(Q_2^2)]^{\frac{1}{2}}$, under Assumption 4 and on the event $\mathcal E_4$,
\begin{align}
[E(Q_2^2)]^{\frac{1}{2}}&\leq \frac{1}{c_0^2}\{E[A_{1}\pi_{a}^{*}(\bS_1)(\widehat{\nu}_{a}(\bS)-\widehat{\mu}_{a}(\bS_1))-A_{1}\widehat{\pi}_{a}(\bS_1)(\nu_{a}^{*}(\bS)-\mu_{a}^{*}(\bS_1))]^2\}^{\frac{1}{2}}\nonumber\\
&\overset{(i)}{\leq}\frac{1}{c_0^2}\{E[\pi_{a}^{*}(\bS_1)(\widehat{\nu}_{a}(\bS)-\widehat{\mu}_{a}(\bS_1)) -\widehat{\pi}_{a}(\bS_1)\varepsilon]^2\}^{\frac{1}{2}}\nonumber\\
&\overset{(ii)}{\leq}\frac{1}{c_0^2}[E(\widehat{\nu}_{a}(\bS)-\nu_{a}^{*}(\bS))^2]^{\frac{1}{2}}+\frac{1}{c_0^2}[E(\widehat{\mu}_{a}(\bS_1)-\mu_{a}^{*}(\bS_1))^2]^{\frac{1}{2}}\nonumber\\
&\qquad +\frac{1}{c_0^2}\{E[(\widehat{\pi}_{a}(\bS_1)-\pi_{a}^{*}(\bS_1))\varepsilon]^2\}^{\frac{1}{2}}\label{42}\\
&\overset{(iii)}{\leq}\frac{1}{c_0^2}[E(\widehat{\nu}_{a}(\bS)-\nu_{a}^{*}(\bS))^2]^{\frac{1}{2}}+\frac{1}{c_0^2}[E(\widehat{\mu}_{a}(\bS_1)-\mu_{a}^{*}(\bS_1))^2]^{\frac{1}{2}}+\frac{1}{c_0^2}\{E[\varepsilon^2]\}^{\frac{1}{2}}\nonumber\\
&=O_p\left(a_N+b_N+\sqrt{E[\varepsilon^2]}\right),\label{B.29}
\end{align}
where (i) holds from the fact that $|A_{1}|\leq 1$ and 
$A_{1}\nu_{a}^{*}(\bS)=A_{1}\mu_{a}^{*}(\bS_1)+A_{1}\varepsilon;$ (ii) holds from Minkowski inequality and $P(\pi_{a}^{*}(\bS_1)\leq 1)=1$; (iii) holds by the fact that $P(0\leq\pi_{a}^{*}(\bS_1)\leq 1)=1$ and $P(0\leq\widehat{\pi}_{a}(\bS_1)\leq 1)=1$ on $\mathcal E_4$. For the third term $[E(Q_{3}^2)]^{\frac{1}{2}}$, we have
\begin{align}
[E(Q_{3}^2)]^{\frac{1}{2}}=O_p\left(b_N\right),\label{B.30}
\end{align}
under Assumption 5. Combining \eqref{B.27}, \eqref{B.29} and \eqref{B.30}, we obtain that
\begin{align*}
[E(Q_1^2)]^{\frac{1}{2}}+[E(Q_2^2)]^{\frac{1}{2}}+[E(Q_{3}^2)]^{\frac{1}{2}}=O_p\left(a_N+b_N+\sqrt{E[\zeta^{2}]}+\sqrt{E[\varepsilon^{2}]}\right).
\end{align*}
Repeating the same procedure above, we also have the same result for $[E(Q_{4}^2)]^{\frac{1}{2}}+[E(Q_{5}^2)]^{\frac{1}{2}}+[E(Q_{6}^2)]^{\frac{1}{2}}$. Then, \eqref{40} follows.

(b) Now, we show \eqref{B.31}. By \eqref{41}, under Assumption 5, we have
\begin{align*}
[E(Q_1^2)]^{\frac{1}{2}}&\leq \frac{1}{c_0^4}[E(\widehat{\nu}_{a}(\bS)-\nu_{a}(\bS))^2]^{\frac{1}{2}}\\
&\qquad+\frac{1}{c_0^4}\{E[\zeta^{2}|\bS]\}^{\frac{1}{2}}\{E[(\widehat{\pi}_{a}(\bS_1)\widehat{\rho}_{a}(\bS)-\pi_{a}(\bS_1)\rho_{a}(\bS))]^2\}^{\frac{1}{2}}\\
&\leq \frac{1}{c_0^4}[E(\widehat{\nu}_{a}(\bS)-\nu_{a}(\bS))^2]^{\frac{1}{2}}+\frac{\sqrt{CE[\zeta^2]}}{c_0^4}\{E[(\widehat{\pi}_{a}(\bS_1)\widehat{\rho}_{a}(\bS)-\pi_{a}(\bS_1)\rho_{a}(\bS))]^2\}^{\frac{1}{2}}
\end{align*}
By Minkowski inequality and under $\mathcal E_4$, we have
\begin{align*}
&\{E[\widehat{\pi}_{a}(\bS_1)\widehat{\rho}_{a}(\bS)-\pi_{a}(\bS_1)\rho_{a}(\bS)]^2\}^{\frac{1}{2}}\\
&\qquad\leq\{E[(\widehat{\pi}_{a}(\bS_1)-\pi_{a}(\bS_1))\widehat{\rho}_{a}(\bS)]^2\}^{\frac{1}{2}}+\{E[\pi_{a}(\bS_1)(\widehat{\rho}_{a}(\bS)-\rho_{a}(\bS))]^2\}^{\frac{1}{2}}\\
&\qquad\leq [E(\widehat{\pi}_{a}(\bS_1)-\pi_{a}(\bS_1))^2]^{\frac{1}{2}}+[E(\widehat{\rho}_{a}(\bS)-\rho_{a}(\bS))^2]^{\frac{1}{2}}=O_p\left(c_N+d_N\right).
\end{align*}
Hence,
$$[E(Q_1^2)]^{\frac{1}{2}}=O_p\left(a_N+(c_N+d_N)\sqrt{E[\zeta^{2}]}\right).$$
In addition, by \eqref{42}, under Assumption 5, we have
\begin{align*}
[E(Q_2^2)]^{\frac{1}{2}}&\leq\frac{1}{c_0^2}[E(\widehat{\nu}_{a}(\bS)-\nu_{a}(\bS))^2]^{\frac{1}{2}}+\frac{1}{c_0^2}[E(\widehat{\mu}_{a}(\bS_1)-\mu_{a}(\bS_1)]^2]^{\frac{1}{2}}\\
&\qquad +\frac{1}{c_0^2}\{E[\varepsilon^{2}|\bS_1]\}^{\frac{1}{2}}\{E[(\widehat{\pi}_{a}(\bS_1)-\pi_{a}(\bS_1))]^2\}^{\frac{1}{2}}\\
&\leq\frac{1}{c_0^2}[E(\widehat{\nu}_{a}(\bS)-\nu_{a}(\bS))^2]^{\frac{1}{2}}+\frac{1}{c_0^2}[E(\widehat{\mu}_{a}(\bS_1)-\mu_{a}(\bS_1)]^2]^{\frac{1}{2}}\\
&\qquad+\frac{\sqrt{CE[\varepsilon^2]}}{c_0^2}\{E[(\widehat{\pi}_{a}(\bS_1)-\pi_{a}(\bS_1))]^2\}^{\frac{1}{2}}\\
&=O_p\left(a_N+b_N+c_N\sqrt{E[\varepsilon^{2}]}\right).
\end{align*}
Besides, by Assumption 5,
$$[E(Q_{3}^2)]^{\frac{1}{2}}=O_p\left(b_N\right).$$
Repeating the same procedure above, we also have
\begin{align*}
[E(Q_{4}^2)]^{\frac{1}{2}}=&O_p\left(a_N+(c_N+d_N)\sqrt{E[\zeta^{2}]}\right),\\
[E(Q_{5}^2)]^{\frac{1}{2}}=&O_p\left(a_N+b_N+c_N\sqrt{E[\varepsilon^{2}]}\right),\\
[E(Q_{6}^2)]^{\frac{1}{2}}=&O_p\left(b_N\right).
\end{align*}
Now, we have
$$[E(\widehat \Delta(W)-\Delta^*(W))^2]^{\frac{1}{2}}=O_P\left(a_N+b_N+c_N(\sqrt{E[\zeta^{2}]}+\sqrt{E[\varepsilon^{2}]})+d_N\sqrt{E[\zeta^{2}]}\right).$$
By Chebyshev's inequality, we conclude that \eqref{B.31} holds.
\end{proof}

\begin{proof}[Proof of Lemma \ref{lemma:sigma_mis}]
(a) We notice the following representation:
\begin{align}
\Delta^*(W)-\theta=\sum_{i=1}^8 O_i,\label{eq:psi_O}
\end{align}
where
\begin{align}
O_1&:=\frac{A_{1}A_2(Y-\nu_{a}(\bS))}{\pi_{a}^*(\bS_1)\rho_{a}^*(\bS)},\label{def:O_1}\\
O_2&:=\frac{A_{1}}{\pi_{a}^*(\bS_1)}\left(1-\frac{A_2}{\rho_{a}^*(\bS)}\right)(\nu_{a}^*(\bS)-\nu_{a}(\bS)),\\
O_3&:=\frac{A_{1}(\nu_{a}(\bS)-\mu_{a}(\bS_1))}{\pi_{a}^*(\bS_1)},\\
O_4&:=-\frac{(1-A_{1})(1-A_2)(Y-\nu_{a'}(\bS))}{\pi_{a'}^*(\bS_1)\rho_{a'}^*(\bS)},\\
O_5&:=-\frac{1-A_{1}}{\pi_{a'}^*(\bS_1)}\left(1-\frac{1-A_2}{\rho_{a'}^*(\bS)}\right)(\nu_{a'}^*(\bS)-\nu_{a'}(\bS)),\\
O_6&:=-\frac{(1-A_{1})(\nu_{a'}(\bS)-\mu_{a'}(\bS))}{\pi_{a'}^*(\bS_1)},\\
O_7&:=\left(1-\frac{A_{1}}{\pi_{a}^*(\bS_1)}\right)(\mu_{a}^*(\bS_1)-\mu_{a}(\bS_1))\nonumber\\
&\qquad-\left(1-\frac{1-A_{1}}{\pi_{a'}^*(\bS_1)}\right)(\mu_{a'}^*(\bS_1)-\mu_{a'}(\bS_1)),\\
O_8&:=\mu_{a}(\bS_1)-\mu_{a'}(\bS_1)-\theta=\xi.\label{def:O_8}
\end{align}
In the following, we demonstrate that
\begin{align}\label{eq:sigma_O}
\sigma^2=E(\Delta^*(W)-\theta)^2=\sum_{i=1}^8E[O_i^2].
\end{align}
It suffices to show that $E[O_iO_j]=0$ for all $i\neq j$. Firstly, since $A_{1}(1-A_{1})=0$, we have 
\begin{align}
O_iO_j=0,\;\;\text{for each}\;\;i\in\{1,2,3\},\;\;\text{and}\;\;j\in\{4,5,6\}.\label{O123O456}
\end{align}

\paragraph*{Step 1} We show $E[O_1O_i]=0$ for each $i\geq2$. By \eqref{O123O456}, we know that $O_1O_i=0$ for $i\in\{4,5,6\}$. Note that, $O_3,O_7,O_8$ are all functions of $(\bS,A_{1})$. Hence, for each $i\in\{3,7,8\}$,
$$E[O_1O_i]=E[O_iE[O_1|\bS,A_{1}=a_1]P(A_{1}=a_1|\bS)]=0,$$
since
$$E[O_1|\bS,A_{1}=a_1]\overset{(i)}{=}\frac{E[A_2|\bS,A_{1}=a_1]E[Y(a)-\mu_{a}(\bS_1)|\bS,A_{1}=a_1]}{\pi_{a}^*(\bS_1)\rho_{a}^*(\bS)}\overset{(ii)}{=}0,$$
where (i) holds under Assumption 1; (ii) holds because $E[Y(a)|\bS,A_{1}=a_1]=\mu_{a}(\bS_1)$. Besides, we note that
\begin{align*}
&E[O_1O_2]=E\left[\frac{A_{1}A_2(Y-\nu_{a}(\bS))(\nu_{a}^*(\bS)-\nu_{a}(\bS))(\rho_{a}^*(\bS)-1)}{(\pi_{a}^*(\bS_1)\rho_{a}^*(\bS))^2}\right]\\
&\quad\overset{(i)}{=}E\left[\frac{E[A_2(Y(a)-\nu_{a}(\bS))|\bS,A_{1}=a_1](\nu_{a}^*(\bS)-\nu_{a}(\bS))(\rho_{a}^*(\bS)-1)}{(\pi_{a}^*(\bS_1)\rho_{a}^*(\bS))^2}P(A_{1}=a_1|\bS)\right]\\
&\quad\overset{(ii)}{=}E\left[\frac{\rho_{a}(\bS)E[Y(a)-\nu_{a}(\bS)|\bS,A_{1}=a_1](\nu_{a}^*(\bS)-\nu_{a}(\bS))(\rho_{a}^*(\bS)-1)}{(\pi_{a}^*(\bS_1)\rho_{a}^*(\bS))^2}P(A_{1}=a_1|\bS)\right]\\
&\quad\overset{(iii)}{=}0,
\end{align*}
where (i) holds by the tower rule; (ii) holds under Assumption 1; (iii) holds because $E[Y(a)|\bS,A_{1}=a_1]=\mu_{a}(\bS_1)$.

\paragraph*{Step 2} We show $E[O_2O_i]=0$ for each $i\geq3$. By \eqref{O123O456}, we know that $O_2O_i=0$ for $i\in\{4,5,6\}$. Since $O_3,O_7,O_8$ are all functions of $(\bS,A_{1})$, it follows that, for each $i\in\{3,7,8\}$,
\begin{align*}
E[O_2O_i]=E[O_iE[O_2|\bS,A_{1}=a_1]P(A_{1}=a_1|\bS)]=0,
\end{align*}
since
\begin{align*}
&E[O_2|\bS,A_{1}=a_1]=\frac{\nu_{a}^*(\bS)-\nu_{a}(\bS)}{\pi_{a}^*(\bS_1)}\left(1-\frac{E[A_2|\bS,A_{1}=a_1]}{\rho_{a}^*(\bS)}\right)\\
&\qquad=\frac{\nu_{a}^*(\bS)-\nu_{a}(\bS)}{\pi_{a}^*(\bS_1)}\left(1-\frac{\rho_{a}(\bS)}{\rho_{a}^*(\bS)}\right)\overset{(i)}{=}0,
\end{align*}
where (i) holds because either $\nu_{a}^*(\cdot)=\nu_1(\cdot)$ or $\rho_{a}^*(\cdot)=\rho_{a}(\cdot)$ by assumption. 

\paragraph*{Step 3} We show $E[O_3O_i]=0$ for each $i\geq4$. By \eqref{O123O456}, we know that $O_3O_i=0$ for $i\in\{4,5,6\}$. Since $O_7,O_8$ are all functions of $(\bS_1,A_{1})$, it follows that, for each $i\in\{7,8\}$,
\begin{align*}
E[O_3O_i]=E[O_iE[O_3|\bS_1,A_{1}=a_1]P(A_{1}=a_1|\bS_1)]=0,
\end{align*}
since
\begin{align*}
E[O_3|\bS_1,A_{1}=a_1]=\frac{E[\nu_{a}(\bS)|\bS_1,A_{1}=a_1]-\mu_{a}(\bS_1)}{\pi_{a}^*(\bS_1)}\overset{(i)}{=}0,
\end{align*}
where (i) holds because $E[\nu_{a}(\bS)|\bS_1,A_{1}=a_1]=\mu_{a}(\bS_1)$.

\paragraph*{Step 4} By repeating the same procedure as in Steps 1-3, we also have $E[O_iO_j]=0$ for each $i\in\{4,5,6\}$ and $j\geq i+1$.

\paragraph*{Step 5} We show $E[O_7O_8]=0$. Since $O_8$ is a function of $\bS_1$, we have
$$E[O_7O_8]=E[O_8E[O_7|\bS_1]]=0,$$
since
\begin{align*}
&E[O_7|\bS_1]=\left(1-\frac{\pi_{a}(\bS)}{\pi_{a}^*(\bS_1)}\right)(\mu_{a}^*(\bS_1)-\mu_{a}(\bS_1))-\left(1-\frac{\pi_{a'}(\bS)}{\pi_{a'}^*(\bS_1)}\right)(\mu_{a'}^*(\bS_1)-\mu_{a'}(\bS_1))\\
&\qquad\overset{(i)}{=}0,
\end{align*}
where (i) holds because, by assumption, 1) either $\pi_{a}^*(\cdot)=\pi_{a}(\cdot)$ or $\mu_{a}^*(\cdot)=\mu_{a}(\cdot)$, and 2) either $\pi_{a'}^*(\cdot)=\pi_{a'}(\cdot)$ or $\mu_{a'}^*(\cdot)=\mu_{a'}(\cdot)$.

Based on all Steps 1-5, we conclude that \eqref{eq:sigma_O} holds. Now, note that
\begin{align*}
E[O_1^2]&\geq E[A_{1}A_2(Y(a)-\nu_{a}(\bS))^2],\\
E[O_2^2]&=E\left[\frac{A_{1}((\rho_{a}^*(\bS))^2-2A_2\rho_{a}^*(\bS)+A_2)}{(\pi_{a}^*(\bS_1)\rho_{a}^*(\bS))^2}(\nu_{a}^*(\bS)-\nu_{a}(\bS))^2\right]\\
&=E\left[\frac{A_{1}((\rho_{a}^*(\bS)-\rho_{a}(\bS))^2+\rho_{a}(\bS)(1-\rho_{a}(\bS)))}{(\pi_{a}^*(\bS_1)\rho_{a}^*(\bS))^2}(\nu_{a}^*(\bS)-\nu_{a}(\bS))^2\right]\\
&\geq c_0^2E[A_{1}(\nu_{a}^*(\bS)-\nu_{a}(\bS))^2],\\
E[O_3^2]&=E\left[\frac{A_{1}(\nu_{a}(\bS)-\mu_{a}(\bS_1))^2}{(\pi_{a}^*(\bS_1))^2}\right]\geq E[A_{1}(\nu_{a}(\bS)-\mu_{a}(\bS_1))^2]
\end{align*}
Hence, 
\begin{align}
&E[A_{1}A_2\zeta^2]=E[\zeta_{a}^2]=E[A_{1}A_2(Y(a)-\nu_{a}^*(\bS))^2]\nonumber\\
&\qquad\overset{(i)}{=}E[A_{1}A_2((Y(a)-\nu_{a}(\bS))^{2}+(\nu_{a}(\bS)-\nu_{a}^{*}(\bS))^{2})]\leq E[O_1^2]+\frac{1}{c_0^2}E[O_2^2],\label{bound:zeta1}
\end{align}
where (i) holds as in \eqref{eq:Y-nustar}. Additionally,
\begin{align*}
&E[A_{1}\varepsilon^2]=E[\varepsilon_{a}^2]=E[A_{1}(\nu_{a}^*(\bS)-\mu_{a}^*(\bS_1))^2]\\
&\quad\leq 3\left[E[A_{1}(\nu_{a}^*(\bS)-\nu_{a}(\bS))^2]+E[A_{1}(\nu_{a}(\bS)-\mu_{a}(\bS_1))^2]+E[A_{1}(\mu_{a}(\bS_1)-\mu_{a}^*(\bS_1))^2]\right]\\
&\quad\leq\frac{3}{c_0^2}E[O_2^2]+3E[O_3^2]+3C_\mu\sigma^2.
\end{align*}
Repeating the process above, we also have
\begin{align}
E[(1-A_{1})(1-A_2)\zeta^2]&\leq E[O_4^2]+\frac{1}{c_0^2}E[O_5^2],\label{bound:zeta2}\\
E[(1-A_{1})\varepsilon^2]&\leq\frac{3}{c_0^2}E[O_5^2]+3E[O_6^2]+3C_\mu\sigma^2.\nonumber
\end{align}
Besides, we also have
\begin{align}
E[\xi^2]=E[O_8^2].\label{bound:xi}
\end{align}
Therefore, we conclude that
\begin{align*}
&E[\zeta^2]+E[\varepsilon^2]+E[\xi^2]\\
&\qquad=E[A_{1}A_2\zeta^2]+E[(1-A_{1})(1-A_2)\zeta^2]+E[A_{1}\varepsilon^2]+E[(1-A_{1})\varepsilon^2]+E[\xi^2]\\
&\qquad\leq E[O_1^2+\frac{4}{c_0^2}O_2^2+3O_3^2+O_4^2+\frac{4}{c_0^2}O_5^2+3O_6^2+O_8^2]+6C_\mu\sigma^2\leq\left(\frac{4}{c_0^2}+6C_\mu\right)\sigma^2,
\end{align*}
since $c_0<1$ and \eqref{eq:sigma_O} holds.


(b) Now, we assume Assumption 2 holds. Same as in part (a), we also have \eqref{eq:sigma_O}, \eqref{bound:zeta1}, \eqref{bound:zeta2}, and \eqref{bound:xi} hold. Additionally, under Assumption 2, by Lemma \ref{lemma:psi2norm}, we also have
$$E[\varepsilon^2]\leq2\sigma_\varepsilon^2\sigma^2.$$
Therefore,
\begin{align*}
&E[\zeta^2]+E[\varepsilon^2]+E[\xi^2]\\
&\qquad=E[A_{1}A_2\zeta^2]+E[(1-A_{1})(1-A_2)\zeta^2]+E[\varepsilon^2]+E[\xi^2]\\
&\qquad\leq E[O_1^2+\frac{1}{c_0^2}O_2^2+O_4^2+\frac{1}{c_0^2}O_5^2+O_8^2]+2\sigma_\varepsilon^2\sigma^2\leq\left(\frac{1}{c_0^2}+2\sigma_\varepsilon^2\right)\sigma^2.
\end{align*}
\end{proof}


\begin{proof}[Proof of Lemma \ref{lem_Lyap}]
We first show that \eqref{53} holds. By Lemma \ref{lemma:sigma_mis}, we have
\begin{align*}
\Delta^*(W)-\theta=\sum_{i=1}^8O_i,\quad\sigma^2=E(\Delta^*(W)-\theta)^2=\sum_{i=1}^8E[O_i^2],
\end{align*}
where $\{O_i\}_{i=1}^8$ are defined as \eqref{def:O_1}-\eqref{def:O_8}. Since now we assume that all the models are correctly specified, we have $O_i=0$ for $i\in\{2,5,7\}$ and hence
\begin{align}
\Delta^*(W)-\theta&=O_1+O_3+O_4+O_6+O_8,\label{def:psi_O_correct}\\
\sigma^2&=E[O_1^2]+E[O_3^2]+E[O_4^2]+E[O_6^2]+E[O_8^2]=\sum_{i=1}^5V_i,\nonumber
\end{align}
where
\begin{align*}
V_1:=&E\left[\left(\frac{A_{1}A_2}{\pi_{a}(\bS_1)\rho_{a}(\bS)}(Y-\nu_{a}(\bS))\right)^2\right],\\
V_2:=&E\left[\left(\frac{A_{1}}{\pi_{a}(\bS_1)}(\nu_{a}(\bS)-\mu_{a}(\bS_1))\right)^2\right],\\
V_3:=&E\left[\left(\frac{(1-A_{1})(1-A_2)}{\pi_{a'}(\bS_{1})\rho_{a'}(\bS)}(Y-\nu_{a'}(\bS))\right)^2\right],\\
V_4:=&E\left[\left(\frac{1-A_{1}}{\pi_{a'}(\bS_{1})}(\nu_{a'}(\bS)-\mu_{a'}(\bS_1))\right)^2\right],\\
V_5:=&E\left[\left(\mu_{a}(\bS_1)-\mu_{a'}(\bS_1)-\theta\right)^2\right].
\end{align*}
We lower bound each terms above:
\begin{align*}
&V_1\overset{(i)}{=}E\left[\left(\frac{\zeta_{a}}{\pi_{a}(\bS_1)\rho_{a}(\bS)}\right)^2\right]\overset{(ii)}{=}E\left[\left(\frac{A_{1}A_2}{\pi_{a}(\bS_1)\rho_{a}(\bS)}\zeta\right)^2\right]\overset{(iii)}\geq E[A_{1}A_2\zeta^2],\\
&V_2\overset{(iv)}{=}E\left[\left(\frac{\varepsilon_{a}}{\pi_{a}(\bS_1)}\right)^2\right]\overset{(v)}{=}E\left[\left(\frac{A_{1}}{\pi_{a}(\bS_1)}\varepsilon\right)^2\right]\overset{(vi)}{\geq}E[A_{1}\varepsilon^2],
\end{align*}
where (i) and (iv) hold since $\nu_{a}^*(\cdot)=\nu_{a}(\cdot)$ and $\mu_{a}^*(\cdot)=\mu_{a}(\cdot)$; (ii) and (v) hold since $\zeta_{a}=A_{1}A_2\zeta$ and $\varepsilon_{a}=A_{1}\varepsilon$; (iii) and (vi) hold since $A_{1},A_2\in\{0,1\}$, $\pi_{a}(\bS_1)\leq1$ and $\rho_{a}(\bS)\leq1$ with probability 1 under Assumption 1. Similarly,
$$V_3\geq E[(1-A_{1})(1-A_2)\zeta^2],\quad V_4\geq E[(1-A_{1})\varepsilon^2].$$
Additionally, by definition, $\xi=\mu_{a}(\bS_1)-\mu_{a'}(\bS_1)-\theta$. Hence,
$$V_5=E[\xi^2].$$
Combining all the previous results, we have
\begin{align*}
\sigma^2:&=E[\Delta^*(W)-\theta]^2\\
&\geq E[A_{1}A_2\zeta^2+(1-A_{1})(1-A_2)\zeta^2]+ E[A_{1}\varepsilon^2+(1-A_{1}) \varepsilon^2]+E[\xi^2]\\
&=E[\zeta^2]+E[\varepsilon^2]+E[\xi^2].
\end{align*}
Next, we show that \eqref{B.36} holds. Recall the representation \eqref{def:psi_O_correct}. By the finite form of Jensen's inequality, and note that the function $u\mapsto|u|^{2+t}$ is convex for any $t>0$, we have
\begin{align*}
&\left|\frac{\Delta^*(W)-\theta}{5}\right|^{2+t}=\left|\frac{O_1+O_3+O_4+O_6+O_8}{5}\right|^{2+t}\\
&\qquad\leq\frac{|O_1|^{2+t}+|O_3|^{2+t}+|O_4|^{2+t}+|O_6|^{2+t}+|O_8|^{2+t}}{5}
\end{align*}
Therefore, 
\begin{align*}
E|\Delta^*(W)-\theta|^{2+t}&\leq5^{1+t}E[|O_1|^{2+t}+|O_3|^{2+t}+|O_4|^{2+t}+|O_6|^{2+t}+|O_8|^{2+t}]\\
&=C_t\sum_{i=1}^5V_i',
\end{align*}
where $C_t=5^{1+t}$ and
\begin{align*}
V_1':=&E\left[\left|\frac{A_{1}A_2}{\pi_{a}(\bS_1)\rho_{a}(\bS)}(Y-\nu_{a}(\bS))\right|^{2+t}\right],\\
V_2':=&E\left[\left|\frac{A_{1}}{\pi_{a}(\bS_1)}(\nu_{a}(\bS)-\mu_{a}(\bS_1))\right|^{2+t}\right],\\
V_3':=&E\left[\left|\frac{(1-A_{1})(1-A_2)}{\pi_{a'}(\bS_{1})\rho_{a'}(\bS)}(Y-\nu_{a'}(\bS))\right|^{2+t}\right],\\
V_4':=&E\left[\left|\frac{1-A_{1}}{\pi_{a'}(\bS_{1})}(\nu_{a'}(\bS)-\mu_{a'}(\bS_1))\right|^{2+t}\right],\\
V_5':=&E\left[\left|\mu_{a}(\bS_1)-\mu_{a'}(\bS_1)-\theta\right|^{2+t}\right].
\end{align*}
Now, we upper bound each of the terms above.
\begin{align*}
V_1'\overset{(i)}{=}&E\left[\left|\frac{\zeta_{a}}{\pi_{a}(\bS_1)\rho_{a}(\bS)}\right|^{2+t}\right]\overset{(ii)}{=}E\left[\left|\frac{A_{1}A_2}{\pi_{a}(\bS_1)\rho_{a}(\bS)}\zeta\right|^{2+t}\right]\overset{(iii)}{\leq}\frac{1}{c_0^{4+2t}}E[|\zeta|^{2+t}],\\
V_2'\overset{(iv)}{=}&E\left[\left|\frac{\varepsilon_{a}}{\pi_{a}(\bS_1)}\right|^{2+t}\right]\overset{(v)}{=}E\left[\left|\frac{A_{1}}{\pi_{a}(\bS_1)}\varepsilon\right|^{2+t}\right]\overset{(vi)}{\leq}\frac{1}{c_0^{4+2t}}E[|\varepsilon|^{2+t}],
\end{align*}
where (i) and (iv) hold since $\nu_{a}^*(\cdot)=\nu_{a}(\cdot)$ and $\mu_{a}^*(\cdot)=\mu_{a}(\cdot)$; (ii) and (v) hold since $\zeta_{a}=A_{1}A_2\zeta$ and $\varepsilon_{a}=A_{1}\varepsilon$; (iii) and (vi) hold since $A_{1},A_2\in\{0,1\}$, $\pi_{a}(\bS_1),\rho_{a}(\bS)\in[c_0,1-c_0]$ with probability 1 under Assumption 1. Similarly, we also have
$$V_3'\leq\frac{1}{c_0^{4+2t}}E[|\zeta|^{2+t}],\quad V_4'\leq\frac{1}{c_0^{2+t}}E[|\varepsilon|^{2+t}].$$
In addition, by definition, $\xi=\mu_{a}(\bS_1)-\mu_{a'}(\bS_1)-\theta$. Hence,
$$V_5'=E[|\xi|^{2+t}].$$
Therefore, we conclude that
\begin{align*}
&E|\Delta^*(W)-\theta|^{2+t}\leq C_t\left[\frac{2}{c_0^{4+2t}}E[|\zeta|^{2+t}]+\frac{2}{c_0^{2+t}}E[|\varepsilon|^{2+t}]+E[|\xi|^{2+t}]\right]\\
&\qquad\leq\frac{2C_t}{c_0^{4+2t}}E[|\zeta|^{2+t}+|\varepsilon|^{2+t}+|\xi|^{2+t}],
\end{align*}
since $0<c_0<1$ and $t>0$.
\end{proof}

\begin{proof}[Proof of Lemma \ref{lem_sigmahat}]
We show that for each $k=1,...,K$,
\begin{align}
\frac{1}{n}\sum_{i\in \mathcal I_k}(\Delta^*(W_i)-\theta)^2-\sigma^2&=o_p(\sigma^2),\label{B.40}\\
\frac{1}{n}\sum_{i\in \mathcal I_k}(\widehat \Delta(W_i)-\thetahat_\mathrm{gen})^2-\frac{1}{n}\sum_{i\in \mathcal I_k}(\Delta^*(W_i)-\theta)^2&=o_{p}(\sigma^2),\label{B.41}
\end{align}

We first show \eqref{B.40}. Let $Z_{N,i}:=\sigma^{-1}(\Delta^*(W_i)-\theta)^2-1$, note that $W_i$ and nuisance functions $\nu_c^*(\cdot)$, $\mu_c^*(\cdot)$, $\pi_c^*(\cdot)$, and $\rho_c^*(\cdot)$ are possibly dependent with $(d_1,d_2)=(d_{N,1},d_{N,2})$. Hence, $(Z_{N,i})_{N,i}$ forms a row-wise independent and identically distributed triangular array, and \eqref{B.40} is equivalent to
$$\frac{1}{n}\sum_{i\in \mathcal I_k}Z_i=o(1).$$
By Lemma 3 of \cite{zhang2022high}, it suffices to show that $E(Z_{d,1})=0$ and $E|Z_{d,1}|^q<C'$ with some constants $q>1$ and $C'>0$. By definition,
$$E(Z_{d,1})=E\left[\frac{(\Delta^*(W)-\theta)^2}{\sigma^2}-1\right]=\frac{\sigma^2}{\sigma^2}-1=0.$$
In addition, by Minkowski inequality,
$$\left[E\left|\frac{(\Delta^*(W)-\theta)^2}{\sigma^2}-1\right|^{\frac{2+t}{2}}\right]^{\frac{2}{2+t}}
\leq\left[\frac{E|(\Delta^*(W)-\theta)|^{2+t}}{\sigma^{2+t}}\right]^{\frac{2}{2+t}}+1<C+1.$$
It follows that
$$E|Z_{d,1}|^{\frac{2+t}{2}}=E\left|\frac{(\Delta^*(W)-\theta)^2}{\sigma^2}-1\right|^{\frac{2+t}{2}}<(C+1)^{\frac{2+t}{2}},$$
with $(2+t)/2>1$. Therefore, by Lemma 3 of \cite{zhang2022high}, we conclude that \eqref{B.40} holds.

Next, we show \eqref{B.41}. Let $a_i=\widehat \Delta(W_i)-\Delta^*(W_i)-(\thetahat_\mathrm{gen}-\theta)$ and $b_i=\Delta^*(W_i)-\theta$. Then, it follows from the triangle inequality that
\begin{align*}
&\left|\frac{1}{n}\sum_{i\in \mathcal I_k}(\widehat \Delta(W_i)-\thetahat_\mathrm{gen})^2-\frac{1}{n}\sum_{i\in \mathcal I_k}(\Delta^*(W_i)-\theta)^2\right|\\
&\qquad\leq\frac{1}{n}\sum_{i\in \mathcal I_k}|a_i|\cdot|a_i+2b_i|\overset{(i)}{\leq}\left[\frac{1}{n}\sum_{i\in \mathcal I_k}a_i^2\right]^{\frac{1}{2}}\cdot\left[\frac{1}{n}\sum_{i\in \mathcal I_k}(a_i+2b_i)^2\right]^{\frac{1}{2}}\\
&\qquad\overset{(ii)}{\leq}\left[\frac{1}{n}\sum_{i\in \mathcal I_k}a_i^2\right]^{\frac{1}{2}}\cdot\left[\left(\frac{1}{n}\sum_{i\in \mathcal I_k}a_i^2\right)^{\frac{1}{2}}+2\left(\frac{1}{n}\sum_{i\in \mathcal I_k}b_i^2\right)^{\frac{1}{2}}\right],
\end{align*}
where (i) follows from Cauchy-Schwarz inequality; (ii) follows from Minkowski inequality. Recall the equation \eqref{B.40}, we have
$$\frac{1}{n}\sum_{i\in \mathcal I_k}b_i^2=\frac{1}{n}\sum_{i\in \mathcal I_k}(\Delta^*(W_i)-\theta)^2=\sigma^2(1+o_p(1)).$$
Since, by assumption, $\thetahat_\mathrm{gen}-\theta=O_p(\sigma/\sqrt{N})$ and $\left[\frac{1}{n}\sum_{i\in \mathcal I_k}|\widehat \Delta(W_i)-\Delta^*(W_i)|^2\right]^{\frac{1}{2}}=o_p(\sigma)$, we have
\begin{align*}
\left[\frac{1}{n}\sum_{i\in \mathcal I_k}a_i^2\right]^{\frac{1}{2}}\leq\left[\frac{1}{n}\sum_{i\in \mathcal I_k}|\widehat \Delta(W_i)-\Delta^*(W_i)|^2\right]^{\frac{1}{2}}+|\thetahat_\mathrm{gen}-\theta|=o_p(\sigma).
\end{align*}
Therefore,
\begin{align*}
&\left|\frac{1}{n}\sum_{i\in \mathcal I_k}(\widehat \Delta(W_i)-\thetahat_\mathrm{gen})^2-\frac{1}{n}\sum_{i\in \mathcal I_k}(\Delta^*(W_i)-\theta)^2\right|\\
&\qquad=o_p(\sigma)\cdot\left[o_p(\sigma)+\sigma(1+o_p(1))\right]=o_p(\sigma^2).
\end{align*}
Now, by \eqref{B.40} and \eqref{B.41}, we have
\begin{align*}
&\sigmahat_\mathrm{gen}-\sigma^2=\frac{1}{K}\sum_{k=1}^K\frac{1}{n}\sum_{i\in \mathcal I_k}(\widehat \Delta(W_i)-\thetahat_\mathrm{gen})^2-\sigma\\
&\qquad=\frac{1}{K}\sum_{k=1}^K\left(\frac{1}{n}\sum_{i\in \mathcal I_k}(\widehat \Delta(W_i)-\thetahat_\mathrm{gen})^2-(\Delta^*(W_i)-\theta)^2+(\Delta^*(W_i)-\theta)^2-\sigma\right)\\
&\qquad=o_p(\sigma^2).
\end{align*}
\end{proof}

\section{Asymptotic theory for Sequential Double Robust Lasso (S-DRL) estimator}\label{sec:proof_tdr}
In this section, we provide theoretical results for the S-DRL estimator. We consider the following nuisance estimators: 
$\widehat{\nu}_{c}(\bS)=\bU^\top \balphahat_c$, $\widehat{\mu}_{c}(\bS)=\bV^\top \bbetahat_c$, $\pihat_{c}(\bS_1)=g(\bV^\top\bgammahat_c)$, and $\rhohat_{c}(\bS)=g(\bU^\top\bdeltahat_c)$ for each $c\in\{a,a'\}$, where $\balphahat_c$, $ \bbetahat_c$, $\bgammahat_c$, and $\bdeltahat_c$ are defined in Section 2.1. Then $\widehat{\psi}_{c}(\cdot)$, defined as \eqref{def:DR-scorehat}, can be written as
\begin{align*}
\widehat{\psi}_c(W)=\bV^{\top}\widehat{\bbeta}_c+\mathbbm1_{\{A_{1}=c_1\}}\frac{\bU^{\top}\widehat{\balpha}_c-\bV^{\top}\widehat{\bbeta}_c}{g(\bV^{\top}\widehat{\bgamma}_c)}+\mathbbm1_{\{A_{1}=c_1,A_{2}=c_2\}}\frac{Y-\bU^{\top}\widehat{\balpha}_c}{g(\bV^{\top}\widehat{\bgamma}_c)g(\bU^{\top}\widehat{\bdelta}_c)}.
\end{align*}
We consider the following target nuisance functions:
$\nu_{c}^*(\bS)=\bU^\top \balpha_c^*$, $\mu_{c}^*(\bS)=\bV^\top \bbeta_c^*$, $\pi_{c}^*(\bS_1)=g(\bV^\top\bgamma_c^*)$, and $\rho_c^*(\bS)=g(\bU^\top\bdelta_c^*)$ for each $c\in\{a,a'\}$, where $\balpha_c^*$, $\bbeta_c^*$, $\bgamma_c^*$, and $\bdelta_c^*$ are defined in Section 2.1. Then $ {\psi}_c^*(\cdot)$, defined as \eqref{def:psi-gen-star}, can be written as
\begin{align*}
\psi_c^*(W)&=\bV^{\top}\bbeta_c^*+\mathbbm1_{\{A_{1}=c_1\}}\frac{\bU^{\top}\balpha_c^*-\bV^{\top}\bbeta_c^*}{g(\bV^{\top}\bgamma_c^*)}+\mathbbm1_{\{A_{1}=c_1,A_{2}=c_2\}}\frac{Y-\bU^{\top}\balpha_c^*}{g(\bV^{\top}\bgamma_c^*)g(\bU^{\top}\bdelta_c^*)}.
\end{align*}

\subsection{Auxiliary Lemmas}
\begin{lemma}\label{lem_T2_DDRL}
(a) Suppose that at least one of $\mu_{a}^{*}(\cdot)$ and $\pi_{a}^{*}(\cdot)$ is correctly specified, and at least one of the models $\nu_{a}^{*}(\cdot)$ and $\rho_{a}^{*}(\cdot)$ is correctly specified. Let Assumptions 1-4 hold. Assume that ${\max\{s_{\bm{\alpha}_{a}}, s_{\bm{\beta}_{a}}, s_{\bm{\gamma}_{a}},s_{\bm{\delta}_{a}}\}\log(d)} =o(N)$, and either (a) $\|\bS_1\|_\infty\leq C$ almost surely, with some constant $C>0$, or (b) $s_{\bdelta_a}\log^2(d)=O(N)$. Then,
\begin{align}
	T_{2}=O_p\left(\sigma\frac{s_1\log(d)}{N}+\sigma\sqrt{\frac{s_2\log(d)}{N}}\right),\label{B.3_T_{2}'}
\end{align}
where $T_{2}$ is defined as \eqref{def:T_2} and
\begin{align*} 
	s_1&:=\max\{\sqrt{s_{\bm{\alpha}_{a}}s_{\bm{\delta}_{a}}}, \sqrt{s_{\bm{\beta}_{a}}s_{\bm{\gamma}_{a}}}\},\\
	s_2&:=\max\left\{s_{\bm{\alpha}_{a}}\mathbbm{1}_{\{\rho_{a}^{*}\neq\rho_{a}\}},s_{\bm{\beta}_{a}}\mathbbm{1}_{\{\pi_{a}^{*}\neq\pi_{a}\}},s_{\bm{\gamma}_{a}}\mathbbm{1}_{\{\mu_{a}^{*}\neq\mu_{a}\}}, 
	s_{\bm{\delta}_{a}}\mathbbm{1}_{\{\nu_{a}^{*}\neq\nu_{a}\}}\right\}.
\end{align*}
	
b) Suppose that all the nuisance models $\mu_a^*(\cdot)$, $\nu_a^*(\cdot)$, $\pi_a^*(\cdot)$, and $\rho_a^*(\cdot)$ are correctly specified. Let Assumptions 1-3 hold. Assume that ${\max\{s_{\bm{\alpha}_{a}}, s_{\bm{\beta}_{a}}, s_{\bm{\gamma}_{a}},s_{\bm{\delta}_{a}}\}\log(d)} =o(N)$, and either (a) $\|\bS_1\|_\infty\leq C$ almost surely, with some constant $C>0$, or (b) $s_{\bdelta_a}\log^2(d)=O(N)$. Then, Then, 
\begin{align}
	T_{2}=O_p\left(\sigma\frac{s_1\log(d)}{N}\right).\label{C.28'}
\end{align}
\end{lemma}

\begin{lemma}\label{lem_T4_DDRL}
Suppose that at least one of $\mu_{a}^{*}(\cdot)$ and $\pi_{a}^{*}(\cdot)$ is correctly specified, and at least one of the models $\nu_{a}^{*}(\cdot)$ and $\rho_{a}^{*}(\cdot)$ is correctly specified. Let Assumptions of Lemma \ref{lem_T2_DDRL} (a) hold. Then,
	\begin{align}
		[E(\Deltahat(W) - \Delta^*(W))^2]^{\frac{1}{2}}&=O_p\left(\sigma\sqrt{\frac{\max\{s_{\bm{\alpha}_{a}},s_{\bm{\beta}_{a}},s_{\bm{\gamma}_{a}},s_{\bm{\delta}_{a}}\}\log(d)}{N}}\right),\label{C.26'}\\
		T_{4}&=O_p\biggl(\sigma\frac{\sqrt{\max\{s_{\bm{\alpha}_{a}},s_{\bm{\beta}_{a}},s_{\bm{\gamma}_{a}},s_{\bm{\delta}_{a}}\}\log(d)}}{N}\biggl),\label{B.4_T_{4}'}
	\end{align}
	where $T_{4}$ is defined as \eqref{def:T_4}. 
\end{lemma}

\subsection{Proof of Theorem 2}
Let $\xi:=\mu_{a}(\bS_1)-\mu_{a'}(\bS_1)-\theta$. Recall the representation \eqref{representation}. By Lemmas \ref{lem_T1}, \ref{lem_T2_DDRL}, \ref{lem_T3}, and \ref{lem_T4_DDRL}, we have
\begin{align*}
	T_{1}&=0,\\
	T_{2}^{(k)}&=O_p\left(\sigma\frac{s_1\log(d)}{N}+\sigma\sqrt{\frac{s_2\log(d)}{N}}\right),\nonumber\\
	T_{3}^{(k)}&=O_p\left(\frac{1}{\sqrt{N}}\left[\sqrt{E[\zeta^{2}]}+\sqrt{E[\varepsilon^{2}]}+\sqrt{E[\xi^2]}\right]\right),\\
	T_{4}^{(k)}&=O_p\left(\sigma\frac{\sqrt{\max\{s_{\bm{\alpha}_{a}},s_{\bm{\beta}_{a}},s_{\bm{\gamma}_{a}},s_{\bm{\delta}_{a}}\}\log(d)}}{N}\right).
\end{align*}
for each $k\leq K$. Therefore, by Lemma \ref{lemma:sigma_mis} with ${\max\{s_{\bm{\alpha}_{a}}, s_{\bm{\beta}_{a}}, s_{\bm{\gamma}_{a}},s_{\bm{\delta}_{a}}\}\log(d)} =o(N)$, we obtain 
\begin{align*} 
	\thetahat-\theta&=K^{-1}\sum_{k=1}^K(T_{1}+T_{2}^{(k)}+T_{3}^{(k)}+T_{4}^{(k)})\\
	&=O_p\left(\sigma\frac{s_1\log(d)}{N}+\sigma\sqrt{\frac{s_2\log(d)}{N}}+\frac{1}{\sqrt{N}}\sigma\right),
\end{align*}
with $s_1:=\max\{\sqrt{s_{\bm{\alpha}_{a}}s_{\bm{\delta}_{a}}}, \sqrt{s_{\bm{\beta}_{a}}s_{\bm{\gamma}_{a}}}\}$ and
\begin{align*} 
s_2&:=\max\left\{s_{\bm{\alpha}_{a}}\mathbbm{1}_{\{\rho_{a}^{*}\neq\rho_{a}\}},s_{\bm{\beta}_{a}}\mathbbm{1}_{\{\pi_{a}^{*}\neq\pi_{a}\}},s_{\bm{\gamma}_{a}}\mathbbm{1}_{\{\mu_{a}^{*}\neq\mu_{a}\}}, 
s_{\bm{\delta}_{a}}\mathbbm{1}_{\{\nu_{a}^{*}\neq\nu_{a}\}}\right\}.
\end{align*}

\subsection{Proof of Theorem 3}
In this theorem, we consider the setting where all the nuisance models are correctly specified. Note that, Assumption 4 holds under Assumption 1 when all the nuisance models are correct.
\subsubsection{Consistency}
Let $\xi:=\mu_{a}(\bS_1)-\mu_{a'}(\bS_1)-\theta$. 
Recall the representation \eqref{representation}. By Lemmas \ref{lem_T1}, \ref{lem_T2_DDRL}, \ref{lem_T3}, and \ref{lem_T4_DDRL}, we have
\begin{align*}
	T_{1}&=0,\\
	T_{2}^{(k)}&=O_p\left(\sigma\frac{s_1\log(d)}{N}\right),\\
	T_{3}^{(k)}&=O_p\left(\frac{1}{\sqrt{N}}\left[\sqrt{E[\zeta^{2}]}+\sqrt{E[\varepsilon^{2}]}+\sqrt{E[\xi^2]}\right]\right),\\
	T_{4}^{(k)}&=O_p\biggl(\sigma\frac{\sqrt{\max\{s_{\bm{\alpha}_{a}},s_{\bm{\beta}_{a}},s_{\bm{\gamma}_{a}},s_{\bm{\delta}_{a}}\}\log(d)}}{N}\biggl).
\end{align*}
for each $k\leq K$. By Assumption, $s_1\log(d)=o(\sqrt N)$ and $\max\{s_{\bm{\alpha}_{a}},s_{\bm{\beta}_{a}},s_{\bm{\gamma}_{a}},s_{\bm{\delta}_{a}}\}\log(d)=o(N)$. Together with Lemma \ref{lem_Lyap} , we obtain that 
\begin{align}\label{C.15'}
	\widehat{\theta}-\theta=K^{-1}\sum_{k=1}^K(T_{1}+T_{2}^{(k)}+T_{3}^{(k)}+T_{4}^{(k)})=O_p\left(\frac{1}{\sqrt{N}}\sigma\right).
\end{align}

\subsubsection{Asymptotic Normality}\label{proof:normal_thM4'}
By Lemmas \ref{lem_T1}, \ref{lem_T2_DDRL} and \ref{lem_T4_DDRL} with $s_1\log(d)=o(\sqrt N)$ and $\max\{s_{\bm{\alpha}_{a}},s_{\bm{\beta}_{a}},s_{\bm{\gamma}_{a}},s_{\bm{\delta}_{a}}\}\log(d)=o(N)$, we have 
$$\sqrt{n}\sigma^{-1}(T_{1}+T_{2}^{(k)}+T_{4}^{(k)})=o_p(1)$$
for each $k\leq K$. Hence, to demonstrate 
$$\sqrt{N}\sigma^{-1}(\thetahat-\theta)=\sqrt{N}\sigma^{-1}K^{-1}\sum_{k=1}^K(T_{1}+T_{2}^{(k)}+T_{3}^{(k)}+T_{4}^{(k)})\leadsto N(0,1),$$
we need to show
$$\sqrt{N}\sigma^{-1}K^{-1}\sum_{k=1}^KT_{3}^{(k)}=\sqrt N\sigma^{-1}\left(N^{-1}\sum_{i=1}^N\Delta^*(W_i)-\theta\right)\leadsto N(0,1),$$
where $T_{3}^{(k)}$ is defined as \eqref{def:T_3}. Here, $\Delta_{N,i}:=\Delta^*(W_i)$ is possibly dependent with $N$ since both $W_i$ and nuisance parameters $(\balpha_c^*,\bbeta_c^*,\bgamma_c^*, \bdelta_c^*)$ potentially depend on $(d_1,d_2)$, and $(d_1,d_2)=(d_{1,N},d_{2,N})$ are allowed to grow with $N$. 
Hence, $\{\Delta_{N,i}\}_{N,i}$ forms a triangular array. By Lyapunov's central limit theorem, it suffices to show that, for some $t>0$, the following Lyapunov's condition holds:
\begin{align}\label{S.5}
	\lim_{n \to \infty}\frac{E|\Delta^*(W)-\theta|^{2+t}}{n^{\frac{t}{2}}\sigma^{2+t}}=0.
\end{align}

\paragraph*{Step 1}
In order to check Lyapunov's condition, we show that for some constant $C'$,
\begin{align}\label{C.17}
	\frac{E|\Delta^*(W)-\theta|^{2+t}}{\sigma^{2+t}}<C'.
\end{align}
By Lemma \ref{lem_Lyap}, we have, for some constants $t>0$ and $C_t>0$,
\begin{align*}
	\frac{E|\Delta^*(W)-\theta|^{2+t}}{\sigma^{2+t}}
	&\leq\frac{2C_t}{c_0^{4+2t}}\biggl(\frac{E[|\zeta|^{2+t}}{\sigma^{2+t}}+\frac{E[|\varepsilon|^{2+t}]}{\sigma^{2+t}}+\frac{E|\xi|^{2+t}}{[E|\xi|^2]^{1+\frac{t}{2}}}\biggl).
\end{align*}
Let $\bm{e}_1=(1,\bm{0}_{1\times d_1})^{\top}$, then we write $\xi=\mu_{a}(\bS_1)-\mu_{a'}(\bS_1)-\theta=\bV^{\top}(\bm{\beta}_{a}^{*}-\bm{\beta}_{a'}^{*}-\bm{e}_1\theta)$.
By Lemma \ref{lem_eigen}, under Assumption 3, we have
\begin{align*}
	\|\xi\|_{\psi_2}
	&=\|(\bm{\beta}_{a}^{*}-\bm{\beta}_{a'}^{*}-\bm{e}_1\theta)^{\top}\bV\|_{\psi_2}
	\leq2\sigma_{u}\|\bm{\beta}_{a}^{*}-\bm{\beta}_{a'}^{*}-\bm{e}_1\theta\|_2.
\end{align*}
It follows from Lemma \ref{lemma:psi2norm} that
\begin{align}
	E[|\xi|^{2+t}]
	&\leq 2^{3+t} \sigma_{u}^{2+t}\|\bm{\beta}_{a}^{*}-\bm{\beta}_{a'}^{*}-\bm{e}_1\theta\|_2^{2+t}\Gamma(2+t/2).\label{C.3}
\end{align}
Similarly, by Assumption 2, we have
\begin{align}
	E[|\zeta|^{2+t}]&\leq 2^{3+t}\sigma^{2+t}\sigma_{\zeta}^{2+t}\Gamma(2+t/2), \label{C.4}\\
	E[|\varepsilon|^{2+t}]&\leq 2^{3+t}\sigma^{2+t}\sigma_{\varepsilon}^{2+t}\Gamma(2+t/2).\label{C.5}
\end{align}
By Lemma \ref{lem_eigen}, under Assumption 3, we also have
\begin{align}
	E[|\xi|^2]
	&=E[|\bV^{\top}(\bm{\beta}_{a}^{*}-\bm{\beta}_{a'}^{*}-\bm{e}_1\theta)|^2]
	\geq \|\bm{\beta}_{a}^{*}-\bm{\beta}_{a'}^{*}-\bm{e}_1\theta\|_2^2\cdot \lambda_{\min}(E[\bV\bV^{\top}])\label{C.6}\\
	&\geq \kappa_l\|\bm{\beta}_{a}^{*}-\bm{\beta}_{a'}^{*}-\bm{e}_1\theta\|_2^2.\nonumber
\end{align}
Using \eqref{C.3} and \eqref{C.6}, we get that 
\begin{align}\label{C.7}
	\frac{E|\xi|^{2+t}}{[E|\xi|^2]^{1+\frac{t}{2}}}
	\leq\frac{2^{3+t} \sigma_{u}^{2+t}\|\bm{\beta}_{a}^{*}-\bm{\beta}_{a'}^{*}-\bm{e}_1\theta\|_2^{2+t}\Gamma(2+t/2) }{\kappa_l^{1+t/2}\|\bm{\beta}_{a}^{*}-\bm{\beta}_{a'}^{*}-\bm{e}_1\theta\|_2^{2+t}}
	=\frac{2^{3+t} \sigma_{u}^{2+t}\Gamma(2+t/2) }{\kappa_l^{1+t/2}}.
\end{align}
Using \eqref{C.4}, \eqref{C.5} and \eqref{C.7}, then we obtain that
\begin{align*}
&\frac{E| \Delta^*(W)-\theta|^{2+t}}{\sigma^{2+t}}\\
&\quad\leq\frac{2C_t}{c_0^{4+2t}}\biggl(2^{3+t}\sigma_{\zeta}^{2+t}\Gamma(2+t/2)+2^{3+t}\sigma_{\varepsilon}^{2+t}\Gamma(2+t/2)+\frac{2^{3+t} \sigma_{u}^{2+t}\Gamma(2+t/2) }{\kappa_l^{1+t/2}}\biggl)=:C'.
\end{align*}
That is, \eqref{C.17} holds and hence the Lyapunov's condition is satisfied.

\paragraph*{Step 2}
Now, we show the consistency of the asymptotic variance's estimator. In this step, the expectations are taken w.r.t. the joint distribution of $(W_i)_{i\in \mathcal I_k}$.
By \eqref{C.15'}, we have $ \widehat{\theta}-\theta=O_p(\sigma/\sqrt{N})$. Then, we show, for each $k\leq K$,
\begin{align}\label{C.23}
	\left[\frac{1}{n}\sum_{i\in \mathcal I_k}|\Deltahat(W_i) - \Delta^*(W_i)|^2\right]^{\frac{1}{2}}=o_p(\sigma).
\end{align}
It follows from Jensen's inequality that
\begin{align*}
	&E\biggl[\frac{1}{n}\sum_{i\in \mathcal I_k}|\Deltahat(W_i) - \Delta^*(W_i)|^2\biggl]^{\frac{1}{2}}\leq\biggl\{E\biggl[\frac{1}{n}\sum_{i\in \mathcal I_k}|\Deltahat(W_i) - \Delta^*(W_i)|^2\biggl]\biggl\}^{\frac{1}{2}}\\
	&\qquad=[E|\Deltahat(W) - \Delta^*(W)|^2]^{\frac{1}{2}}
	\overset{(i)}{=}O_p\biggl(\sigma\frac{\sqrt{\max\{s_{\bm{\alpha}_{a}},s_{\bm{\beta}_{a}},s_{\bm{\gamma}_{a}},s_{\bm{\delta}_{a}}\}\log(d)}}{N}\biggl),
\end{align*}
where (i) follows from \eqref{C.26'} in Lemma \ref{lem_T4_DDRL} with correctly specified nuisance models. 
By Markov's inequality with $\max\{s_{\bm{\alpha}_{a}},s_{\bm{\beta}_{a}},s_{\bm{\gamma}_{a}},s_{\bm{\delta}_{a}}\}\log(d)=o(N)$, we have
\begin{align*}
	\left[\frac{1}{n}\sum_{i\in \mathcal I_k}|\Deltahat(W_i) - \Delta^*(W_i)|^2\right]^{\frac{1}{2}}
	=O_p\biggl(\sigma\frac{\sqrt{\max\{s_{\bm{\alpha}_{a}},s_{\bm{\beta}_{a}},s_{\bm{\gamma}_{a}},s_{\bm{\delta}_{a}}\}\log(d)}}{N}\biggl)=o_p(\sigma).
\end{align*}
Therefore, using \eqref{C.15'}, \eqref{C.17} and \eqref{C.23}, we get $\widehat{\sigma}^2-\sigma^2=o_p(\sigma^2)$ by Lemma \ref{lem_sigmahat}.

\subsection{Proofs of Auxiliary Lemmas}

\begin{proof}[Proof of Lemma \ref{lem_T2_DDRL}]
In this proof, the expectations are taken w.r.t. the distribution of a new observation $W$. 
Recall the representation \eqref{eq:T3_Q} that $T_{2}=E[\widehat \Delta(W)-\Delta^*(W)]=\sum_{i=1}^6E[Q_i]$. Here, we first upper bound $E[Q_1+Q_2+Q_3]$. Same as in the proof of Lemma \ref{lem_T2_general}, we also have \eqref{eq:Q_R} holds, with $R_i$s defined in \eqref{def:R_1}-\eqref{def:R_8}. Same as in \eqref{def:R_7} and \eqref{def:R_8}, we have $R_7=R_8=0$.	
Now, we obtain an upper bound for $R_i$ $(i\in\{1,\dots,6\})$. For $R_1+R_2$, since $|A_{1}|\leq1$, $|\pi_{a}^*(\bS_1)|\leq1$ and $|\rho_{a}^*(\bS)|\leq1$, we have
\begin{align}
	R_1+R_2&\overset{(i)}{\leq}\{E|\widehat{\pi}_{a}(\bS_1)|^{-3}\}^{\frac{1}{3}}\left\{E\left|\frac{1}{\widehat{\rho}_{a}(\bS)}-\frac{1}{\rho_{a}^{*}(\bS)}\right|^{3}\right\}^{\frac{1}{3}}\{E|\widehat{\nu}_{a}(\bS)-\nu_{a}^{*}(\bS)|^{3}\}^{\frac{1}{3}}\nonumber\\
	&\qquad+\left\{E\left|\frac{1}{\widehat{\pi}_{a}(\bS_1)}-\frac{1}{\pi_{a}^*(\bS_1)}\right|^{2}\right\}^{\frac{1}{2}}\{E|\widehat{\mu}_{a}(\bS_1)-\mu_{a}^{*}(\bS_1)|^{2}\}^{\frac{1}{2}}\nonumber\\
	&\overset{(ii)}{=}O_p\biggl(\frac{\sigma\sqrt{s_{\balpha_a}s_{\bdelta_a}}\log(d)}{N}+\frac{\sigma\sqrt{s_{\bbeta_a}s_{\bgamma_a}}\log(d)}{N}\label{rate:R1-2}\\
	&\qquad\qquad+\frac{\sigma\sqrt{s_{\balpha_a}s_{\bgamma_a}}\log(d)}{N}\mathbbm{1}_{\{\rho_{a}^{*}(\cdot)\neq\rho_{a}(\cdot)\}}+\frac{\sigma\sqrt{s_{\bdelta_a}s_{\bgamma_a}}\log(d)}{N}\mathbbm{1}_{\{\nu_{a}^{*}(\cdot)\neq\nu_{a}(\cdot)\}}\biggl),\nonumber
\end{align}
where (i) holds by H{\"o}lder's inequality; (ii) follows from Lemmas \ref{cor_mu2}, \ref{lem_conclusion} and Theorem 9 with $s_{\bm{\gamma}_{a}}\log(d) =o(N) $ and $d_1\asymp d$. 
Similarly, for $R_3+R_4$, since $|A_{1}|\leq1$, $|\rho_{a}^*(\bS)-\rho_{a}(\bS)|\leq1$, $|\pi_{a}^*(\bS_1)-\pi_{a}(\bS_1)|\leq1$,
\begin{align}
	R_3+R_4\leq&\{E|\widehat{\pi}_{a}(\bS_1)|^{-3}\}^{\frac{1}{3}}\{E|\widehat{\rho}_{a}(\bS)|^{-3}\}^{\frac{1}{3}}\{E|\widehat{\nu}_{a}(\bS)-\nu_{a}^{*}(\bS)|^{3}\}^{\frac{1}{3}}\mathbbm{1}_{\{\rho_{a}^{*}(\cdot)\neq\rho_{a}(\cdot)\}}\nonumber\\
	&\quad+\{E|\widehat{\pi}_{a}(\bS_1)|^{-2}\}^{\frac{1}{2}}\{E|\widehat{\mu}_{a}(\bS_1)- \mu_{a}^{*}(\bS_1)|^{2}\}^{\frac{1}{2}}\mathbbm{1}_{\{\pi_{a}^{*}(\cdot)\neq\pi_{a}(\cdot)\}}\nonumber\\
	&\overset{(i)}{=}O_p\biggl(\sigma\sqrt\frac{s_{\bbeta_a}\log(d)}{N}\mathbbm{1}_{\{\pi_{a}^{*}(\cdot)\neq\pi_{a}(\cdot)\}}+\sigma \sqrt{\frac{s_{\bm{\alpha}_{a}}\log(d)}{N}}\mathbbm{1}_{\{\rho_{a}^{*}(\cdot)\neq\rho_{a}(\cdot)\}}\label{rate:R3-4}\\
	&\qquad+\frac{\sigma\sqrt{s_{\bdelta_a}s_{\balpha_a}}\log(d)}{N}\mathbbm{1}_{\{\pi_{a}^{*}(\cdot)\neq\pi_{a}(\cdot)\}}+\sigma\sqrt\frac{s_{\bdelta_a}\log(d)}{N}\mathbbm{1}_{\{\pi_{a}^{*}(\cdot)\neq\pi_{a}(\cdot), \nu_{a}^{*}(\cdot)\neq\nu_{a}(\cdot)\}}\biggl),\nonumber
\end{align}
where (i) holds by Lemmas \ref{cor_mu2}, \ref{lem_conclusion} and Theorem 9 with $d_1\asymp d$.
For $R_5+R_6$, since $|\rho_{a}^*(\bS)|\leq1$,
\begin{align}
	R_5+R_6&\leq\{E|\widehat{\pi}_{a}(\bS_1)|^{-4}\}^{\frac{1}{4}}\left\{E\left|\frac{1}{\widehat{\rho}_{a}(\bS)}-\frac{1}{\rho_{a}^{*}(\bS)}\right|^{4}\right\}^{\frac{1}{4}}\{E[A_{1}|\nu_{a}^{*}(\bS)-\nu_{a}(\bS)|^{2}]\}^{\frac{1}{2}}\nonumber\\
	&\qquad+\left\{E\left|\frac{1}{\widehat{\pi}_{a}(\bS_1)}-\frac{1}{\pi_{a}^*(\bS_1)}\right|^{2}\right\}^{\frac{1}{2}}E[A_{1}|\{ \mu_{a}^{*}(\bS_1)-\mu_{a}(\bS_1)|^{2}]\}^{\frac{1}{2}}\nonumber\\
	&\overset{(i)}{=} O_p\left(\sigma\sqrt{\frac{s_{\bm{\gamma}_{a}}\log(d)}{N}}\mathbbm{1}_{\{\mu_{a}^{*}(\cdot)\neq\mu_{a}(\cdot)\}}+\sigma\sqrt{\frac{s_{\bm{\delta}_{a}}\log(d)}{N}}\mathbbm{1}_{\{\nu_{a}^{*}(\cdot)\neq\nu_{a}(\cdot)\}}\right).\label{rate:R5-6}
\end{align}
where (i) follows from Lemma \ref{lem_conclusion}, \eqref{25}, \eqref{28}, and Lemma \ref{lemma:sigma_mis}.
Combining \eqref{rate:R1-2}-\eqref{rate:R5-6} with $s_{\bm{\gamma}_{a}}\log(d) =o(N) $, we have
\begin{align*}
	E[Q_1+Q_2+Q_3]&=O_p\left(\sigma\frac{s_1\log(d)}{N}+\sigma\sqrt{\frac{s_2\log(d)}{N}}\right).
\end{align*}
Analogously to $E[Q_1+Q_2+Q_{3}]$, we have the same result for $E[Q_{4}+Q_{5}+Q_{6}]$. Theorefore, \eqref{B.3_T_{2}'} follows.

(b) When all the models are correctly specified, we have $s_2=0$. Hence, by part (a), \eqref{C.28'} holds.
\end{proof}

\begin{proof}[Proof of Lemma \ref{lem_T4_DDRL}]
In this proof, the expectations are taken w.r.t. a new observation $W$, unless stated otherwise.
We first show that \eqref{C.26'} holds. Recall the representation \eqref{eq:psi_Q}, by Minkowski inequality, we have 
\begin{align*}
	[E(\widehat \Delta(W)-\Delta^*(W))^2]^{\frac{1}{2}}\leq \sum_{i=1}^{6}[E(Q_{i}^2)]^{\frac{1}{2}},
\end{align*}	
where $Q_i$ $(i\in\{1,\dots,6\})$ are defined as\eqref{Q_1}-\eqref{Q_6}. In the following, we show that
$$
\sum_{i=1}^{6}[E(Q_{i}^2)]^{\frac{1}{2}}=O_p\left(\sigma\sqrt{\frac{\max\{s_{\bm{\alpha}_{a}},s_{\bm{\beta}_{a}},s_{\bm{\gamma}_{a}},s_{\bm{\delta}_{a}}\}\log(d)}{N}}\right).$$		
By Minkowski's inequality,
\begin{align}
	[E(Q_1^2)]^{\frac{1}{2}}&\leq \left\{E\left[\frac{A_{1}A_2}{\widehat{\pi}_{a}(\bS_1)\widehat{\rho}_{a}(\bS)}(\widehat{\nu}_{a}(\bS)-\nu_{a}^*(\bS))\right]^2\right\}^{\frac{1}{2}}\nonumber\\
	&\qquad+ \left\{E\left[\left(\frac{A_{1}A_2}{\widehat{\pi}_{a}(\bS_1)\widehat{\rho}_{a}(\bS)}-\frac{A_{1}A_2}{\pi_{a}^{*}(\bS_1)\rho_{a}^{*}(\bS)}\right)(Y-\nu_{a}^{*}(\bS))\right]^2\right\}^{\frac{1}{2}}\nonumber\\	&\overset{(i)}{\leq}\left\{E\left[\frac{1}{\widehat{\pi}_{a}(\bS_1)\widehat{\rho}_{a}(\bS)}(\widehat{\nu}_{a}(\bS)-\nu_{a}^*(\bS))\right]^2\right\}^{\frac{1}{2}}\nonumber\\
	&\qquad+ \left\{E\left[\left(\frac{1}{\widehat{\pi}_{a}(\bS_1)\widehat{\rho}_{a}(\bS)}-\frac{1}{\pi_{a}^{*}(\bS_1)\rho_{a}^{*}(\bS)}\right)\zeta\right]^2\right\}^{\frac{1}{2}}\nonumber\\
	&\overset{(ii)}{\leq} \{E|\widehat{\pi}_{a}(\bS_1)|^{-6}\}^{\frac{1}{6}}\{E|\widehat{\rho}_{a}(\bS)|^{-6}\}^{\frac{1}{6}}\{E|\widehat{\nu}_{a}(\bS)-\nu_{a}^{*}(\bS)|^{6}\}^{\frac{1}{6}}\nonumber\\
	&\qquad+\{E|\zeta|^{4}\}^{\frac{1}{4}}\left\{E\left|\frac{1}{\widehat{\pi}_{a}(\bS_1)\widehat{\rho}_{a}(\bS)}-\frac{1}{\pi_{a}^{*}(\bS_1)\rho_{a}^{*}(\bS)}\right|^{4}\right\}^{\frac{1}{4}}\nonumber\\
	&\overset{(iii)}{=}O_p\left(\sigma\sqrt{\frac{\max\{s_{\bm{\alpha}_{a}},s_{\bm{\gamma}_{a}},s_{\bm{\delta}_{a}}\}\log(d)}{N}}\right),\label{bound:Q1^2}
\end{align}
where (i) holds by the fact that $|A_{1}|\leq1$, $|A_2|\leq1$ and $A_{1}A_2\zeta=\zeta_{a}=A_{1}A_2(Y-\nu_{a}^{*}(\bS))$; (ii) holds by H{\"o}lder's inequality; (iii) follows from Lemmas \ref{cor_mu2}, \ref{lem_conclusion}, and
under Assumption 2, by Lemma \ref{lemma:psi2norm},
\begin{align}\label{B.9'}
	E[|\zeta|^{4}]\leq 8\sigma^{4}\sigma_{\zeta}^{4},\quad E[|\varepsilon|^{4}]\leq 8\sigma^{4}\sigma_{\varepsilon}^{4}.
\end{align}
Then, similarly as above, we obtain
\begin{align}
	[E(Q_2^2)]^{\frac{1}{2}}&\leq \left\{E\left[\frac{A_{1}}{\widehat{\pi}_{a}(\bS_1)}(\widehat{\nu}_{a}(\bS)-\nu_{a}^{*}(\bS))\right]^2\right\}^{\frac{1}{2}}
	+\left\{E\left[\frac{A_{1}}{\widehat{\pi}_{a}(\bS_1)}(\widehat{\mu}_{a}(\bS_1)-\mu_{a}^{*}(\bS_1))\right]^2\right\}^{\frac{1}{2}}\nonumber\\
	&\qquad+\left\{E\left[\left(\frac{A_{1}}{\widehat{\pi}_{a}(\bS_1)}-\frac{A_{1}}{\pi_{a}^{*}(\bS_1)}\right)(\nu_{a}^{*}(\bS)-\mu_{a}^{*}(\bS_1))\right]^2\right\}^{\frac{1}{2}}\nonumber\\
	&\leq \left\{E\left[\frac{1}{\widehat{\pi}_{a}(\bS_1)}(\widehat{\nu}_{a}(\bS)-\nu_{a}^{*}(\bS))\right]^2\right\}^{\frac{1}{2}}
	+\left\{E\left[\frac{1}{\widehat{\pi}_{a}(\bS_1)}(\widehat{\mu}_{a}(\bS_1)-\mu_{a}^{*}(\bS_1))\right]^2\right\}^{\frac{1}{2}}\nonumber\\
	&\qquad+\left\{E\left[\left(\frac{1}{\widehat{\pi}_{a}(\bS_1)}-\frac{1}{\pi_{a}^{*}(\bS_1)}\right)\varepsilon\right]^2\right\}^{\frac{1}{2}}\nonumber\\
	&\leq \{E|\widehat{\pi}_{a}(\bS_1)|^{-4}\}^{\frac{1}{4}}\{E|\widehat{\nu}_{a}(\bS)-\nu_{a}^{*}(\bS)|^{4}\}^{\frac{1}{4}}+ \left\{E\left|\frac{1}{\widehat{\pi}_{a}(\bS_1)}-\frac{1}{\pi_{a}^{*}(\bS_1)}\right|^{4}\right\}^{\frac{1}{4}}\{E|\varepsilon|^{4}\}^{\frac{1}{4}}\nonumber\\
	&\qquad +\{E|\widehat{\pi}_{a}(\bS_1)|^{-4}\}^{\frac{1}{4}}\{E|\widehat{\mu}_{a}(\bS_1)-\mu_{a}^{*}(\bS_1)|^{4}\}^{\frac{1}{4}}\nonumber\\
	&\overset{(i)}{=}O_p\biggl(\sigma \sqrt{\frac{\max\{s_{\bm{\alpha}_{a}},s_{\bm{\beta}_{a}},s_{\bm{\gamma}_{a}}\}\log(d)}{N}}+\sigma\sqrt\frac{s_{\bdelta_a}\log(d)}{N}\mathbbm{1}_{\{\nu_{a}^{*}(\cdot)\neq\nu_{a}(\cdot)\}}\biggl),\label{bound:Q2^2}
\end{align}
where (i) follows from Lemmas \ref{cor_mu2}, \ref{lem_conclusion}, \eqref{B.9'}, and Theorem 9 with $s_{\bm{\delta}_{a}}\log(d) =o(N) $ and $d_1\asymp d$.
By Theorem 9 , we also have
\begin{align}
	[E(Q_{3}^2)]^{\frac{1}{2}}&=\left\{E[\muhat_a(\bS_1)-\mu_{a}^{*}(\bS_1)]^2\right\}^{1/2}\nonumber\\
	&=O_p\biggl(\sigma\sqrt\frac{s_{\bbeta_a}\log(d_1)}{N}+\frac{\sigma\sqrt{s_{\bdelta_a}s_{\balpha_a}}\log(d)}{N}\nonumber\\
	&\qquad+\sigma\sqrt\frac{s_{\balpha_a}\log(d)}{N}\mathbbm{1}_{\{\rho_{a}^{*}(\cdot)\neq\rho_{a}(\cdot)\}}+\sigma\sqrt\frac{s_{\bdelta_a}\log(d)}{N}\mathbbm{1}_{\{\nu_{a}^{*}(\cdot)\neq\nu_{a}(\cdot)\}}\biggl).\label{bound:Q3^2}
\end{align}
Combining \eqref{bound:Q1^2}-\eqref{bound:Q3^2}, we have
\begin{align*}
	[E(Q_1^2)]^{\frac{1}{2}}+[E(Q_2^2)]^{\frac{1}{2}}+[E(Q_{3}^2)]^{\frac{1}{2}}=O_p\left(\sigma\sqrt{\frac{\max\{s_{\bm{\alpha}_{a}},s_{\bm{\beta}_{a}},s_{\bm{\gamma}_{a}},s_{\bm{\delta}_{a}}\}\log(d)}{N}}\right).
\end{align*}
Repeating the procedure above, we obtain the same result for $[E(Q_{4}^2)]^{\frac{1}{2}}+[E(Q_{5}^2)]^{\frac{1}{2}}+[E(Q_{6}^2)]^{\frac{1}{2}}$. 
Therefore, \eqref{C.26'} holds.
Now, we show \eqref{B.4_T_{4}'}. Recall the definition \eqref{def:T_4}, we have $T_{4}:=n^{-1}\sum_{i\in \mathcal I_k}[\widehat \Delta(W_i)-\Delta^*(W_i)]-E[\widehat \Delta(W)-\Delta^*(W)]$. By Chebyshev's inequality, we have for any $t>0$,
\begin{align}
	P(|T_{4}|>t)
	&\leq \frac{1}{t^2}\mathrm{Var}\left[\frac{1}{n}\sum_{i\in \mathcal I_k}(\widehat \Delta(W_i)-\Delta^*(W_i))\right]\label{E_notnew1'}\\
	&\leq \frac{1}{nt^2} E[\widehat \Delta(W)-\Delta^*(W)]^2.\nonumber
\end{align}
In the righ-hand side of \eqref{E_notnew1'}, the variance is taken over the joint distribution of $(W_i)_{i\in\mathcal I_k}$. Note that, based on the sample-splitting, the nuisance estimates are independent of $(W_i)_{i\in\mathcal I_k}$. Together with \eqref{C.26'}, we conclude that \eqref{B.4_T_{4}'} holds.	
\end{proof}

\section{Asymptotic theory for Dynamic Treatment Lasso (DTL) estimator}\label{sec:proof_DTL}
In this section, we provide theoretical results for the DTL estimator. The $\ell_1$-regularized nuisance estimates $\widehat{\balpha}_c$, $\widehat{\bgamma}_c$, $\widehat{\bdelta}_c$ and the target nuisance estimates $\balpha_c^*$, $\bgamma_c^*$, $\bdelta_c^*$ are the same as in Section \ref{sec:proof_tdr}. For the first-time conditional mean function, we consider the nested-regression-based estimator $\bbetahat_{c,{\mbox{\tiny NR}}}$ defined in Section 2.2. With a slight abuse of notation, we consider set the general nuisance estimates as $\widehat{\nu}_{c}(\bS)=\bU^\top \balphahat_c$, $\widehat{\mu}_{c}(\bS)=\widehat{\mu}_{c,{\mbox{\tiny NR}}}(\bS)=\bV^\top \bbetahat_{c,{\mbox{\tiny NR}}}$, $\pihat_{c}(\bS_1)=g(\bV^\top\bgammahat_c)$, and $\rhohat_{c}(\bS)=g(\bU^\top\bdeltahat_c)$ for each $c\in\{a,a'\}$. Then $\widehat{\psi}_{c}(\cdot)$, defined as \eqref{def:DR-scorehat}, can be written as
\begin{align*}
\widehat{\psi}_c(W)=\bV^{\top}\widehat{\bbeta}_{c,{\mbox{\tiny NR}}}+\mathbbm1_{\{A_{1}=c_1\}}\frac{\bU^{\top}\widehat{\balpha}_c-\bV^{\top}\widehat{\bbeta}_{c,{\mbox{\tiny NR}}}}{g(\bV^{\top}\widehat{\bgamma}_c)}+\mathbbm1_{\{A_{1}=c_1,A_{2}=c_2\}}\frac{Y-\bU^{\top}\widehat{\balpha}_c}{g(\bV^{\top}\widehat{\bgamma}_c)g(\bU^{\top}\widehat{\bdelta}_c)}.
\end{align*}
With a slight abuse of notations, we set the general working models as $\nu_{c}^*(\bS)=\bU^\top \balpha_c^*$, $\mu_{c}^*(\bS)=\mu_{c,{\mbox{\tiny NR}}}^*(\bS)=\bV^\top \bbeta_{c,{\mbox{\tiny NR}}}^*$, $\pi_{c}^*(\bS_1)=g(\bV^\top\bgamma_c^*)$, and $\rho_c^*(\bS)=g(\bU^\top\bdelta_c^*)$ for each $c\in\{a,a'\}$. Then $ {\psi}_c^*(\cdot)$, defined as \eqref{def:psi-gen-star}, can be written as
\begin{align*}
\psi_c^*(W)&=\bV^{\top}\bbeta_{c,{\mbox{\tiny NR}}}^*+\mathbbm1_{\{A_{1}=c_1\}}\frac{\bU^{\top}\balpha_c^*-\bV^{\top}\bbeta_{c,{\mbox{\tiny NR}}}^*}{g(\bV^{\top}\bgamma_c^*)}+\mathbbm1_{\{A_{1}=c_1,A_{2}=c_2\}}\frac{Y-\bU^{\top}\balpha_c^*}{g(\bV^{\top}\bgamma_c^*)g(\bU^{\top}\bdelta_c^*)}.
\end{align*}

\subsection{Auxiliary Lemmas}
\begin{lemma}\label{lem_T2_lasso}
(a) Suppose that at least one of $\mu_{a,{\mbox{\tiny NR}}}^*(\cdot)
$ and $\pi_{a}^{*}(\cdot)$ is correctly specified, and at least one of the models $\nu_{a}^{*}(\cdot)$ and $\rho_{a}^{*}(\cdot)$ is correctly specified. Let Assumptions 1-4 hold. Assume that ${\max\{s_{\bm{\alpha}_{a}}, s_{\bm{\beta}_{a}}, s_{\bm{\gamma}_{a}},s_{\bm{\delta}_{a}}\}\log(d)} =o(N)$. Then,
\begin{align}
	T_{2}=O_p\left(\sigma\frac{s'_1\log(d)}{N}+\sigma\sqrt{\frac{s'_2\log(d)}{N}}\right),\label{B.3_T_{2}}
\end{align}
where $T_{2}$ is defined as \eqref{def:T_2} and
\begin{align*} 
	s'_1&:=\max\{\sqrt{s_{\bm{\alpha}_{a}}s_{\bm{\gamma}_{a}}},\sqrt{s_{\bm{\alpha}_{a}}s_{\bm{\delta}_{a}}}, \sqrt{s_{\bm{\beta}_{a}}s_{\bm{\gamma}_{a}}}\},\\
	s_2'&:=\max\left\{s_{\bm{\alpha}_{a}}\mathbbm{1}_{\{\pi_{a}^{*}\neq\pi_{a}\;\text{or}\;\rho_{a}^{*}\neq\rho_{a}\}},s_{\bm{\beta}_{a}}\mathbbm{1}_{\{\pi_{a}^{*}\neq\pi_{a}\}},s_{\bm{\gamma}_{a}}\mathbbm{1}_{\{\mu_{a,{\mbox{\tiny NR}}}^*\neq\mu_{a}\}}, s_{\bm{\delta}_{a}}\mathbbm{1}_{\{\nu_{a}^{*}\neq\nu_{a}\}}\right\}.
\end{align*}

(b) Suppose that all the nuisance models $\mu_{a,{\mbox{\tiny NR}}}^*(\cdot)
$, $\nu_a^*(\cdot)$, $\pi_a^*(\cdot)$, and $\rho_a^*(\cdot)$ are correctly specified. Let Assumptions 1-3 hold. Assume that ${\max\{s_{\bm{\alpha}_{a}}, s_{\bm{\beta}_{a}}, s_{\bm{\gamma}_{a}},s_{\bm{\delta}_{a}}\}\log(d)} =o(N)$ Then, 
\begin{align}
	T_{2}=O_p\left(\sigma\frac{s'_1\log(d)}{N}\right).\label{C.28}
\end{align}
\end{lemma}

\begin{lemma}\label{lem_T4_lasso}
Suppose that at least one of $\mu_{a,{\mbox{\tiny NR}}}^*(\cdot)
$ and $\pi_{a}^{*}(\cdot)$ is correctly specified, and at least one of the models $\nu_{a}^{*}(\cdot)$ and $\rho_{a}^{*}(\cdot)$ is correctly specified. Let Assumptions 1-4 hold. Assume that ${\max\{s_{\bm{\alpha}_{a}}, s_{\bm{\beta}_{a}}, s_{\bm{\gamma}_{a}},s_{\bm{\delta}_{a}}\}\log(d)} =o(N)$. Then, 
\begin{align}
	[E(\Deltahat(W)- \Delta^*(W) )^2]^{\frac{1}{2}}&=O_p\left(\sigma\sqrt{\frac{\max\{s_{\bm{\alpha}_{a}},s_{\bm{\beta}_{a}},s_{\bm{\gamma}_{a}},s_{\bm{\delta}_{a}}\}\log(d)}{N}}\right),\label{C.26}\\
	T_{4}&=O_p\biggl(\sigma\frac{\sqrt{\max\{s_{\bm{\alpha}_{a}},s_{\bm{\beta}_{a}},s_{\bm{\gamma}_{a}},s_{\bm{\delta}_{a}}\}\log(d)}}{N}\biggl),\label{B.4_T_{4}}
\end{align}
where $T_{4}$ is defined as \eqref{def:T_4}. 
\end{lemma}

\subsection{Proof of Theorem 4}
Let $\xi=\mu_{a,{\mbox{\tiny NR}}}(\bS_1)-\mu_{a',{\mbox{\tiny NR}}}(\bS_1)-\theta$.
Recall the representation \eqref{representation}. By Lemmas \ref{lem_T1}, \ref{lem_T2_lasso}, \ref{lem_T3}, and \ref{lem_T4_lasso}, we have 
\begin{align*}
	T_{1}&=0,\\
	T_{2}^{(k)}&=O_p\left(\sigma\frac{s'_1\log(d)}{N}+\sigma\sqrt{\frac{s'_2\log(d)}{N}}\right),\\
	T_{3}^{(k)}&=O_p\left(\frac{1}{\sqrt{N}}\left[\sqrt{E[\zeta^{2}]}+\sqrt{E[\varepsilon^{2}]}+\sqrt{E[\xi^2]}\right]\right),\\
	T_{4}^{(k)}&=O_p\biggl(\sigma\frac{\sqrt{\max\{s_{\bm{\alpha}_{a}},s_{\bm{\beta}_{a}},s_{\bm{\gamma}_{a}},s_{\bm{\delta}_{a}}\}\log(d)}}{N}\biggl).
\end{align*}
for each $k\leq K$. Therefore, by Lemma \ref{lemma:sigma_mis} with ${\max\{s_{\bm{\alpha}_{a}}, s_{\bm{\beta}_{a}}, s_{\bm{\gamma}_{a}},s_{\bm{\delta}_{a}}\}\log(d)} =o(N)$, we obtain that
\begin{align*}
	\widehat{\theta}_{\tiny\mbox{DTL}}-\theta&=K^{-1}\sum_{k=1}^K(T_{1}+T_{2}^{(k)}+T_{3}^{(k)}+T_{4}^{(k)})\\
	&=O_p\left(\sigma\frac{s'_1\log(d)}{N}+\sigma\sqrt{\frac{s'_2\log(d)}{N}}+\frac{1}{\sqrt{N}}\sigma\right),
\end{align*}
where
\begin{align*} 
	s'_1&:=\max\{\sqrt{s_{\bm{\alpha}_{a}}s_{\bm{\gamma}_{a}}},\sqrt{s_{\bm{\alpha}_{a}}s_{\bm{\delta}_{a}}}, \sqrt{s_{\bm{\beta}_{a}}s_{\bm{\gamma}_{a}}}\},\\
	s_2'&:=\max\left\{s_{\bm{\alpha}_{a}}\mathbbm{1}_{\{\pi_{a}^{*}\neq\pi_{a}\;\text{or}\;\rho_{a}^{*}\neq\rho_{a}\}},s_{\bm{\beta}_{a}}\mathbbm{1}_{\{\pi_{a}^{*}\neq\pi_{a}\}},s_{\bm{\gamma}_{a}}\mathbbm{1}_{\{\mu_{a,{\mbox{\tiny NR}}}^*\neq\mu_{a}\}}, s_{\bm{\delta}_{a}}\mathbbm{1}_{\{\nu_{a}^{*}\neq\nu_{a}\}}\right\}.
\end{align*}

\subsection{Proof of Theorem 5}
	In this theorem, we consider the setting where all the nuisance models are correctly specified. Note that, Assumption 4 holds under Assumption 1 when all the nuisance models are correct.
	\subsubsection{Consistency}
	Let $\xi=\mu_{a}(\bS_1)-\mu_{a'}(\bS_1)-\theta$.
	Recall the representation \eqref{representation},
	by Lemmas \ref{lem_T1}, \ref{lem_T2_lasso}, \ref{lem_T3}, and \ref{lem_T4_lasso} in that order we have
	\begin{align*}
		T_{1}&=0,\\
		T_{2}^{(k)}&=O_p\left(\sigma\frac{s'_1\log(d)}{N}\right),\\
		T_{3}^{(k)}&=O_p\left(\frac{1}{\sqrt{N}}\left[\sqrt{E[\zeta^{2}]}+\sqrt{E[\varepsilon^{2}]}+\sqrt{E[\xi^2]}\right]\right),\\
		T_{4}^{(k)}&=O_p\biggl(\sigma\frac{\sqrt{\max\{s_{\bm{\alpha}_{a}},s_{\bm{\beta}_{a}},s_{\bm{\gamma}_{a}},s_{\bm{\delta}_{a}}\}\log(d)}}{N}\biggl).
	\end{align*}
	for each $k\leq K$. By Assumption, $s'_1\log(d)=o(\sqrt N)$ and $\max\{s_{\bm{\alpha}_{a}},s_{\bm{\beta}_{a}},s_{\bm{\gamma}_{a}},s_{\bm{\delta}_{a}}\}\log(d)=o(N)$. Together with Lemma \ref{lem_Lyap} , we obtain that 
	\begin{align}\label{C.15}
		\widehat{\theta}_{\tiny\mbox{DTL}}-\theta=K^{-1}\sum_{k=1}^K(T_{1}+T_{2}^{(k)}+T_{3}^{(k)}+T_{4}^{(k)})=O_p\left(\frac{1}{\sqrt{N}}\sigma\right).
	\end{align}
\subsubsection{Asymptotic Normality}\label{proof:normal_thM4}
By Lemmas \ref{lem_T1}, \ref{lem_T2_lasso}, and \ref{lem_T4_lasso} with $s'_1\log(d)=o(\sqrt N)$ and $\max\{s_{\bm{\alpha}_{a}},s_{\bm{\beta}_{a}},s_{\bm{\gamma}_{a}},s_{\bm{\delta}_{a}}\}\log(d)=o(N)$, we have 
$$\sqrt{n}\sigma^{-1}(T_{1}+T_{2}^{(k)}+T_{4}^{(k)})=o_p(1)$$
for each $k\leq K$.
In addition, repeating Section \ref{proof:normal_thM4'} in the proof of Theorem 3, we also have
$$\sqrt{N}\sigma^{-1}K^{-1}\sum_{k=1}^KT_{3}^{(k)}\leadsto N(0,1),$$
which implies, 
$$\sqrt{N}\sigma^{-1}(\thetahat_{\tiny\mbox{DTL}}-\theta)=\sqrt{N}\sigma^{-1}K^{-1}\sum_{k=1}^K(T_{1}+T_{2}^{(k)}+T_{3}^{(k)}+T_{4}^{(k)})\leadsto N(0,1).$$
By \eqref{C.26} and $\max\{s_{\bm{\alpha}_{a}},s_{\bm{\beta}_{a}},s_{\bm{\gamma}_{a}},s_{\bm{\delta}_{a}}\}\log(d)=o(N)$, we have
\begin{align*}
	\left[\frac{1}{n}\sum_{i\in \mathcal I_k}|\Deltahat(W_i) - \Delta^*(W_i)|^2\right]^{\frac{1}{2}}=o_p(\sigma).
\end{align*}
Together with \eqref{C.15} and \eqref{C.17}, we have 
$\widehat{\sigma}_{\tiny\mbox{DTL}}^2-\sigma^2=o_p(\sigma^2)$ by Lemma \ref{lem_sigmahat}.

\subsection{Proofs of Auxiliary Lemmas}

\begin{proof}[Proof of Lemma \ref{lem_T2_lasso}]
In this proof, the expectations are taken w.r.t. the distribution of a new observation $W$. We repeat the proof of Lemma \ref{lem_T2_DDRL}, except here we consider the nested-regression-based estimate $\widehat{\mu}_{c}(\cdot)=\widehat{\mu}_{c,{\mbox{\tiny NR}}}(\cdot)$ and the corresponding target $\mu_{c}^*(\cdot)=\mu_{c,{\mbox{\tiny NR}}}^*(\cdot)$. Note that
\begin{align}
 &T_{2}=E[\widehat \Delta(W)-\Delta^*(W)]=\sum_{i=1}^6E[Q_i],\label{R1-6}
\end{align}
with $E[Q_1+Q_2+Q_3]=\sum_{i=1}^6 R_i$,
where $R_i$s are defined in \eqref{def:R_1}-\eqref{def:R_6}. 
For $R_1+R_2$, we have
\begin{align}
	R_1+R_2&\overset{(i)}{\leq}\{E|\widehat{\pi}_{a}(\bS_1)|^{-3}\}^{\frac{1}{3}}\left\{E\left|\frac{1}{\widehat{\rho}_{a}(\bS)}-\frac{1}{\rho_{a}^{*}(\bS)}\right|^{3}\right\}^{\frac{1}{3}}\{E|\widehat{\nu}_{a}(\bS)-\nu_{a}^{*}(\bS)|^{3}\}^{\frac{1}{3}}\nonumber\\
	&\qquad+\left\{E\left|\frac{1}{\widehat{\pi}_{a}(\bS_1)}-\frac{1}{\pi_{a}^*(\bS_1)}\right|^{2}\right\}^{\frac{1}{2}}\{E|\widehat{\mu}_{a,{\mbox{\tiny NR}}}(\bS_1)-\mu_{a,{\mbox{\tiny NR}}}^{*}(\bS_1)|^{2}\}^{\frac{1}{2}}\nonumber\\
	&\overset{(ii)}{=}O_p\left(\sigma \frac{s'_1\log(d)}{N}\right),\label{bound:R1-2}
\end{align}
where (i) holds by H{\"o}lder's inequality; (ii) follows from Lemmas \ref{cor_mu2}, \ref{lem_conclusion} and Theorem 10 with $d_1\asymp d$. 
Similarly, for $R_3+R_4$, 
\begin{align}
	R_3+R_4\leq&\{E|\widehat{\pi}_{a}(\bS_1)|^{-3}\}^{\frac{1}{3}}\{E|\widehat{\rho}_{a}(\bS)|^{-3}\}^{\frac{1}{3}}\{E|\widehat{\nu}_{a}(\bS)-\nu_{a}^{*}(\bS)|^{3}\}^{\frac{1}{3}}\mathbbm{1}_{\{\rho_{a}^{*}(\cdot)\neq\rho_{a}(\cdot)\}}\nonumber\\
	&\quad+\{E|\widehat{\pi}_{a}(\bS_1)|^{-2}\}^{\frac{1}{2}}\{E|\widehat{\mu}_{a,{\mbox{\tiny NR}}}(\bS_1)-\mu_{a,{\mbox{\tiny NR}}}^{*}(\bS_1)|^{2}\}^{\frac{1}{2}}\mathbbm{1}_{\{\pi_{a}^{*}(\cdot)\neq\pi_{a}(\cdot)\}}\nonumber\\
	=&O_p\left(\sigma\sqrt{\frac{(s_{\bm{\alpha}_{a}}+s_{\bm{\beta}_{a}})\log(d)}{N}}\mathbbm{1}_{\{\pi_{a}^{*}(\cdot)\neq\pi_{a}(\cdot)\}}+\sigma \sqrt{\frac{s_{\bm{\alpha}_{a}}\log(d)}{N}}\mathbbm{1}_{\{\rho_{a}^{*}(\cdot)\neq\rho_{a}(\cdot)\}}\right).\label{bound:R3-4}
\end{align}
Repeating the similar procedure of \eqref{28}, we have
\begin{align}
	E[A_{1}(\mu_{a,{\mbox{\tiny NR}}}^*(\bS_1)-\mu_{a,{\mbox{\tiny NR}}}(\bS_1))^2]\leq\frac{2}{c_0}E[\zeta^2]+2E[\varepsilon^2].\label{28'}
\end{align}
For $R_5+R_6$, 
\begin{align}
	R_5+R_6&\leq\{E|\widehat{\pi}_{a}(\bS_1)|^{-4}\}^{\frac{1}{4}}\left\{E\left|\frac{1}{\widehat{\rho}_{a}(\bS)}-\frac{1}{\rho_{a}^{*}(\bS)}\right|^{4}\right\}^{\frac{1}{4}}\{E[A_{1}|\nu_{a}^{*}(\bS)-\nu_{a}(\bS)|^{2}]\}^{\frac{1}{2}}\nonumber\\
	&\qquad+\left\{E\left|\frac{1}{\widehat{\pi}_{a}(\bS_1)}-\frac{1}{\pi_{a}^*(\bS_1)}\right|^{2}\right\}^{\frac{1}{2}}\{E[A_{1}|\mu_{a,{\mbox{\tiny NR}}}^{*}(\bS_1)-\mu_{a,{\mbox{\tiny NR}}}(\bS_1)|^{2}]\}^{\frac{1}{2}}\nonumber\\
	&\overset{(i)}{=} O_p\left(\sigma\sqrt{\frac{s_{\bm{\gamma}_{a}}\log(d)}{N}}\mathbbm{1}_{\{\mu_{a,{\mbox{\tiny NR}}}^{*}(\cdot)\neq\mu_{a,{\mbox{\tiny NR}}}(\cdot)\}}+\sigma\sqrt{\frac{s_{\bm{\delta}_{a}}\log(d)}{N}}\mathbbm{1}_{\{\nu_{a}^{*}(\cdot)\neq\nu_{a}(\cdot)\}}\right),\label{bound:R5-6}
\end{align}
where (i) follows from Lemma \ref{lem_conclusion}, \eqref{25}, \eqref{28'}, and Lemma \ref{lemma:sigma_mis}.
Combining \eqref{bound:R1-2}-\eqref{bound:R5-6}, we have
\begin{align*}
	E[Q_1+Q_2+Q_{3}]=\sum_{i=1}^6R_i=O_p\left(\sigma\frac{s'_1\log(d)}{N}+\sigma\sqrt{\frac{s'_2\log(d)}{N}}\right).
\end{align*}
Note that $E[Q_4+Q_5+Q_6]$ can be controlled similarly as $E[Q_1+Q_2+Q_3]$. By \eqref{R1-6}, we have \eqref{B.3_T_{2}} holds.
	
	(b) When all the models are correctly specified, we have $s'_2=0$. Hence, by part (a), \eqref{C.28} holds.
\end{proof}

\begin{proof}[Proof of Lemma \ref{lem_T4_lasso}]
In this proof, the expectations are taken w.r.t. a new observation $W$, unless stated otherwise. We repeat the proof of Lemma \ref{lem_T4_DDRL}, except here we consider the nested-regression-based estimate $\muhat_c(\cdot)=\widehat{\mu}_{c,{\mbox{\tiny NR}}}(\cdot)$ and the corresponding target $\mu_c^*(\cdot)=\mu_{c,{\mbox{\tiny NR}}}^*(\cdot)$. Note that the estimation error of $\mu_c^*(\cdot)$ only appears in steps \eqref{bound:Q2^2} and \eqref{bound:Q3^2} when controlling the terms $[E(Q_2^2)]^{1/2}$ and $[E(Q_3^2)]^{1/2}$.
By Lemmas \ref{cor_mu2}, \ref{lem_conclusion}, \eqref{B.9'}, and Theorem 10 with with $d_1\asymp d$, we have
\begin{align*}
&[E(Q_2^2)]^{\frac{1}{2}}\\
&\quad\leq \{E|\widehat{\pi}_{a}(\bS_1)|^{-4}\}^{\frac{1}{4}}\{E|\widehat{\nu}_{a}(\bS)-\nu_{a}^{*}(\bS)|^{4}\}^{\frac{1}{4}}+ \left\{E\left|\frac{1}{\widehat{\pi}_{a}(\bS_1)}-\frac{1}{\pi_{a}^{*}(\bS_1)}\right|^{4}\right\}^{\frac{1}{4}}\{E|\varepsilon|^{4}\}^{\frac{1}{4}}\nonumber\\
&\quad\qquad+\{E|\widehat{\pi}_{a}(\bS_1)|^{-4}\}^{\frac{1}{4}}\{E|\widehat{\mu}_{a,{\mbox{\tiny NR}}}(\bS_1)-\mu_{a,{\mbox{\tiny NR}}}^{*}(\bS_1)|^{4}\}^{\frac{1}{4}}\nonumber\\
&\quad=O_p\left(\sigma\sqrt{\frac{\max\{s_{\bm{\alpha}_{a}},s_{\bm{\beta}_{a}},s_{\bm{\gamma}_{a}}\}\log(d)}{N}}\right).
\end{align*}
By Theorem 10, we also have
\begin{align*}
	[E(Q_{3}^2)]^{\frac{1}{2}}&=\left\{E[\muhat_{a,{\mbox{\tiny NR}}}(\bS_1)-\mu_{a,{\mbox{\tiny NR}}}^{*}(\bS_1)]^2\right\}^{1/2}=O_p\left(\sigma\sqrt{\frac{\max\{s_{\bm{\alpha}_{a}},s_{\bm{\beta}_{a}}\}\log(d)}{N}}\right).
\end{align*}
Repeating the remaining steps of the proof of Lemma \ref{lem_T4_DDRL}, we have
\begin{align*}
	[E(\Deltahat(W)- \Delta^*(W) )^2]^{\frac{1}{2}}&=O_p\left(\sigma\sqrt{\frac{\max\{s_{\bm{\alpha}_{a}},s_{\bm{\beta}_{a}},s_{\bm{\gamma}_{a}},s_{\bm{\delta}_{a}}\}\log(d)}{N}}\right),\\
	T_{4}&=O_p\biggl(\sigma\frac{\sqrt{\max\{s_{\bm{\alpha}_{a}},s_{\bm{\beta}_{a}},s_{\bm{\gamma}_{a}},s_{\bm{\delta}_{a}}\}\log(d)}}{N}\biggl).
\end{align*}

\end{proof}

\section{Proof of the results for multi-stage treatment estimation with DR methods}

\begin{proof}[Proof of Theorem 11]
By construction, we have $\mu_{T+1}^{*}(\bSbar_{T+1},\abar_{T})=Y$, and it follows that
\begin{align}
&E\left[\sum_{r=t+1}^{T}\frac{\prod_{l=t+1}^{r}\mathbbm{1}_{\{A_l = a_l\}}}{\prod_{l=t+1}^{r}\pi_{l}^{*}(\bSbar_l,\abar_{l})}(\mu_{r+1}^{*}(\bSbar_{r+1},\abar_{T})-\mu_{r}^{*}(\bSbar_{r},\abar_{T}))\mid \bSbar_{t},\Abar_t=\abar_t\right]\nonumber\\
&\qquad:=H_T+\sum_{r=t+1}^{T-1}(H_{r,1}+H_{r,2}+H_{r,3}),\label{G}
\end{align}
where 
\begin{align*}
H_T&=E\left[\frac{\prod_{l=t+1}^{T}\mathbbm{1}_{\{A_l = a_l\}}}{\prod_{l=t+1}^{T}\pi_{l}^{*}(\bSbar_l,\abar_{l})}(Y-\mu_{T}^{*}(\bSbar_{T},\abar_{T}))\mid \bSbar_{t},\Abar_t=\abar_t\right],
\end{align*}
and for any $r\in\{t+1,\dots,T\}$,
\begin{align*}
H_{r,1}&=E\left[\frac{\prod_{l=t+1}^{r}\mathbbm{1}_{\{A_l = a_l\}}}{\prod_{l=t+1}^{r}\pi_{l}^{*}(\bSbar_l,\abar_{l})}(\mu_{r+1}^{*}(\bSbar_{r+1},\abar_{T})-\mu_{r+1}(\bSbar_{r+1},\abar_{T}))\mid \bSbar_{t},\Abar_t=\abar_t\right],\\
H_{r,2}&=E\left[\frac{\prod_{l=t+1}^{r}\mathbbm{1}_{\{A_l = a_l\}}}{\prod_{l=t+1}^{r}\pi_{l}^{*}(\bSbar_l,\abar_{l})}(\mu_{r+1}(\bSbar_{r+1},\abar_{T})-\mu_{r}(\bSbar_{r},\abar_{T}))\mid \bSbar_{t},\Abar_t=\abar_t\right],\\
H_{r,3}&=E\left[\frac{\prod_{l=t+1}^{r}\mathbbm{1}_{\{A_l = a_l\}}}{\prod_{l=t+1}^{r}\pi_{l}^{*}(\bSbar_l,\abar_{l})}(\mu_{r}(\bSbar_{r},\abar_{T})-\mu_{r}^{*}(\bSbar_{r},\abar_{T}))\mid \bSbar_{t},\Abar_t=\abar_t\right].
\end{align*}
Define $\Atil_{r}=(A_{t+1},A_{t+2},\dots,A_{r})$ and $\atil_{r}=(a_{t+1},a_{t+2},\dots,a_{r})$ for $t+1\leq r \leq T$.
For $H_T$, by the tower rule with $Y(\Abar_T)=Y$ under Assumption 6, we have 
\begin{align}
H_T&=E\biggl[E\biggl[\frac{\prod_{l=t+1}^{T}\mathbbm{1}_{\{A_l = a_l\}}}{\prod_{l=t+1}^{T}\pi_{l}^{*}(\bSbar_l,\abar_{l})}(Y(\abar_T)-\mu_{T}^{*}(\bSbar_{T},\abar_{T})) \mid \bSbar_{T},\Abar_{T-1}=\abar_{T-1}\biggl]\nonumber\\
&\qquad\qquad\cdot P(\Atil_{T-1}=\atil_{T-1}\mid \bSbar_{T},\Abar_t=\abar_t)\mid \bSbar_{t},\Abar_t=\abar_t\biggl]\nonumber\\
&\overset{(i)}{=}E\biggl[\frac{E[\mathbbm{1}_{\{A_{T} = a_T\}}\mid \bSbar_{T},\Abar_{T-1}=\abar_{T-1}]}{\prod_{l=t+1}^{T}\pi_{l}^{*}(\bSbar_l,\abar_{l})}(E[Y(\abar_T)\mid \bSbar_{T},\Abar_{T-1}=\abar_{T-1} ]-\mu_{T}^{*}(\bSbar_{T},\abar_{T}))\nonumber\\ 
&\qquad\qquad\cdot E(\mathbbm{1}_{\{\Atil_{T-1} = \atil_{T-1}\}}\mid \bSbar_{T},\Abar_{t}=\abar_{t})\mid \bSbar_{t},\Abar_t=\abar_t\biggl]\nonumber\\
&\overset{(ii)}{=}E\left[\frac{\prod_{l=t+1}^{T-1}\mathbbm{1}_{\{A_l = a_l\}}\pi_{T}(\bSbar_T,\abar_{T})}{\prod_{l=t+1}^{T-1}\pi_{l}^{*}(\bSbar_l,\abar_{l})\pi_{T}^{*}(\bSbar_T,\abar_{T})}(\mu_{T}(\bSbar_{T},\abar_{T})-\mu_{T}^{*}(\bSbar_{T},\abar_{T}))\mid \bSbar_{t},\Abar_t=\abar_t\right]\nonumber\\
&\overset{(iii)}{=}E\left[\frac{\prod_{l=t+1}^{T-1}\mathbbm{1}_{\{A_l = a_l\}}}{\prod_{l=t+1}^{T-1}\pi_{l}^{*}(\bSbar_l,\abar_{l})}(\mu_{T}(\bSbar_{T},\abar_{T})-\mu_{T}^{*}(\bSbar_{T},\abar_{T}))\mid \bSbar_{t},\Abar_t=\abar_t\right]=-H_{T-1,1},\label{G_t}
\end{align}
where (i) holds since $Y(\abar_T)\perp \!\!\! \perp A_T \mid \bSbar_T, \Abar_{T-1}=\abar_{T-1}$ under the Assumption 6; (ii) holds since $\pi_{T}(\bSbar_T,\abar_{T})=P[A_{T} = a_T\mid \bSbar_{T},\Abar_{T-1}=\abar_{T-1}]$ and $\mu_{T}(\bSbar_{T},\abar_{T})=E[Y(\abar_T)\mid \bSbar_{T}, \Abar_{T-1}=\abar_{T-1} ]$ under the Assumption 6; (iii) holds since either $\pi_{T}^{*}(\cdot,\abar_{T})=\pi_{T}(\cdot,\abar_{T})$ or $\mu_{T}^{*}(\cdot,\abar_{T})=\mu_{T}(\cdot,\abar_{T})$ by assumption. For $H_{r,2}$, by the tower rule, we have 
\begin{align}
H_{r,2}&=E\biggl[E\biggl[\frac{\prod_{l=t+1}^{r}\mathbbm{1}_{\{A_l = a_l\}}}{\prod_{l=t+1}^{r}\pi_{l}^{*}(\bSbar_l,\abar_{l})}(\mu_{r+1}(\bSbar_{r+1},\abar_{T})-\mu_{r}(\bSbar_{r},\abar_{T})) \mid \bSbar_{r},\Abar_r=\abar_r\biggl]\nonumber\\
&\qquad\qquad\cdot P(\Atil_r=\atil_r\mid \bSbar_{r},\Abar_t=\abar_t)\mid \bSbar_{t},\Abar_t=\abar_t\biggl]\nonumber\\
&=E\biggl[\frac{E[\mu_{r+1}(\bSbar_{r+1},\abar_{T})\mid \bSbar_{r},\Abar_r=\abar_r]-\mu_{r}(\bSbar_{r},\abar_{T})}{\prod_{l=t+1}^{r}\pi_{l}^{*}(\bSbar_l,\abar_{l-1})}\nonumber \\
&\qquad\qquad\cdot P(\Atil_r=\atil_r\mid \bSbar_{r},\Abar_t=\abar_t)\mid \bSbar_{t},\Abar_t=\abar_t\biggl].\label{eq:Hr2}
\end{align}
For any $r\in\{1,\dots,T\}$, we have
\begin{align}
\mu_{r}(\bSbar_{r},\abar_{T})&=E[Y(\abar_T)\mid\bSbar_r,\Abar_{r-1}=\abar_{r-1}]\overset{(i)}{=}E[Y(\abar_T)\mid\bSbar_r,\Abar_{r}=\abar_{r}]\nonumber\\
&\overset{(ii)}{=}E[E[Y(\abar_T)\mid\bSbar_{r+1},\Abar_{r}=\abar_{r}]\mid\bSbar_r,\Abar_{r}=\abar_{r}]\nonumber\\
&=E[\mu_{r+1}(\bSbar_{r+1},\abar_{T})\mid\bSbar_{r},\Abar_{r}=\abar_{r}],\label{eq:nest-rep-multi}
\end{align}
where (i) holds since $Y(\abar_T)\perp \!\!\! \perp A_r \mid \bSbar_r, \Abar_{r-1}=\abar_{r-1}$ under the Assumption 6; (ii) holds by the tower rule. Together with \eqref{eq:Hr2}, we conclude that
\begin{align}\label{G_r2}
H_{r,2}=0.
\end{align}
For $H_{r,3}$, by the tower rule, we have 
\begin{align}
H_{r,3}&=E\biggl[E\biggl[\frac{\prod_{l=t+1}^{r}\mathbbm{1}_{\{A_l = a_l\}}}{\prod_{l=t+1}^{r}\pi_{l}^{*}(\bSbar_l,\abar_{l})}(\mu_{r}(\bSbar_{r},\abar_{T})-\mu_{r}^{*}(\bSbar_{r},\abar_{T})) \mid \bSbar_{r},\Abar_{r-1}=\abar_{r-1}\biggl]\nonumber\\
&\qquad\qquad\cdot P(\Atil_{r-1}=\atil_{r-1}\mid \bSbar_{r},\Abar_t=\abar_t)\mid \bSbar_{t},\Abar_t=\abar_t\biggl]\nonumber\\
&=E\biggl[\frac{E[\mathbbm{1}_{\{A_{r} = a_r\}}\mid \bSbar_{r},\Abar_{r-1}=\abar_{r-1}]}{\prod_{l=t+1}^{r}\pi_{l}^{*}(\bSbar_l,\abar_{l})}(\mu_{r}(\bSbar_{r},\abar_{T})-\mu_{r}^{*}(\bSbar_{r},\abar_{T}))\nonumber\\
&\qquad\qquad\cdot E(\mathbbm{1}_{\{\Atil_{r-1} = \atil_{r-1}\}} \mid \bSbar_{r},\Abar_t=\abar_t)\mid \bSbar_{t},\Abar_t=\abar_t\biggl]\nonumber\\
&\overset{(i)}{=}E\left[\frac{\prod_{l=t+1}^{r-1}\mathbbm{1}_{\{A_l = a_l\}}\pi_{r}(\bSbar_r,\abar_{r})}{\prod_{l=t+1}^{r-1}\pi_{l}^{*}(\bSbar_l,\abar_{l})\pi_{r}^{*}(\bSbar_r,\abar_{r})}(\mu_{r}(\bSbar_{r},\abar_{T})-\mu_{r}^{*}(\bSbar_{r},\abar_{T}))\mid \bSbar_{t},\Abar_t=\abar_t \right]\nonumber\\
&\overset{(ii)}{=}E\left[\frac{\prod_{l=t+1}^{r-1}\mathbbm{1}_{\{A_l = a_l\}}}{\prod_{l=t+1}^{r-1}\pi_{l}^{*}(\bSbar_l,\abar_{l})}(\mu_{r}(\bSbar_{r},\abar_{T})-\mu_{r}^{*}(\bSbar_{r},\abar_{T}))\mid \bSbar_{t},\Abar_t=\abar_t \right]=-H_{r-1,1},\label{G_r3}
\end{align}
where (i) holds by the tower rule and that $\pi_{r}(\bSbar_r,\abar_{r-1})=P[A_{r} = a_r\mid \bSbar_{r},\Abar_{r-1}=\abar_{r-1}]$; (ii) holds since either $\pi_{r}^{*}(\cdot,\abar_{T})=\pi_{r}(\cdot,\abar_{T})$ or $\mu_{r}^{*}(\cdot,\abar_{T})=\mu_{r}(\cdot,\abar_{T})$ by assumption. 
Combining \eqref{G_t}-\eqref{G_r3} with \eqref{G}, we have
\begin{align*}
&E\biggl[\sum_{r=t+1}^{T}\frac{\prod_{l=t+1}^{r}\mathbbm{1}_{\{A_l = a_l\}}}{\prod_{l=t+1}^{r}\pi_{l}^{*}(\bSbar_l,\abar_{l})}(\mu_{r+1}^{*}(\bSbar_{r+1},\abar_{T})-\mu_{r}^{*}(\bSbar_{r},\abar_{T}))\\
&\qquad\qquad+\mu_{t+1}^{*}(\bSbar_{t+1},\abar_{T})\mid \bSbar_{t}=\bsbar_{t},\Abar_t=\abar_{t}\biggl]\\
&\qquad=H_T+\sum_{r=t+1}^{T-1}(H_{r,1}+H_{r,2}+H_{r,3})+E[\mu_{t+1}^{*}(\bSbar_{t+1},\abar_{T})\mid \bSbar_{t},\Abar_t=\abar_t]\\
&\qquad=-H_{T-1,1}+\sum_{r=t+1}^{T-1}(H_{r,1}-H_{r-1,1})+E[\mu_{t+1}^{*}(\bSbar_{t+1},\abar_{T})\mid \bSbar_{t},\Abar_t=\abar_t]\\
&\qquad=-H_{t,1}+E[\mu_{t+1}^{*}(\bSbar_{t+1},\abar_{T})\mid \bSbar_{t},\Abar_t=\abar_t]\\
&\qquad\overset{(i)}{=}E\left[\mu_{t+1}(\bSbar_{t+1},\abar_{T})\mid \bSbar_{t},\Abar_t=\abar_t\right]\overset{(ii)}{=}\mu_{t}(\bSbar_{t},\abar_{T}),\;\;\text{for any}\;\;t\in\{1,\dots,T\},
\end{align*}
where (i) holds since $H_{t,1}=E\left[\mu_{t+1}^{*}(\bSbar_{t+1},\abar_{T})-\mu_{t+1}(\bSbar_{t+1},\abar_{T})\mid\bSbar_{t},\Abar_t=\abar_t\right]$; (ii) holds since \eqref{eq:nest-rep-multi} holds for any $r\in\{1,\dots,T\}$.
\end{proof}

\begin{proof}[Proof of Proposition 1]
By Theorem 11 with $t=0$, we have
\begin{align*}
	&E\left[\sum_{t=1}^{T}\frac{\mathbbm{1}_{\{\Abar_{t} = \abar_t\}}}{\prod_{l=1}^{t}\pi_{l}^{*}(\bSbar_l,\abar_{l})}(\mu_{t+1}^{*}(\bSbar_{t+1},\abar_{T})-\mu_{t}^{*}(\bSbar_{t},\abar_{T}))+\mu_{1}^{*}(\bS_1,\abar_T)\right]\\
	&\qquad=\mu_0(\bsbar_0,\abar_T)=E[Y(\abar_T)|\bSbar_0=\bsbar_0]=\theta_{\abar_T}.
\end{align*}
\end{proof}


\end{document}